\documentclass[11pt]{article}

\usepackage{mathtools}
\usepackage[utf8]{inputenc}
\usepackage{todonotes}

\usepackage[top=4cm,bottom=4cm,left=3cm,right=3cm]{geometry}

\usepackage[square,numbers]{natbib}
\setcitestyle{authoryear,open={(},close={)}}
\usepackage[dvipsnames]{xcolor}        

\RequirePackage[colorlinks,citecolor=blue,urlcolor=blue]{hyperref}

\usepackage{graphicx,graphics}
\usepackage{subcaption}
\usepackage{enumitem}

\usepackage{url}
\usepackage{supertabular}
\usepackage{amsmath,amscd,amsfonts,amssymb}
\usepackage{fancyhdr}
\usepackage{color}
\usepackage{verbatim}

\usepackage{array,supertabular}
\usepackage{float}
\usepackage{rotating}
\usepackage{lscape}

\usepackage{booktabs,caption,fixltx2e}
\usepackage[flushleft]{threeparttable}
\usepackage{amsthm}

\usepackage{dcolumn}
\newcolumntype{L}{D{.}{.}{2,5}}

\newtheorem{theorem}{Theorem}[section]
\newtheorem{lemma}[theorem]{Lemma}

\newtheorem{corollary}[theorem]{Corollary}

\usepackage[title]{appendix}

\theoremstyle{remark}
\newtheorem{definition}{Definition}

\long\def\symbolfootnote[#1]#2{\begingroup%
\def\thefootnote{\fnsymbol{footnote}}\footnotetext[#1]{#2}\footnotemark[#1]\endgroup}

\newcommand{\cor}{\operatorname{Corr}}

\makeatletter
\@addtoreset{equation}{section}

\makeatother

\newcounter{Fig}[figure]

\newcounter{Tab}[table]

  {\refstepcounter{Tab}
   \addtocounter{table}{1}
   \samepage\vspace{0.2cm}
   \centerline{#1} \nobreak
   \begin{verse}{Table \thesection.\arabic{Tab}:}
  }%
  {\end{verse} \vspace{0.2cm}}

\relax

\usepackage{footnote}
\usepackage{adjustbox}

\def\1{{\mathbf 1}}
\def\0{{\mathbf 0}}

\newtheorem{condition}{Condition}
\newcommand{\reels}{\mathbb{R}}
\newcommand{\naturels}{\mathbb{N}}

\newcommand{\esp}{\mathbb{E}}
\newcommand{\proba}{\mathbb{P}}
\newcommand{\filt}{\mathcal{F}}
\newcommand{\sigalg}{\mathcal{G}}

\newcommand{\ind}{\mathbf{1}}
\newcommand{\var}{\operatorname{Var}}

\newcommand{\be}{\begin{eqnarray}}
\newcommand{\ee}{\end{eqnarray}}
\newcommand{\bes}{\begin{eqnarray*}}
\newcommand{\ees}{\end{eqnarray*}}
\newcommand{\cvdistrib}{\overset{\mathcal{D}}{\rightarrow}}

\newcommand{\cvproba}{\overset{\proba}{\rightarrow}}

\allowdisplaybreaks

\newcommand{\simpanel}[5]{%
\begin{minipage}[t]{0.242\textwidth}
\centering
\includegraphics[
    width=\linewidth,
    height=0.108\textheight,
    keepaspectratio
]{#1}

\vspace{-0.45em}

{\fontsize{5.3}{5.8}\selectfont
\renewcommand{\arraystretch}{0.82}
\begin{tabular}{@{}c@{}}
$n_h=#2$\\[-0.1ex]
$\alpha^*=(#3)'$\\[-0.1ex]
$b^*=(#4)'$\\[-0.1ex]
$\alpha^*/b^*=(#5)'$
\end{tabular}
}
\end{minipage}%
}

\newcommand{\simheading}[2]{%
\begin{minipage}[t]{0.242\textwidth}
\centering
{\small
\begin{tabular}{@{}c@{}}
\textbf{#1}\\[-0.35em]
\textit{#2}
\end{tabular}%
}
\end{minipage}%
}

\begin{document}
 \title{Parametric estimation of Hawkes processes based on ordinary least squares}
\author{Benjamin Poignard\footnote{Faculty of Science and Technology, Keio University. 3-14-1 Hiyoshi, Kohoku-ku, Yokohama, Kanagawa 223-8522, Japan. Jointly affiliated at Riken-AIP. Email: bpoignard@math.keio.ac.jp} \, and Yoann Potiron\footnote{(CONTACT) Faculty of Business and Commerce, Keio University. 2-15-45 Mita, Minato-ku, Tokyo, 108-8345, Japan. E-mail: potiron@fbc.keio.ac.jp website: \url{https://www.fbc.keio.ac.jp/~potiron/}} }
\date{}

\maketitle

\begin{abstract}


We develop a parametric estimation framework for self-exciting Hawkes processes whose intensity functions admit a parametric form. The estimation procedure is based on ordinary least squares. To apply the least squares estimation, we restrict to a kernel class that can be expressed as a sum of the product of a parameter and a function. We first establish the central limit theorem for the proposed estimator. We then show the consistency of the asymptotic variance estimator. Finally, we introduce a Wald test statistic and derive its asymptotic distribution. We apply the proposed methodology to trade times from high-frequency financial asset data. Our empirical results provide evidence for three heterogeneous trader types. In particular, we identify two trader types as high-frequency traders and one trader type as fundamental trader.

\end{abstract}
\textbf{Keywords}: Ordinary least squares; Parametric estimation; Self-exciting Hawkes processes; Trader type heterogeneity; Wald test.
\\
\section{Introduction}

Simple point processes are widely used in statistics to characterize event times. The main stylized fact in this strand of literature is the presence of event clustering in time, which motivates the use of the so-called  Hawkes self-exciting process (see \cite{hawkes1971point} and \cite{hawkes1971spectra}). For any time $t \in \reels^+$, we define the simple point process $N_t$ as the cumulative number of events from the starting time $0$ to the final time $t$ and $\lambda_t$ is its intensity, namely the probability of having one event per unit of time. Then, a standard definition of a Hawkes self-exciting process is given by
\begin{equation*}
	\lambda_t = \nu + \int_0^t h(t-s)\, dN_s.
\end{equation*}
Here, the parameter $\nu > 0$ is the Poisson baseline and the function $h$ is the kernel, which is non-negative, namely $h(t)\geq 0$ for any time $t \in \reels^+$. The kernel drives the self-excitation of the Hawkes process. In particular, this accommodates the presence of event clustering in time, which is the main stylized fact observed with financial data. The particular case $h= 0$ corresponds to a classical Poisson process, and we can view Hawkes processes as a natural extension of Poisson processes.

An early application of Hawkes processes concerns modeling in seismology (see \cite{rubin1972regular}, \cite{vere1978earthquake}, \cite{ozaki1979maximum}, \cite{vere1982some}, \cite{ogata1978asymptotic}, \cite{ogata1988statistical},  \break \cite{ikefuji2022earthquake} and \cite{kling2026goodness}). There are also applications in financial econometrics (see \cite{yu2004empirical}, \cite{bowsher2007modelling}, \cite{embrechts2011multivariate}, \cite{AitSahalia2014mutual},  and also \cite{clinet2018statistical}, \cite{corradi2020testing}, \cite{cavaliere2023bootstrap}, \cite{potiron2026mutually}), in finance (see \cite{Large2007},  \cite{ait2015modeling} and \cite{fulop2015self}) and in quantitative finance (\cite{chavez2005estimating},  \cite{bacry2013some} and \cite{morariu2022state}). Some applications also concern biology (see  \cite{reynaud2010adaptive}, \cite{donnet2020nonparametric}, \cite{cai2024latent} and \cite{kang2025analyzing}), social studies (see \cite{fox2016modeling} and \cite{fang2024group}), epidemiology (see \cite{cheysson2022spectral}) and management (see \cite{ait2026saddlepoint}).

In this paper, we focus on a Hawkes self-exciting process in which the kernel is parametric. More specifically, for any parameter $\theta \in \Theta$ and any time $t \in \reels^+$, we introduce the family of intensities $\lambda_t (\theta) = \nu  + \int_0^t  h(t-s,\alpha)\,dN_{s}$. Here, the baseline parameter $\nu$ is of dimension $1$ while the kernel parameter $\alpha$ is a vector of dimension $n_h$, in which $n_h$ will denote the number of functions in the kernel. Denoting by $v'$ the transpose operation for any vector or any matrix $v$, we assume that the parameter $\theta$ has the form $\theta=(\nu,\alpha')'$. We also assume that the number of parameters $n=1+n_h$ satisfies $n \geq 2$, and so $\theta \in \mathbb{R}^n$.
Finally,
we assume the existence of the true parameter $\theta^* \in \Theta$ such that $\lambda_t = \lambda_t (\theta^*) = \nu^*  + \int_0^t  h(t-s,\alpha^*)\,dN_{s}$. Here, the true parameter $\theta^*$ has the form $\theta^*=(\nu^*,(\alpha^{*})')'$. 

The goal of this paper is to retrieve the true parameter $\theta^*$. The parametric estimation procedure is based on ordinary least squares. The ordinary least squares estimator is the value that minimizes the sum of squared residuals. To apply least squares estimation, we restrict to the kernel which has the form $h(t,\alpha) = \sum_{k=1}^{n_h} \alpha_k h_k(t)$. Here, we have $\alpha=(\alpha_1, \ldots, \alpha_{n_h})'$ and the kernel is the sum of $n_h$ functions $h_k$. Thus, this introduces heterogeneity in the kernel. The reason why we have to restrict the class of kernels is due to the parametric estimation procedure based on ordinary least squares. Then, the kernel form that  we have introduced is the most general form we could accommodate within our theoretical framework. In addition, we assume that the functions $h_k$ are exponential. This is the kernel considered in \cite{bacry2020sparse} (see Equations (1) and (2)) who study parametric estimation of Hawkes processes based on penalized least squares. A more general form of the kernel is considered in \cite{reynaud2010adaptive}, \cite{hansen2015lasso} (see Section 1.2.3) and \cite{cai2024latent} (see p. 98) for nonparametric estimation of Hawkes processes based on penalized least squares. See also \cite{kirchner2017estimation} and \cite{fujimori2026sparse} who consider nonparametric estimation of Hawkes processes based on  conditional least squares.

The functions in the kernel introduce heterogeneity. This heterogeneity is observed in financial market trades, as it is well-known that there are different types of traders. Thus, we can naturally interpret each function of the kernel as a trader type. In the literature based solely on trade times but not on price processes which is the case of this paper, there is also some evidence that there are different trader types. \cite{gagnon2010multi} show that price parity deviations relate positively to proxies for holding costs that can limit arbitrage. The empirical application from \cite{Hasbrouck2013low} suggests that high-frequency trading is beneficial to market quality. In \cite{hoffmann2014dynamic}, fast traders can revise their quotes quickly after news arrivals to reduce market risks. \cite{budish2015high}, \cite{biais2015equilibrium}, \cite{foucault2016news} and \cite{pagnotta2018competing} also consider trading speed.

The main idea of this paper is to rewrite the continuous-time Hawkes process as a standard linear regression model based on time series. Then, we can use asymptotic theory designed for linear regression model based on time series. More specifically, we apply the technology of \cite{white2001asymptotic} (see Theorem 5.17, p. 126). In Section \ref{sectheory}, we first give the consistency and central limit theorem for parametric estimation of Hawkes processes based on ordinary least squares (see Theorem \ref{thclt}). Then, we show the consistency of the variance matrix estimation procedure (see Theorem \ref{thconsistency}). In addition, we establish the central limit theorem of the parametric estimation procedure when we estimate the variance matrix (see Theorem \ref{thcltest}). Finally, we introduce a Wald test statistic and show that the statistic converges in distribution to a chi-squared distribution with $q$ degrees of freedom under the null hypothesis $H_0$ and is consistent under the alternative hypothesis $H_1$ (see Corollary \ref{corwald}). The main idea of the proofs is to apply the technology of \cite{white2001asymptotic} (see Theorem 5.17, p. 126). All these theoretical results are novel to the literature on Hawkes processes as we are not aware of any work using ordinary least squares for parametric estimation.

In Section \ref{secempirical}, we apply our parametric estimation procedure based on ordinary least squares to real financial high-frequency data. We first investigate the prediction performances of the number of transactions based on ordinary least squares and maximum likelihood procedures. Overall, we slightly improve out-of-sample predictions. Our empirical results on predictions also provide evidence for three heterogeneous trader types. In particular, we identify two types of high-frequency traders and one type of fundamental trader. Finally, we consider a statistical Wald test to confirm the results on heterogeneity of trader types.

The remainder of this paper is organized as follows. We outline the setting in Section \ref{secsetting}. We introduce the estimation and testing strategy in Section \ref{secestimation}. We give the theoretical results in Section \ref{sectheory}. We detail implementation rules in Section \ref{secimplementation}. In Section \ref{secempirical}, we give real data experiments. We provide concluding remarks in Section \ref{secconclusion}. In  Section \ref{secsimulation} from the Appendices, we perform simulation experiments. Additional real data experiments are available in Section \ref{secempiricalstudyinsample} from the Appendices. All the proofs of the theoretical results can be found in Section \ref{secproofs} from the Appendices.

\section{Setting: self-exciting Hawkes processes with parametric kernel}
\label{secsetting}

In this section, we give an introduction to the simple point processes. In particular, we rely on one particular family of simple point processes, namely self-exciting Hawkes processes for which its intensity has a parametric form. In addition, we assume that the kernel has a specific form for estimation based on ordinary least squares. Finally, we introduce a Wald test of $q$ linear hypotheses on the parameter vector $\theta^*=(\nu^*,(\alpha^{*})')'$. 

We start with the definition of the probabilistic tools. First, we define $\reels^+ = [0,\infty)$ as the space of real nonnegative numbers. We consider that the stochastic processes evolve on the space $\reels^+$. We introduce the stochastic basis $\mathbf{B} = \big(\Omega, \sigalg, \filt=\{\filt_t\}_{t \in \reels^+}, \proba\big)$ which is a probability space equipped with a filtration. Namely, we have that $\Omega$ is the sample space, $\sigalg$ is the space of measured sets called $\sigma$-algebra and $\proba$ is a probability measure from the sample space $\Omega$ to the interval $[0,1]$. The filtration $\filt$ is a family of $\sigma$-algebras $\filt_t$ which represents the information available to the statistician at the time $t \in \reels^+$ and satisfies $\filt_t \subset \sigalg$. We denote the natural filtration generated by some stochastic process $X=\{X_t\}_{t \in \reels^+}$ as $\filt^X = \{\filt_t^X\}_{t \in \reels^+}$ in which $\filt_t^X = \mathcal{B} \{ X_s: 0 \le s \le t\}$ for any time $t \in \reels^+$. Here, $\mathcal{B} \{ X \}$ denotes the smallest Borel $\sigma$-algebra generated by the collection of random variables $X$. We assume that $\filt_t^N = \filt_t$ for any time $t \in \reels^+$. We also assume that the stochastic basis $\mathbf{B}$ satisfies the usual conditions. Moreover, we denote by $\esp[X]$ the expected value with respect to the probability measure $\proba$ of a random variable $X$. Finally, a measured set is said to happen almost surely if it happens with probability $1$ with respect to the probability measure $\proba$.

We now give an introduction to the simple point processes. These simple point processes are a special type of counting processes and more generally jump processes. They can be interpreted as random measures. In particular, simple point processes have simple increments which are necessarily equal to unity. Simple point processes deal with the number of occurrences of something over time, which we refer to as events. Examples of simple point processes include Poisson processes and renewal processes. Another example of a simple point process is the number of job arrivals to a queue over time. In this paper, we will consider that the events correspond to financial trades observed in the market. We assume that the simple point processes $N=\{N_t\}_{t \in \reels^+}$ are of dimension $1$. Then, the random variable $N_t$ counts the cumulative number of events between the starting time $0$ and the final time $t$. In the standard literature, a simple point process is a stochastic process $N$ with values that are non-negative, integer, non-decreasing and with increments which are equal to unity which is typically
characterized by the following definition. In what follows, $\ind_C$ is the indicator function from the space $C \subset \reels^+$ to the space $\{0,1\}$ defined as $\ind_C (x) = 1 \text{ if } x \in C$ and $\ind_C (x) = 0 \text{ if } x \notin C$. 

\begin{definition}[simple point process]
\label{defspp}
We say that the stochastic process $N=\{N_t\}_{t \in \reels^+}$ is a simple point process if it is a point process of the particular form
$N_t = \sum_{k \in \naturels} \ind_{[0,t]}(T_k)$. 
 Here, the sequence of event times $\{T_k\}_{k \in \naturels}$ takes its values in the space of nonnegative real numbers $\reels^+$ and is random. Finally, the sequence of event times $\{T_k\}_{k \in \naturels}$ is increasing, namely 
$\proba( T_k <  T_{k+1} \text{ for any } k\in \naturels )=1.$
\end{definition}

As a consequence of Definition \ref{defspp}, we can deduce some properties on the simple point process $N$. More specifically, we can reinterpret the simple point process $N$ as a random measure on the space of nonnegative real numbers $\reels^+$, namely a family 
$\{ N(C) \}_{C \in \mathcal{B}(\reels^+)}$ 
of random variables with values in the space $\overline{\naturels} = \naturels  \cup \{ + \infty \}$. Here, $\naturels = \{ 0, 1, \ldots \}$ is the space of nonnegative integers and $\mathcal{B} ( X )$ denotes the Borel $\sigma$-algebra on the space $X$. Moreover, the random measure $N(C)$ can be expressed as 
$N(C) = \sum_{k \in \naturels} \ind_C(T_k)$ and in particular we have $N([0,t])= N_t$ for any time $t \in \reels^+$. 

We need to make some stronger assumption on the time of the first event $T_0$. We assume that the time of the first event $T_0$ is larger than $0$ and smaller than the real positive number $T_0^+ \in \reels_*^+$ almost surely, namely $\proba(0 < T_0 < T_0^+)=1$. Here, for any space $S$ such that $0 \in S$, we define the space without zero as $S_*$. In practice, this assumption on the simple point process is harmless since this only requires that the time of the first event must belong to a compact space. This is more flexible than imposing that $T_0=0$.

Under some slightly stronger assumptions on the simple point process $N$, there exists an intensity by the Doob–Meyer decomposition theorem (see Theorem I.3.17 (p. 32) in \cite{jacod2003limit}). We now introduce the definition of the intensity $\lambda=\{\lambda_t\}_{t \in \reels^+}$ with respect to the filtration $\filt$ for the simple point process $N$.

\begin{definition}[intensity]
\label{defppint}
Any stochastic process $\lambda=\{\lambda_t\}_{t \in \reels^+}$ defined on the space of real nonnegative numbers $\reels^+$ and satisfying the following properties is called an intensity with respect to the filtration $\filt$ of the simple point process $N$. First, the stochastic process $\lambda$ is adapted to the filtration $\filt$. Secondly, the stochastic process $\lambda$ is of dimension $1$ and takes its values in the space of nonnegative real numbers $\reels^+$.
Moreover, we have for any interval $(t_0, t_1]\subset \reels^+$ that almost surely
\be
\label{defppinteq}
\esp[N((t_0, t_1]) \mid \filt_{t_0}] = \esp\Big[\int_{t_0}^{t_1} \lambda_s ds \mid \filt_a\Big].
\ee  
Here, we implicitly assume that the stochastic process $\lambda$ is such that the integral from Equation (\ref{defppinteq}) is well-defined almost surely for any interval $(t_0, t_1]\subset \reels^+$.
\end{definition}

Intuitively, the intensity corresponds to the expected number of events given the past information. Namely, we have almost surely and for any time $t \in \reels^+$ that
$\lambda_t = \lim_{u \rightarrow 0} \esp\Big[\frac{N_{t+u} - N_t}{u} \Big| \filt_{t^-}\Big].$
Here, $\filt_{t^-}$ denotes the limit from the left at the point $t \in \reels^+$ of the filtration $\filt$. Moreover, we have by the Doob–Meyer decomposition theorem (see Theorem I.3.17 (p. 32) in \cite{jacod2003limit}) that the compensated simple point process $A=\{A_t\}_{t \in \reels^+}$ defined for any time $t \in \reels^+$ as
\begin{eqnarray}
\label{defA}
A_{t} = N_{t} - \int_0^t \lambda_s ds
\end{eqnarray}
is a martingale with respect to the filtration $\filt$ almost surely. Then, the intensity $\lambda$ measures the rate of change for the predictable part of the simple point process $N$ based on the martingale representation (\ref{defA}). Finally, we note that $N((t_0, t_1])$ is almost surely finite if and only if $\int_{t_0}^{t_1} \lambda_s ds$ is almost surely finite. For a background on simple point processes, the reader can consult \cite{jacod1975multivariate}, \cite{jacod2003limit},  \cite{daley2003introduction} and \cite{daley2008introduction}.

The present work is concerned with the simple point process $N$ admitting an intensity which has a parametric form. More specifically, we introduce the parameter space $\Theta$ which consists of $n$ parameters. We also introduce the family of intensities $\lambda_t (\theta)$ for any parameter $\theta \in \Theta$ and any time $t \in \reels^+$. We assume that the intensity process $\lambda_t (\theta)$ takes its value in the space of nonnegative real numbers $\reels^+$ for any parameter $\theta \in \Theta$, any $\omega \in \Omega$ and any time $t \in \reels^+$. Finally,
we assume the existence of the true vector parameter $\theta^* \in \Theta$ of dimension $n$ such that for any time $t \in \reels^+$ and any $\omega \in \Omega$ we have $\lambda_t = \lambda_t (\theta^*)$.

From the introduced class of simple point processes, we focus on Hawkes self-exciting processes (see \cite{hawkes1971spectra} and \cite{hawkes1971point}) in which the kernel is parametric. More specifically, we introduce for any parameter $\theta \in \Theta$ and any time $t \in \reels^+$ the family of intensities 
\be \label{defhawkesintensityfamily}
   \lambda_t (\theta) = \nu  + \int_0^t  h(t-s,\alpha)\,dN_{s}.
\ee
Here, the function $h$ is the exciting kernel of dimension $1$. Moreover, the baseline parameter $\nu$ is of dimension $1$ while the kernel parameter $\alpha$ is a vector of dimension $n_h$. We assume that the parameter $\theta$ has the form $\theta=(\nu,\alpha')'$ and belongs to the parameter space $\Theta=(\Theta_\nu,\Theta_\alpha)$. We also assume that the number of parameters satisfies $1+n_h \geq 2$.
Finally,
we assume the existence of the true vector parameter $\theta^* \in \Theta$ of dimension $n$ such that
\begin{equation}\label{defhawkesintensity}
   \lambda_t = \lambda_t (\theta^*) = \nu^*  + \int_0^t  h(t-s,\alpha^*)\,dN_{s}.
\end{equation}
Here, the true parameter $\theta^*$ has the form $\theta^*=(\nu^*,(\alpha^{*})')'$ in which $\nu^* \in \Theta_\nu$ and $\alpha^* \in \Theta_\alpha$. In Equation (\ref{defhawkesintensityfamily}), the Hawkes processes are self-exciting in the sense that the realization of an event raises the probability of a future realization. In particular, this accommodates the presence of event clustering in time, which is the main stylized fact observed with financial data.

The goal of this paper is to retrieve the true parameter $\theta^*=(\nu^*,(\alpha^{*})')'$. To apply least square estimation, we restrict to the kernel which has the following form
\begin{equation} \label{defkernel}
   h(t,\alpha) = \sum_{k=1}^{n_h} \alpha_k h_k(t).
\end{equation}
Here, we denote by $n_h$ the number of functions in the kernel. So, there are $n=1+n_h$ parameters in the parametric model by definition. In addition, we have that $\alpha=(\alpha_1, \ldots, \alpha_{n_h})'$ in which $\alpha_k \in \Theta_{\alpha,k}$ for any index $k=1,\ldots,n_h$ and $\Theta_{\alpha} = \Theta_{\alpha,1} \times \ldots \times \Theta_{\alpha,n_h}$. In the theoretical analysis, we assume that the number of functions $n_h$ is fixed and known to the statistician. The case in which the number of functions $n_h$ is unknown is above the scope of this paper as it uses the parametric estimation procedure based on maximum likelihood. Nonetheless, we detail the implementation rules to choose the number of functions $n_h$ in Section \ref{secimplementation} for applications when these quantities are unknown to the statistician.

Equation (\ref{defkernel}) introduces heterogeneity in the kernel. This heterogeneity is observed on trades from financial market as we have empirical evidence that there are different types of traders (see \cite{gagnon2010multi} , \cite{Hasbrouck2013low}, \cite{hoffmann2014dynamic}, \cite{budish2015high}, \break \cite{biais2015equilibrium}, \cite{foucault2016news} and \cite{pagnotta2018competing}). The reason why we restrict the class of kernels is due to the parametric estimation procedure based on ordinary least squares. Then, the form (\ref{defkernel}) is the most general expression we could deal with in our theoretical analysis. In addition, we assume that the functions $h_k$ are exponential. This is the kernel considered in \cite{bacry2020sparse} (see Equations (1) and (2)) who study parametric estimation of Hawkes processes with penalized least squares. A more general form of the kernel is considered in \cite{reynaud2010adaptive}, \cite{hansen2015lasso} (see Section 1.2.3) and \cite{cai2024latent} (see p. 98) for nonparametric estimation of Hawkes processes based on penalized least squares. Investigating if this kernel form is allowed is above the scope of our paper. See also \cite{kirchner2017estimation} and \cite{fujimori2026sparse} who consider nonparametric estimation of Hawkes processes based on conditional least squares.

In addition, we introduce a Wald test $S$ of $q$ linear hypotheses on the parameter vector $\theta^*=(\nu^*,(\alpha^*)')'$. This Wald test $S$ is based on the matrix $R$ of dimension $q \times n$. Namely, we define for a real number $r \in \reels$ the null hypothesis as $H_0: \{ R \theta^*=r\}$ and the alternative hypothesis as 
$H_1: \{ R \theta^*\neq r \}.$ In general, the Wald test assesses constraints on parameters based on the weighted distance between the unrestricted estimate and its value under the null hypothesis. Intuitively, the larger this weighted distance, the less likely it is that the constraint is true. This test is very similar to the Wald test proposed in \cite{erdemlioglu2025latency}.

\section{Parametric estimation  based on ordinary least squares}
\label{secestimation}

This section details the parametric estimation procedure for Hawkes processes based on ordinary least squares. For that purpose, we rewrite the continuous-time Hawkes process as a standard linear regression model with time series. We consider the standard linear regression relationship with time series $y=\{y_{t\Delta}\}_{ t \in \naturels_* }$ for any time index $t\in \naturels_*$, namely
\be
\label{deflr}
y_{t\Delta} =  x_{t\Delta}' \theta^{*} + u_{t\Delta}.
\ee
Here, the time series $y=\{y_{t\Delta}\}_{ t \in \naturels_* }$ is the sequence of observed random variable of dimension $1$ which is adapted to the discrete-time filtration $\filt^d = \{ \filt_{t\Delta} \}_{ t \in \naturels_* }$. Also, the time series  $x=\{x_{t\Delta}\}_{ t \in \naturels_* }$ is the sequence of the explaining random vector of dimension $n$ and adapted to the discrete-time filtration $\filt^d$. Moreover, the time series $u=\{u_{t\Delta}\}_{ t \in \naturels_* }$ is the sequence of the error term which is a random variable of dimension $1$ adapted to the discrete-time filtration $\filt^d$. In Equation (\ref{deflr}), we have that the time series consists of observations recorded at successive equally spaced points in time. Thus, it represents a form of discrete-time data. Since the original observations come from a continuous-time stochastic process, the time between two consecutive observations $\Delta \in \reels^+_*$ is a tuning parameter to be set by the statistician. Thus, this is a crucial step and a limitation of our novel approach. We detail implementation rules for the time increment  $\Delta$ in Section \ref{secimplementation}. However, we insist that the theory of this paper holds for any time increment $\Delta \in \reels^+_*$ and does not affect the asymptotic variance matrix. In matrix notation, this relationship
is written as $Y_{T\Delta} = X'_{T\Delta} \theta^{*} + U_{T\Delta}$. Here, we introduce the vector $Y_{T\Delta}=(y_{\Delta}, y_{2\Delta}, \ldots, y_{T\Delta})'$ of dimension $T$, the matrix $X_{T\Delta}=(x_{\Delta}, x_{2\Delta} \ldots, x_{T\Delta})$ of dimension $n \times T$ and the vector $U_{T\Delta}=(u_{\Delta}, u_{2\Delta}, \ldots, u_{T\Delta})'$ of dimension $T$. The main idea of this paper is to rewrite the continuous-time Hawkes process $N$ as a standard linear regression model with time series of the form (\ref{deflr}). More specifically, we first set the observed variable for any time index $t\in \naturels_*$ as $y_{t\Delta}= N_{(t\Delta)^-} -N_{(t-1)\Delta}$. Here, $N_{(t\Delta)^-}$ denotes the limit from the left at the point $t\Delta$ of the stochastic process $N$. Since the Hawkes process $N$ is not stationary, it is natural to differentiate it to obtain a time series representation. We set the explaining vector $x_{t\Delta}$ for any time index $t\in \naturels_*$ as
\be
\label{defx}
x_{t\Delta}=\Big(1, \int_{(t-1)\Delta}^{t\Delta} h_1(t-s) dN_s, \ldots, \int_{(t-1)\Delta}^{t\Delta} h_{n_h}(t-s) dN_s \Big)'.
\ee
The reason for that choice is the intensity form (\ref{defhawkesintensity}) and its martingale representation (\ref{defA}). In practice, we have that the explaining vector $x_{t\Delta}$ for any time index $t\in \naturels_*$ can be re-expressed as 
$x_{t\Delta} =\Big(1, \sum_{(t-1)\Delta \leq T_i < t\Delta}h_1(t-T_i), \ldots, \sum_{(t-1)\Delta \leq T_i < t\Delta}h_{n_h}(t-T_i) \Big)'.$
Thus, the explaining vector $x_{t\Delta}$ corresponds to a sum of a discrete and random number of terms, namely the number of events on the interval $[(t-1)\Delta , t\Delta)$. Then, we define the error term $u_{t\Delta}$ for any time index $t\in \naturels_*$ as
$u_{t\Delta} = y_{t\Delta} - x_{t\Delta}'\theta^*$. We propose to estimate the unknown parameter $\theta^*$ by minimizing, for any time index $T\in \naturels_*$, the ordinary least squares objective function
\be
\label{defSSR}
SSR_{T\Delta}(\theta)= \sum_{t=1}^T \big(y_{t\Delta} - x_{t\Delta} ' \theta \big)^2 = \big( Y_{T\Delta} - X'_{T\Delta} \theta \big) ' \big( Y_{T\Delta} - X'_{T\Delta} \theta \big).
\ee
More specifically, the least squares estimator is defined as the solution to the minimization problem (\ref{defSSR}) between the starting time $0$ and the final time $T\Delta$, namely
$\widehat{\theta}_{T\Delta} = \operatorname{argmin}_{\theta \in \Theta} SSR_{T\Delta} (\theta).$
The first-order conditions for a minimum of the objective function $SSR_{T\Delta}(\theta)$ are
$\frac{\partial SSR_{T\Delta}(\theta)}{
\partial \theta} = -2 \sum_{t=1}^T x_{t\Delta} \big(y_{t\Delta} - x_{t\Delta} ' \theta \big) = -2 X_{T\Delta} \big( Y_{T\Delta} - X'_{T\Delta} \theta \big).$
If the matrix $X_{T\Delta} X_{T\Delta}' = \sum_{t=1}^T x_{t\Delta} x_{t\Delta} '$ is non singular, this system of $n$ equations in $n$
unknowns can be uniquely solved for the ordinary least squares 
estimator
$\widehat{\theta}_{T\Delta} = \Big( \sum_{t=1}^T x_{t\Delta} x_{t\Delta}' \Big)^{-1} \sum_{t=1}^T x_{t\Delta} y_{t\Delta} = \big( X_{T\Delta} X'_{T\Delta} \big)^{-1} X_{T\Delta} Y_{T\Delta}.$
This is the objective function and the estimator discussed in \cite{white2001asymptotic} (p. 2).

We now introduce the quantities required for the asymptotic variance-covariance matrix. They correspond to the quantities given in \cite{white2001asymptotic} (Theorem 5.17, p. 126). First, we define the variance matrix $V_{T\Delta}$ at the final time $T\Delta$ of dimension $n \times n$ for any time index $T\in \naturels_*$ as $V_{T\Delta} = \var \big((T \Delta)^{-1/2} X_{T\Delta}U_{T\Delta}\big)$.
Here, $\var(v)$ denotes the variance-covariance matrix of dimension $p \times p$ for a random vector $v$ of dimension $p$. Then, we define its limiting variance matrix when the time index $T \rightarrow \infty$ as
\be
\label{defV}
V = \lim_{T \rightarrow \infty} V_{T\Delta}.
\ee
We also define the matrix $M_{T\Delta}$ at the final time $T\Delta$ of dimension $n \times n$ for any time index $T\in \naturels_*$ as
$M_{T\Delta} = \esp [x_{T\Delta} x_{T\Delta}']$.
Then, we define its limiting matrix when the time index $T \rightarrow \infty$ as $M = \lim_{T \rightarrow \infty} M_{T\Delta}$. Finally, we define the asymptotic variance matrix $D$ as $D=M^{-1} V M^{-1}$.

We also introduce the quantities required for estimation of the asymptotic variance-covariance matrix. First, the error term $u_{t\Delta}$ is not observable for any time index $t\in \naturels_*$.
However, it can be estimated for any time index $t\in \naturels_*$ by $\widehat{u}_{t\Delta} = y_{t\Delta} - x_{t\Delta}'\widehat{\theta}_{T\Delta}$. Secondly, we define the estimator of the variance matrix $V_{T\Delta}$ at the final time $T\Delta$ for any time index $T\in \naturels_*$ as
\be
\label{defVThat}
\widehat{V}_{T\Delta} & = & w_{T,0} T^{-1} \sum_{t=1}^T x_{t\Delta} \widehat{u}_{t\Delta} \widehat{u}_{t\Delta}' x_{t\Delta}' \\ \nonumber && + T^{-1} \sum_{\tau=1}^{m_T} w_{T,\tau} \sum_{t=\tau+1}^T \big( x_{t\Delta} \widehat{u}_{t\Delta} \widehat{u}_{(t-\tau)\Delta}' x_{(t-\tau)\Delta}' + x_{(t-\tau)\Delta} \widehat{u}_{(t-\tau)\Delta} \widehat{u}_{t \Delta}' x_{t \Delta}' \big).
\ee
This corresponds to the variance estimator $\widetilde{V}_n$ given in \cite{white2001asymptotic} (p. 156). The estimation procedure requires the introduction of the deterministic sequence $m_T \in  \naturels_*$. The reason is that the terms satisfying $m_T < \tau < T$ will
be small in absolute value if $m_T$ is sufficiently large and thus can be neglected. If the sequence $m_T$ is simply kept fixed as the final index $T$ grows, the number of neglected terms may grow in such a way that the sum of the neglected terms does not remain negligible. This suggests that the sequence $m_T$ will have to grow with the time index $T$, so that
the terms in the variance $V_{T\Delta}$ ignored by its estimator $\widehat{V}_{T\Delta}$ remain negligible. We will write $m_T$ to
make this dependence explicit. In addition, we have introduced deterministic weights $w_{T,\tau} \in \reels^+_*$ for any index $\tau=0,1,\ldots,m_T$ in the variance estimator $\widehat{V}_{T\Delta}$. The reason is that the variance estimator $\widehat{V}_{T\Delta}$ can
fail to be positive definite and can be even indefinite. This is clearly inconvenient as negative variance
estimates can lead to results that are useless for testing hypotheses
or for any other purpose where variance estimates are required. We detail implementation rules for the variance estimator $\widehat{V}_{T\Delta}$ in Section \ref{secimplementation}. Finally, we define the estimator of the asymptotic variance matrix $D$ at the final time $T\Delta$ of dimension $n \times n$ for any time index $T\in \naturels_*$ as $\widehat{D}_{T\Delta} = \big( X_{T\Delta} X'_{T\Delta}/T \big)^{-1} \widehat{V}_{T\Delta} \big( X_{T\Delta} X'_{T\Delta}/T \big)^{-1}$. This corresponds to the estimator $\widehat{D}_n$ given in \cite{white2001asymptotic} (Theorem 5.17, p. 126).

Since we have proposed estimation of the asymptotic variance matrix $D$, we can define our Wald test statistic at the final time $T\Delta$ for any time index $T\in \naturels_*$ as
\begin{eqnarray}
\label{defW}
    S_{T\Delta} =
T\big(R \widehat{\theta}_{T\Delta} -r\big)' \big(R\widehat{D}_{T\Delta} R' \big)^{-1}\big(R\widehat{\theta}_{T\Delta}-r\big).
\end{eqnarray}
The Wald test statistic $S_{T\Delta}$ relies on two approximations, namely the variance matrix estimator $\widehat{D}_{T\Delta}$ and the parameter estimator $\widehat{\theta}_{T\Delta}$.

\section{Asymptotic theorems}
\label{sectheory}
Since we have rewritten the continuous-time Hawkes process as a standard linear regression model with time series in Section \ref{secestimation}, we can use asymptotic theory designed for linear regression model based on time series. More specifically, we apply the technology of \cite{white2001asymptotic} (see Theorem 5.17, p. 126). In this section, we first give the consistency and central limit theorem for parametric estimation of Hawkes processes based on ordinary least squares (see Theorem \ref{thclt}). Then, we show the consistency of the variance matrix estimation procedure (see Theorem \ref{thconsistency}). In addition, we establish the central limit theorem of the parametric estimation procedure when we estimate the variance matrix (see Theorem \ref{thcltest}). Finally, we show that the Wald test statistic $S_{T\Delta}$ converges in distribution to a chi-squared distribution with $q$ degrees of freedom under the null hypothesis $H_0$ and is consistent under the alternative hypothesis $H_1$ (see Corollary \ref{corwald}).
To the best of our knowledge, there is currently no existing work using ordinary least squares for parametric estimation of Hawkes processes.
We first give a condition required on the Hawkes process to establish the central limit theorem for parametric estimation based on ordinary least squares.

\begin{condition} 
\label{condh}
\begin{enumerate}[label=(\alph*)]

\item \label{condhparameterspace} There exist constants $\nu_- \in \reels^+_*$ and $\nu_+ \in \reels^+_*$ satisfying $\nu_-< \nu < \nu_+$ for any baseline parameter $\nu \in \Theta_\nu$. There exist constants $\alpha_- \in \reels^+_*$ and $\alpha_+ \in \reels^+_*$ satisfying $\alpha_-< \alpha_k < \alpha_+$ for any index $k = 1, \ldots ,n_h$ and any parameter $\alpha_k \in \Theta_{\alpha,k}$.

\item \label{condhhexp} For any index $k = 1, \ldots ,n_h$ and any time $t\in \reels^+$, there exists an inverse scale number $b_k^* \in \reels^+_*$ satisfying $h_k(t) = \exp(-b_k^* t)$.

\item \label{condhidentifiability} For any index $k = 1, \ldots ,n_h$ and any index $j = 1, \ldots ,n_h$ satisfying $k \neq l$, we have that the inverse scale numbers are unequal to each other, namely $b_k^* \neq b_l^*$. 

\item \label{condhstationary} There exists a constant $\rho_+ \in \reels^+_*$ satisfying $\rho_+ < 1$ and $\sum_{k=1}^{n_h} \frac{\alpha_k}{b_k^*} < \rho_+$ for any parameter $\alpha \in \Theta_\alpha$. 

\end{enumerate}
\end{condition}

Condition \ref{condh} \ref{condhparameterspace} implies that there exists a positive constant $\nu_- \in \reels^+_*$ satisfying $\nu > \nu_-$ for any baseline parameter $\nu \in \Theta_\nu$ and thus the simple point processes are well-defined. This is also required in the simpler case of heterogeneous Poisson processes without a kernel (see \cite{daley2003introduction}). More generally, Condition \ref{condh} \ref{condhparameterspace} assumes that the parameters are uniformly bounded away from $0$ and bounded. 

In Condition \ref{condh} \ref{condhhexp}, we assume that the functions $h_k$ are exponential functions in which the inverse scale number $b_k^*$ are known to the statistician. Thus, the inverse scale numbers $b_k^*$ are not parameters of the model to be estimated. However, the inverse scale numbers $b_k^*$ may be unknown in practice. In such a situation, we use parametric estimation based on maximum likelihood. This is especially useful as the inverse scale numbers $b_k^*$ are used to choose the time increment $\Delta$ (see Section \ref{secimplementation}). Moreover, the kernel $h$ is a sum of exponential functions and this introduces heterogeneity in the kernel. Most of the works on parametric estimation for Hawkes processes assume that the kernel function $h$ is exponential. The main reason is that Hawkes processes are more tractable in that particular case where the intensity $\lambda$ is a Markov process. Then, the proofs rely heavily on the Markov property of the exponential distribution (see \cite{clinet2017statistical}). However, we are slightly more flexible by allowing the kernel to be the sum of exponential kernels but we are not as flexible as \cite{potiron2026mutually} and \cite{potiron2026hawkes}. 

Condition \ref{condh} \ref{condhidentifiability} is naturally required for identifiability purposes. Without Condition \ref{condh} \ref{condhidentifiability}, we cannot retrieve the parameters $\alpha_k^*$ and $\alpha_j^*$ when $b_k^* = b_j^*$. We define the $L^1$ norm for the kernel function $h$ of dimension $1$ as $\| h \|_1 = \int_0^{\infty} h(s) ds$.
Condition \ref{condh} \ref{condhstationary} states that the kernel integral $\| h \|_1$ is strictly smaller than unity. This is exactly the condition which is necessary to obtain a stationary intensity with finite first moment (see Lemma 1 (p.\ 495) in \cite{hawkes1974cluster} and Theorem 1 (p.\ 1567) in \cite{bremaud1996stability}).

In the theorem that follows, we state the consistency and the
central limit theorem of the parametric estimation procedure for Hawkes processes in which its intensity has a parametric form. The parametric estimation procedure is based on ordinary least squares. We consider asymptotics when the final time diverges to infinity, namely $T \rightarrow + \infty$. As expected, we obtain the convergence rate $\sqrt{T\Delta}$ which is affected by the time increment $\Delta$. The main idea of the proofs is to apply the technology of \cite{white2001asymptotic} (Theorem 5.17, p. 126). Hereafter, $\xi$ is defined as a standard normal vector of dimension $n$. Finally, we denote the standard convergence in distribution of the family of random vectors $X_T$ to the random vector $Y$ as the final time $T \rightarrow +\infty$ with respect to the probability measure $\proba$ by $X_T \cvdistrib Y$.

\begin{theorem}[Central limit theorem]
\label{thclt}
We assume that Condition \ref{condh} holds. Then, we have the central limit theorem of the parametric estimation procedure based on ordinary least squares as the final time $T \rightarrow + \infty$, namely $D^{-1/2} \sqrt{T\Delta}(\widehat{\theta}_{T\Delta} - \theta^*) \cvdistrib  \xi$.
\end{theorem}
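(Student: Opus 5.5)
We will verify the hypotheses of Theorem 5.17 (p. 126) in \cite{white2001asymptotic} for the regression representation (\ref{deflr}), with $y_{t\Delta}$ and $x_{t\Delta}$ as in (\ref{defx}). That theorem requires four ingredients: (i) the model $y_{t\Delta}=x_{t\Delta}'\theta^*+u_{t\Delta}$; (ii) a law of large numbers $T^{-1}X_{T\Delta}X_{T\Delta}'-M_{T\Delta}\to 0$ in probability, with $M_{T\Delta}$ uniformly positive definite; (iii) a central limit theorem $V_{T\Delta}^{-1/2}(T\Delta)^{-1/2}X_{T\Delta}U_{T\Delta}\cvdistrib\xi$, with $V_{T\Delta}$ uniformly positive definite; and (iv) convergence of $M_{T\Delta}$ and $V_{T\Delta}$ to $M$ and $V$. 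Item (i) holds by construction. Given (ii)--(iv), the conclusion $D^{-1/2}\sqrt{T\Delta}(\widehat\theta_{T\Delta}-\theta^*)\cvdistrib\xi$ follows from the decomposition
\[
\widehat\theta_{T\Delta}-\theta^*=(X_{T\Delta}X_{T\Delta}'/T)^{-1}X_{T\Delta}U_{T\Delta}/T
\]
and Slutsky's lemma. The factor $\Delta$ only rescales the normalisation consistently between numerator and denominator.

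\textbf{Step 1 (stationarity and mixing).} Under Condition \ref{condh} \ref{condhhexp}, the vector $Z_t=(\lambda^{(1)}_t,\ldots,\lambda^{(n_h)}_t)$, with $\lambda^{(k)}_t=\int_0^t e^{-b_k^*(t-s)}dN_s$, is a Markov process, and $\lambda_t=\nu^*+\sum_k\alpha_k^*\lambda_t^{(k)}$. Condition \ref{condh} \ref{condhstationary} gives $\|h\|_1<1$. By \cite{bremaud1996stability} and \cite{hawkes1974cluster}, there is then a unique stationary version, and the process started from the initial condition (first event before $T_0^+$) couples to it. I would show that $Z$ is $V$-geometrically ergodic, using a drift function such as $\exp(c\sum_k Z^{(k)})$, as in \cite{clinet2017statistical}. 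This yields geometric $\beta$-mixing and moments of all orders for $N((a,a+\Delta])$, uniformly in time. Since $y_{t\Delta}$ and $x_{t\Delta}$ are functionals of $N$ on $[(t-1)\Delta,t\Delta)$ together with $Z_{(t-1)\Delta}$, the sequence $(x_{t\Delta},y_{t\Delta})$ is geometrically mixing with uniformly bounded $(2+\delta)$-th moments. Asymptotic stationarity then gives (iv): $M_{T\Delta}\to M$ and $V_{T\Delta}\to V$.

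\textbf{Step 2 (LLN and CLT).} The mixing law of large numbers gives (ii). For (iii), the Cramér--Wold device reduces the problem to a scalar mixing sequence $\ell' x_{t\Delta}u_{t\Delta}$ with summable mixing coefficients and finite $2+\delta$ moments. A CLT for mixing arrays (White, Theorem 5.20) then applies, provided that $\esp[x_{t\Delta}u_{t\Delta}]\to0$ and $V$ is positive definite. Here $u_{t\Delta}$ has a martingale part $A_{t\Delta^-}-A_{(t-1)\Delta}$. The regressor, however, is not the intensity integral itself, so the centring has to be checked explicitly. If $\esp[x_{t\Delta}u_{t\Delta}]$ has a nonzero stationary limit, $\theta^*$ must be reinterpreted, so this orthogonality is the point I would verify first.

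\textbf{Step 3 (positive definiteness, the main obstacle).} We need $M$ (and likewise $V$) to be nonsingular, that is, no nonzero $c$ with $c'x_{t\Delta}=0$ almost surely under stationarity. The constant component together with the components $\sum_{T_i\in[(t-1)\Delta,t\Delta)}e^{-b_k^*(t-T_i)}$ are linearly independent as functions of the event configuration. Indeed, on the event of exactly one point at position $s$ in the window, which has positive probability, $c_0+\sum_k c_ke^{-b_k^*(t-s)}=0$ for all $s$ in an interval, and this forces $c=0$ because the $b_k^*$ are distinct by Condition \ref{condh} \ref{condhidentifiability}. Uniform lower bounds on the eigenvalues then follow by continuity and Condition \ref{condh} \ref{condhparameterspace}. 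I expect establishing the mixing/ergodicity bounds of Step 1 for the multi-exponential Markov state, together with the centring issue in Step 2, to be the hardest part.
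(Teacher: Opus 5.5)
You correctly single out the centring in Step 2 as the crux, and your fallback remark is the accurate one. For fixed $\Delta$ the moment condition $\esp[x_{t\Delta}u_{t\Delta}]\to0$ is false. It therefore cannot be verified, and no combination of the mixing, moment and rank results in Steps 1 and 3 gives a limit centred at $\theta^*$.

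It already fails for the intercept. Under stationarity the mean intensity is $\mu^*=\nu^*/(1-\sum_k\alpha_k^*/b_k^*)$ and $\esp[\int_a^b f(s)\,dN_s]=\mu^*\int_a^b f(s)\,ds$. So with the regressors (\ref{defx}), reading $h_k(t\Delta-s)$ for $h_k(t-s)$,
\[
\esp[u_{t\Delta}]=\mu^*\Bigl(\Delta-1+\sum_{k=1}^{n_h}\frac{\alpha_k^*}{b_k^*}e^{-b_k^*\Delta}\Bigr),
\]
which is strictly negative whenever $\Delta<1-\sum_k\alpha_k^*/b_k^*$.

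The regression actually implemented in Section \ref{secimplementation} regresses $\widetilde y_t=y_{t\Delta}/\Delta$ on the lagged binned sums $h_{t,k}$. There $\esp[h_{t,k}]=\mu^*\Delta/(e^{b_k^*\Delta}-1)$, hence
\[
\esp\Bigl[\widetilde y_t-\nu^*-\sum_{k=1}^{n_h}\alpha_k^*h_{t,k}\Bigr]=\mu^*\sum_{k=1}^{n_h}\alpha_k^*\Bigl(\frac{1}{b_k^*}-\frac{\Delta}{e^{b_k^*\Delta}-1}\Bigr)>0
\]
for every $\Delta>0$, roughly $\mu^*\Delta\sum_k\alpha_k^*/2$ for small $\Delta$.

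So $\theta^*$ violates the population normal equations. Consequently $\widehat\theta_{T\Delta}$ converges to the pseudo-true value $\theta^\dagger(\Delta)=M^{-1}\lim_T\esp[x_{T\Delta}y_{T\Delta}]\neq\theta^*$, and $\sqrt{T\Delta}(\widehat\theta_{T\Delta}-\theta^*)$ diverges. The structural reason is that $\esp[y_{t\Delta}\mid\filt_{(t-1)\Delta}]$ is not $x_{t\Delta}'\theta^*$. The regressors in (\ref{defx}) are built from the same window as $y_{t\Delta}$, so they are not $\filt_{(t-1)\Delta}$-measurable and are correlated even with the martingale part of $u_{t\Delta}$. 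The lagged binned regressors instead carry a discretisation bias that vanishes only as $\Delta\to0$.

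The paper's proof does not supply this step either. Its route differs from yours mainly in packaging. It reduces to the stationary version through Lemma \ref{lemstat} and the coupling bound in Lemma \ref{lemstatexistence} (ii). It then checks the stationary--ergodic hypotheses of White's Theorem 5.17, with ergodicity obtained via $\rho$-mixing. You instead stay with the nonstationary sequence and use geometric $\beta$-mixing with a CLT for heterogeneous mixing sequences. Incidentally, the four ingredients you list belong to a high-level asymptotic-normality result, not to Theorem 5.17 itself.

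In the paper the centring enters only through the mixingale hypothesis, Lemma \ref{lemmixingale}, which is asserted by reference. A mixingale with $\gamma_m\to0$ has mean zero, so that lemma presupposes $\esp[\overline{x}_{t\Delta}\overline{u}_{t\Delta}]=0$, which the computation above contradicts.

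A correct statement needs one of two changes. The first is to recentre at $\theta^\dagger(\Delta)$; for that version your Steps 1 and 3 are the right ingredients (a Markov drift function for mixing, and linear independence of $1,e^{-b_1^*u},\ldots,e^{-b_{n_h}^*u}$ from distinct $b_k^*$). The second is to let $\Delta=\Delta_T\to0$, with predictable regressors and $\sqrt{T\Delta_T}\,\|\theta^\dagger(\Delta_T)-\theta^*\|\to0$; in the limit this amounts to the continuous-time least-squares contrast.
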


We now give a condition required for the consistency of the variance matrix estimator $\widehat{V}_{T\Delta}$.

\begin{condition} 
\label{condconsistency}
\begin{enumerate}[label=(\alph*)]

\item \label{condconsistencym} We have the sequence $m_T \rightarrow +\infty$ and $m_T T^{-1/4} \rightarrow 0$ as the final time $T \rightarrow +\infty$.

\item \label{condconsistencyweights}We have $w_{T,\tau} < C_\delta$ in which $C_\delta \in \reels^+$ is a deterministic constant for any time index $T\in \naturels_*$ and any index  $\tau \in \naturels_*$. We also have $w_{T,\tau} \rightarrow 1$ as the final time $T \rightarrow +\infty$ for any index $\tau \in \naturels_*$.

\end{enumerate}
\end{condition}

In Condition \ref{condconsistency} \ref{condconsistencym}, we have that the deterministic sequence $m_T$ is growing with the final time index $T$, namely we have $m_T T^{-1/4} \rightarrow 0$. Thus, the growing rate required is explicitly given.
With that growing rate
the terms in the variance $V_{T\Delta}$ ignored by its estimator $\widehat{V}_{T\Delta}$ remain negligible. In Condition \ref{condconsistency} \ref{condconsistencyweights},
we require $w_{T,\tau} \rightarrow 1$ as the final time $T \rightarrow +\infty$ for any index $\tau \in \naturels_*$ for consistency of the variance matrix estimation procedure. This is natural given how the weights appear in Definition (\ref{defVThat}). For simplicity,
we consider only nonstochastic weights in what follows. Stochastic
weights can be treated straightforwardly in a similar manner.

In the following theorem, we show the consistency of the variance matrix estimation procedure. We consider asymptotics when the final time diverges to infinity, namely $T \rightarrow + \infty$. The main idea of the proofs is to apply the technology of \cite{white2001asymptotic} (Theorem 6.21, p. 159).  Finally, we denote the standard convergence in probability of the family of random vectors $X_T$ to the random vector $Y$ as the final time $T \rightarrow +\infty$ with respect to the probability measure $\proba$ by $X_T \cvproba Y$.

\begin{theorem}[Variance estimator consistency]
\label{thconsistency}
We assume that Conditions \ref{condh} and \ref{condconsistency} hold. Then, we have the variance estimator consistency of the parametric estimation procedure based on ordinary least squares as the final time $T \rightarrow + \infty$, namely $\widehat{V}_{T\Delta} - V_{T\Delta} \cvproba 0$.
\end{theorem}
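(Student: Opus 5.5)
We reduce Theorem \ref{thconsistency} to Theorem 6.21 (p.\ 159) of \cite{white2001asymptotic}. That result gives consistency of the Newey--West type estimator $\widetilde{V}_n$, which is our $\widehat{V}_{T\Delta}$, under the following ingredients:
(i) the sequence $\{(x_{t\Delta},u_{t\Delta})\}$ is mixing of suitable size ($\alpha$- or $\phi$-mixing), or near-epoch dependent on such a sequence;
(ii) uniform moment bounds of order $2(r+\delta)$ on $x_{t\Delta}$ and $u_{t\Delta}$;
(iii) $\widehat{\theta}_{T\Delta}-\theta^*\cvproba 0$;
(iv) the conditions on $m_T$ and $w_{T,\tau}$.
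Condition \ref{condconsistency} supplies (iv) verbatim, since $m_T=o(T^{1/4})$ and the weights are bounded with $w_{T,\tau}\to 1$. Ingredient (iii) follows from Theorem \ref{thclt}, because $\sqrt{T\Delta}(\widehat{\theta}_{T\Delta}-\theta^*)$ is tight. So the work is to verify (i) and (ii). We expect these to be the same facts already used to prove Theorem \ref{thclt} via Theorem 5.17 of \cite{white2001asymptotic}, and we would reuse them.

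\textbf{Moments.} Under Condition \ref{condh}\ref{condhhexp}, each regressor satisfies $|x_{t\Delta,k}|\le N([(t-1)\Delta,t\Delta))$, since $h_k\le 1$. Also $|u_{t\Delta}|\le C(1+N([(t-1)\Delta,t\Delta)))$ because $\Theta$ is bounded by Condition \ref{condh}\ref{condhparameterspace}. It therefore suffices that $\sup_t\esp[N((t-1)\Delta,t\Delta]^p]<\infty$ for every $p$. We would obtain this from the cluster (Galton--Watson) representation of the Hawkes process. Condition \ref{condh}\ref{condhstationary} gives branching ratio $\|h\|_1<\rho_+<1$. The exponential decay of $h$ then yields all moments of cluster sizes and of counts on bounded intervals, uniformly in $t$, as in \cite{hawkes1974cluster} and \cite{clinet2017statistical}.

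\textbf{Dependence.} With a sum of exponentials, the vector $(\lambda^{(1)}_t,\dots,\lambda^{(n_h)}_t)$, where $\lambda^{(k)}_t=\int_0^t e^{-b_k^*(t-s)}dN_s$, is a Markov process. Under the stability condition it is geometrically ergodic, converging to its stationary law from the initial condition $T_0<T_0^+$. We would invoke the exponential $\beta$-mixing results of \cite{clinet2017statistical}, extended to finitely many exponential components (the extension is routine given Condition \ref{condh}\ref{condhidentifiability} and \ref{condhstationary}). This gives $\alpha$-mixing coefficients of the discretized sequence that decay geometrically, hence of any size. Since $(x_{t\Delta},u_{t\Delta})$ is a measurable function of the path on $[(t-1)\Delta,t\Delta]$, it inherits these mixing coefficients.

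\textbf{Main obstacle and conclusion.} We expect the main obstacle to be the non-stationarity from the start at time $0$, since White's theorem is stated for heterogeneous mixing sequences. Geometric ergodicity handles this, and it also guarantees that $V_{T\Delta}$ is well defined and bounded, so the result can be stated as $\widehat{V}_{T\Delta}-V_{T\Delta}\cvproba 0$. The remaining expansion is as in White's proof. Writing $\widehat{u}_{t\Delta}=u_{t\Delta}-x_{t\Delta}'(\widehat{\theta}_{T\Delta}-\theta^*)$, the terms involving $\widehat{\theta}_{T\Delta}-\theta^*$ are $O_P(m_T T^{-1/2})$ by the moment bounds, and hence negligible. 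The infeasible estimator built with the true $u_{t\Delta}$ converges to $V_{T\Delta}$ by a mixingale law of large numbers applied lag by lag, using $m_T T^{-1/4}\to 0$ to control the summed variances. The tail lags $\tau>m_T$ vanish by the geometric mixing decay.
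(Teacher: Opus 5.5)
Your route is the paper's route: White's Theorem 6.21, fed with the ingredients behind Theorem \ref{thclt}, with Condition \ref{condconsistency} supplying the bandwidth and weight conditions. Your $O_P(m_T T^{-1/2})$ bound for the terms involving $\widehat{\theta}_{T\Delta}-\theta^*$ and your lag-by-lag law of large numbers are the standard steps inside that theorem.

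The step that fails is the last one, where you identify White's target with $V_{T\Delta}$. White's theorem is applied with sample size $T$, which is how $\widehat{V}_{T\Delta}$ in (\ref{defVThat}) is normalized. It therefore gives consistency for $\var(T^{-1/2}X_{T\Delta}U_{T\Delta})$. But $V_{T\Delta}=\var((T\Delta)^{-1/2}X_{T\Delta}U_{T\Delta})$, so $\var(T^{-1/2}X_{T\Delta}U_{T\Delta})=\Delta V_{T\Delta}$. What your argument proves is $\widehat{V}_{T\Delta}-\Delta V_{T\Delta}\cvproba 0$. Since $V=\lim V_{T\Delta}$ is nonzero (indeed positive definite, cf.\ Lemma \ref{lemmatrix}), this gives $\widehat{V}_{T\Delta}-V_{T\Delta}\cvproba(\Delta-1)V\neq 0$ whenever $\Delta\neq 1$. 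So you do not obtain the displayed claim for general $\Delta$. The paper's one-line proof passes over the same mismatch; its appendix in fact switches to the $T$-normalized $\overline{V}_{T\Delta}=\var(T^{-1/2}\overline{X}_{T\Delta}\overline{U}_{T\Delta})$. You should state and prove consistency for the $T$-normalized variance $\Delta V_{T\Delta}$.

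Your list (i)--(iv) of White's hypotheses also omits the orthogonality condition $\esp[x_{t\Delta}u_{t\Delta}]=0$, and neither mixing nor moment bounds provide it. It cannot be skipped, because $\widehat{V}_{T\Delta}$ is built from uncentered products. A mean $\mu\neq 0$ of $x_{t\Delta}u_{t\Delta}$ would add roughly $\big(w_{T,0}+2\sum_{\tau=1}^{m_T}w_{T,\tau}\big)\mu\mu'$ to it. That term diverges under Condition \ref{condconsistency}, while $V_{T\Delta}$ stays bounded. In the paper this is carried by the mixingale property (Lemma \ref{lemmixingale}). You need to invoke it, or deduce it from the consistency asserted in Theorem \ref{thclt}: together with $X_{T\Delta}X_{T\Delta}'/T\cvproba M$ and your law of large numbers, consistency forces $T^{-1}\sum_{t\le T}\esp[x_{t\Delta}u_{t\Delta}]\to 0$. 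It is not a by-product of the martingale property of $A$. The vector $x_{t\Delta}$ in (\ref{defx}) is built from the same window $[(t-1)\Delta,t\Delta)$ as $y_{t\Delta}$, so it is not $\filt_{(t-1)\Delta}$-measurable.
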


In the theorem that follows, we establish the consistency and central limit theorem of the parametric estimation procedure when we estimate the variance matrix. As in Theorem \ref{thclt}, we obtain the same convergence rate $\sqrt{T\Delta}$ which is affected by the time increment $\Delta$. We consider asymptotics when the final time diverges to infinity, namely $T \rightarrow + \infty$. The main idea of the proofs is to apply the technology of \cite{white2001asymptotic} (Theorem 5.17, p. 126) with Theorem \ref{thclt} and Theorem \ref{thconsistency}. 

\begin{theorem}[Central limit theorem with variance estimator]
\label{thcltest}
We assume Conditions \ref{condh} and \ref{condconsistency} hold. Then, we have the central limit theorem with variance estimator of the parametric estimation procedure based on ordinary least squares as the final time $T \rightarrow + \infty$, namely $\widehat{D}_{T\Delta}^{-1/2} \sqrt{T\Delta}(\widehat{\theta}_{T\Delta} - \theta^*) \cvdistrib  \xi$.
\end{theorem}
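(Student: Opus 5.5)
The plan is to deduce Theorem \ref{thcltest} from Theorem \ref{thclt} and Theorem \ref{thconsistency} by a Slutsky-type argument, following the structure of Theorem 5.17 (p.\ 126) in \cite{white2001asymptotic}. The key point is that $\widehat{D}_{T\Delta}$ must converge in probability to a matrix that is asymptotically equivalent to $D$. Once this holds, we can write
\[
\widehat{D}_{T\Delta}^{-1/2} \sqrt{T\Delta}(\widehat{\theta}_{T\Delta} - \theta^*) = \big(\widehat{D}_{T\Delta}^{-1/2} D^{1/2}\big)\, D^{-1/2} \sqrt{T\Delta}(\widehat{\theta}_{T\Delta} - \theta^*).
\]
We then show $\widehat{D}_{T\Delta}^{-1/2} D^{1/2} \cvproba I_n$ and conclude with Theorem \ref{thclt} and Slutsky's lemma.

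First, we will establish $X_{T\Delta}X'_{T\Delta}/T - M_{T\Delta} \cvproba 0$ and $M_{T\Delta}\to M$, with $M$ positive definite. Under Condition \ref{condh}, the Hawkes intensity with a sum-of-exponentials kernel is a Markov process that converges to its stationary version, because $\|h\|_1<\rho_+<1$. The regressors $x_{t\Delta}$ are functionals of the point process on $[(t-1)\Delta,t\Delta)$, so they inherit mixing and have moments of all orders, since a stationary Hawkes process has exponential moments on bounded windows. A weak law of large numbers for mixing sequences, as in Chapter 3 of \cite{white2001asymptotic}, then gives the convergence. This step is presumably already carried out in the proof of Theorem \ref{thclt}, which needs $M_{T\Delta}$ and its invertibility. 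We will reuse that argument, including the nonsingularity of $M$ from Condition \ref{condh} \ref{condhidentifiability}: distinct $b_k^*$ give linearly independent exponential regressors, and the constant is not in their span because the events are random.

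Next, Theorem \ref{thconsistency} gives $\widehat{V}_{T\Delta}-V_{T\Delta}\cvproba 0$, and by definition (\ref{defV}) $V_{T\Delta}\to V$. We also need $V$ positive definite, which follows from the CLT step in Theorem \ref{thclt} (Theorem 5.17 in \cite{white2001asymptotic} requires $V$ to be uniformly positive definite, and $D^{-1/2}$ already appears there). By continuity of matrix inversion and multiplication at nonsingular points, $\widehat{D}_{T\Delta} = (X_{T\Delta}X'_{T\Delta}/T)^{-1}\widehat{V}_{T\Delta}(X_{T\Delta}X'_{T\Delta}/T)^{-1} \cvproba M^{-1}VM^{-1}=D$. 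Consequently $\widehat{D}_{T\Delta}$ is positive definite with probability tending to one, so $\widehat{D}_{T\Delta}^{-1/2}$ is well defined on events of probability tending to one. By continuity of the symmetric square root on positive definite matrices, $\widehat{D}_{T\Delta}^{-1/2}D^{1/2}\cvproba I_n$.

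Finally, Slutsky's lemma combined with Theorem \ref{thclt} yields convergence to $\xi$. The normalization must be handled with care: $\widehat{D}_{T\Delta}$ is normalized by $T$ while the rate is $\sqrt{T\Delta}$. Since $\Delta$ is fixed, this is only a constant rescaling, and it must match the convention used for $D$ in Theorem \ref{thclt}, with $\Delta$ absorbed consistently in $V_{T\Delta}$. I expect the main obstacle to be this bookkeeping together with the uniform positive definiteness of $M$ and $V$. If the proof of Theorem \ref{thclt} does not already supply nonsingularity of $V$, I will argue it from the martingale structure. The compensated increments $u_{t\Delta}$ satisfy $\esp[u_{t\Delta}\mid \filt_{(t-1)\Delta}]$ equal to a predictable term, and the conditional variance of $N$ on each interval is bounded below by $\nu_-\Delta$. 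This lower bound makes the long-run variance of $X_{T\Delta}U_{T\Delta}$ bounded below by a multiple of $M$.
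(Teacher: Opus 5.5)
Your argument is correct and follows the paper's route. The paper invokes Theorem 5.17 of \cite{white2001asymptotic}, with conditions (i)--(iv) supplied by Theorem \ref{thclt} and condition (v) by Theorem \ref{thconsistency}, and that theorem packages exactly your Slutsky step ($\widehat{D}_{T\Delta}\cvproba D$ with $D$ positive definite, hence $\widehat{D}_{T\Delta}^{-1/2}D^{1/2}\cvproba I_n$).

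Taking Theorems \ref{thclt} and \ref{thconsistency} as stated, no $\Delta$-rescaling is needed, since both are formulated with the same $V_{T\Delta}$, and nonsingularity of $V$ is already implicit in the appearance of $D^{-1/2}$, so your martingale fallback is unnecessary. That is just as well: $x_{t\Delta}$ is built from events in $[(t-1)\Delta,t\Delta)$ and is not $\filt_{(t-1)\Delta}$-measurable, so the conditional-variance lower bound would not transfer to the long-run variance as directly as you suggest.
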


Now, we consider the Wald test based on the parameter $\theta^*$. The following corollary shows that the Wald test statistic $S_{T\Delta}$ converges in distribution to a chi-squared distribution with $q$ degrees of freedom under the null hypothesis $H_0$ and is consistent under the alternative hypothesis $H_1$. We consider asymptotics when the final time diverges to infinity, namely $T \rightarrow + \infty$. This result complements Corollary 2 in \cite{erdemlioglu2025latency}. The difference is that we use least square estimation while \cite{erdemlioglu2025latency} relies on maximum likelihood estimation. The proof of the convergence under the null hypothesis $H_0$ is obtained by applying the delta method (see Theorem 20.8 (p. 297) in \cite{van1998asymptotic}) to Theorem \ref{thclt} and applying Slutsky's Theorem to Theorem \ref{thconsistency}. We define $Q(p)$ as the quantile function of the chi-squared distribution with one degree of freedom for any probability $p \in (0,1)$. Namely, the quantile function $Q$ is the inverse of the cumulative distribution function.

\begin{corollary}
\label{corwald}
We assume that Conditions \ref{condh} and \ref{condconsistency} hold. Then, the test statistic $S_{T\Delta}$ converges in distribution to a chi-squared random variable with $q$ degrees of freedom under the null hypothesis $H_0$ as the final time $T \rightarrow + \infty$. The test statistic $S_{T\Delta}$ is also consistent under the alternative hypothesis  $H_1$, namely we have $\proba ( S_{T\Delta} > Q(p) \mid H_1 ) \rightarrow 1$ for any probability $p \in (0,1)$ as the final time $T \rightarrow + \infty$.
\end{corollary}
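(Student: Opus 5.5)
We argue from Theorem \ref{thcltest} (equivalently Theorem \ref{thclt} together with Theorem \ref{thconsistency}) and the continuous mapping theorem. Write $Z_{T\Delta}=\sqrt{T\Delta}(\widehat{\theta}_{T\Delta}-\theta^*)$. By Theorem \ref{thclt}, $Z_{T\Delta}\cvdistrib D^{1/2}\xi\sim N(0,D)$.

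Next we show $\widehat{D}_{T\Delta}\cvproba D$. First, $X_{T\Delta}X'_{T\Delta}/T=T^{-1}\sum_t x_{t\Delta}x_{t\Delta}'\cvproba M$. This is the law of large numbers step already needed in the proof of Theorem \ref{thclt}: under Condition \ref{condh} the intensity is a stationary, geometrically mixing Markov process, and the regressors have all moments. Second, $\widehat{V}_{T\Delta}\cvproba V$ follows from Theorem \ref{thconsistency} together with $V_{T\Delta}\to V$ from (\ref{defV}). Since $M$ is nonsingular (by Condition \ref{condh} \ref{condhidentifiability}, as established for Theorem \ref{thclt}), continuous mapping gives $\widehat{D}_{T\Delta}\cvproba D$.

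Under $H_0$ we have $R\theta^*=r$, so
\[
\sqrt{T}\big(R\widehat{\theta}_{T\Delta}-r\big)=\Delta^{-1/2}RZ_{T\Delta}\cvdistrib N\big(0,\Delta^{-1}RDR'\big).
\]
This is the delta method for a linear map. Here a normalization issue must be settled. The statistic (\ref{defW}) uses $T$ rather than $T\Delta$, so we need $\widehat{D}_{T\Delta}$ to estimate the variance at the scale $T$. Because $\widehat{D}_{T\Delta}$ is built from $T^{-1}$-normalized sums, it targets the asymptotic variance of $\sqrt{T}(\widehat{\theta}_{T\Delta}-\theta^*)$. We therefore read $D$ consistently with that scaling, absorbing $\Delta$ into the definition of $V_{T\Delta}$, so that the factors of $\Delta$ cancel. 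We assume $R$ has full row rank $q$, so that $RDR'$ is positive definite. Then $(R\widehat{D}_{T\Delta}R')^{-1}\cvproba (RDR')^{-1}$, and Slutsky's theorem gives
\[
S_{T\Delta}\cvdistrib \zeta'(RDR')^{-1}\zeta,\qquad \zeta\sim N(0,RDR'),
\]
which is $\chi^2_q$ distributed.

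Under $H_1$, set $\delta=R\theta^*-r\neq 0$. Consistency of $\widehat{\theta}_{T\Delta}$ holds because $Z_{T\Delta}=O_{\proba}(1)$, so $R\widehat{\theta}_{T\Delta}-r\cvproba\delta$. Then
\[
S_{T\Delta}/T\cvproba \delta'(RDR')^{-1}\delta>0,
\]
hence $S_{T\Delta}\to\infty$ in probability. It follows that $\proba(S_{T\Delta}>Q(p))\to1$ for any fixed $p$; the same holds with a $\chi^2_q$ quantile.

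The only delicate point is the bookkeeping of the $\Delta$ normalization between $D$, $\widehat{D}_{T\Delta}$ and the factor $T$ in (\ref{defW}), together with the full-rank requirement on $R$. Beyond that, the argument is routine Slutsky and continuous-mapping reasoning once Theorems \ref{thclt} and \ref{thconsistency} are available.
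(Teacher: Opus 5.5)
Your proposal is correct and follows essentially the same route as the paper. Under $H_0$ you combine Theorem \ref{thclt} with Theorem \ref{thconsistency} and Slutsky's theorem, and under $H_1$ you show that $S_{T\Delta}$ diverges; your continuous-mapping argument through the linear map $\theta \mapsto R\theta$ is the cleaner form of the paper's delta-method step on the quadratic form $g_S$. You also address three points that the paper's proof passes over without comment: the $\Delta$ normalization (your concern is genuine: the statistic uses $T$ and $\widehat{D}_{T\Delta}$ is built from $T^{-1}$-normalized sums, so it targets the asymptotic variance of $\sqrt{T}(\widehat{\theta}_{T\Delta}-\theta^*)$ rather than the $\sqrt{T\Delta}$-scale variance in Theorem \ref{thclt}), the full row rank of $R$, and the explicit limit $S_{T\Delta}/T \cvproba \delta'(RDR')^{-1}\delta>0$ under $H_1$.
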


\section{Practical implementation}\label{secimplementation}

In this section, we discuss the practical implementation of the parametric estimation procedure based on ordinary least squares. We first describe the discrete-time linear regression used for the parametric estimation. Then, we consider the choice of the time increment $\Delta$ and the number of functions $n_h$ in the kernel. Finally, we specify the number of terms in the sum $m_T$ and the weights $w_{T,\tau}$ appearing in the estimator of the variance-covariance matrix $V_{T\Delta}$.

Our estimation procedure requires discretizing the Hawkes process with the time increment $\Delta$. For that purpose, we introduce the response variable which is the event rate $\widetilde{y}_{t}=\frac{y_{t\Delta}}{\Delta}$ for any time index $t = 1,\ldots, T$. We also define the discrete variable $h_{t,k}=\sum_{s<t}y_{s\Delta} \exp(-b^*_k\Delta(t-s))$ for each exponential function $k=1,\ldots,n_h$.
This discrete variable is updated recursively by $h_{t+1,k}=\exp(-b^*_k\Delta)h_{t,k}+y_{t\Delta}$ in which $h_{1,k}=0$. Then, the parametric estimation procedure reduces to the linear regression $\widetilde{y}_t = \nu+\sum_{k=1}^{n_h} \alpha_k h_{t,k}+u_{t\Delta}$ for any time index $t = 1,\ldots, T$. Moreover, we solve 
\begin{equation}\label{stat_crit}
\widehat{\theta}_{T\Delta} = \underset{(\nu,\alpha_1,\cdots,\alpha_{n_h}) \in \Theta}{\arg\min}\sum_{t=1}^T\left(\widetilde{y}_t-\nu-\sum_{k=1}^{n_h}\alpha_k h_{t,k}\right)^2, \; \Theta = \Bigg\{ \nu_-<\nu<\nu_+, \alpha_-< \alpha_k < \alpha_+, \sum_{k=1}^{n_h} \alpha_k/b^*_k < \rho_+\Bigg\}.
\end{equation}
Finally, we will set $\nu_-=10^{-8}$, $\nu_+=100$, $\alpha_-=10^{-8}$, $\alpha_+=b^*_{\max}$ in which $b^*_{\max}=\max_{k=1,\ldots,n_h} b^*_k$ and $\rho_+=0.99$ throughout our numerical experiments. In particular, this framework satisfies Condition \ref{condh} from the theory. In practice, one must select a convenient value for the time increment $\Delta$ from which an estimator $\widehat{\theta}_{T\Delta}$ based on ordinary least squares is deduced. Ideally, the selected $\Delta$ should provide an estimator that should be as close as possible to the true parameter $\theta^*$. A small value of the time increment $\Delta$ provides a more accurate time series approximation of the continuous-time Hawkes process. However, a large value of the time increment $\Delta$ leads to a coarser approximation and may deteriorate the estimation accuracy. Rather than selecting the time increment $\Delta$ directly, we propose to parameterize it as $\Delta=c/b^*_{\max}$ in which $c$ is a tuning parameter. By parametrization, we mean that the inverse scale numbers $b^*_k$ are now considered as parameters from the model. This choice is motivated by the fact that the normalization expresses the size of the time increment relative to the shortest characteristic time scale of the Hawkes process, namely $c/b^*_{\max}$. Therefore, the same values of the tuning parameter $c$ correspond to comparable discretization levels across Hawkes models with substantially different decay numbers. Hence, this normalization provides a common and interpretable tuning scale across different model specifications. In settings when the true parameter $b^*$ is unknown, we will replace $b^*$ with its maximum likelihood estimator $\widehat{b}^{\text{mle}}_{T\Delta}$. Namely, the procedure relies on the log likelihood process defined for any parameter $(\theta,b) \in \Theta \times \Theta_b$ and any time index $T \in \naturels_*$ as
$l_{T\Delta}(\theta,b) =  \int_0^{T\Delta} \log (\lambda_t(\theta,b)) dN_t - \int_0^{T\Delta} \lambda_t (\theta,b) dt$.
Then, the maximum likelihood estimator is defined as a maximizer of the log likelihood process, namely
$\big(\widehat{\theta}_{T\Delta}^{\text{mle}},\widehat{b}^{\text{mle}}_{T\Delta}\big) \in \operatorname{argmax}_{\theta \in \Theta, b \in \Theta_b} l_{T\Delta} (\theta,b)$. Parametric estimation of Hawkes processes based on maximum likelihood was introduced in \cite{ogata1978asymptotic} and recently used in \cite{potiron2026mutually}. Finally, we will estimate the time increment by $\widehat{\Delta} = \widehat{c}/\widehat{b}^{\text{mle}}_{\max,T\Delta}$ in which $\widehat{b}^{\text{mle}}_{\max,T\Delta} = \max_{k=1,\ldots,n_h} \widehat{b}^{\text{mle}}_{k,T\Delta}$. Since the number of functions $n_h$ of the kernel is also unknown in practice, we propose selecting both the tuning parameter $c$ and $n_h$ in a data-driven way. More specifically, we use the Akaike Information Criterion (AIC) based on the log likelihood process. For a given pair $(c,n_h)$, AIC is computed as
\be
\label{defaic}
\text{AIC}\big(\widehat{\theta}_{T\Delta},\widehat{b}^{\text{mle}}_{T\Delta}\big) = 2(2n_h+1)+2 l_{T\Delta}\big(\widehat{\theta}_{T\Delta},\widehat{b}^{\text{mle}}_{T\Delta}\big).
\ee
Here, the optimal pair $(\widehat{c},\widehat{n}_h)$ is chosen by $(\widehat{c},\widehat{n}_h)=\arg\min_{(c,n_h) \in \mathcal{C} \times \naturels_*}\text{AIC}\big(\widehat{\theta}_{T\Delta},\widehat{b}^{\text{mle}}_{T\Delta}\big)$. 
Throughout this paper, we consider a fixed logarithmically equally spaced grid of candidate values for the time increment parameter $c$, from which the corresponding candidate values of the time increment $\Delta$ are obtained through the aforementioned scaling. More precisely, we will work in Section \ref{secempirical} with the grid $c \in \mathcal{C} = \{10^{-6.5},\ldots,10^2\}$ with 96 logarithmically equally spaced grid points. The sensitivity of the estimation accuracy with respect to the time increment $\Delta$ is thoroughly assessed in Subsection \ref{appendix_subsec:sim_l2error} of the Appendices through simulation experiments.

Finally, the estimator of the variance-covariance matrix $\widehat{V}_{T\Delta}$ from Definition (\ref{defVThat}) requires specifying the number of terms in the sum $m_T$ and the weights $w_{T,\tau}$. Throughout this paper, we follow Section 6.3 of \cite{white2001asymptotic} and employ exponentially decaying weights $w_{T,\tau}=\exp(-\gamma \tau)$ in which $\gamma = 0.01$. Moreover, Condition~\ref{condconsistency}\ref{condconsistencym} requires $m_T T^{-1/4} \rightarrow 0$ as the final time $T \rightarrow +\infty$. Therefore, we set $m_T$ as the nearest integer to $T^{1/4-0.01}$ throughout this paper.

The code for implementation and the numerical experiments detailed hereafter are publicly available in the Github repository: \url{https://github.com/Benjamin-Poignard/parametricHawkes_OLS}. 

\section{Real data experiments}
\label{secempirical}

In this section, we apply our parametric estimation procedure based on ordinary least squares to real financial high-frequency data. We first investigate the prediction performances of the number of transactions based on ordinary least squares and maximum likelihood procedures. Overall, we slightly improve out-of-sample predictions. Our empirical results on predictions also provide evidence for three heterogeneous trader types. In particular, we identify two trader types as high-frequency traders and one trader type as fundamental trader. Finally, we consider a statistical Wald test to confirm the results on heterogeneity of trader types.

The dataset is obtained on  Wharton Research Data Services which is generated from Trades and
Quotes datasets. The dataset contains information about transactions of stock prices and the transaction times. In particular, we restrict to the transactions that correspond to a trade. This means that we rely solely on the trade times and do not use any information about prices of the assets. By duration, we mean the time increment between two consecutive trades in
the dataset. We analyze intraday trade data for Apple (AAPL), Facebook (META), and EOG Resources, Inc. (EOG). AAPL and META are traded on the Nasdaq Stock Exchange, while EOG is traded on the New York Stock Exchange. The sample spans January 3 to December 29, 2017, excluding July 3 and November 24, which were early-closing trading days with shortened regular trading sessions, leaving a total of 249 trading days. We only consider trades executed between 9:30 am and 3:30 pm to avoid opening and closing effect. Table \ref{sumstat} reports the summary statistics about durations of the three stocks.

\begin{table}[H]
\caption{Summary statistics (January 2017 - December 2017).}\label{sumstat}
\begin{centering}
\begin{tabular}{lccccc}
\hline \hline
  Asset        & Observations & Mean (seconds) & Std dev (seconds) & Skewness & Kurtosis \tabularnewline \hline


AAPL trade durations & 31,309,451 & 0.172 & 0.455 & 5.04 & 42.49 \\
META trade durations  & 20,714,199 & 0.260 & 0.695 & 5.08 & 42.72 \\
EOG trade durations   & 4,877,706  & 1.102 & 3.260 & 5.60 & 51.43 \\\hline \hline
\end{tabular}
\par\end{centering}
\end{table}

\subsection{Out-of-sample prediction}\label{subsec:oos}

In this subsection, we investigate the prediction performances of the number of transactions based on ordinary least squares and maximum likelihood procedures. We consider the prediction of the number of transactions for AAPL, META and EOG at an $s$-minute horizon with $s\in\{1,5,10\}$. The prediction starts from $T_0=12600$ in which $T_0$ represents the 12600-th second after 9:30 am, namely 1:00 pm. We define the sequence of prediction origins $\tau_j=T_0+(j-1)60s$ for any index $j=1,\ldots,J$. Here, $J$ is the largest integer such that $\tau_j+60s\leq T$ with $T=21600$ corresponding to 3:30 pm. For each prediction origin $\tau_j$, the prediction interval is $[\tau_j,\tau_j+60s]$. In addition, the in-sample training interval is $I_j=[\tau_j-W,\tau_j]$ in which $W=7200$ seconds. This corresponds to the previous two hours of trades. Thus, the in-sample trade times are $\mathcal{T}_j =\{T_i:\tau_j-W\leq T_i<\tau_j\}$. The realized future trade count is $Y_{j,s} = N_{\tau_j+60s}-N_{\tau_j}$.  To distinguish between the parameters estimated by maximum likelihood and ordinary least squares, we set the $1+n_h$-dimensional parameter $\theta=(\nu,\alpha')'$. We also introduce the $1+2n_h$-dimensional parameter $\vartheta=(\theta',b')'$ of the extended Hawkes model. At each prediction origin $\tau_j$, we apply the following rolling-window procedure:
\begin{itemize}
    \item[(i)] We re-estimate the Hawkes model by maximum likelihood and ordinary least squares using the most recent two hours in-sample training interval $I_j=[\tau_j-W,\tau_j]$. This yields $\widehat{\vartheta}^{\text{mle}}_{\tau_j}=((\widehat{\theta}^{\text{mle}}_{\tau_j})',(\widehat{b}^{\text{mle}}_{\tau_j})')'$ in which $\widehat{\theta}^{\text{mle}}_{\tau_j}
    =(\widehat{\nu}^{\text{mle}}_{\tau_j},(\widehat{\alpha}^{\text{mle}}_{\tau_j})')'$.
    The maximum likelihood estimator jointly estimates the baseline intensity, the excitation parameters and the decay numbers.
    Given the maximum likelihood estimated parameter vector $\widehat{b}^{\text{mle}}_{\tau_j}$, we apply the procedure based on ordinary least squares to estimate $\nu$ and $\alpha$. For each time increment parameter $c$, the time increment is set as $\widehat{\Delta}
    =c/\widehat{b}^{\text{mle}}_{\max,\tau_j}$ in which $\widehat{b}^{\text{mle}}_{\max,\tau_j}=
    \max_{1\leq k\leq n_h}\widehat{b}^{\text{mle}}_{\tau_j,k}$ and the resulting estimator based on ordinary least squares is denoted by $\widehat{\theta}_{\tau_j}
    =(\widehat{\nu}_{\tau_j},(\widehat{\alpha}_{\tau_j})')'$.
    In the regression and predictions based on ordinary least squares,
    the unknown decay vector $b^*$ is replaced by $\widehat{b}^{\text{mle}}_{\tau_j}$. We consider $c\in \mathcal{C}=\{10^{-6.5},\ldots,10^2\}$, using $96$ logarithmically equally spaced grid points.
    
    \item[(ii)] Given the estimated parameters, we compute a new forecast
    $\widehat{Y}_{j,s}$ over $[\tau_j,\tau_j+60s]$.
    \item[(iii)] We shift the estimation window forward by $60s$.
\end{itemize}

This procedure is repeated until the end of the trading day, namely 3:30 pm.
The rolling-window estimation allows the prediction procedure to adapt to intraday non-stationarity or possible regime shifts. 
We denote by $\widehat{\vartheta}_{\tau_j}=((\widehat{\theta}_{\tau_j})',(\widehat{b}^{\text{mle}}_{\tau_j})')'$ the extended Hawkes parameter vector with $\widehat{\theta}_{\tau_j}$ the estimator based on ordinary least squares. In contrast, $\widehat{\vartheta}^{\text{mle}}_{\tau_j}=((\widehat{\theta}^{\text{mle}}_{\tau_j})',(\widehat{b}^{\text{mle}}_{\tau_j})')'$ stands for the extended Hawkes parameter vector with its components estimated by maximum likelihood.
In the ordinary least squares-based estimation equipped with $\widehat{\vartheta}_{\tau_j}$, the fitted intensity is
$\lambda_t\bigl(\widehat{\vartheta}_{\tau_j}\bigr)=\widehat{\nu}_{\tau_j}+\sum_{k=1}^{n_h}\widehat{\alpha}_{\tau_j,k}R_{j,k}(t)$ in which $R_{j,k}(t)=\sum_{T_i<t}\exp\big(-\widehat{b}^{\text{mle}}_{\tau_j,k}(t-T_i)\big)$. For any time $u\geq0$, $m_{j,k}(u)=\esp_{\widehat{\vartheta}_{\tau_j}}
\Big[R_{j,k}(\tau_j+u)\,\big|\,\mathcal{F}_{\tau_j}\Big]$ is the conditional expected state variable. The conditional expected intensity is $\esp_{\widehat{\vartheta}_{\tau_j}}\Big[\lambda_{\tau_j+u}\bigl(\widehat{\vartheta}_{\tau_j}\bigr)\,\big|\,\mathcal{F}_{\tau_j}\Big]=\widehat{\nu}_{\tau_j}+\sum_{k=1}^{n_h}\widehat{\alpha}_{\tau_j,k}m_{j,k}(u)$. The predicted number of trades over the next $s$ minutes is
$\widehat{Y}_{j,s}=\int_0^{60s}\Big[\widehat{\nu}_{\tau_j}+\sum_{k=1}^{n_h}\widehat{\alpha}_{\tau_j,k}m_{j,k}(u)\Big]du$.
The same procedure is applied to produce $\widehat{Y}_{j,s}$ based on $\widehat{\vartheta}^{\text{mle}}_{\tau_j}$.
The corresponding prediction error is $\widehat{u}_{j,s}=
Y_{j,s}-\widehat{Y}_{j,s}$.
Furthermore, we will compute the in-sample Akaike Information Criterion (AIC) for $s$-prediction horizon as $\text{AIC}_{j,s} = 2(2n+1)+2 l_{\tau_j}(\widehat{\vartheta}_{\tau_j})$. Here, $l_{\tau_j}(\widehat{\vartheta}_{\tau_j}) = -\underset{\tau_j-W\leq T_i<\tau_j}{\sum}\log(\lambda_{T_i}(\widehat{\vartheta}_{\tau_j}))+\int^{\tau_j}_{\tau_j-W}\lambda_t(\widehat{\vartheta}_{\tau_j})dt$ for any index $j$ and then we compute the AIC averaged over $J$, yielding the daily averaged $\text{AIC}_s$. In particular, the criterion $\text{AIC}_s$ is a function of $c$ in the ordinary least squares case. For a given $s$, prediction accuracy is assessed using the mean absolute error $\mathrm{MAE}_s=\frac{1}{J}\sum_{j=1}^J|\widehat{u}_{j,s}|$, the root mean square error $\mathrm{RMSE}_s=\sqrt{\frac{1}{J}\sum_{j=1}^J\widehat{u}_{j,s}^2}$ and the in-sample AIC. We repeat the estimation and prediction procedure over the full year and report the corresponding metrics averaged over the $249$ trading days, denoted by $\overline{\mathrm{MAE}}_s$, $\overline{\mathrm{RMSE}}_s$ and $\overline{\mathrm{AIC}}_s$. 

Figures \ref{fig:Apple_prediction}-\ref{fig:EOG_prediction} display the prediction metrics $\overline{\text{MAE}}_s$, $\overline{\text{RMSE}}_s$ and $\overline{\text{AIC}}_s$ at an $s$-minute horizon with $s\in\{1,5,10\}$ for different $n_h$, and for the three stocks. The ordinary least squares-based method frequently matches or even outperforms maximum likelihood in terms of $\overline{\text{MAE}}_s$ and $\overline{\text{RMSE}}_s$. Across all three assets and horizons $s$, these prediction errors exhibit a U-shaped pattern with respect to the parameter $c$, with their minimal often below the corresponding maximum likelihood values. Small values of the parameter $c$ yield relatively stable prediction performance, a behavior consistent with the flat $\ell_2$-error curves observed in the simulation experiments of Subsection \ref{appendix_subsec:sim_l2error} of the Appendices. Large values of the parameter $c$ lead to an excessively coarse discretization and deteriorate prediction accuracy. The $\overline{\text{AIC}}_s$ exhibits a similar pattern. Although its minimum typically occurs at a slightly larger value of the parameter $c$ than those of $\overline{\text{MAE}}_s$ and $\overline{\text{RMSE}}_s$, it identifies a similar range of time increments. This supports its use as a data-driven criterion for selecting the time increment $\Delta$.



Furthermore, increasing the number of trader types $n_h$ generally improves prediction accuracy. This is reflected by lower $\overline{\text{MAE}}_s$, $\overline{\text{RMSE}}_s$ and $\overline{\text{AIC}}_s$ for both estimation methods. However, this improvement no longer holds for large $n_h$. In particular, the prediction performance of maximum likelihood deteriorates markedly when $n_h=5$ for AAPL and META. However, their $\overline{\text{AIC}}_s$ remains close to that obtained for $n_h=3$ suggesting possible overfitting. The predictions based on ordinary least squares are improved when $n_h=4$. However, they become unstable with respect to the parameter $c$ when $n_h=5$ especially in terms of $\overline{\text{RMSE}}_s$. This increased sensitivity suggests that the regression problem becomes increasingly ill-conditioned as the number of parameters grows.
The prediction results confirm the results of Section \ref{secempiricalstudyinsample} of the Appendices, suggesting that there are three different types of traders in each asset. In Section \ref{secempiricalstudytest}, we will document the heterogeneity in the three trader types. However, we find that we can still improve the prediction results with the use of four traders. This shows that the use of four traders in practice could also be useful to the practitioner.

\begin{figure}[htbp]
\centering

\begin{subfigure}[b]{0.32\textwidth}
\centering
\includegraphics[width=.95\textwidth]{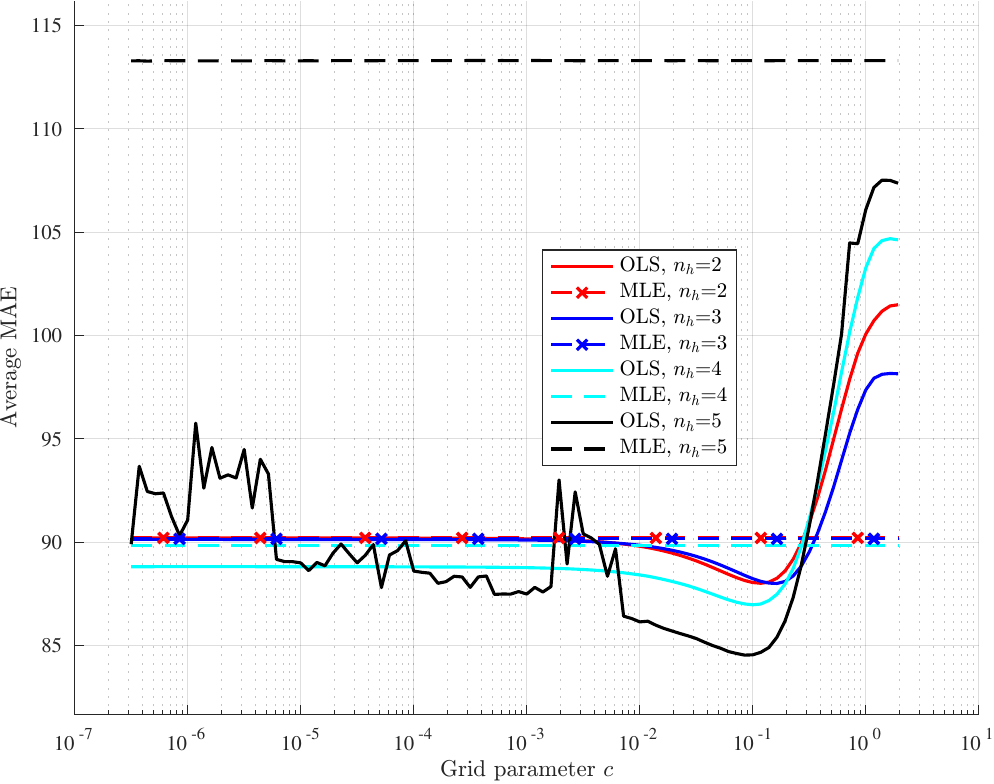}
\caption{$\overline{\text{MAE}}_s$ for $s=1$}
\end{subfigure}
\hfill
\begin{subfigure}[b]{0.32\textwidth}
\centering
\includegraphics[width=.95\textwidth]{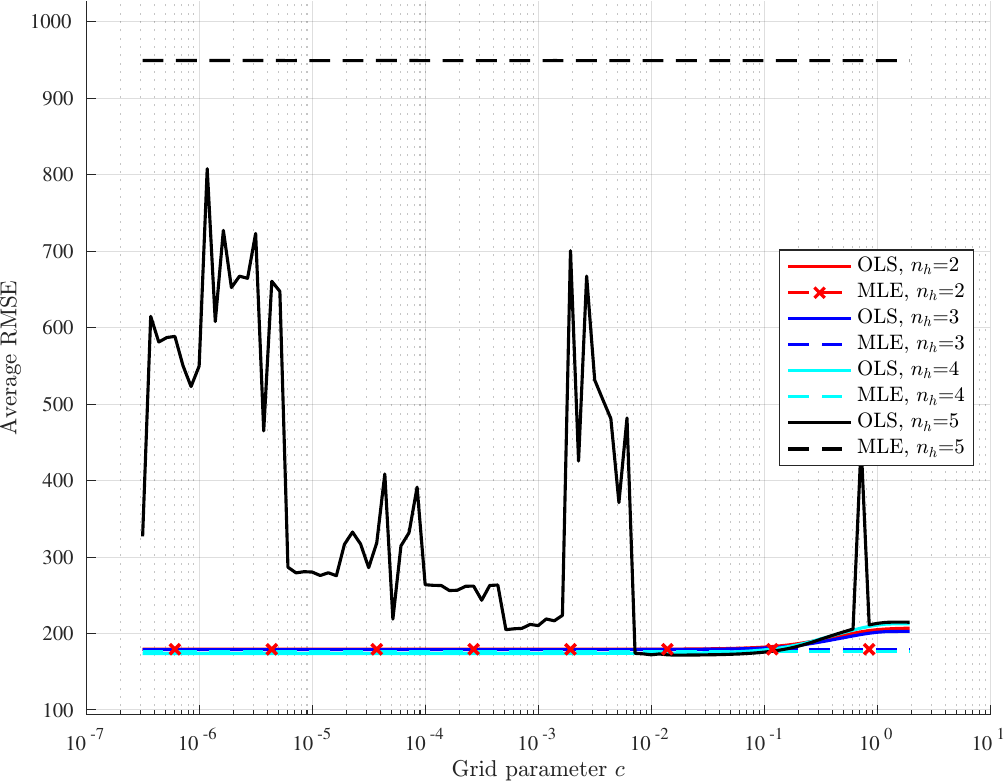}
\caption{$\overline{\text{RMSE}}_s$ for $s=1$}
\end{subfigure}
\hfill
\begin{subfigure}[b]{0.32\textwidth}
\centering
\includegraphics[width=.95\textwidth]{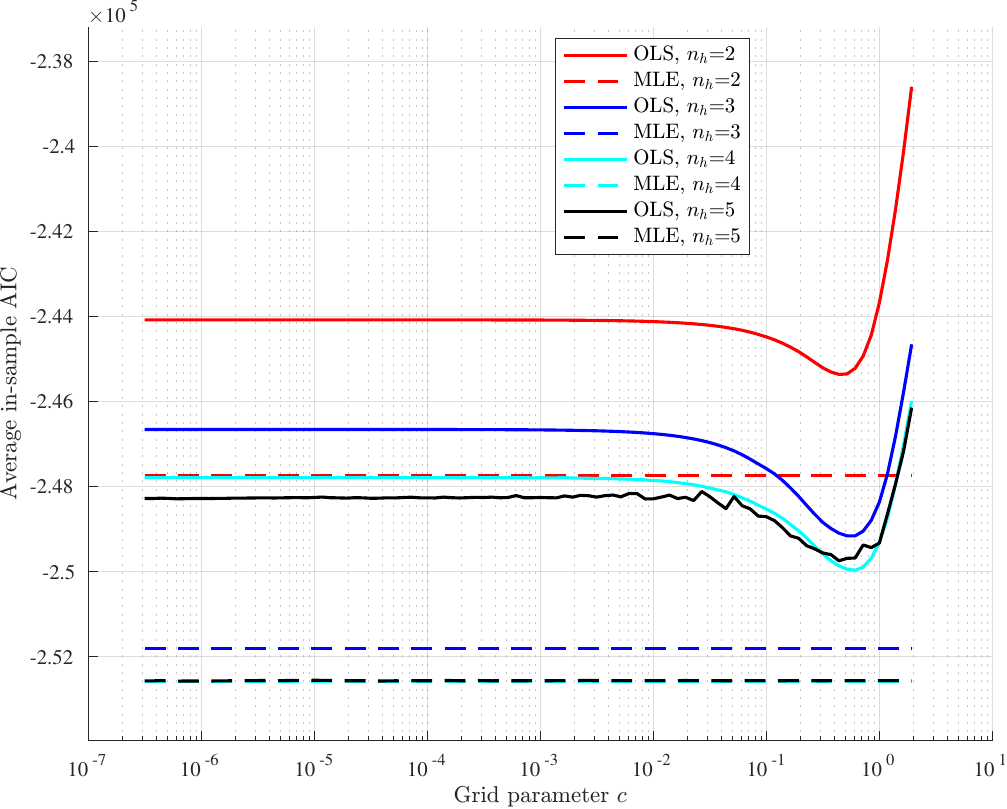}
\caption{$\overline{\text{AIC}}_s$ for $s=1$}
\end{subfigure}

\begin{subfigure}[b]{0.32\textwidth}
\centering
\includegraphics[width=.95\textwidth]{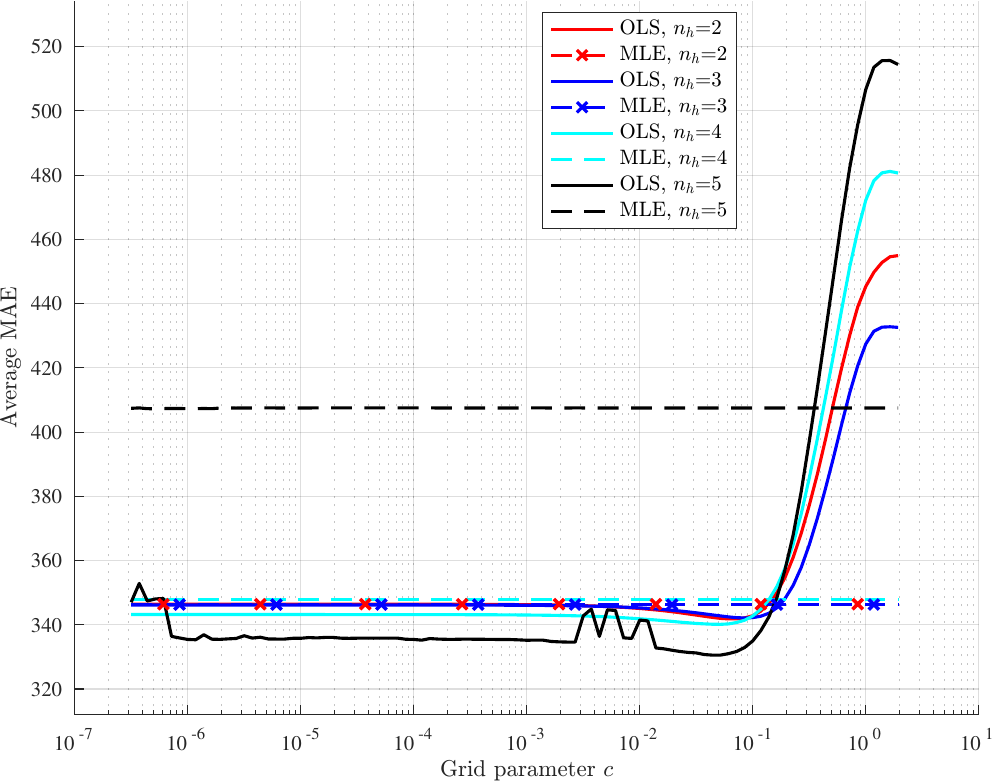}
\caption{$\overline{\text{MAE}}_s$ for $s=5$}
\end{subfigure}
\hfill
\begin{subfigure}[b]{0.32\textwidth}
\centering
\includegraphics[width=.95\textwidth]{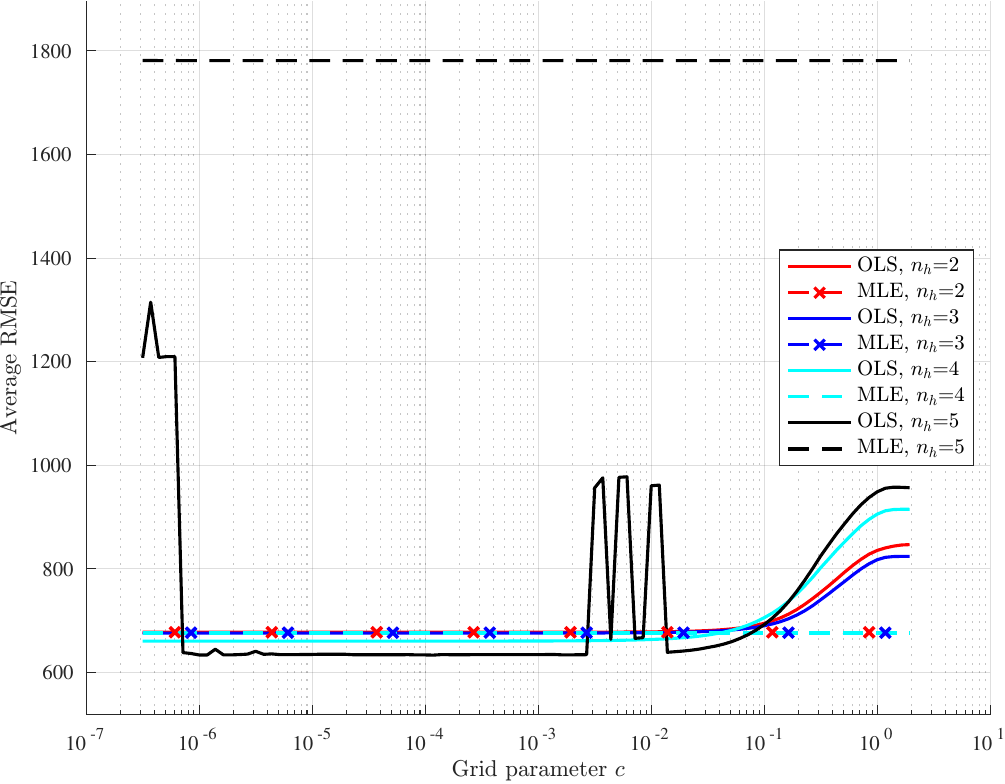}
\caption{$\overline{\text{RMSE}}_s$ for $s=5$}
\end{subfigure}
\hfill
\begin{subfigure}[b]{0.32\textwidth}
\centering
\includegraphics[width=.95\textwidth]{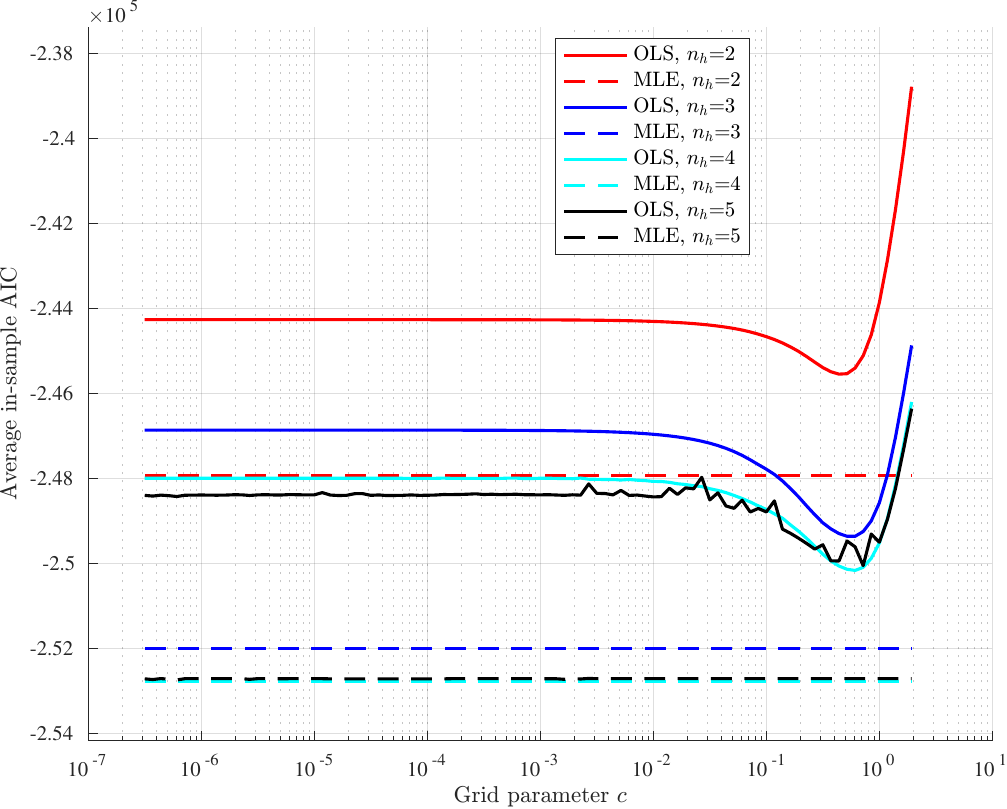}
\caption{$\overline{\text{AIC}}_s$ for $s=5$}
\end{subfigure}

\begin{subfigure}[b]{0.32\textwidth}
\centering
\includegraphics[width=.95\textwidth]{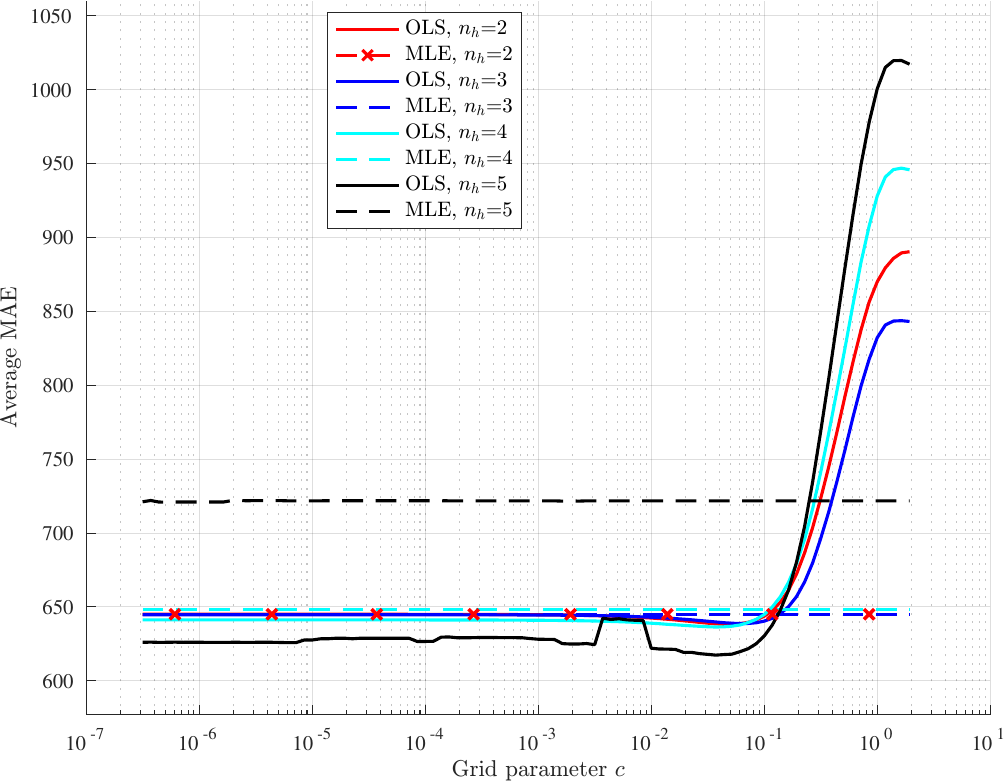}
\caption{$\overline{\text{MAE}}_s$ for $s=10$}
\end{subfigure}
\hfill
\begin{subfigure}[b]{0.32\textwidth}
\centering
\includegraphics[width=.95\textwidth]{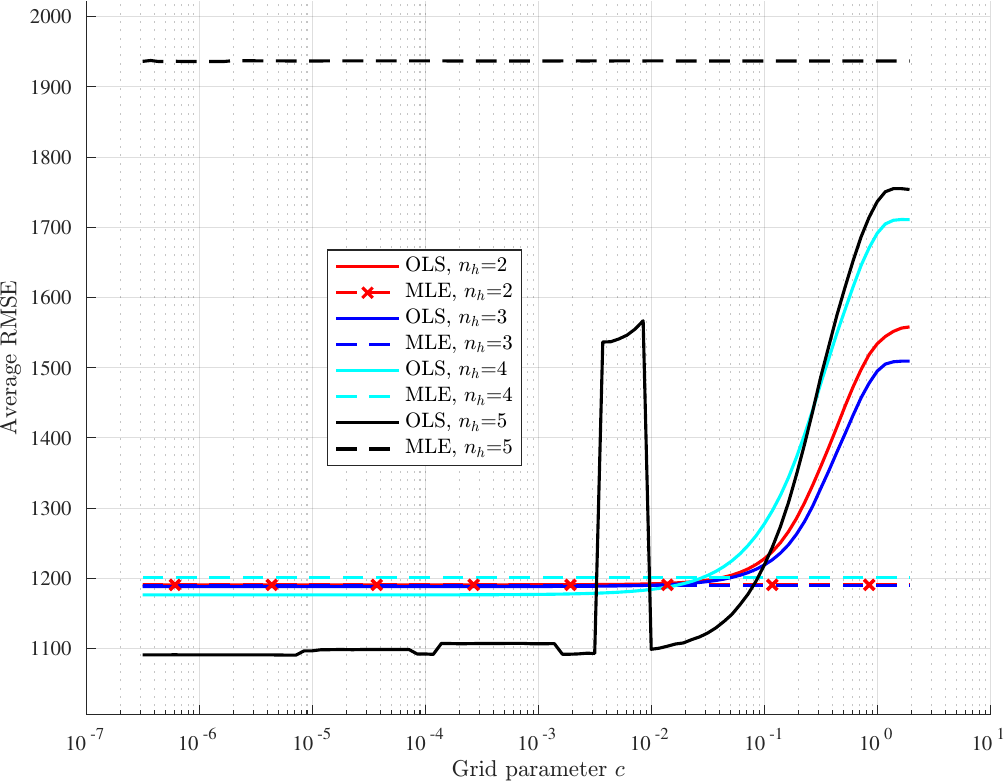}
\caption{$\overline{\text{RMSE}}_s$ for $s=10$}
\end{subfigure}
\hfill
\begin{subfigure}[b]{0.32\textwidth}
\centering
\includegraphics[width=.95\textwidth]{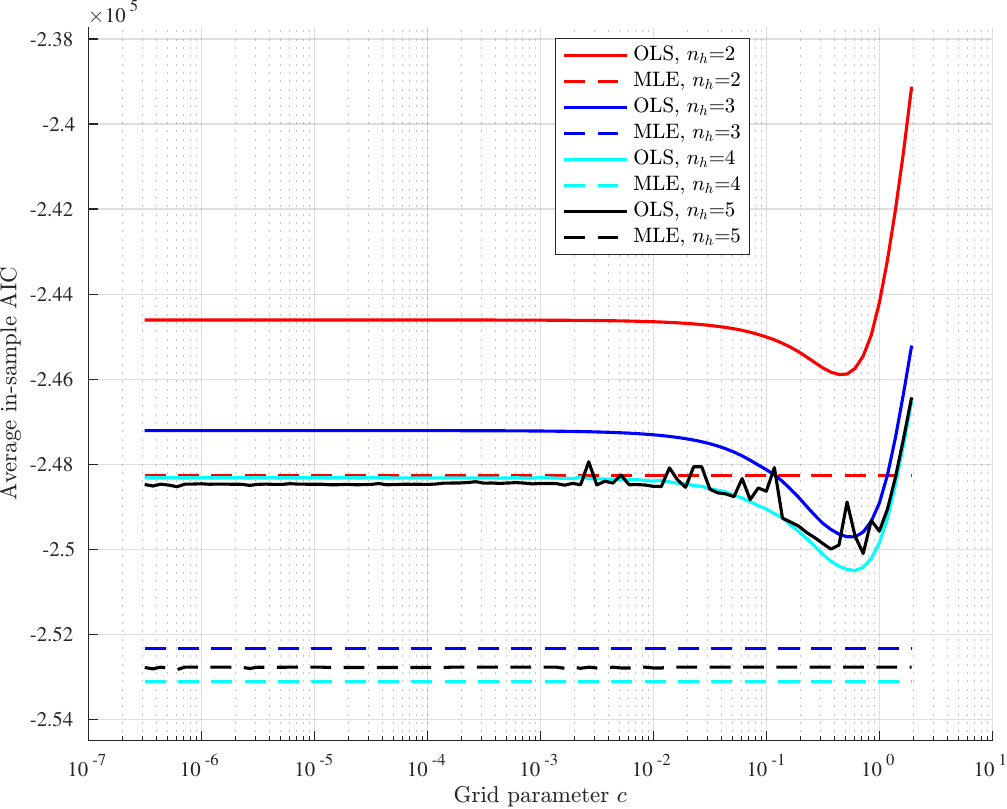}
\caption{$\overline{\text{AIC}}_s$ for $s=10$}
\end{subfigure}

\caption{Out-of-sample $\overline{\text{MAE}}_s$, out-of-sample $\overline{\text{RMSE}}_s$ and in-sample $\overline{\text{AIC}}_s$ at an $s$-minute horizon, $s\in\{1,5,10\}$. The dashed (resp. solid) lines represent the maximum likelihood estimator (resp. ordinary least squares). For readability only, markers "$\times$" are added to curves that nearly overlap. Asset: AAPL.}
\label{fig:Apple_prediction}
\end{figure}

\begin{figure}[htbp]
\centering

\begin{subfigure}[b]{0.32\textwidth}
\centering
\includegraphics[width=.95\textwidth]{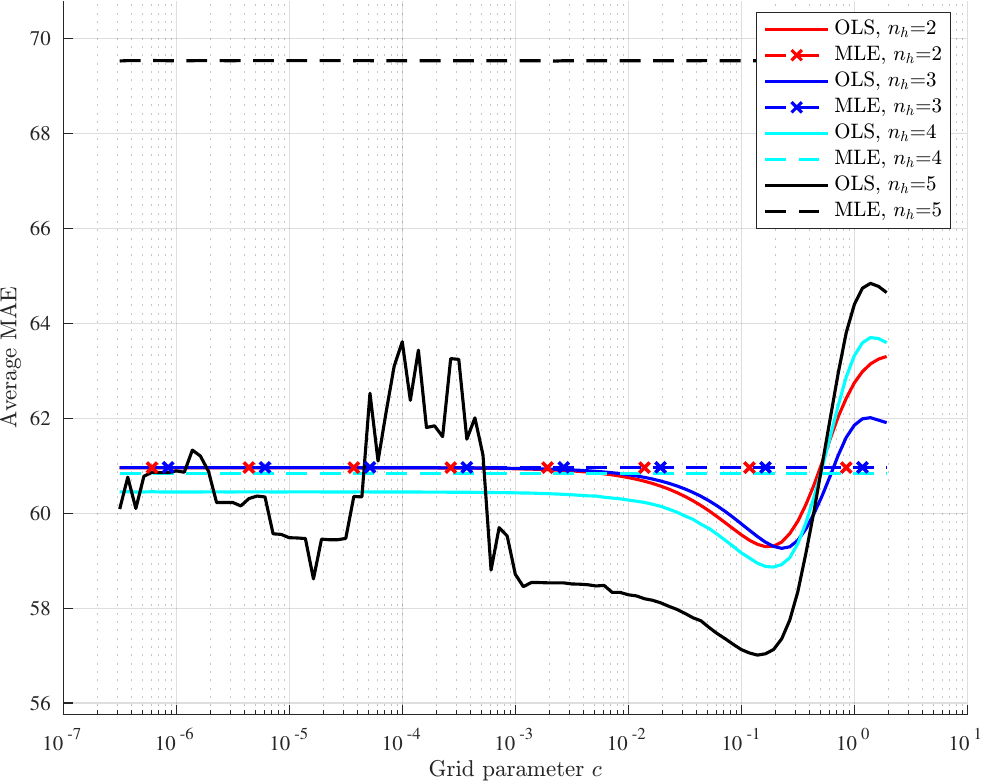}
\caption{$\overline{\text{MAE}}_s$ for $s=1$}
\end{subfigure}
\hfill
\begin{subfigure}[b]{0.32\textwidth}
\centering
\includegraphics[width=.95\textwidth]{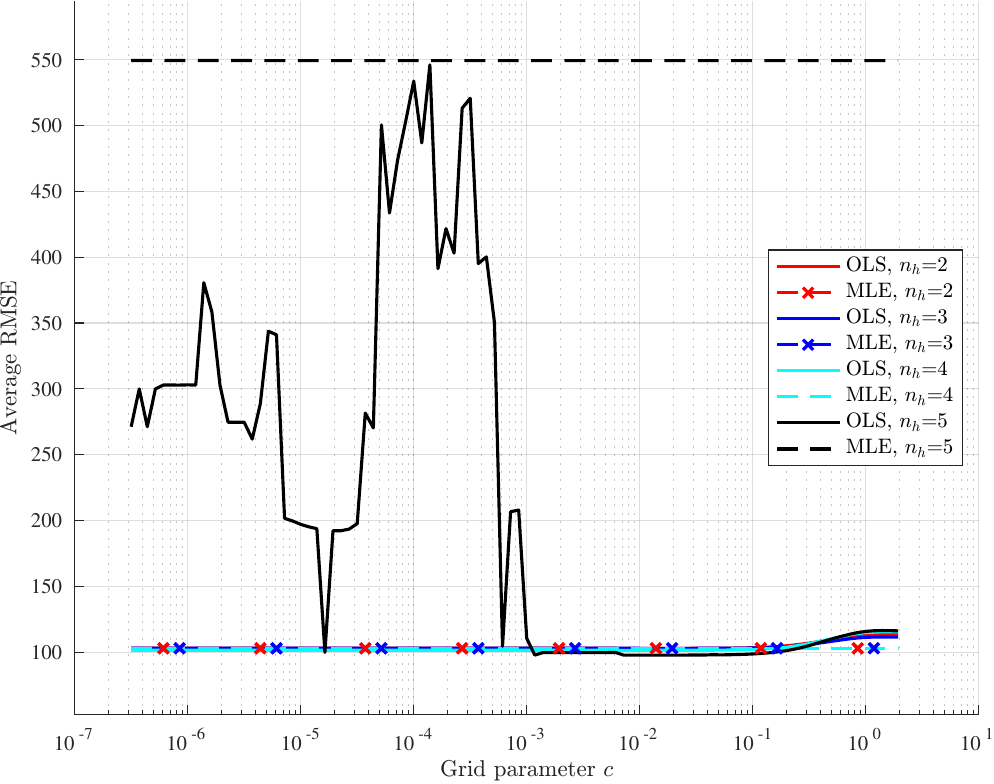}
\caption{$\overline{\text{RMSE}}_s$ for $s=1$}
\end{subfigure}
\hfill
\begin{subfigure}[b]{0.32\textwidth}
\centering
\includegraphics[width=.95\textwidth]{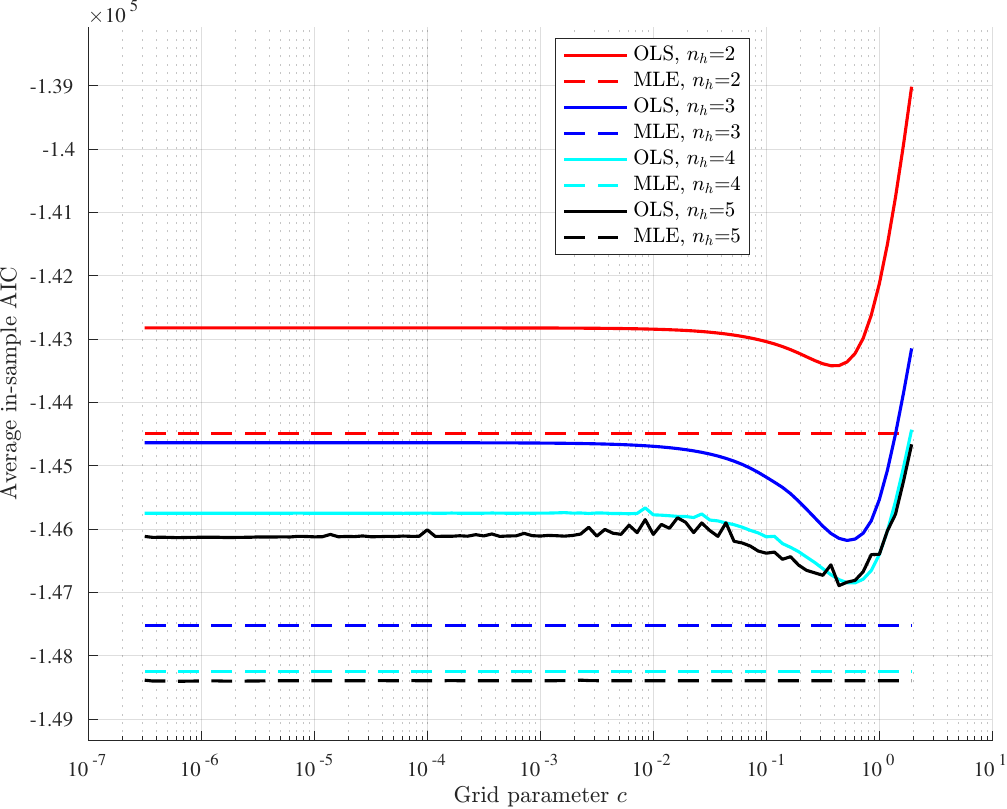}
\caption{$\overline{\text{AIC}}_s$ for $s=1$}
\end{subfigure}

\begin{subfigure}[b]{0.32\textwidth}
\centering
\includegraphics[width=.95\textwidth]{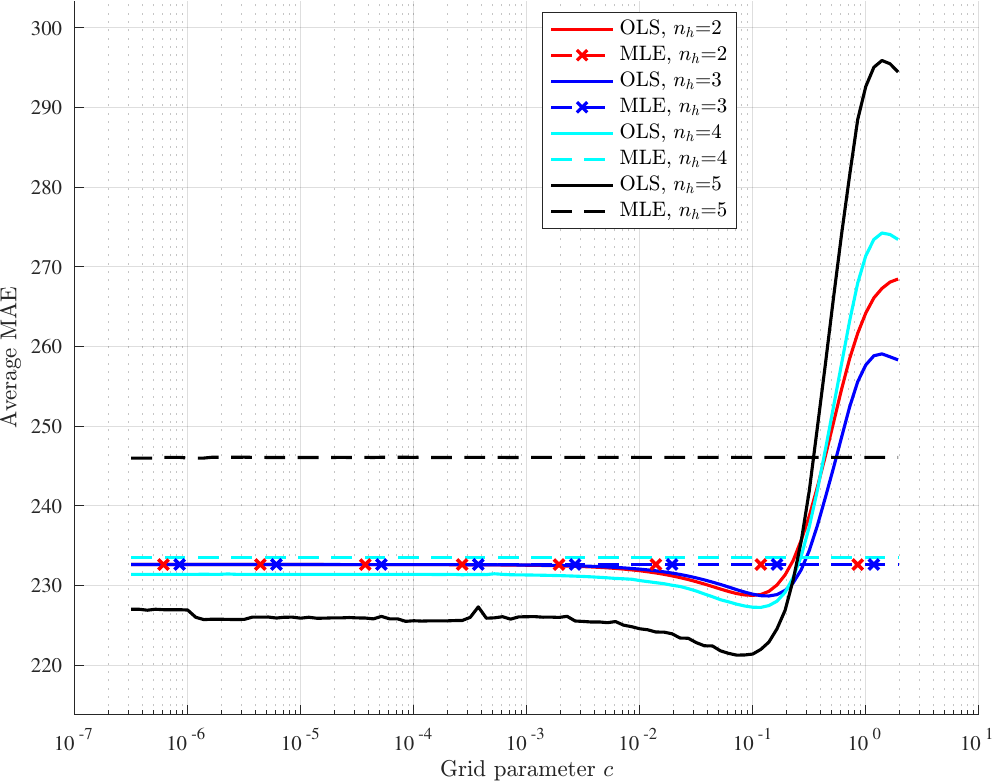}
\caption{$\overline{\text{MAE}}_s$ for $s=5$}
\end{subfigure}
\hfill
\begin{subfigure}[b]{0.32\textwidth}
\centering
\includegraphics[width=.95\textwidth]{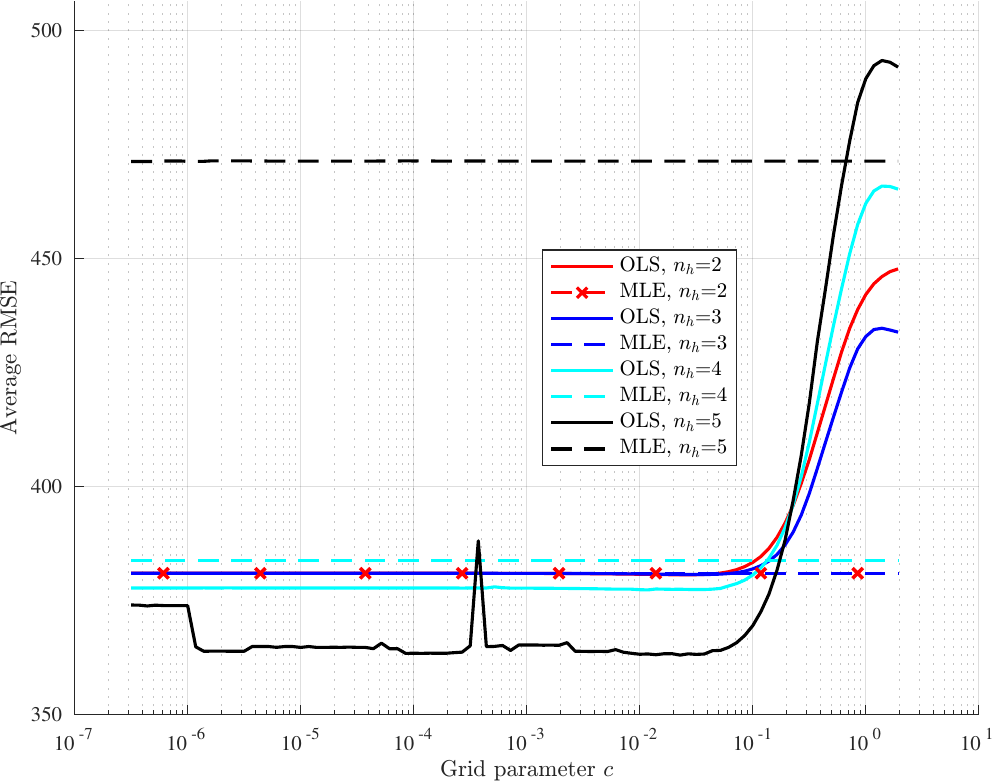}
\caption{$\overline{\text{RMSE}}_s$ for $s=5$}
\end{subfigure}
\hfill
\begin{subfigure}[b]{0.32\textwidth}
\centering
\includegraphics[width=.95\textwidth]{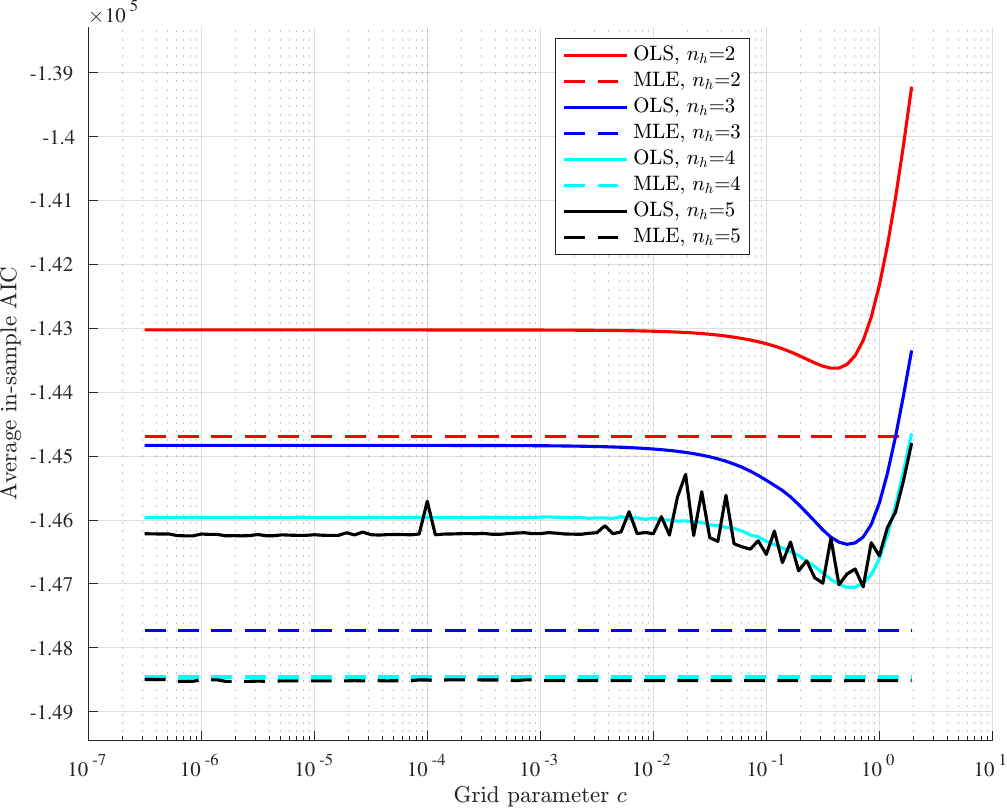}
\caption{$\overline{\text{AIC}}_s$ for $s=5$}
\end{subfigure}

\begin{subfigure}[b]{0.32\textwidth}
\centering
\includegraphics[width=.95\textwidth]{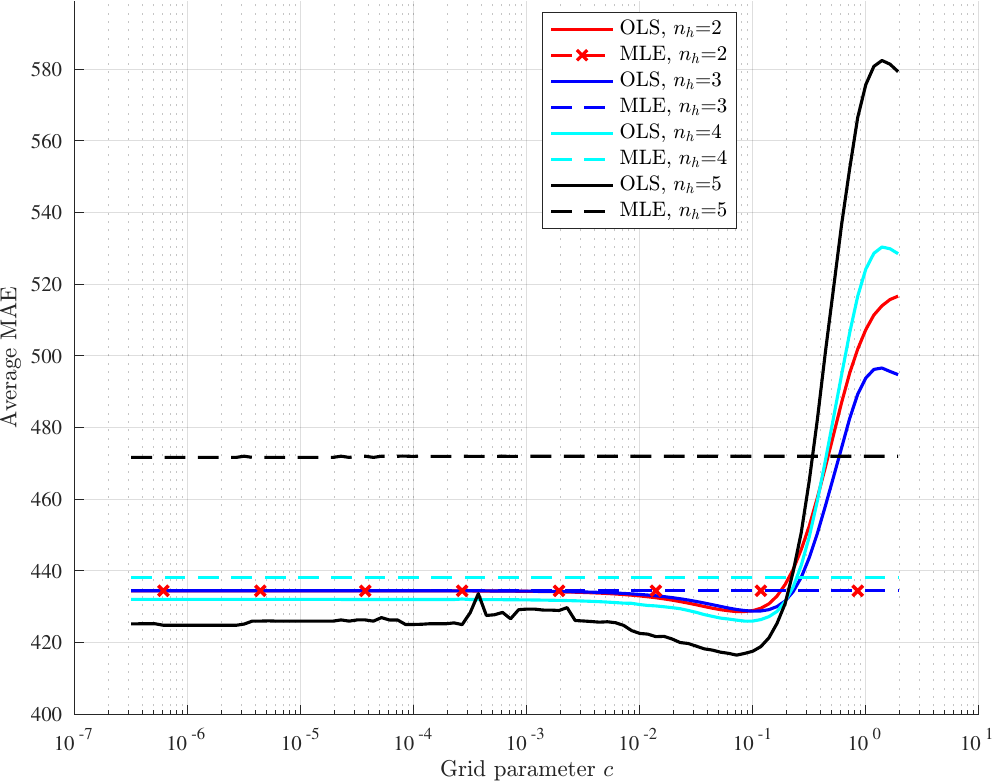}
\caption{$\overline{\text{MAE}}_s$ for $s=10$}
\end{subfigure}
\hfill
\begin{subfigure}[b]{0.32\textwidth}
\centering
\includegraphics[width=.95\textwidth]{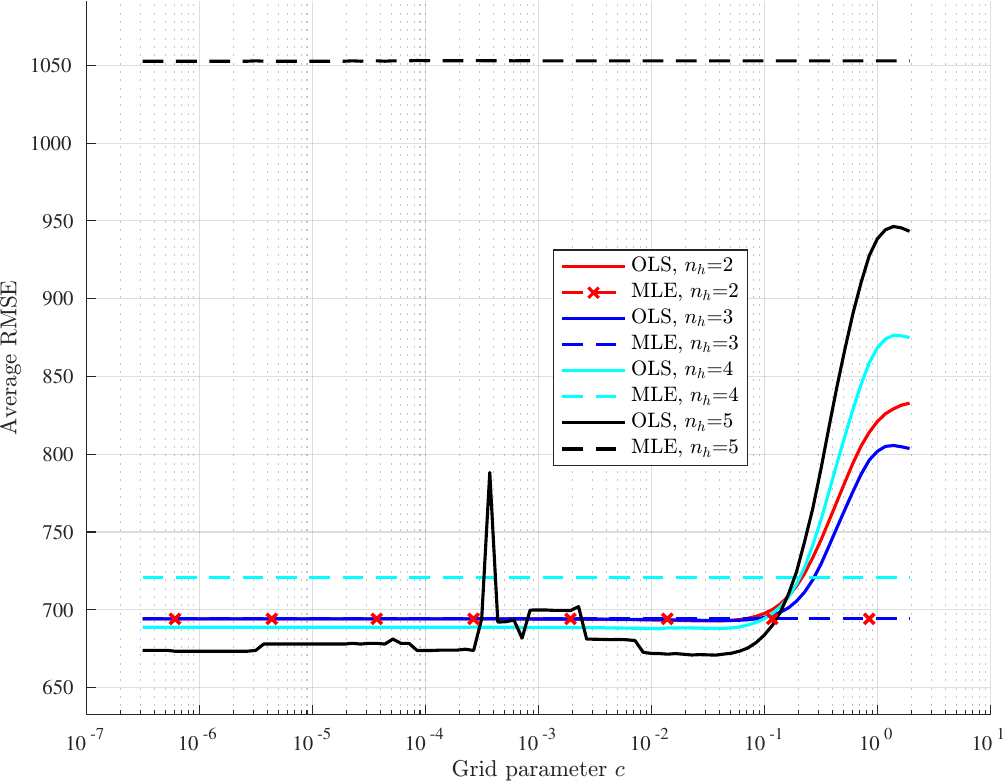}
\caption{$\overline{\text{RMSE}}_s$ for $s=10$}
\end{subfigure}
\hfill
\begin{subfigure}[b]{0.32\textwidth}
\centering
\includegraphics[width=.95\textwidth]{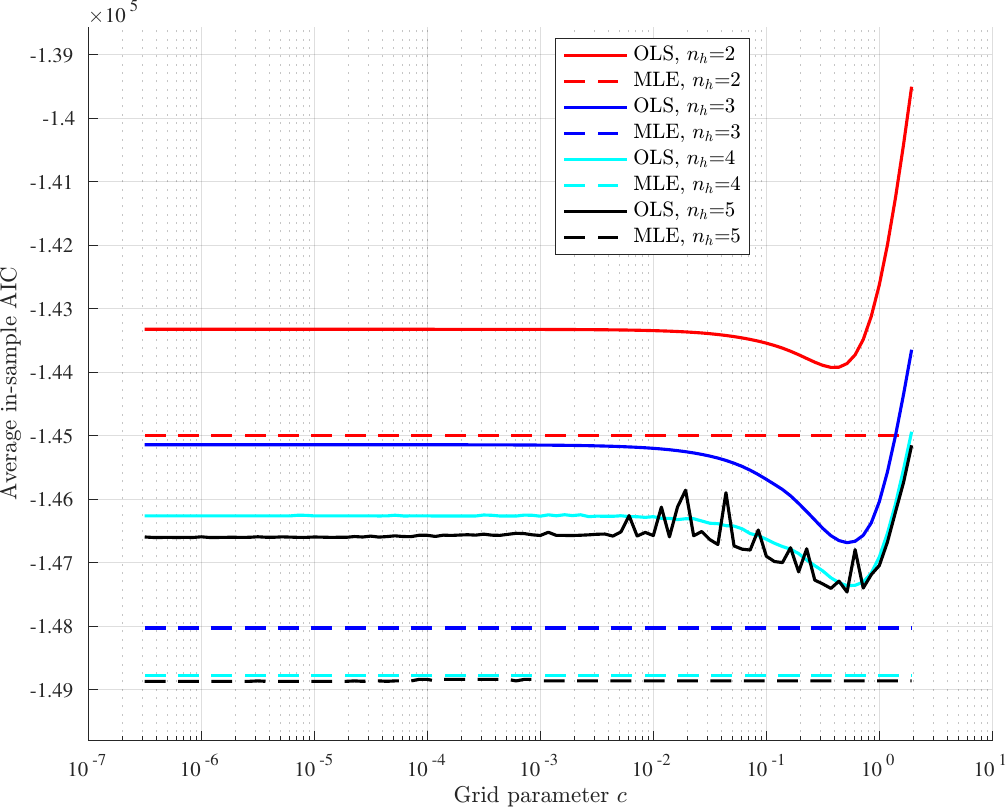}
\caption{$\overline{\text{AIC}}_s$ for $s=10$}
\end{subfigure}

\caption{Out-of-sample $\overline{\text{MAE}}_s$, out-of-sample $\overline{\text{RMSE}}_s$ and in-sample $\overline{\text{AIC}}_s$ at an $s$-minute horizon, $s\in\{1,5,10\}$. The dashed (resp. solid) lines represent the maximum likelihood estimator (resp. ordinary least squares). For readability only, markers "$\times$" are added to curves that nearly overlap.  Asset: META.}
\label{fig:FB_prediction}
\end{figure}

\begin{figure}[htbp]
\centering

\begin{subfigure}[b]{0.32\textwidth}
\centering
\includegraphics[width=.95\textwidth]{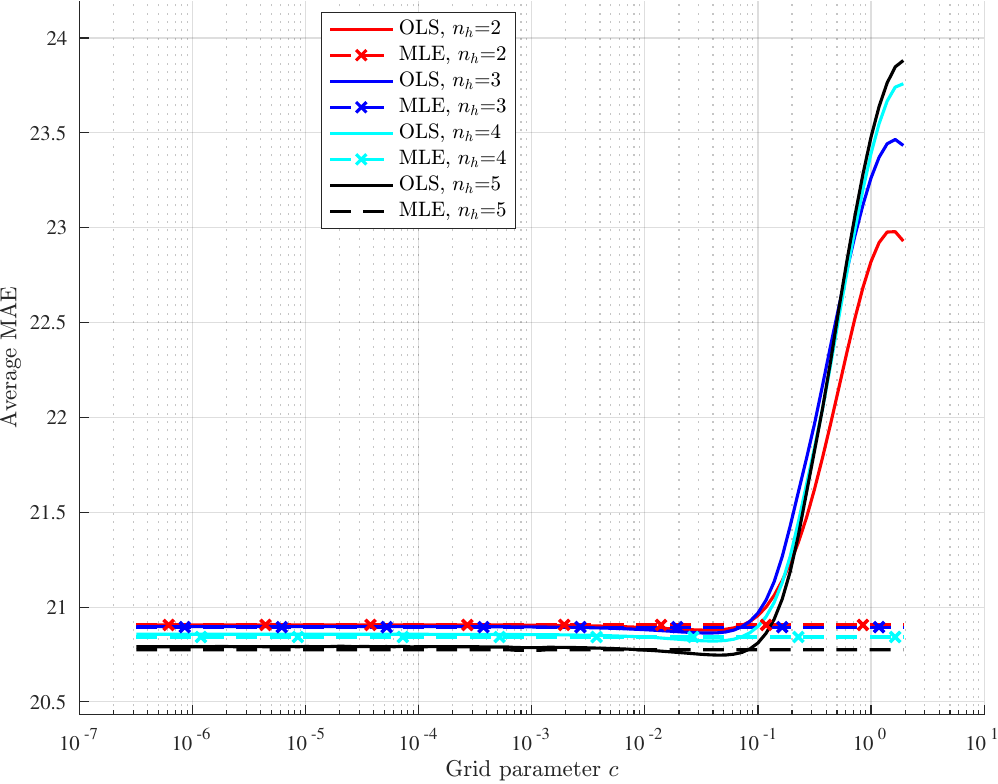}
\caption{$\overline{\text{MAE}}_s$ for $s=1$}
\end{subfigure}
\hfill
\begin{subfigure}[b]{0.32\textwidth}
\centering
\includegraphics[width=.95\textwidth]{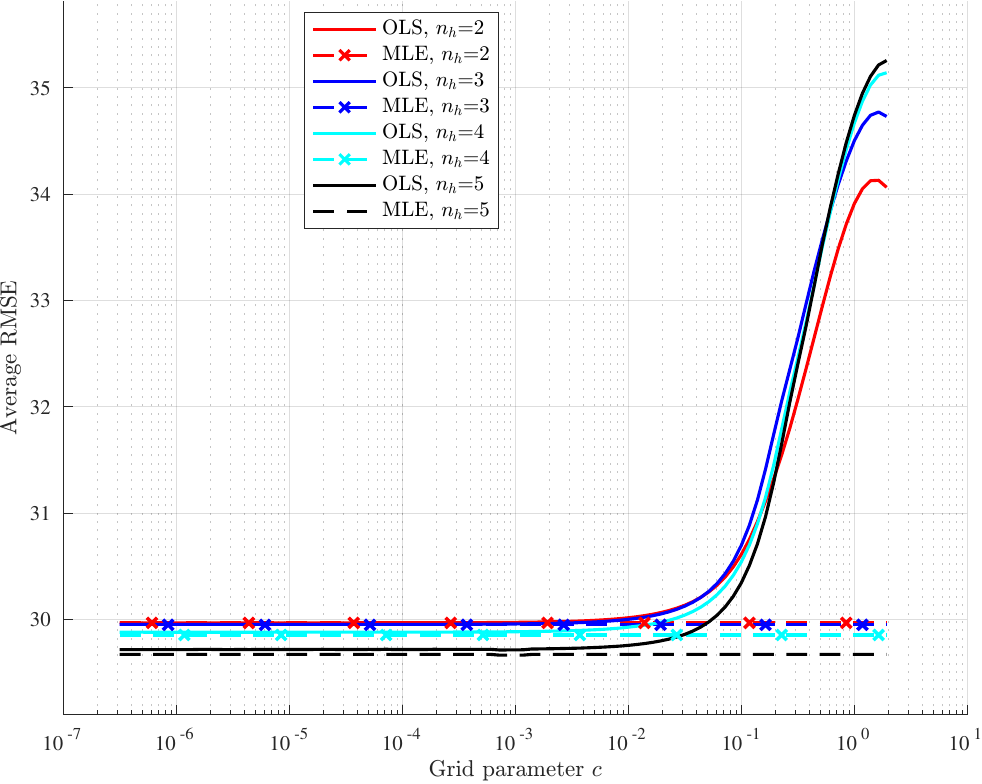}
\caption{$\overline{\text{RMSE}}_s$ for $s=1$}
\end{subfigure}
\hfill
\begin{subfigure}[b]{0.32\textwidth}
\centering
\includegraphics[width=.95\textwidth]{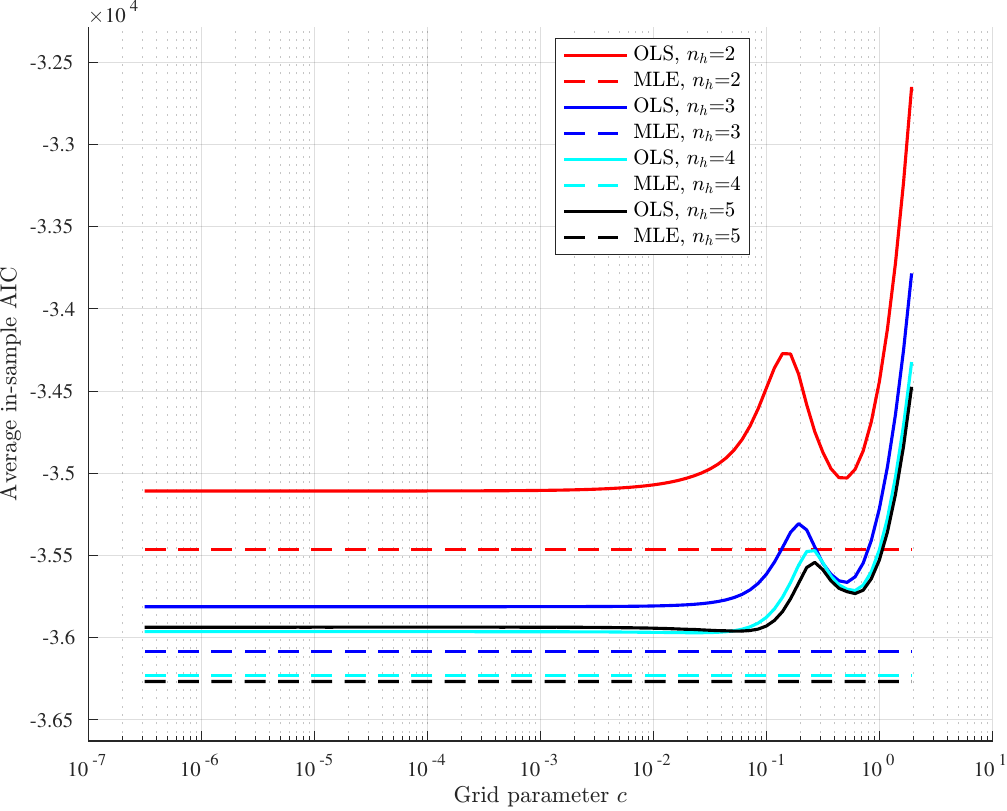}
\caption{$\overline{\text{AIC}}_s$ for $s=1$}
\end{subfigure}

\begin{subfigure}[b]{0.32\textwidth}
\centering
\includegraphics[width=.95\textwidth]{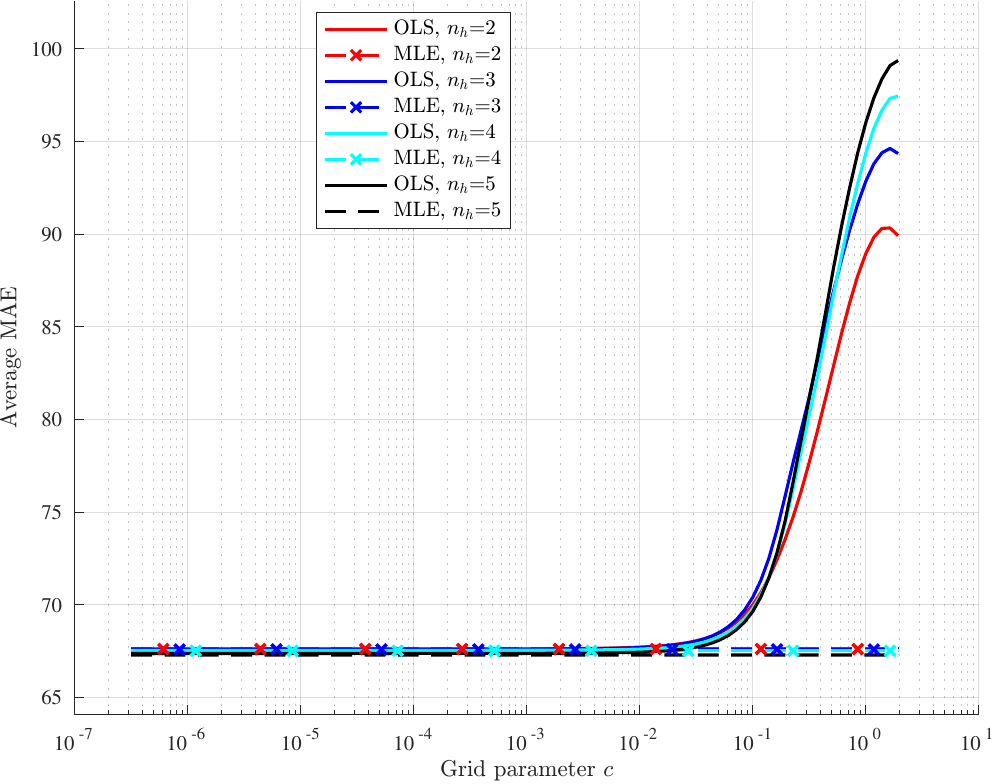}
\caption{$\overline{\text{MAE}}_s$ for $s=5$}
\end{subfigure}
\hfill
\begin{subfigure}[b]{0.32\textwidth}
\centering
\includegraphics[width=.95\textwidth]{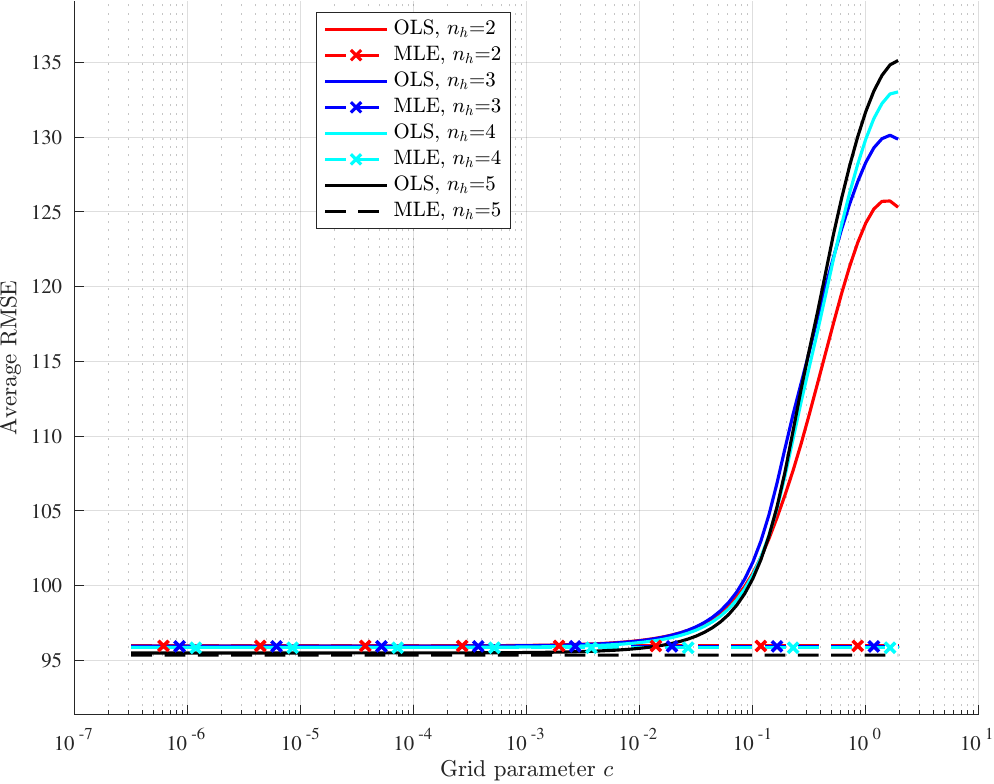}
\caption{$\overline{\text{RMSE}}_s$ for $s=5$}
\end{subfigure}
\hfill
\begin{subfigure}[b]{0.32\textwidth}
\centering
\includegraphics[width=.95\textwidth]{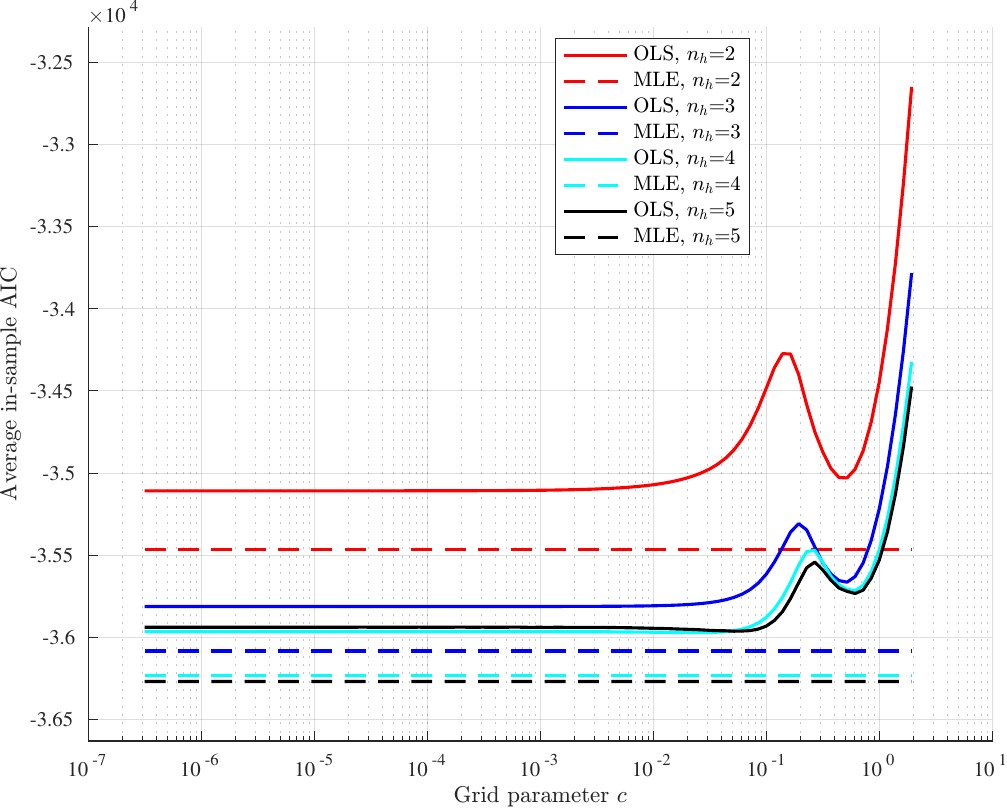}
\caption{$\overline{\text{AIC}}_s$ for $s=5$}
\end{subfigure}

\begin{subfigure}[b]{0.32\textwidth}
\centering
\includegraphics[width=.95\textwidth]{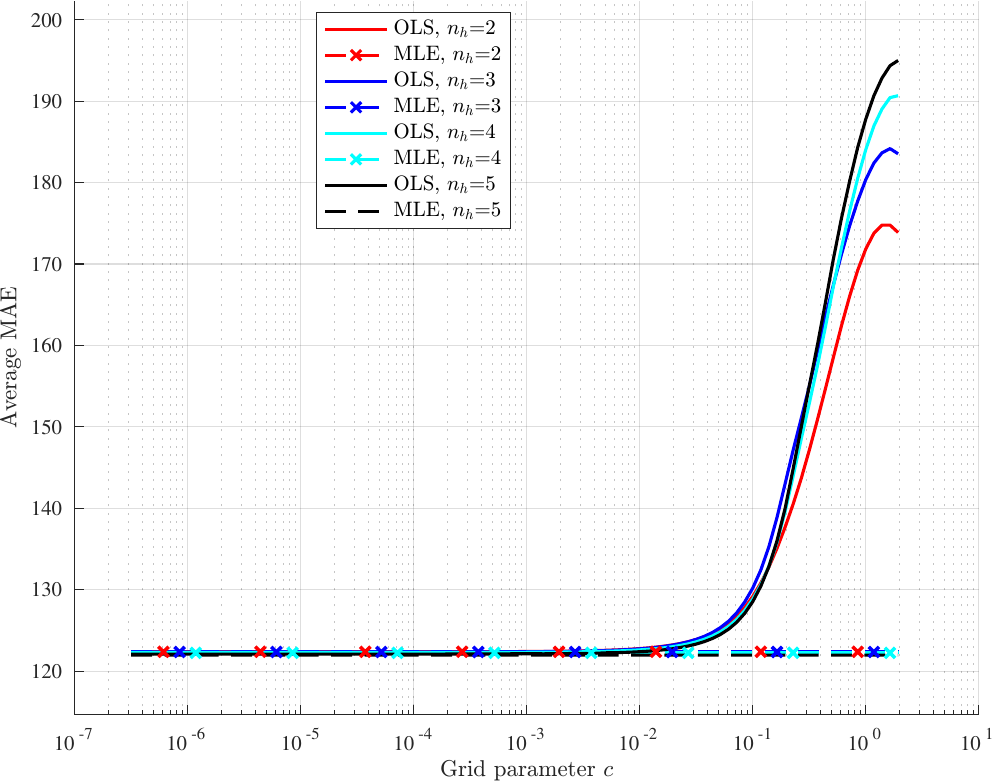}
\caption{$\overline{\text{MAE}}_s$ for $s=10$}
\end{subfigure}
\hfill
\begin{subfigure}[b]{0.32\textwidth}
\centering
\includegraphics[width=.95\textwidth]{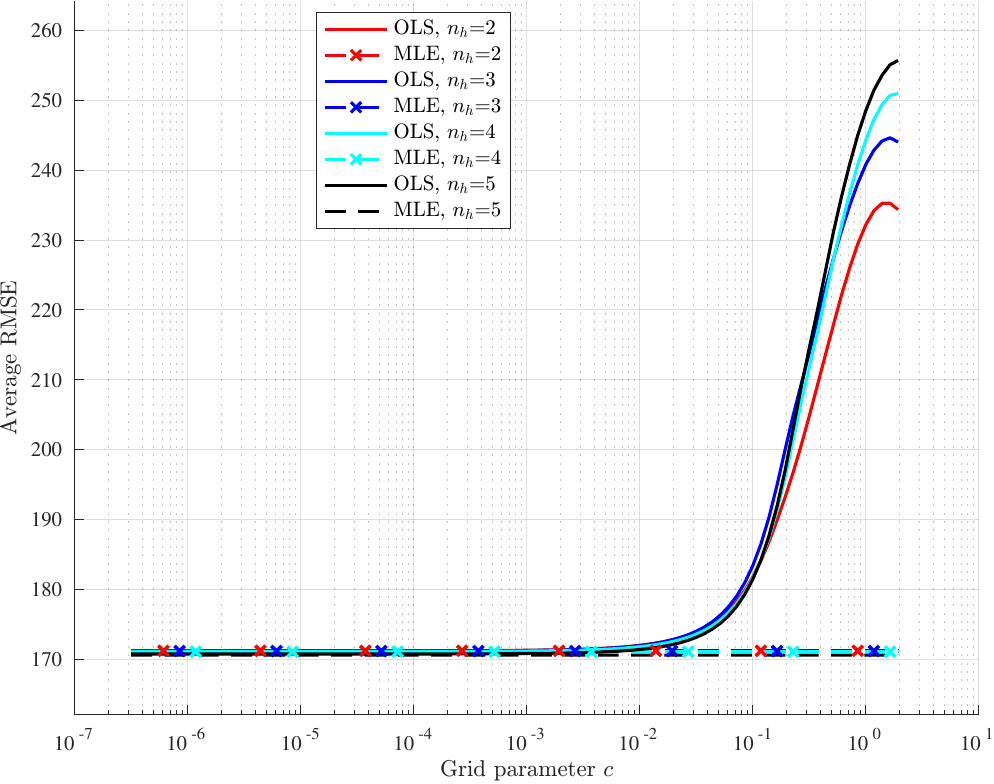}
\caption{$\overline{\text{RMSE}}_s$ for $s=10$}
\end{subfigure}
\hfill
\begin{subfigure}[b]{0.32\textwidth}
\centering
\includegraphics[width=.95\textwidth]{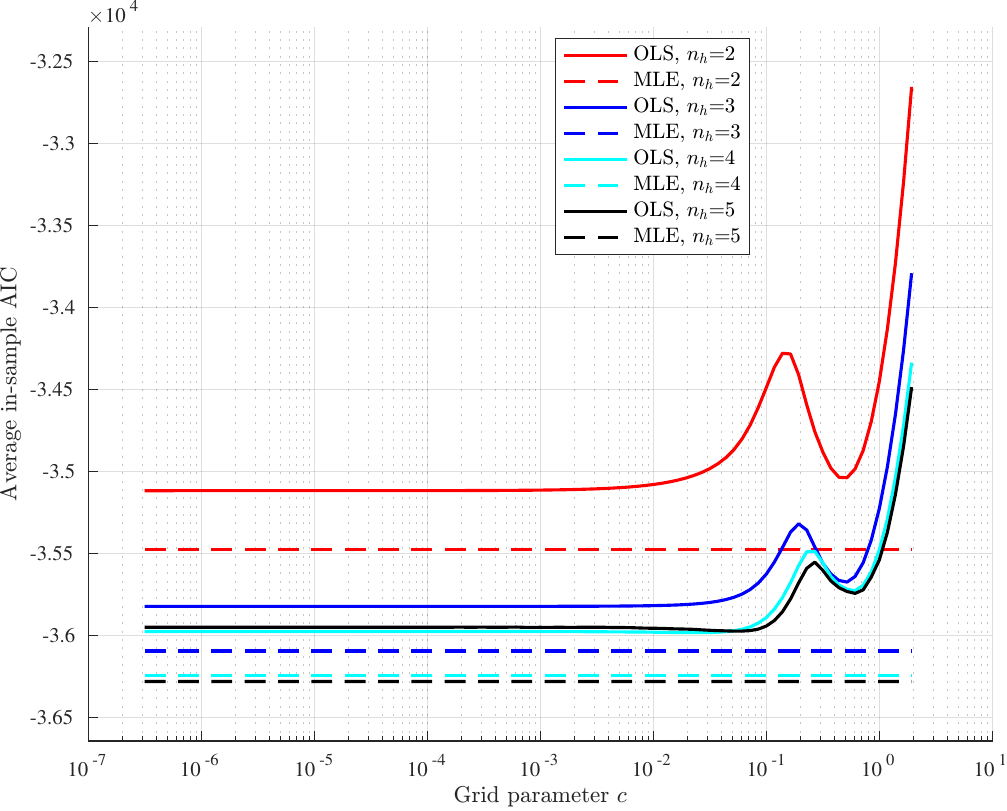}
\caption{$\overline{\text{AIC}}_s$ for $s=10$}
\end{subfigure}

\caption{Out-of-sample $\overline{\text{MAE}}_s$, out-of-sample $\overline{\text{RMSE}}_s$ and in-sample $\overline{\text{AIC}}_s$ at an $s$-minute horizon $s\in\{1,5,10\}$. The dashed (resp. solid) lines represent the maximum likelihood estimator (resp. ordinary least squares). For readability only, markers "$\times$" are added to curves that nearly overlap.  Asset: EOG.}
\label{fig:EOG_prediction}
\end{figure}

\subsection{Heterogeneity in trader types}
\label{secempiricalstudytest}

Table \ref{tab:average_parameters} reports the average estimated baseline parameter $\nu$, estimated exciting parameter $\alpha$ and estimated decay number $b$ for the maximum likelihood estimator and the estimator relying on ordinary least squares for particular values of time increment parameter $c$. The figures are obtained over the period January 2017 - December 2017 for the three stocks. Overall, there is heterogeneity in trader types. More specifically, we observe that the exciting parameter $\alpha_1$ of the first trader type is always larger than $1.5$ and smaller than $21$. In addition, the exciting parameter $\alpha_2$ of the second trader type is consistently larger than $100$ and smaller than $1000$. Moreover, the exciting parameter $\alpha_3$ of the third trader type is consistently larger than $3000$ and smaller than $30000$. Finally, there is heterogeneity in excitation parameter based on the liquidity of the asset. Namely, we have larger excitation parameter values for the two most traded assets AAPL and META and smaller values for the active asset EOG.

With small excitation values, we can interpret the first trader type as fundamental traders. Fundamental traders analyze economic data, financial statements and market conditions to make long-term investment decisions. With large excitation values, we can interpret the second and third trader type as high-frequency traders. High-frequency traders are institutional financial market participants who employ powerful computers, advanced algorithms, and ultra-fast telecommunications to execute a massive volume of orders in fractions of a second. The second trader type and third trader type are both high-frequency traders but they differ in their speed to use technology due to the difference in their equipment (see \cite{gagnon2010multi} , \cite{Hasbrouck2013low}, \cite{hoffmann2014dynamic}, \cite{budish2015high},\cite{biais2015equilibrium}, \cite{foucault2016news} and \cite{pagnotta2018competing}). Finally, this interpretation is coherent with the heterogeneity results due to the asset liquidity.

\begin{table}[H]
\centering
\caption{Average estimated baseline parameter $\nu$, estimated exciting parameter $\alpha$ and estimated decay number $b$ for the maximum likelihood estimator (MLE) and ordinary least squares (OLS) estimator over January 2017 - December 2017. }
\label{tab:average_parameters}
\begin{tabular}{ccccccccc}
\hline\hline & & MLE& OLS& OLS& OLS& OLS& OLS& OLS
\\
& & --& {\scriptsize $c=0.005$}& {\scriptsize $c=0.012$}& {\scriptsize $c=0.027$}& {\scriptsize $c=0.061$}& {\scriptsize $c=0.139$}& {\scriptsize $c=0.316$}
\\
\hline

AAPL &$\nu$
& 1.875 & 1.496 & 1.494 & 1.490 & 1.479 & 1.462 & 1.445 
\\

&$\alpha_1$
& 17.819 & 8.039 & 8.127 & 8.329 & 8.799 & 9.680 & 11.380 
\\
&$b_1$
& 106.735 & - & - & - & - & - & - 
\\

&$\alpha_2$
& 655.607 & 976.549 & 971.211 & 958.919 & 933.763 & 896.357 & 835.938 
\\
&$b_2$
& 2692.49 & - & - & - & - & - & - 
\\

&$\alpha_3$
& 22614.46 & 25888.40 & 25915.73 & 25979.83 & 26131.66 & 25997.56 & 24771.38 
\\
&$b_3$
& 96160.04 & - & - & - & - & - & - 
\\

\hline

META&$\nu$
& 1.231 & 1.058 & 1.057 & 1.055 & 1.049 & 1.039 & 1.026 
\\

&$\alpha_1$
& 20.227 & 11.616 & 11.690 & 11.855 & 12.264 & 13.079 & 14.401 
\\
&$b_1$
& 104.457 & - & - & - & - & - & - 
\\

&$\alpha_2$
& 668.146 & 1159.58 & 1140.81 & 1098.47 & 984.657 & 798.466 & 712.667 
\\
&$b_2$
& 3551.49 & - & - & - & - & - & - 
\\

&$\alpha_3$
& 22141.44 & 24466.98 & 24507.09 & 24598.58 & 24833.68 & 25001.42 & 24207.71 
\\
&$b_3$
& 100061.69 & - & - & - & - & - & - 
\\

\hline

EOG&$\nu$
& 0.241 & 0.263 & 0.262 & 0.260 & 0.255 & 0.244 & 0.260 
\\

&$\alpha_1$
& 2.292 & 1.743 & 1.752 & 1.770 & 1.845 & 1.967 & 1.973 
\\
&$b_1$
& 13.831 & - & - & - & - & - & - 
\\

&$\alpha_2$
& 103.099 & 81.798 & 80.533 & 77.618 & 60.549 & 36.867 & 40.643 
\\
&$b_2$
& 514.747 & - & - & - & - & - & - 
\\

&$\alpha_3$
& 3822.65 & 4444.37 & 4449.36 & 4461.00 & 4499.79 & 4555.54 & 3957.31 
\\
&$b_3$
& 8980.57 & - & - & - & - & - & -
\\

\hline\hline
\end{tabular}
\end{table}

To confirm the previous results on heterogeneity of trader types, we consider a statistical test based on the Wald test statistic $S_{T\Delta}$ defined in (\ref{defW}). We naturally restrict our analysis to the number of trader types $n_h=3$ and introduce the following set of null hypothesis: $H_{0,1}: \, \alpha_1^*=\alpha_2^*, \;\; H_{0,2}: \, \alpha_1^*=\alpha_3^*, \;\; H_{0,3}: \, \alpha_2^*=\alpha_3^*$. These three null hypothesis consider a test of equality between two distinct exciting parameters. We also introduce the associated alternative hypothesis: $H_{1,1}: \, \alpha_1^* \neq \alpha_2^*, \;\; H_{1,2}: \, \alpha_1^* \neq \alpha_3^*, \;\; H_{0,3}: \, \alpha_2^* \neq \alpha_3^*$. We express each restriction under $R\theta^*=r$ with $\theta^* = (\nu^*,\alpha_1^*,\alpha_2^*,\alpha_3^*)'$. We set $r=0$ and employ $R_1=(0,1,-1,0)$, $R_2=(0,1,0,-1)$ and $R_3=(0,0,1,-1)$ for the null hypothesis $H_{0,1}$, $H_{0,2}$ and $H_{0,3}$. By Corollary \ref{corwald}, the test statistic $S_{T\Delta}$ converges in distribution to a chi-square distribution with $1$ degree of freedom under the null hypothesis $H_{0,j}$ for any  index $j=1,2,3$. Since the three hypotheses are tested separately on each of the $249$ retained trading days, we account for multiple testing using a Bonferroni correction over the full family of $3 \times 249=747$ tests for each time increment $\Delta$. At the significance level $5\%$, a null hypothesis $H_{0,j}$ on day $t$ is rejected whenever its $p$-value $p_{j,t}$ satisfies $p_{j,t} \leq 0.05/747$. We use the Bonferroni corrected $p$-value $p^{\text{b}}_{j,t}=\min(747\,p_{j,t},1)$ and reject when $p^{\text{b}}_{j,t} \leq 0.05$. See \cite{bajgrowicz2016jumps} for further details on the Bonferroni correction.  

Table \ref{table:wald_test_real} reports the percentage of trading days for which the pairwise equality hypotheses between the excitation parameters are rejected for different parameter $c$. The rejection frequencies remain extremely high after applying the Bonferroni correction over the three hypotheses and the 249 trading days. For META and AAPL, all rejection frequencies exceed 97\% for any choice of time increment parameter $c$. The results for EOG are similarly strong, although the rejection frequency for the null hypothesis $H_{0,1}: \alpha_1^*=\alpha_2^*$ is somewhat lower when the time increment parameter  is fixed to $c=0.316$. The rejection frequency is equal to 93.98\%. The conclusions are highly stable across the parameter $c$. These results provide strong empirical evidence that the three excitation parameters capture distinct trader types. Finally, this supports the use of a Hawkes process with three types of traders.

\begin{table}[H]
\centering
\caption{Percentage of trading days on which the pairwise equality hypotheses for the excitation coefficients are rejected at the 5\% significance level. For each fixed value of the   time increment parameter $c$, the reported results use a Bonferroni correction over the three hypotheses and the 249 trading
days.}
\label{table:wald_test_real}
\begin{tabular}{ll|cccccc}
\hline\hline
Asset
& Null hypothesis
& $c=0.005$
& $c=0.027$
& $c=0.316$
& $c=0.720$
& $c=1.638$
& $c=3.728$
\\
\hline

AAPL
& $H_{0,1}:\alpha_1^*=\alpha_2^*$
& 98.80 & 97.99 & 98.80 & 98.80 & 100.00 & 100.00
\\

& $H_{0,2}:\alpha_1^*=\alpha_3^*$
& 100.00 & 98.80 & 100.00 & 98.80 & 99.20 & 99.20
\\

& $H_{0,3}:\alpha_2^*=\alpha_3^*$
& 100.00 & 99.60 & 100.00 & 99.60 & 99.60 & 99.20
\\
\hline

META
& $H_{0,1}:\alpha_1^*=\alpha_2^*$
& 97.19 & 97.99 & 98.39 & 97.59 & 99.20 & 98.80
\\

& $H_{0,2}:\alpha_1^*=\alpha_3^*$
& 99.60 & 98.39 & 99.20 & 99.20 & 99.20 & 98.80
\\

& $H_{0,3}:\alpha_2^*=\alpha_3^*$
& 99.60 & 98.80 & 99.20 & 98.39 & 99.20 & 98.80
\\
\hline

EOG
& $H_{0,1}:\alpha_1^*=\alpha_2^*$
& 98.80 & 98.39 & 93.98 & 97.59 & 99.20 & 99.60
\\

& $H_{0,2}:\alpha_1^*=\alpha_3^*$
& 98.80 & 95.58 & 98.80 & 98.39 & 99.20 & 99.60
\\

& $H_{0,3}:\alpha_2^*=\alpha_3^*$
& 98.80 & 96.39 & 98.80 & 98.39 & 99.20 & 99.20
\\

\hline\hline
\end{tabular}
\end{table}

\section{Conclusion}
\label{secconclusion}

We have developed a parametric estimation procedure of self-exciting Hawkes processes. The procedure was based on ordinary least squares. We restricted to the kernel form which is a sum of the product of a parameter and a function. First, we gave the central limit theorem of the parametric estimation procedure. Then, we showed the consistency of the variance matrix estimation procedure. Finally, we introduced a Wald test statistic and show its asymptotic properties as a  consequence. We have applied the proposed methodology to trade times from high-frequency financial asset data. Our empirical results provided evidence for three heterogeneous trader types. By extending this approach to the parametric Hawkes processes with time dependent parameters, we can perform estimation of Hawkes processes with structural changes. Other possibilities consist in high-dimensional Hawkes processes. 

\section*{Acknowledgments}
Financial support from the Japanese Society for the Promotion of Science under grant 23H00807 (Potiron) and grant 25K16617 (Poignard) is gratefully acknowledged.

\bibliography{biblio_arxiv}

@article{ogata1978asymptotic,
  title={The asymptotic behaviour of maximum likelihood estimators for stationary point processes},
  author={Ogata, Yoshiko},
  journal={Annals of the Institute of Statistical Mathematics},
  volume={30},
  number={1},
  pages={243--261},
  year={1978},
  publisher={Springer}
}

@article{hawkes1971point,
  title={Point spectra of some mutually exciting point processes},
  author={Hawkes, Alan G},
  journal={Journal of the Royal Statistical Society. Series B (Methodological)},
  pages={438--443},
  year={1971},
  publisher={JSTOR}
}

@article{kirchner2017estimation,
  title={An estimation procedure for the {H}awkes process},
  author={Kirchner, Matthias},
  journal={Quantitative Finance},
  volume={17},
  number={4},
  pages={571--595},
  year={2017},
  publisher={Taylor \& Francis}
}

@article{fujimori2026sparse,
  title={Sparse estimators for multivariate integer-valued autoregressive models with applications to inference for Hawkes processes},
  author={Fujimori, Kou and Shiraishi, Hiroshi and Hirukawa, Junichi and Fokianos, Konstantinos},
  journal={Stochastic Processes and their Applications},
  pages={104960},
  year={2026},
  publisher={Elsevier}
}

@article{shalen1993volume,
  title={Volume, volatility, and the dispersion of beliefs},
  author={Shalen, Catherine T},
  journal={The Review of Financial Studies},
  volume={6},
  number={2},
  pages={405--434},
  year={1993},
  publisher={Oxford University Press}
}

@article{daigler1999impact,
  title={The impact of trader type on the futures volatility-volume relation},
  author={Daigler, Robert T and Wiley, Marilyn K},
  journal={The Journal of Finance},
  volume={54},
  number={6},
  pages={2297--2316},
  year={1999},
  publisher={Wiley Online Library}
}

@article{ozaki1979maximum,
  title={Maximum likelihood estimation of {H}awkes' self-exciting point processes},
  author={Ozaki, Tohru},
  journal={Annals of the Institute of Statistical Mathematics},
  volume={31},
  number={1},
  pages={145--155},
  year={1979},
  publisher={Springer}
}

@article{bowsher2007modelling,
  title={Modelling security market events in continuous time: Intensity based, multivariate point process models},
  author={Bowsher, Clive G},
  journal={Journal of Econometrics},
  volume={141},
  number={2},
  pages={876--912},
  year={2007},
  publisher={Elsevier}
}

@article{hawkes1971spectra,
  title={Spectra of some self-exciting and mutually exciting point processes},
  author={Hawkes, Alan G},
  journal={Biometrika},
  volume={58},
  number={1},
  pages={83--90},
  year={1971},
  publisher={Biometrika Trust}
}

@article{embrechts2011multivariate,
  title={Multivariate {H}awkes processes: an application to financial data},
  author={Embrechts, Paul and Liniger, Thomas and Lin, Lu},
  journal={Journal of Applied Probability},
  volume={48},
  pages={367--378},
  year={2011},
  publisher={Applied Probability Trust}
}

@article{bacry2013some,
  title={Some limit theorems for {H}awkes processes and application to financial statistics},
  author={Bacry, Emmanuel and Delattre, Sylvain and Hoffmann, Marc and Muzy, Jean-Francois},
  journal={Stochastic Processes and their Applications},
  volume={123},
  number={7},
  pages={2475--2499},
  year={2013},
  publisher={Elsevier}
}

@article{fox2016modeling,
  title={Modeling e-mail networks and inferring leadership using self-exciting point processes},
  author={Fox, Eric W and Short, Martin B and Schoenberg, Frederic P and Coronges, Kathryn D and Bertozzi, Andrea L},
  journal={Journal of the American Statistical Association},
  number={just-accepted},
  year={2016},
  publisher={Taylor \& Francis}
}

@book{van1998asymptotic,
  title={Asymptotic statistics},
  author={Van der Vaart, Aad W},
  year={1998},
  publisher={Cambridge university press}
}

@article{clinet2018statistical,
  title={Statistical inference for the doubly stochastic self-exciting process},
  author={Clinet, Simon and Potiron, Yoann},
  journal={Bernoulli},
  volume={24},
  number={4B},
  pages={3469--3493},
  year={2018},
  publisher={Bernoulli Society for Mathematical Statistics and Probability}
}

@article{Hasbrouck2013low,
  title={Low-latency trading},
  author={Hasbrouck, J and Saar, G.},
  journal={Journal of Financial Markets},
  volume={16},
  pages={646-679},
  year={2013}
}

@article{gagnon2010multi,
  title={Multi-market trading and arbitrage},
  author={Gagnon, L and Karolyi, G},
  journal={Journal of Financial Economics},
  volume={97},
  pages={53-80},
  year={2010}
}

@article{hoffmann2014dynamic,
  title={A dynamic limit order market with fast and slow traders},
  author={Hoffmann, P.},
  journal={Journal of Financial Economics},
  volume={113},
  pages={156-169},
  year={2014}
}

@article{wang2002effect,
  title={The effect of net positions by type of trader on volatility in foreign currency futures markets},
  author={Wang, Changyun},
  journal={Journal of Futures Markets: Futures, Options, and Other Derivative Products},
  volume={22},
  number={5},
  pages={427--450},
  year={2002},
  publisher={Wiley Online Library}
}

@article{kang2025analyzing,
  title={Analyzing whale calling through Hawkes process modeling},
  author={Kang, Bokgyeong and Schliep, Erin M and Gelfand, Alan E and Yack, Tina M and Clark, Christopher W and Schick, Robert S},
  journal={Journal of the American Statistical Association},
  volume={120},
  number={552},
  pages={2040--2052},
  year={2025},
  publisher={Taylor \& Francis}
}

@article{ait2026saddlepoint,
  title={Saddlepoint approximations for {H}awkes jump-diffusion processes with an application to risk management},
  author={A{\"i}t-Sahalia, Yacine and Laeven, Roger JA},
  journal={Journal of the American Statistical Association},
  number={just-accepted},
  pages={1--27},
  year={2026},
  publisher={Taylor \& Francis}
}

@article{chang2000market,
  title={Market volatility and the demand for hedging in stock index futures},
  author={Chang, Eric and Chou, Ray Y and Nelling, Edward F},
  journal={Journal of Futures Markets: Futures, Options, and Other Derivative Products},
  volume={20},
  number={2},
  pages={105--125},
  year={2000},
  publisher={Wiley Online Library}
}

@article{chang1997interday,
  title={Interday variations in volume, variance and participation of large speculators},
  author={Chang, Eric C and Pinegar, J Michael and Schachter, Barry},
  journal={Journal of Banking \& Finance},
  volume={21},
  number={6},
  pages={797--810},
  year={1997},
  publisher={Elsevier}
}

@article{potiron2026mutually,
  title={Mutually exciting point processes with latency},
  author={Potiron, Yoann and Volkov, Vladimir},
  journal={Journal of the American Statistical Association},
  volume={121},
  number={553},
  pages={326--337},
  year={2026},
  publisher={Taylor \& Francis}
}

@article{erdemlioglu2025latency,
  title={Estimation of latency for {H}awkes processes with a polynomial periodic kernel},
  author={Erdemlioglu, Deniz and Potiron, Yoann and Xu, Taiyu and Volkov, Vladimir},
  journal={Working paper},
  url={https://www.fbc.keio.ac.jp/\sim potiron/Erdemlioglu2025workingpaperestimationlatency.pdf},
  year={2025}
}

@article{cavaliere2023bootstrap,
  title={Bootstrap inference for {H}awkes and general point processes},
  author={Cavaliere, G and Lu, Y and Rahbek, A and St{\ae}rk-{\O}stergaard, J},
  journal={Journal of Econometrics},
  volume={235},
  pages={133--165},
  year={2023},
  publisher={Elsevier}
}

@article{chavez2005estimating,
  title={Estimating value-at-risk: a point process approach},
  author={Chavez-Demoulin, V and Davison, AC and McNeil, AJ},
  journal={Quantitative Finance},
  volume={5},
  pages={227--234},
  year={2005},
  publisher={Taylor \& Francis}
}

@article{rubin1972regular,
  title={Regular point processes and their detection},
  author={Rubin, I},
  journal={IEEE Transactions on Information Theory},
  volume={18},
  pages={547--557},
  year={1972},
  publisher={IEEE}
}

@article{vere1978earthquake,
  title={Earthquake prediction-a statistician's view},
  author={Vere-Jones, D},
  journal={Journal of Physics of the Earth},
  volume={26},
  pages={129--146},
  year={1978},
  publisher={The Seismological Society of Japan, The Volcanological Society of Japan, The~…}
}

@article{vere1982some,
  title={Some examples of statistical estimation applied to earthquake data: I. Cyclic Poisson and self-exciting models},
  author={Vere-Jones, D and Ozaki, T},
  journal={Annals of the Institute of Statistical Mathematics},
  volume={34},
  pages={189--207},
  year={1982},
  publisher={Springer}
}

@article{ait2015modeling,
  title={Modeling financial contagion using mutually exciting jump processes},
  author={A{\"i}t-Sahalia, Y and Cacho-Diaz, J and Laeven, RJA},
  journal={Journal of Financial Economics},
  volume={117},
  pages={585--606},
  year={2015},
  publisher={Elsevier}
}

@article{ikefuji2022earthquake,
  title={Earthquake risk embedded in property prices: Evidence from five {J}apanese cities},
  author={Ikefuji, M and Laeven, RJA and Magnus, JR and Yue, Y},
  journal={Journal of the American Statistical Association},
  volume={117},
  pages={82--93},
  year={2022},
  publisher={Taylor \& Francis}
}

@article{corradi2020testing,
  title={Testing for jump spillovers without testing for jumps},
  author={Corradi, V and Distaso, W and Fernandes, M},
  journal={Journal of the American Statistical Association},
  volume={115},
  pages={1214--1226},
  year={2020},
  publisher={Taylor \& Francis}
}

@article{clinet2017statistical,
  title={Statistical inference for ergodic point processes and application to limit order book},
  author={Clinet, S and Yoshida, N},
  journal={Stochastic Processes and their Applications},
  volume={127},
  pages={1800--1839},
  year={2017},
  publisher={Elsevier}
}

@ARTICLE{Large2007,
  author = {Large, J.},
  title = {Measuring the resiliency of an electronic limit order book},
  journal = {Journal of Financial Markets},
  year = {2007},
  volume = {10},
  pages = {1-25}
}

@ARTICLE{AitSahalia2014mutual,
  author = {A{\"i}t-Sahalia, Y and Laeven, RJA and Pelizzon, L},
  title = {Mutual excitation in eurozone sovereign {CDS}},
  journal = {Journal of Econometrics},
  year = {2014},
  volume = {183},
  pages = {151-167}
}

@article{fulop2015self,
  title={Self-exciting jumps, learning, and asset pricing implications},
  author={Fulop, Andras and Li, Junye and Yu, Jun},
  journal={Review of Financial Studies},
  volume={28},
  number={3},
  pages={876--912},
  year={2015},
  publisher={Oxford University Press}
}

@article{yu2004empirical,
  title={Empirical characteristic function estimation and its applications},
  author={Yu, Jun},
  journal={Econometric {R}eviews},
  volume={23},
  number={2},
  pages={93--123},
  year={2004},
  publisher={Taylor \& Francis}
}

@article{kwan2023alternative,
  title={Alternative asymptotic inference theory for a nonstationary {H}awkes process},
  author={Kwan, TKJ and Chen, F and Dunsmuir, WTM},
  journal={Journal of Statistical Planning and Inference},
  volume={227},
  pages={75--90},
  year={2023},
  publisher={Elsevier}
}

@article{cheysson2022spectral,
  title={Spectral estimation of {H}awkes processes from count data},
  author={Cheysson, Felix and Lang, Gabriel},
  journal={Annals of Statistics},
  volume={50},
  number={3},
  pages={1722--1746},
  year={2022},
  publisher={Institute of Mathematical Statistics}
}

@article{donnet2020nonparametric,
author = {Donnet, Sophie and Rivoirard, Vincent and Rousseau, Judith},
title = {Nonparametric {B}ayesian estimation for multivariate {H}awkes processes},
volume = {48},
journal = {Annals of Statistics},
number = {5},
publisher = {Institute of Mathematical Statistics},
pages = {2698--2727},
year = {2020}
}

@book{daley2003introduction,
  title={An introduction to the theory of point processes: Elementary theory and methods},
  publisher={Springer New York, NY},
  author={Daley, Daryl J and Vere-Jones, David},
  year={2003},
  volume = {1},
  edition = {2nd}
}

@BOOK{daley2008introduction,
  title = {An introduction to the theory of point processes: General theory
	and structure},
  publisher = {Springer New York, NY},
  year = {2008},
  author={Daley, Daryl J and Vere-Jones, David},
  volume = {2},
  edition = {2nd}
}

@book{jacod2003limit,
  title={Limit theorems for stochastic processes},
  author={Jacod, Jean and Shiryaev, Albert},
  year={2003},
  publisher={Springer Berlin, Heidelberg},
  edition = {2nd}
}

@article{hawkes1974cluster,
  title={A cluster process representation of a self-exciting process},
  author={Hawkes, Alan G and Oakes, David},
  journal={Journal of Applied Probability},
  volume={11},
  number={3},
  pages={493--503},
  year={1974},
  publisher={Cambridge University Press}
}

@article{jacod1975multivariate,
  title={Multivariate point processes: predictable projection, {R}adon-{N}ikodym derivatives, representation of martingales},
  author={Jacod, Jean},
  journal={Zeitschrift f{\"u}r Wahrscheinlichkeitstheorie und verwandte Gebiete},
  volume={31},
  number={3},
  pages={235--253},
  year={1975},
  publisher={Springer}
}

@article{budish2015high,
  title={The high-frequency trading arms race: Frequent batch auctions as a market design response},
  author={Budish, Eric and Cramton, Peter and Shim, John},
  journal={The Quarterly Journal of Economics},
  volume={130},
  number={4},
  pages={1547--1621},
  year={2015},
  publisher={MIT Press}
}

@article{biais2015equilibrium,
  title={Equilibrium fast trading},
  author={Biais, Bruno and Foucault, Thierry and Moinas, Sophie},
  journal={Journal of Financial Economics},
  volume={116},
  number={2},
  pages={292--313},
  year={2015},
  publisher={Elsevier}
}

@article{foucault2016news,
  title={News trading and speed},
  author={Foucault, Thierry and Hombert, Johan and Ro{\c{s}}u, Ioanid},
  journal={Journal of Finance},
  volume={71},
  number={1},
  pages={335--382},
  year={2016},
  publisher={Wiley Online Library}
}

@article{pagnotta2018competing,
  title={Competing on speed},
  author={Pagnotta, Emiliano S and Philippon, Thomas},
  journal={Econometrica},
  volume={86},
  number={3},
  pages={1067--1115},
  year={2018},
  publisher={Wiley Online Library}
}

@article{potiron2026hawkes,
  title={Parametric inference for {H}awkes processes with a general kernel},
  author={Potiron, Yoann},
  journal={Working paper},
  url={https://www.fbc.keio.ac.jp/\sim potiron/potiron2026hawkesworkingpaper.pdf},
  year={2026}
}

@book{white2001asymptotic,
  title={Asymptotic Theory for Econometricians},
  author={White, H.},
  isbn={9780127466521},
  lccn={00107735},
  series={Economic Theory, Econometrics, and Mathematical Economics},
  url={https://books.google.co.jp/books?id=FyG3AAAAIAAJ},
  year={2001},
  publisher={Emerald Group Publishing Limited}
}

@book{doob1953stochastic,
  title={Stochastic processes},
  author={Doob, J.L.},
  year={1953},
  publisher={Wiley New York}
}

@article{morariu2022state,
  title={State-dependent {H}awkes processes and their application to limit order book modelling},
  author={Morariu-Patrichi, Maxime and Pakkanen, Mikko S},
  journal={Quantitative Finance},
  volume={22},
  number={3},
  pages={563--583},
  year={2022},
  publisher={Taylor \& Francis}
}

@article{ogata1988statistical,
  title={Statistical models for earthquake occurrences and residual analysis for point processes},
  author={Ogata, Yosihiko},
  journal={Journal of the American Statistical Association},
  volume={83},
  number={401},
  pages={9--27},
  year={1988},
  publisher={Taylor \& Francis}
}

@article{bajgrowicz2016jumps,
  title={Jumps in high-frequency data: Spurious detections, dynamics, and news},
  author={Bajgrowicz, Pierre and Scaillet, Olivier and Treccani, Adrien},
  journal={Management Science},
  volume={62},
  number={8},
  pages={2198--2217},
  year={2016},
  publisher={INFORMS}
}

@article{cai2024latent,
  title={Latent network structure learning from high-dimensional multivariate point processes},
  author={Cai, Biao and Zhang, Jingfei and Guan, Yongtao},
  journal={Journal of the American Statistical Association},
  volume={119},
  number={545},
  pages={95--108},
  year={2024},
  publisher={Taylor \& Francis}
}

@article{kling2026goodness,
  title={On goodness-of-fit testing for self-exciting point processes},
  author={Kling, Jos{\'e} Carlos Fontanesi and Vetter, Mathias},
  journal={To appear in the Scandinavian Journal of Statistics},
  year={2026},
  publisher={Wiley Online Library}
}

@article{reynaud2010adaptive,
  title={Adaptive estimation for {H}awkes processes; application to genome analysis},
  author={Reynaud-Bouret, Patricia and Schbath, Sophie},
  journal={Annals of Statistics},
  volume={38},
  number={5},
  pages={2781--2822},
  year={2010}
}

@article{fang2024group,
  title={Group network {H}awkes process},
  author={Fang, Guanhua and Xu, Ganggang and Xu, Haochen and Zhu, Xuening and Guan, Yongtao},
  journal={Journal of the American Statistical Association},
  volume={119},
  number={547},
  pages={2328--2344},
  year={2024},
  publisher={Taylor \& Francis}
}

@article{bacry2020sparse,
  title={Sparse and low-rank multivariate {H}awkes processes},
  author={Bacry, Emmanuel and Bompaire, Martin and Ga{\"i}ffas, St{\'e}phane and Muzy, Jean-Francois},
  journal={Journal of Machine Learning Research},
  volume={21},
  number={50},
  pages={1--32},
  year={2020}
}

@article{hansen2015lasso,
  title={Lasso and probabilistic inequalities for multivariate point processes},
  author={Hansen, Niels Richard and Reynaud-Bouret, Patricia and Rivoirard, Vincent},
  year={2015},
  journal={Bernoulli},
  volume={21},
  number={1},
  pages={83--143},
  publisher={Bernoulli Society for Mathematical Statistics and Probability}
}

@article{rosenblatt1956central,
  title={A central limit theorem and a strong mixing condition},
  author={Rosenblatt, Murray},
  journal={Proceedings of the national Academy of Sciences},
  volume={42},
  number={1},
  pages={43--47},
  year={1956}
}

@book{rosenblatt1978dependence,
  title={Dependence and asymptotic independence for random processes. In Studies in Probability Theory},
  author={Rosenblatt, Murray},
  year={1978},
  publisher={Mathematical
Association of America, Washington, D.C.}
}

@book{stout1974almost,
  title={Almost Sure Convergence},
  author={Stout, W F},
  year={1974},
  publisher={Academic Press, New York}
}

@article{rosenblatt1972uniform,
  title={Uniform ergodicity and strong mixing},
  author={Rosenblatt, Murray},
  journal={Zeitschrift f{\"u}r Wahrscheinlichkeitstheorie und Verwandte Gebiete},
  volume={24},
  number={1},
  pages={79--84},
  year={1972},
  publisher={Springer}
}

@article{mcleish1974dependent,
  title={Dependent central limit theorems and invariance principles},
  author={McLeish, Donald L},
  journal={Annals of Probability},
  volume={2},
  number={4},
  pages={620--628},
  year={1974},
  publisher={Institute of Mathematical Statistics}
}

@book{doukhan1995mixing,
  title={Mixing: Properties and Examples},
  author={Doukhan, Paul},
  year={1995},
  publisher={Springer-Verlag New York}
}

@book{bradley2007introduction,
  title={Introduction to Strong Mixing Conditions},
  author={Bradley, Richard C.},
  year={2007},
  volume={1},
  publisher={Kendrick Press}
}

@article{bremaud1996stability,
  title={Stability of nonlinear {H}awkes processes},
  author={Br{\'e}maud, Pierre and Massouli{\'e}, Laurent},
  journal={Annals of Probability},
  pages={1563--1588},
  year={1996},
  publisher={JSTOR}
}

\newpage

\appendix

\Huge \noindent \textbf{Appendices}

\normalsize
\noindent This part corresponds to the Appendices of "Parametric estimation of Hawkes processes based on ordinary least squares". In Section \ref{secsimulation}, we perform simulation experiments. Additional real data experiments are available in Section \ref{secempiricalstudyinsample}. All the proofs of the theory can be found in Section \ref{secproofs}. 

\section{Simulation experiments}
\label{secsimulation}

In this section, we evaluate the performance of the proposed estimation method through a series of simulation studies. In Subsection \ref{appendix_subsec:sim_l2error}, we assess the estimation accuracy of the Hawkes process parameters obtained by maximum likelihood and ordinary least squares. In particular, we examine the impact of the tuning parameter $\Delta$ on the ordinary least squares estimation performance. 
In Subsection \ref{appendix_subsec:sim_wald}, we investigate the finite-sample performance of the Wald test for linear constraints to illustrate Corollary \ref{corwald} of the main text. An additional numerical experiment illustrating the asymptotic distribution derived in Theorem \ref{thcltest} of the main text is provided in Subsection \ref{appendix_subsec:asymptotic_dist}. 

\subsection{Precision of estimation}\label{appendix_subsec:sim_l2error}

We consider the setting where the true underlying data generating process (DGP) for $\{N_t\}_{1\leq t\leq T}$ has intensity given by (\ref{defhawkesintensity}). We generate $200$ independent batches of observations $\{N_t\}_{1\leq t\leq T}$ for given true values $\nu^*$, $\alpha^*=(\alpha^*_1,\ldots,\alpha^*_{n_h})'$ and $b^*=(b^*_1,\ldots,b^*_{n_h})'$. We set $T=10000$, which represents approximately half a day trading in seconds excluding the first and last 30 minutes. 
For each batch, once a path $\{N_t\}_{1\leq t\leq T}$ is generated from the true DGP, we solve (\ref{stat_crit}) for a given time increment $\Delta$, yielding $\widehat{\theta}_{T\Delta}$. For the sake of comparison, we also compute the maximum likelihood estimator $\widehat{\theta}^{\text{mle}}_{T\Delta}$. 
In both estimation procedures, the true decay $b^*$ is fixed and assumed known, so that the ordinary least squares and maximum likelihood procedures estimate $\nu^*$ and the vector $\alpha^*$. 

Obviously, the choice of the time increment $\Delta$ is a key ingredient in the estimation of $\theta$ based on ordinary least squares, as discussed in Section \ref{secimplementation}. We will assess the sensitivity of the estimation accuracy with respect to the choice of the time increment $\Delta$. We set the time increment as $\Delta = c/b^*_{\max}$ in which the increment parameter $c \in \mathcal{C}= \{10^{-6.5},\ldots,10^{-1.57}\}$,
with 70 logarithmically equally spaced grid points. Denoting by $\|v\|_2=\sqrt{\sum^p_{i=1}v^2_i}$ for $v \in \mathbb{R}^p$ the $\ell_2$ norm, we will evaluate the estimation accuracy as the $\ell_2$-error $\|\widehat{\theta}_{T\Delta}-\theta^*\|_2$ for a given $\Delta$ and $\|\widehat{\theta}^{\text{mle}}_{T\Delta}-\theta^*\|_2$. This distance will be averaged over the $200$ batches. 
For each $n_h$, the true parameters $\alpha^*,b^*$ satisfy $\sum^{n_h}_{k=1}\alpha^*_k/b^*_k<1$. We will consider two scenarios for the branching ratio $\text{BR}_{n_h}=\| h \|_1=\sum^{n_h}_{k=1}\alpha^*_k/b^*_k$, namely $\text{BR}_{n_h}\approx 0.8$ and $\text{BR}_{n_h} \approx 0.98$. 
And for each $\text{BR}_{n_h}$ case, we will consider settings with very large $b^*_{\max}$ and capped $b^*_{\max}$ values. Finally, for each ($\text{BR}_{n_h}$, $b^*_{\max}$) design, we will consider two settings with respect to the ratios $\alpha^*_k/b^*_k$: (i) homogeneous, namely the situation where the ratios $\alpha^*_k/b^*_k$ have similar levels; (ii) heterogeneous, namely the situation allowing different ratios among $\alpha^*_k/b^*_k$. For all these settings, we consider two situations: $\nu^* = 5$ and $\nu^* =1$.
The true maximum value $b^*_{\max}$ of $b^*$ is set as $5000$ in the capped decay number design. In the large decay number case, $b^*_{\max}=16000$ when $n_h=2$, $b^*_{\max}=20000$ when $n_h=3,4$ and $b^*_{\max}=30000$ when $n_h=5$. This range of scenarios is realistic and adequate with the empirical results. 

Figures~\ref{fig:mu5-br08} and \ref{fig:mu5-br098} display $\|\widehat{\theta}_{T\Delta}-\theta^*\|_2$ as a function of the increment parameter $c$ and $\|\widehat{\theta}^{\text{mle}}_{T\Delta}-\theta^*\|_2$ when $\nu^* = 5$. For the sake of comparison with the estimator based on ordinary least squares for each time increment, the $\ell_2$-error resulting from the maximum likelihood estimation which is averaged over the $200$ batches is displayed as a constant black solid line. Each point represents an average of $200$ $\ell_2$-errors. Across all simulation settings, the ordinary least squares estimator exhibits a similar pattern. For sufficiently small values of the parameter $c$, the estimation error remains nearly constant over a broad range of discretization levels. This suggests that the estimator is not particularly sensitive to the choice of the time increment $\Delta$ provided that the discretization is sufficiently fine. However, the estimation error rises rapidly as the parameter $c$ increases. Indeed, larger values of the parameter $c$ correspond to coarser time discretizations as the time increment is given by $\Delta=c/b^*_{\max}$. This results in a less accurate approximation of the continuous-time Hawkes dynamic. The transition from the stable region to the rapidly increasing region occurs at slightly different values of the parameter $c$ depending on the branching ratio and the kernel configuration. In particular, the same overall behavior is consistently observed across all simulation designs. For $n_h \leq 3$, the ordinary least squares estimator achieves an estimation accuracy comparable to that of the maximum likelihood estimator over a wide range of discretization parameters. For $n_h=4$, the relative performance depends on the simulation design, with the ordinary least squares estimator sometimes outperforming and slightly under performing the maximum likelihood estimator. Finally, the ordinary least squares estimator generally yields smaller average $\ell_2$-errors than the maximum likelihood estimator for $n_h=5$, particularly in the capped decay number $b^*_{\max}$ designs. Overall, these results suggest that the proposed regression-based estimator remains competitive across a broad range of Hawkes specifications and exhibits limited sensitivity to the precise choice of the parameter $c$ over a sufficiently large interval of time increments.

The situation is altered when $\nu^*=1$ as suggested by Figures~\ref{fig:mu1-br08} and \ref{fig:mu1-br098} which report the corresponding results when $\nu^*=1$ while all other parameters remain unchanged. The same behavior with respect to the choice of the time increment $\Delta$ can be noticed. For sufficiently small values of $c$, the $\ell_2$-error remains relatively stable before increasing rapidly as the discretization becomes coarser. However, the maximum likelihood method consistently outperforms the ordinary least squares procedure in terms of $\ell_2$-error compared with $\nu^*=5$. This deterioration can be analyzed through the stationary mean intensity given by $\nu^*/(1-\text{BR})$ with $\text{BR}=\| h \|_1$ the branching ratio. Reducing $\nu^*$ from $5$ to $1$ decreases the expected number of observed events by approximately a factor of five. The resulting regression problem contains substantially more empty intervals and the regressors $h_{t,k}$ defined in Section \ref{secimplementation} of the main text exhibit less variation. This reduces the amount of information available for estimating the excitation parameters. The maximum likelihood method exploits the exact event times through the continuous-time likelihood and therefore suffers less from the reduced sample of events. Nevertheless, the estimator based on ordinary least squares retains the same qualitative robustness with respect to the choice of the discretization parameter. This suggests that the scaling $\Delta=c/b^*_{\max}$ remains robust even in lower-intensity settings.

\begin{figure}[p]
\centering

\simheading{Large $b^*_{\max}$}{Homogeneous}
\hfill
\simheading{Large $b^*_{\max}$}{Heterogeneous}
\hfill
\simheading{Capped $b^*_{\max}$}{Homogeneous}
\hfill
\simheading{Capped $b^*_{\max}$}{Heterogeneous}

\vspace{0.25em}

\simpanel
{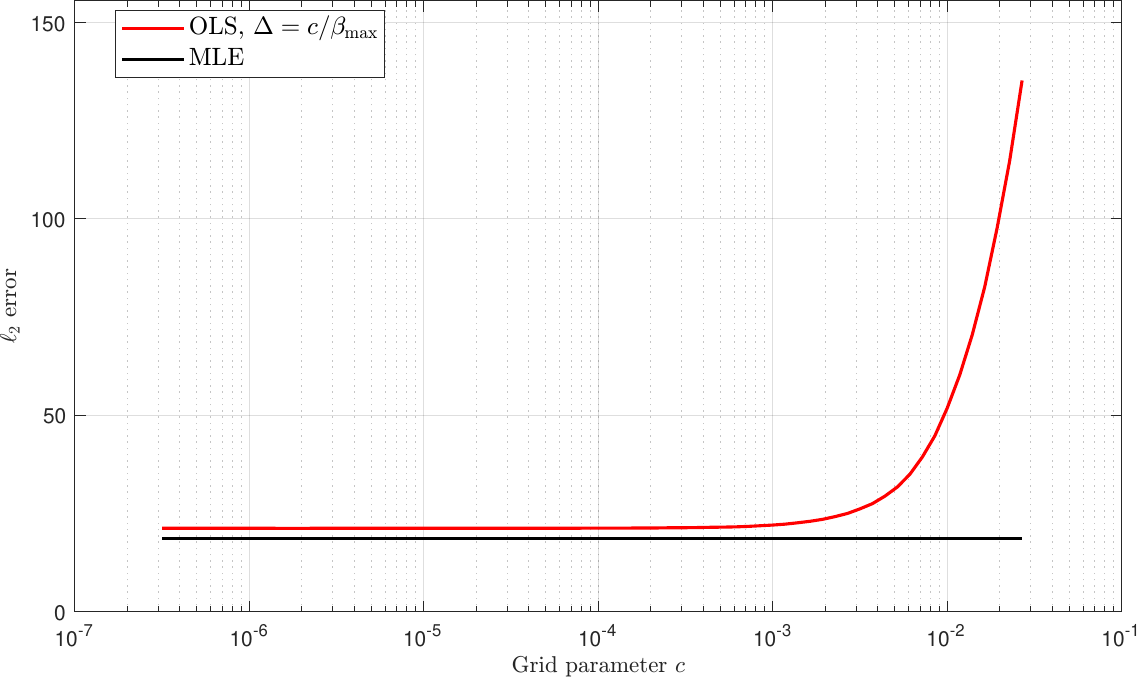}
{2}
{72,6400}
{180,16000}
{0.40,0.40}
\hfill
\simpanel
{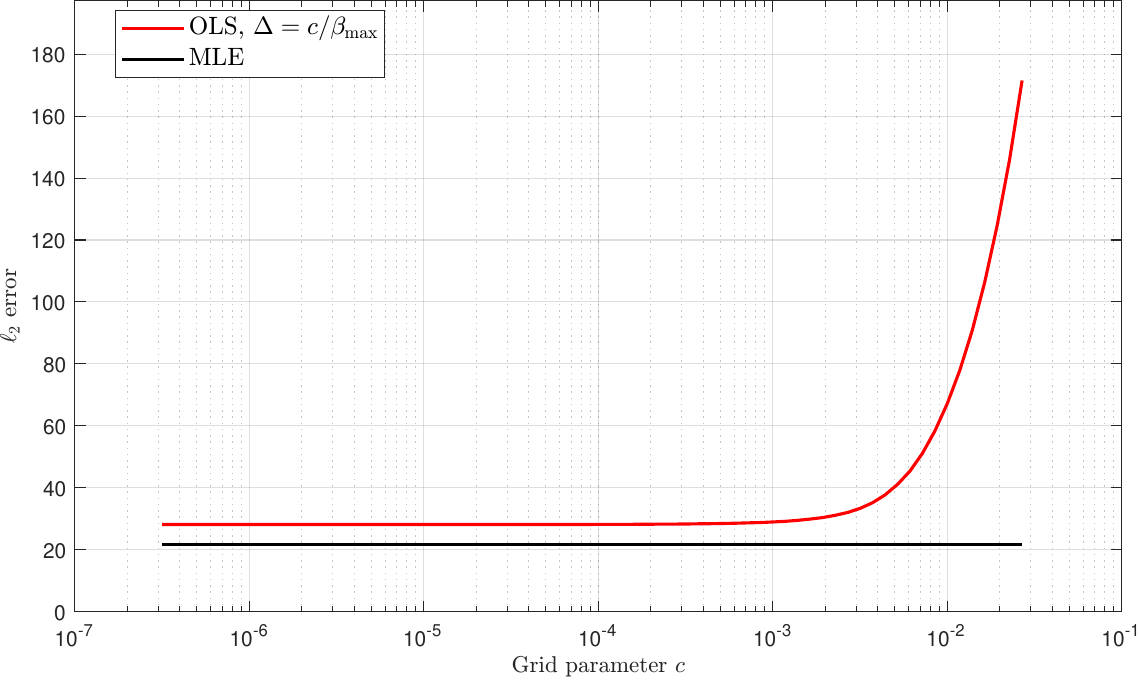}
{2}
{45,8800}
{180,16000}
{0.25,0.55}
\hfill
\simpanel
{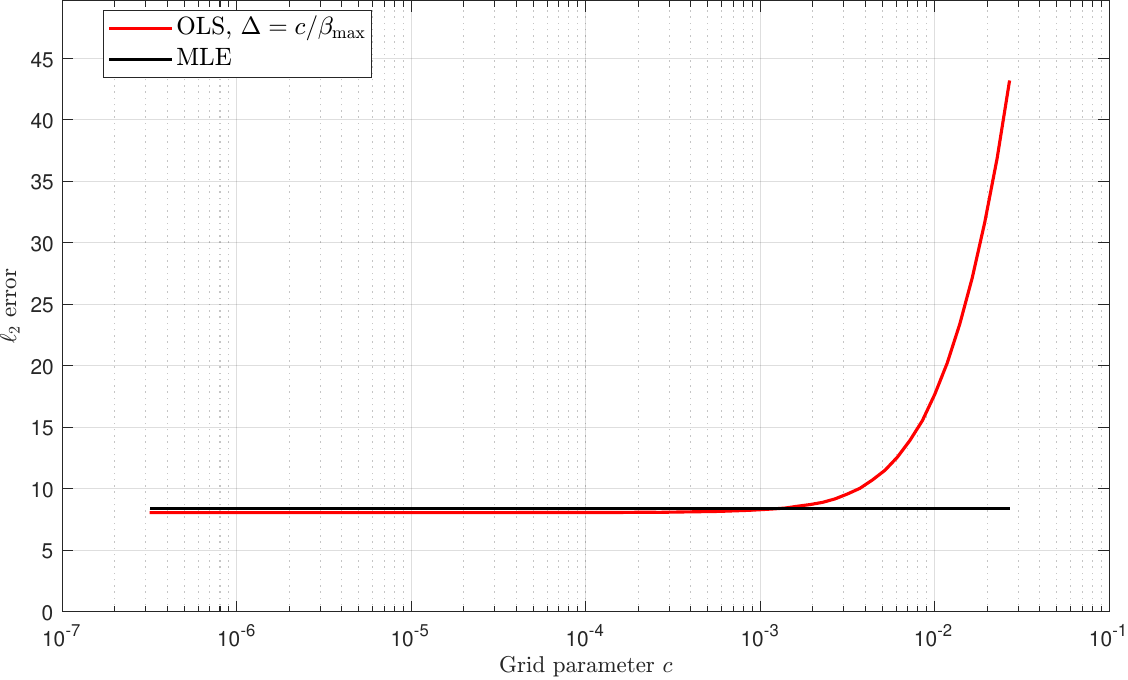}
{2}
{80,2000}
{200,5000}
{0.40,0.40}
\hfill
\simpanel
{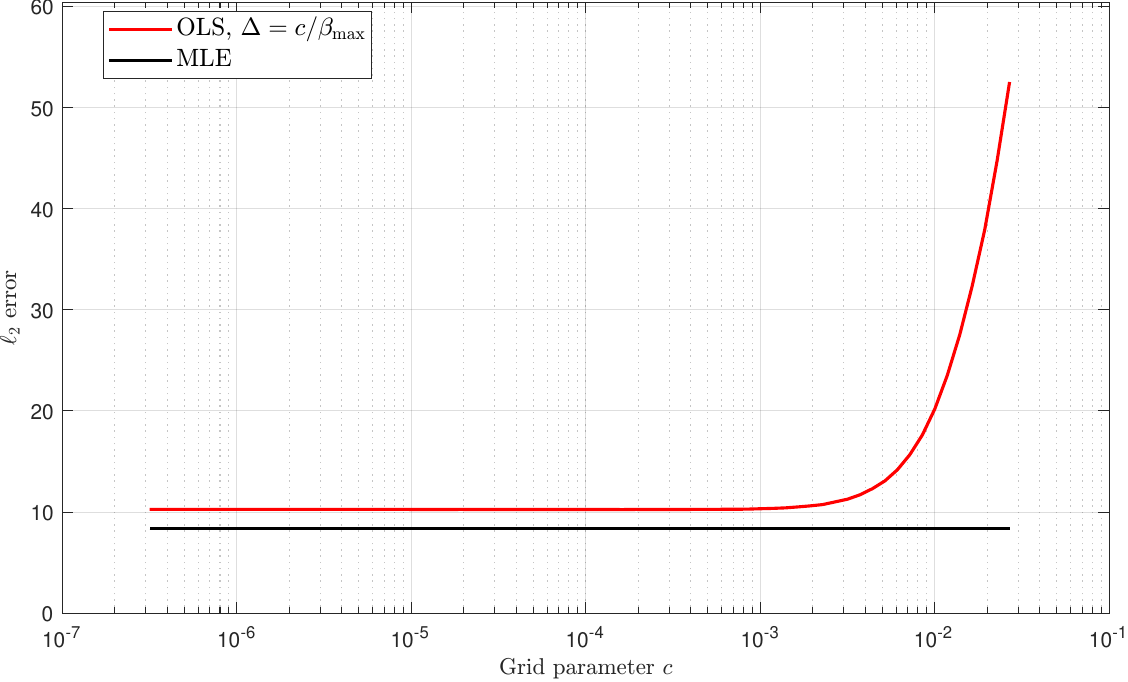}
{2}
{50,2750}
{200,5000}
{0.25,0.55}

\par\vspace{0.35em}

\simpanel
{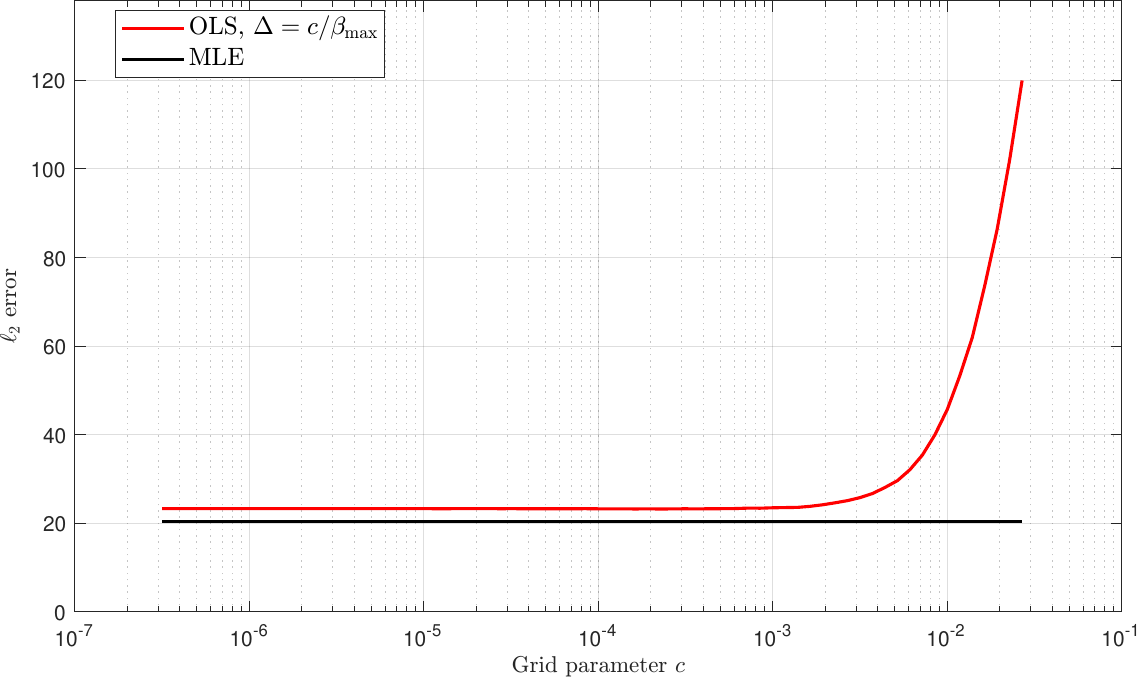}
{3}
{19,200,5332}
{70,750,20000}
{0.27,0.27,0.27}
\hfill
\simpanel
{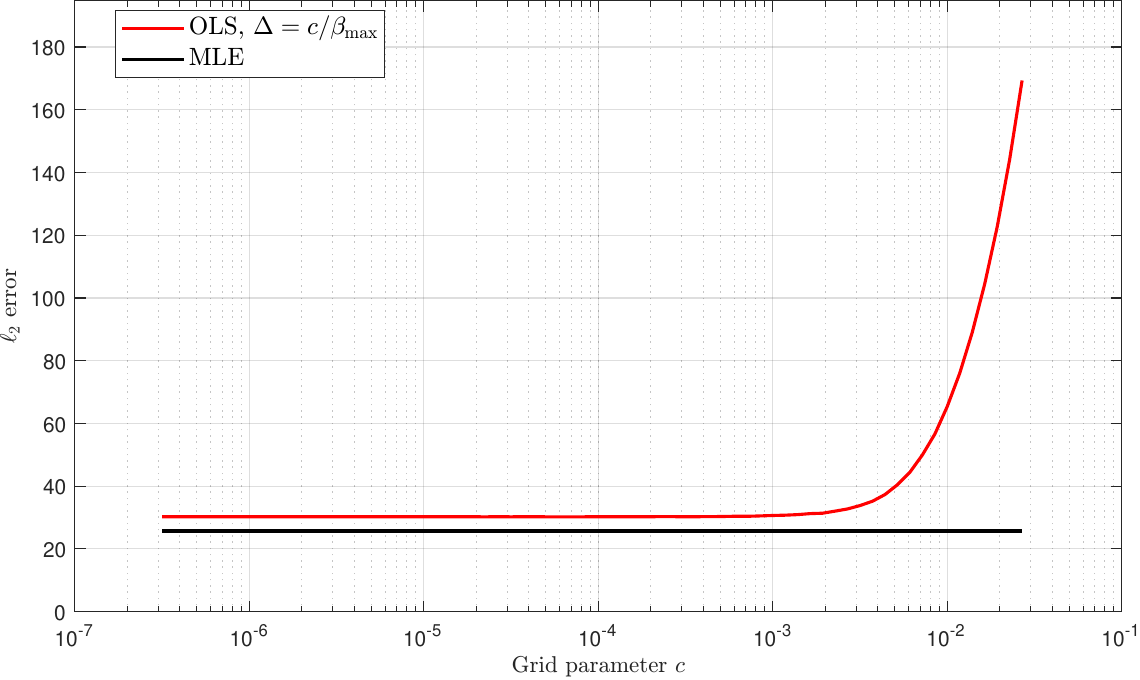}
{3}
{11,188,8000}
{70,750,20000}
{0.16,0.25,0.40}
\hfill
\simpanel
{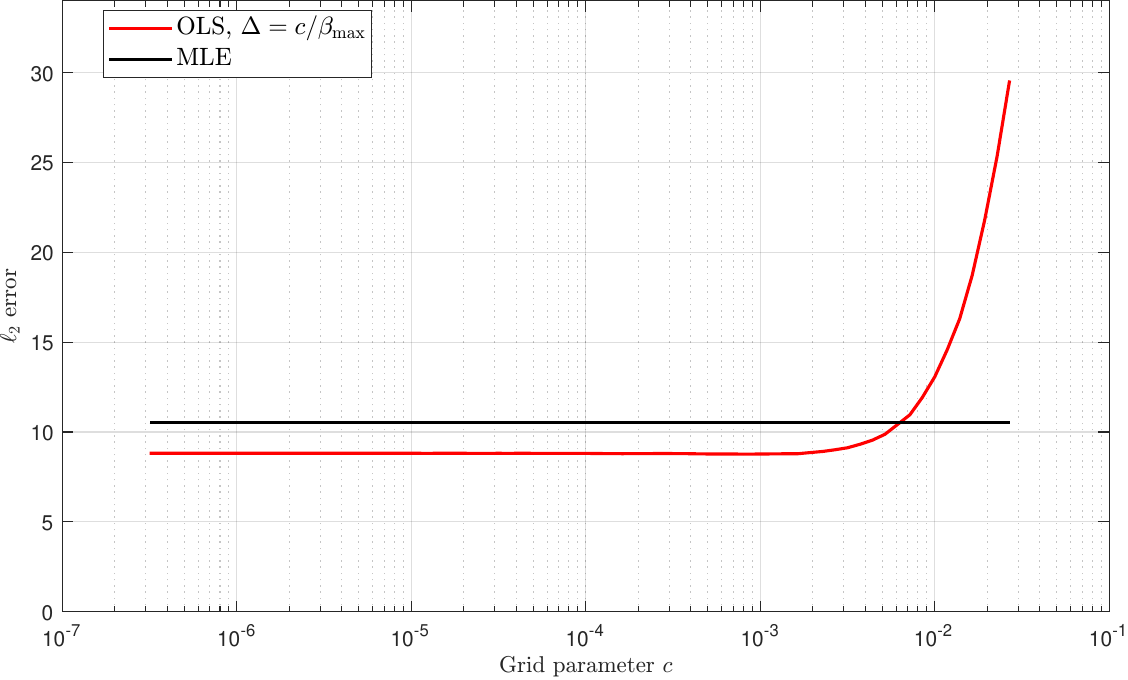}
{3}
{19,200,1333}
{70,750,5000}
{0.27,0.27,0.27}
\hfill
\simpanel
{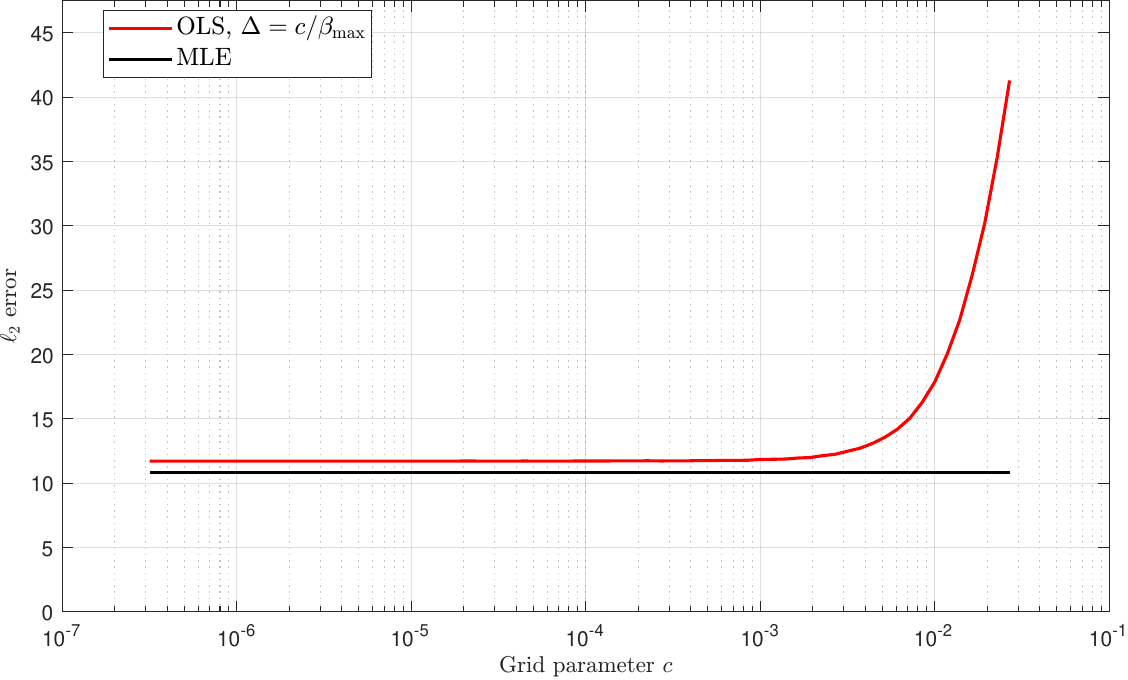}
{3}
{11,188,2000}
{70,750,5000}
{0.16,0.25,0.40}

\par\vspace{0.35em}

\simpanel
{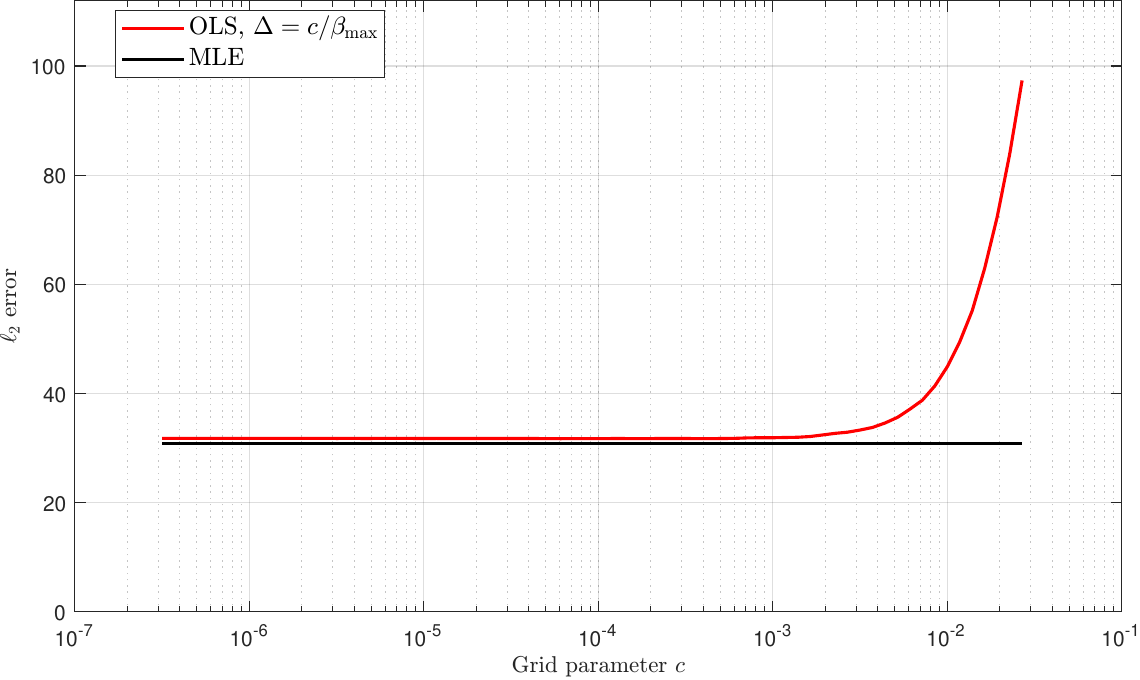}
{4}
{50,100,400,4000}
{250,500,2000,20000}
{0.20,0.20,0.20,0.20}
\hfill
\simpanel
{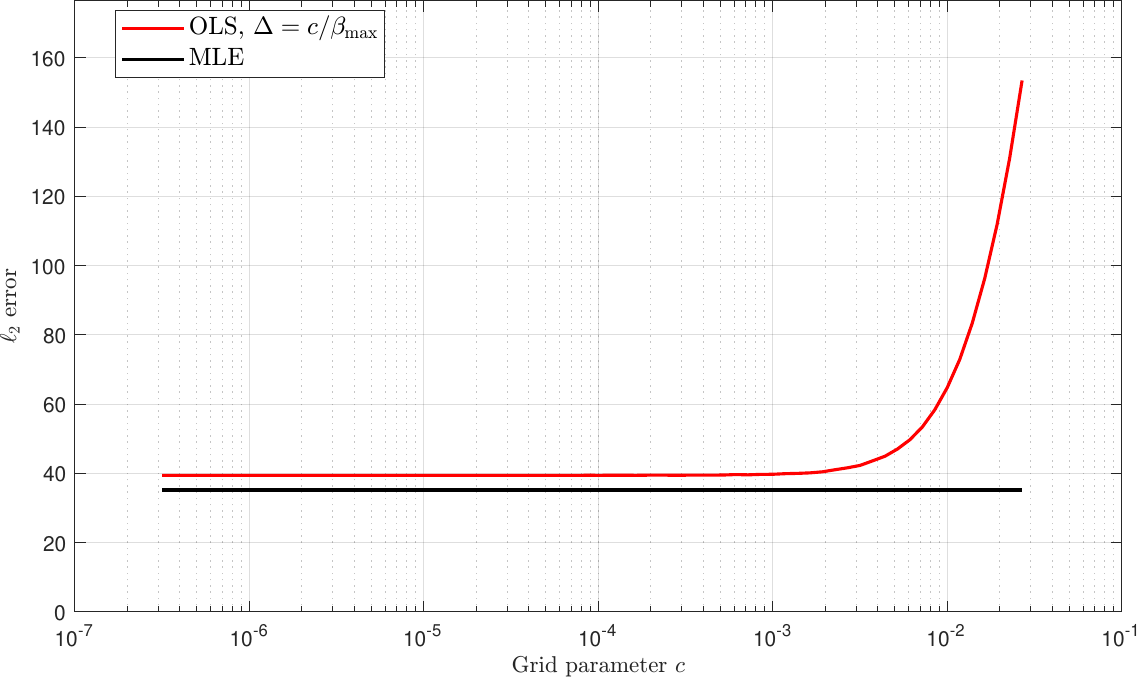}
{4}
{20,70,460,7000}
{250,500,2000,20000}
{0.08,0.14,0.23,0.35}
\hfill
\simpanel
{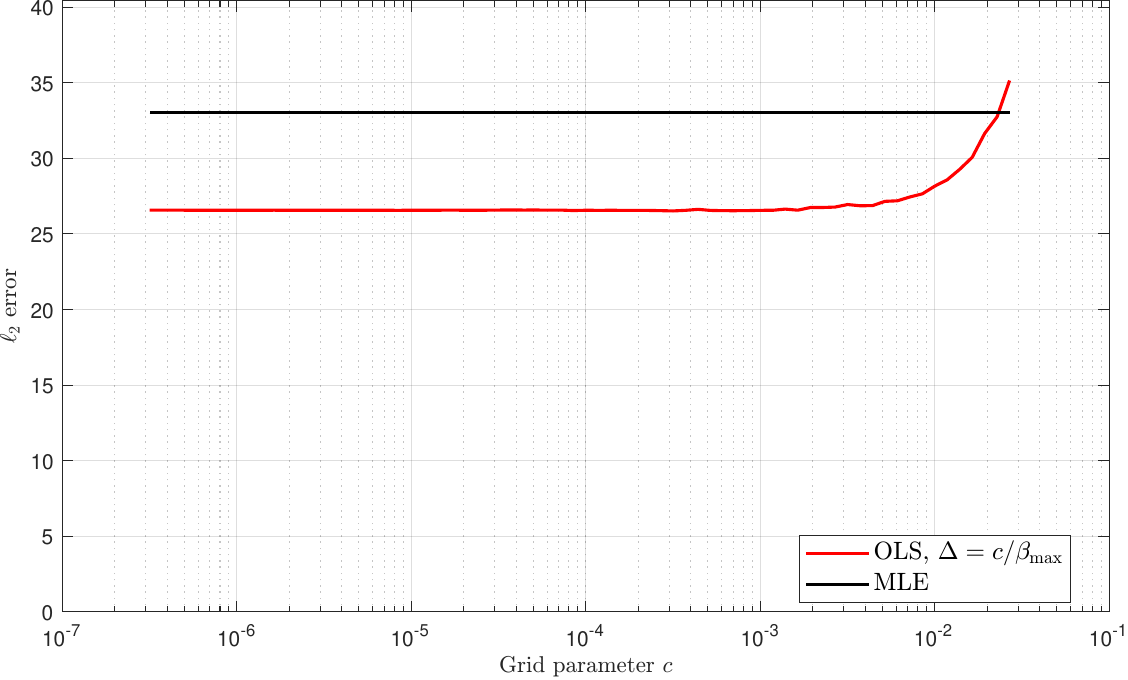}
{4}
{50,100,400,1000}
{250,500,2000,5000}
{0.20,0.20,0.20,0.20}
\hfill
\simpanel
{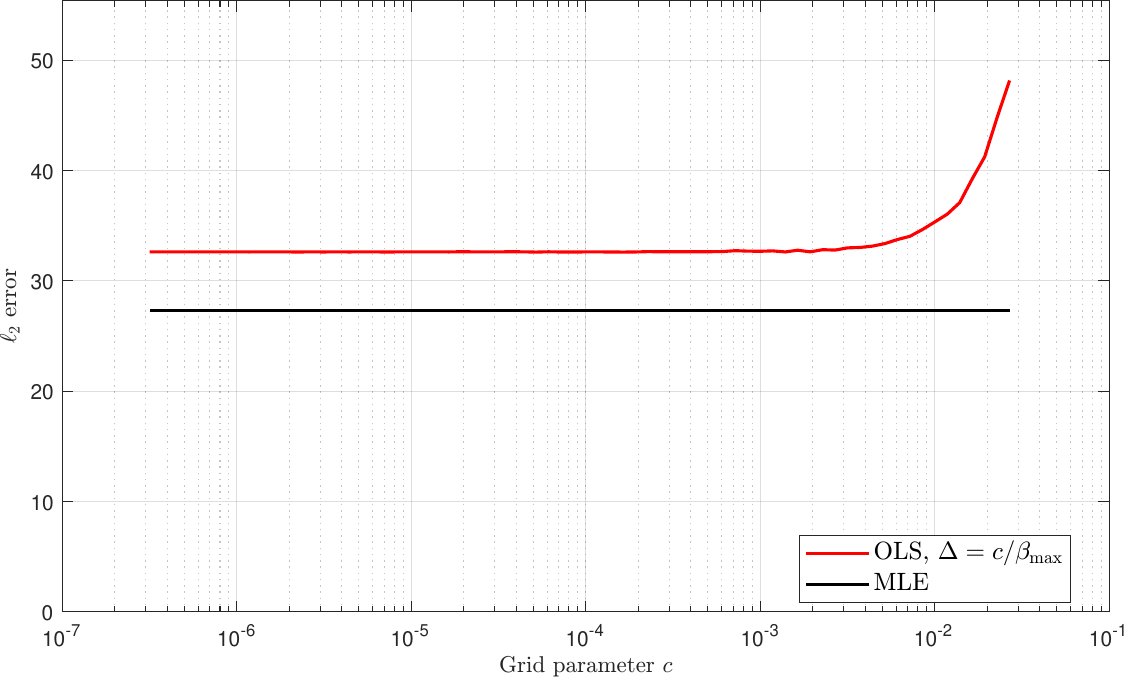}
{4}
{20,70,460,1750}
{250,500,2000,5000}
{0.08,0.14,0.23,0.35}

\par\vspace{0.35em}

\simpanel
{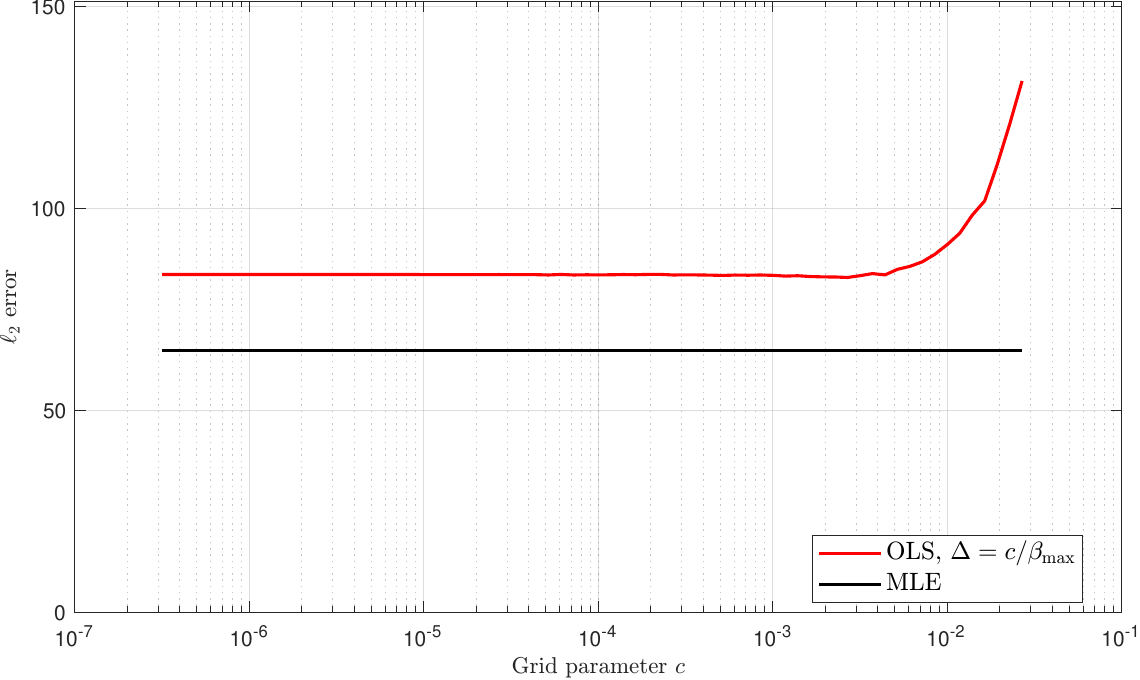}
{5}
{16,80,320,1600,4800}
{100,500,2000,10000,30000}
{0.16,0.16,0.16,0.16,0.16}
\hfill
\simpanel
{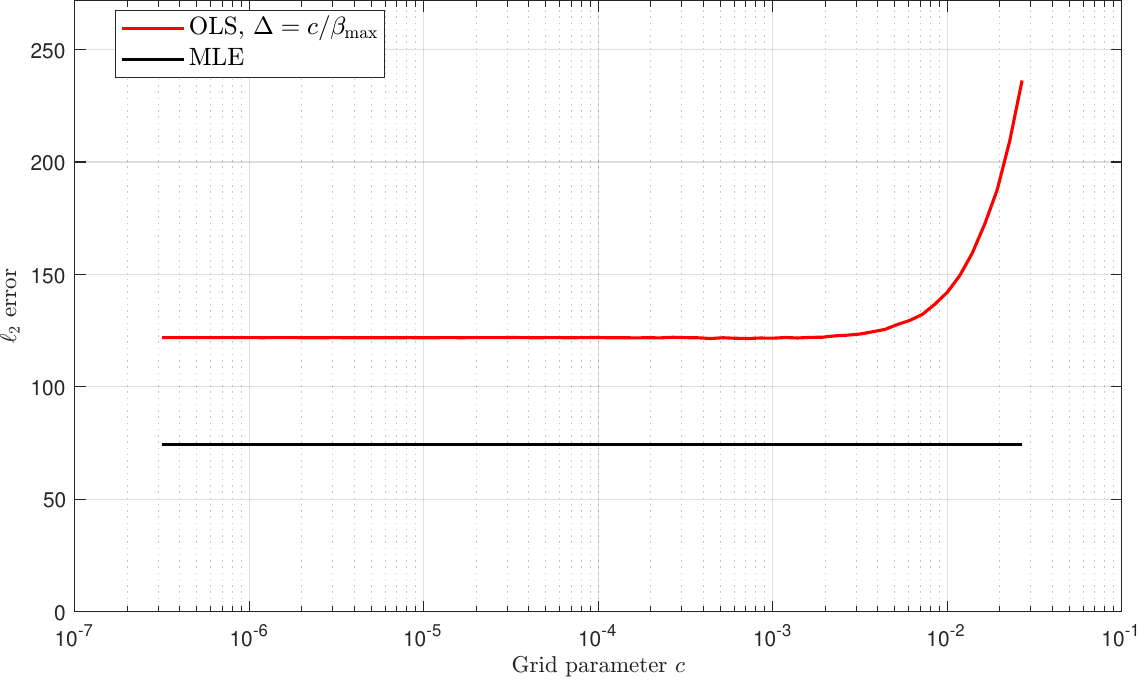}
{5}
{3,35,240,2300,10500}
{100,500,2000,10000,30000}
{0.03,0.07,0.12,0.23,0.35}
\hfill
\simpanel
{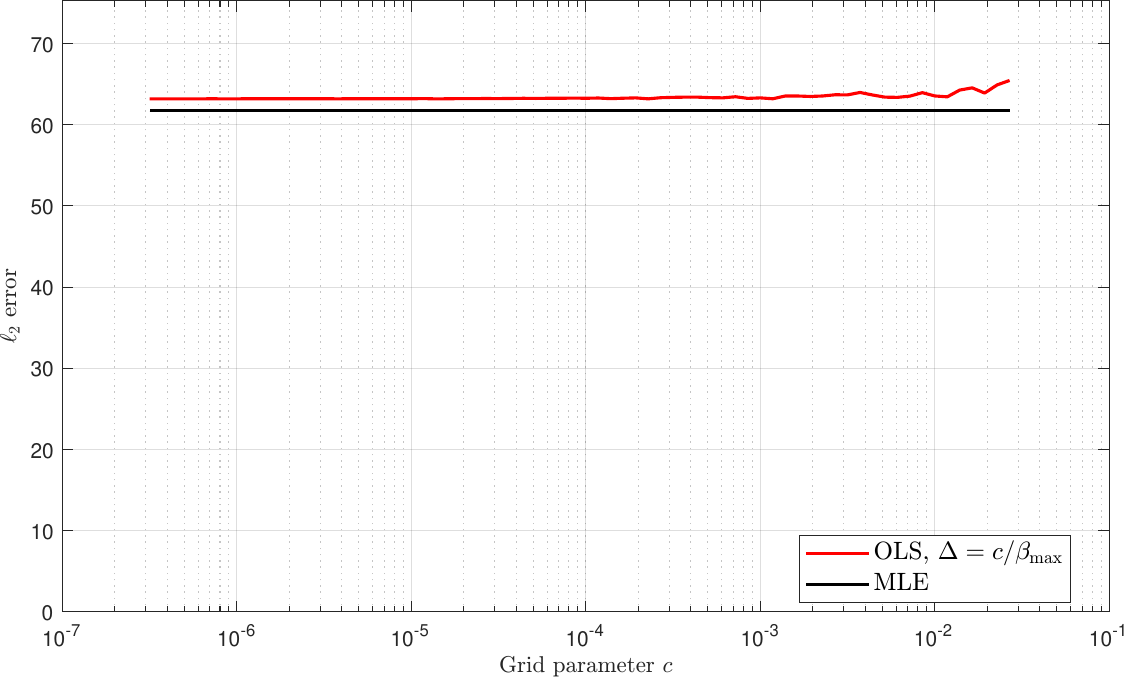}
{5}
{16,48,160,480,800}
{100,300,1000,3000,5000}
{0.16,0.16,0.16,0.16,0.16}
\hfill
\simpanel
{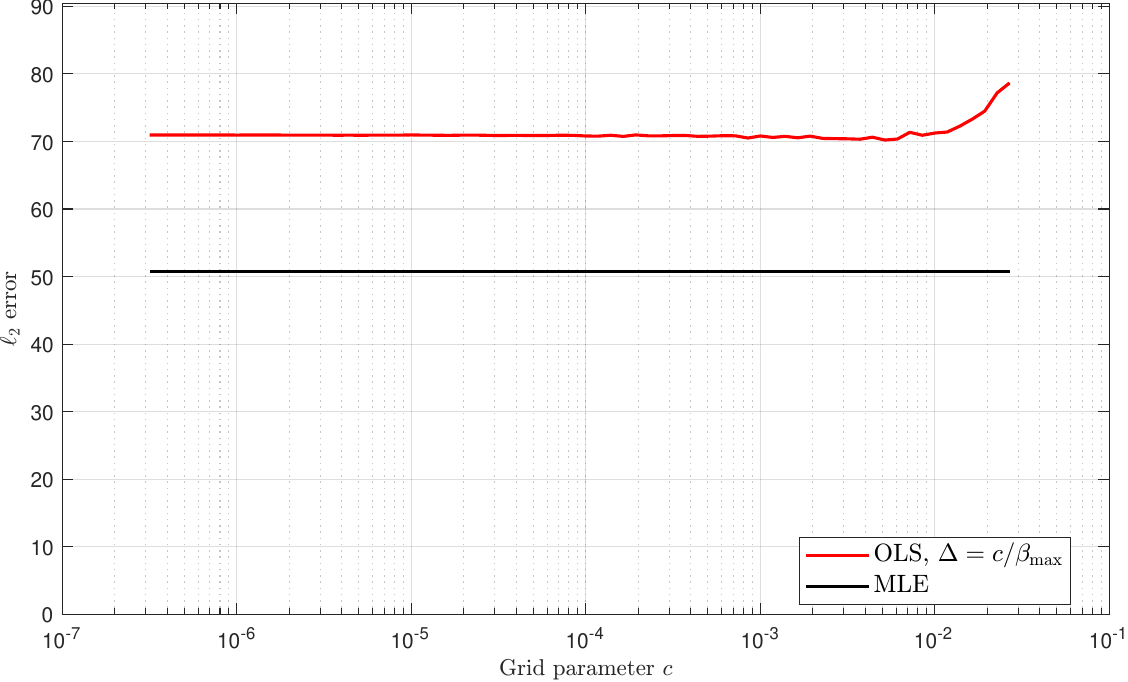}
{5}
{3,21,120,690,1750}
{100,300,1000,3000,5000}
{0.03,0.07,0.12,0.23,0.35}

\vspace{0.2em}

\caption{Average $\ell_2$-errors based on 200 
replications, $\nu^*=5$ and $\text{BR}_{n_h}\approx0.8$.
Each row corresponds to a number of kernel functions $n_h$. The first two columns correspond to large $b^*_{\max}$ design; the last two columns correspond to capped $b^*_{\max}$ design. For each design, homogeneous and heterogeneous branching-ratios are reported separately. The horizontal black line denotes the average estimation error of the MLE.}
\label{fig:mu5-br08}
\end{figure}

\begin{figure}[p]
\centering

\simheading{Large $b^*_{\max}$}{Homogeneous}
\hfill
\simheading{Large $b^*_{\max}$}{Heterogeneous}
\hfill
\simheading{Capped $b^*_{\max}$}{Homogeneous}
\hfill
\simheading{Capped $b^*_{\max}$}{Heterogeneous}

\vspace{0.25em}

\simpanel
{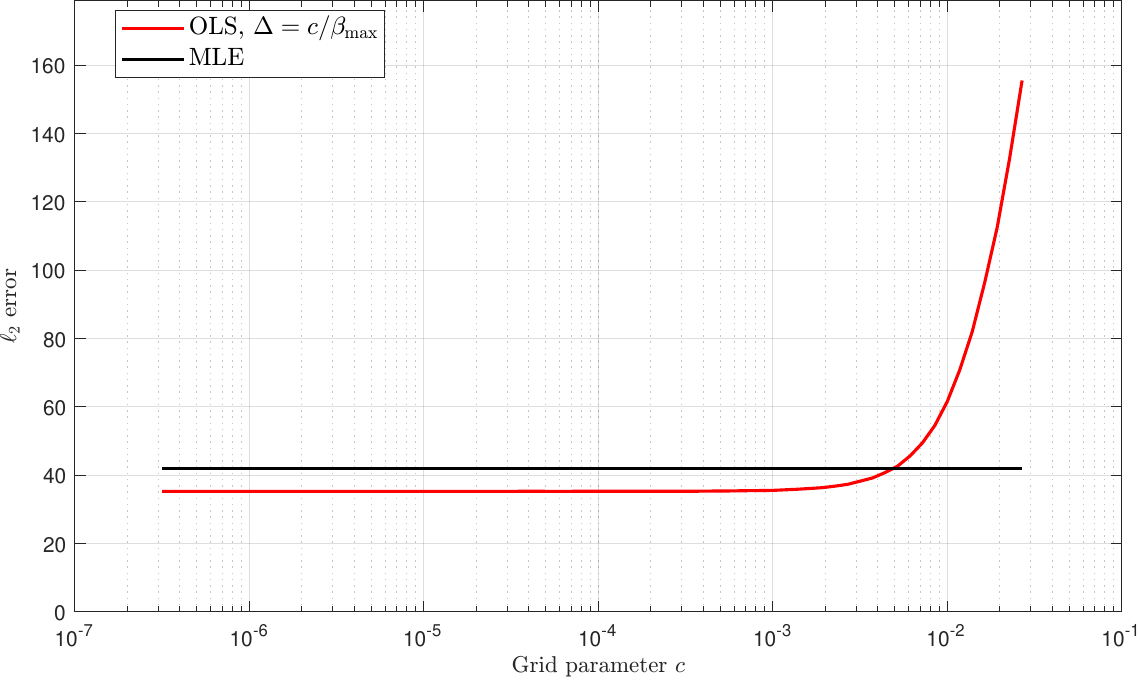}
{2}
{88,7858}
{180,16000}
{0.49,0.49}
\hfill
\simpanel
{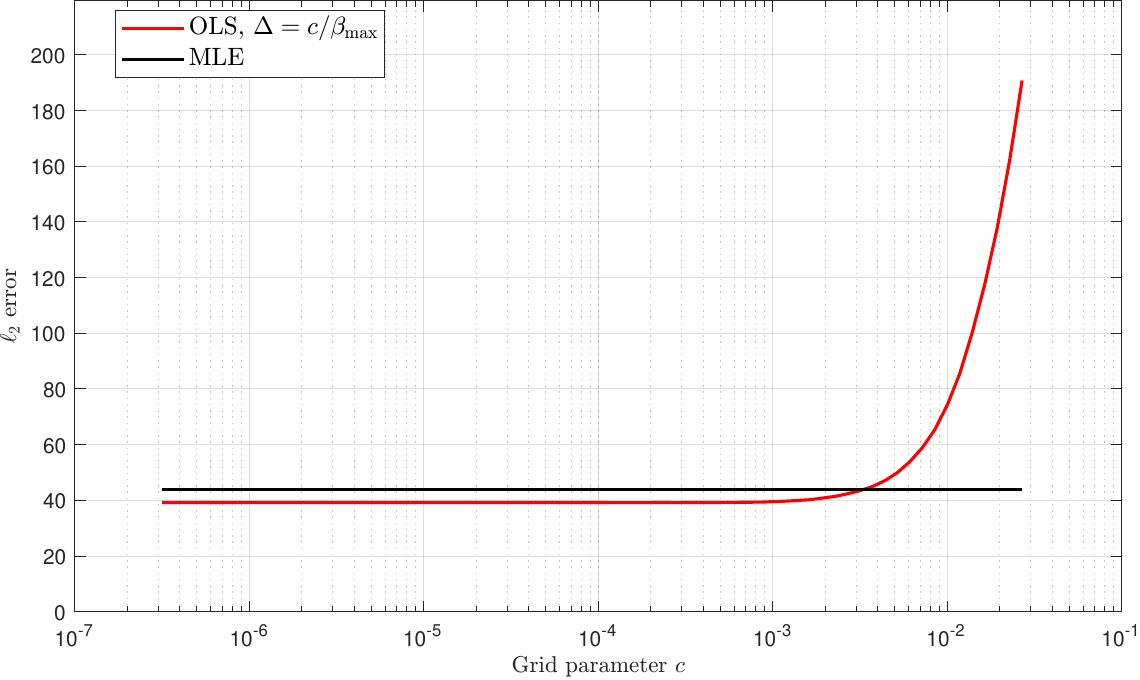}
{2}
{55,10792}
{180,16000}
{0.31,0.68}
\hfill
\simpanel
{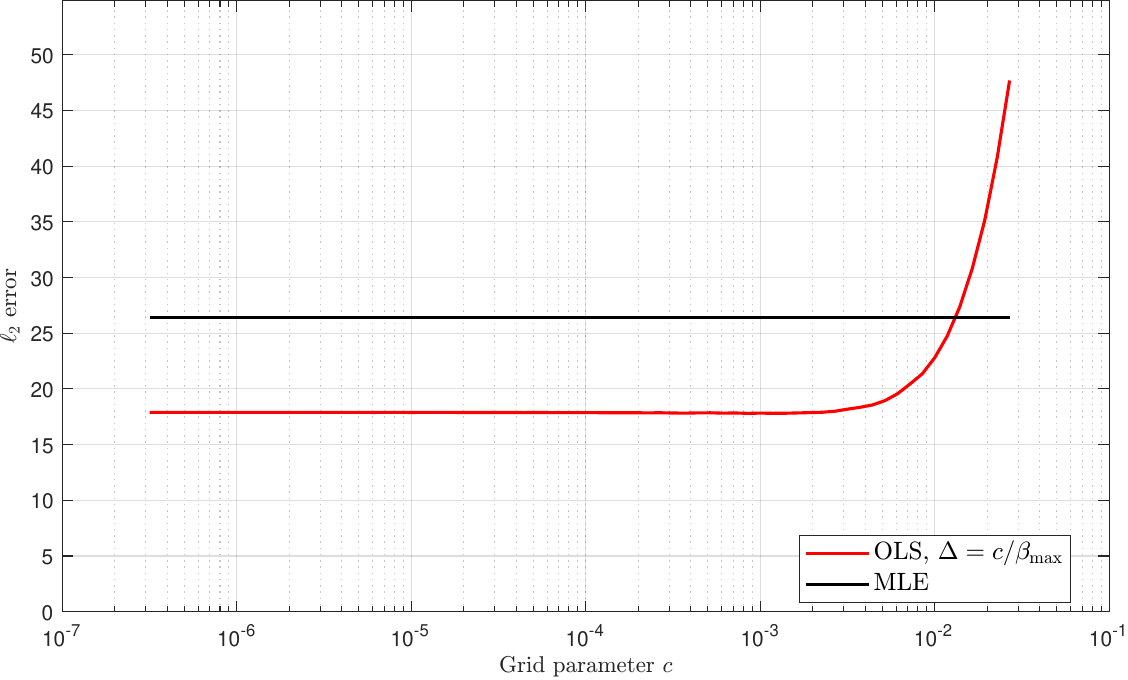}
{2}
{98,2450}
{200,5000}
{0.49,0.49}
\hfill
\simpanel
{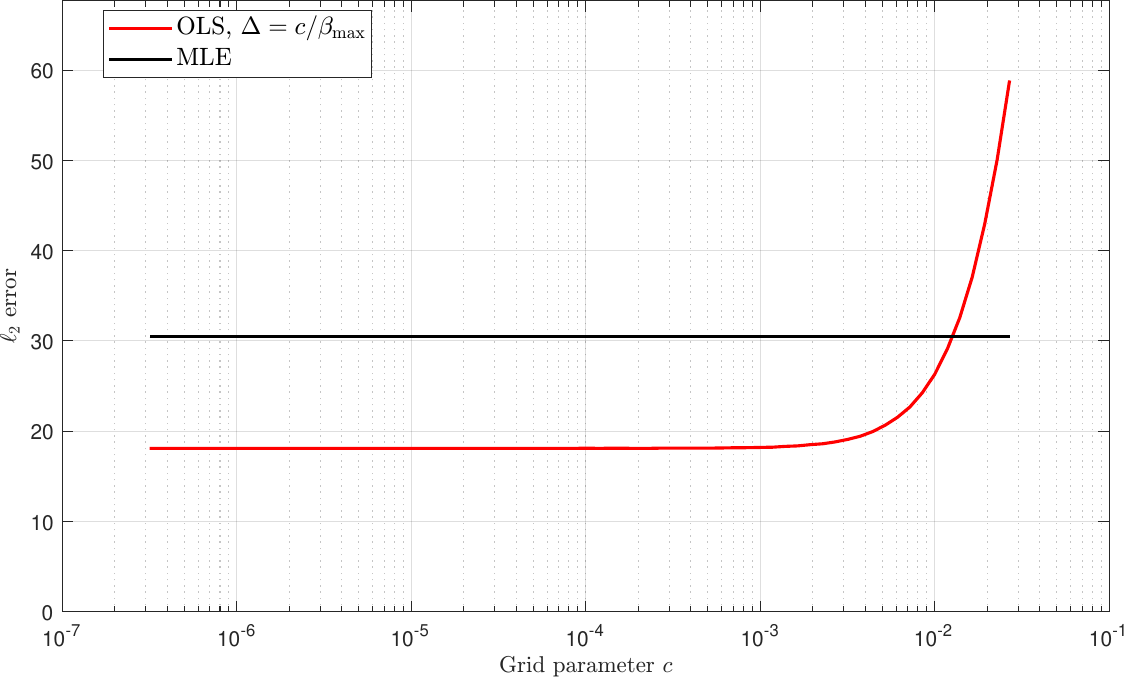}
{2}
{61,3375}
{200,5000}
{0.31,0.68}

\par\vspace{0.35em}

\simpanel
{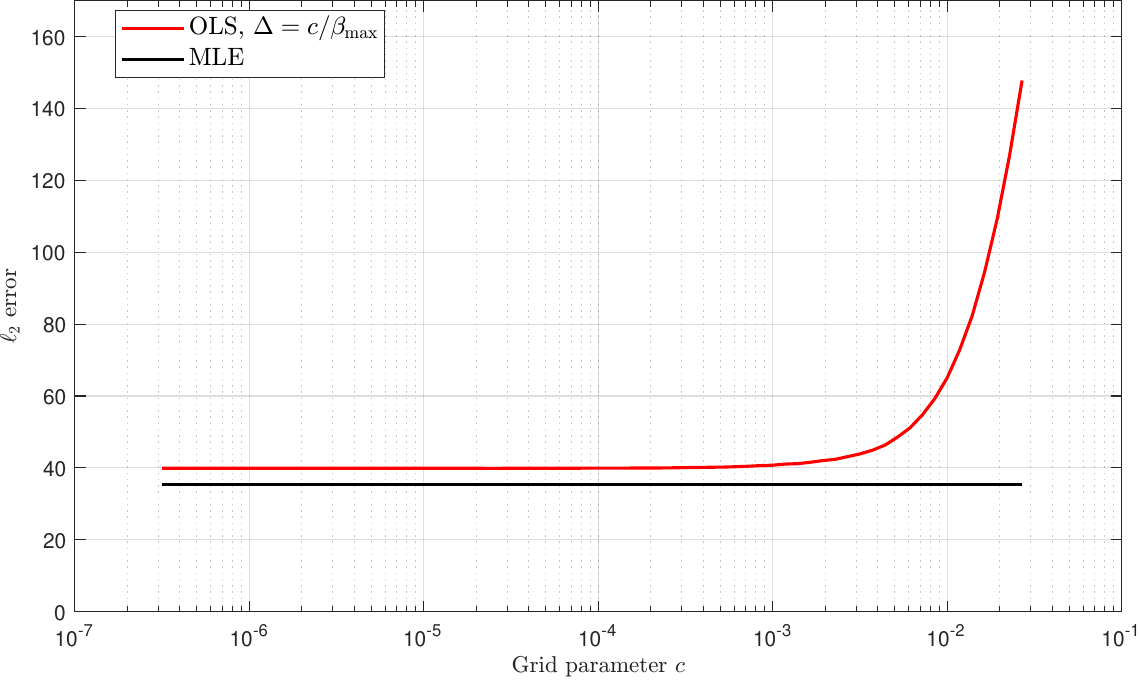}
{3}
{23,245,6532}
{70,750,20000}
{0.33,0.33,0.33}
\hfill
\simpanel
{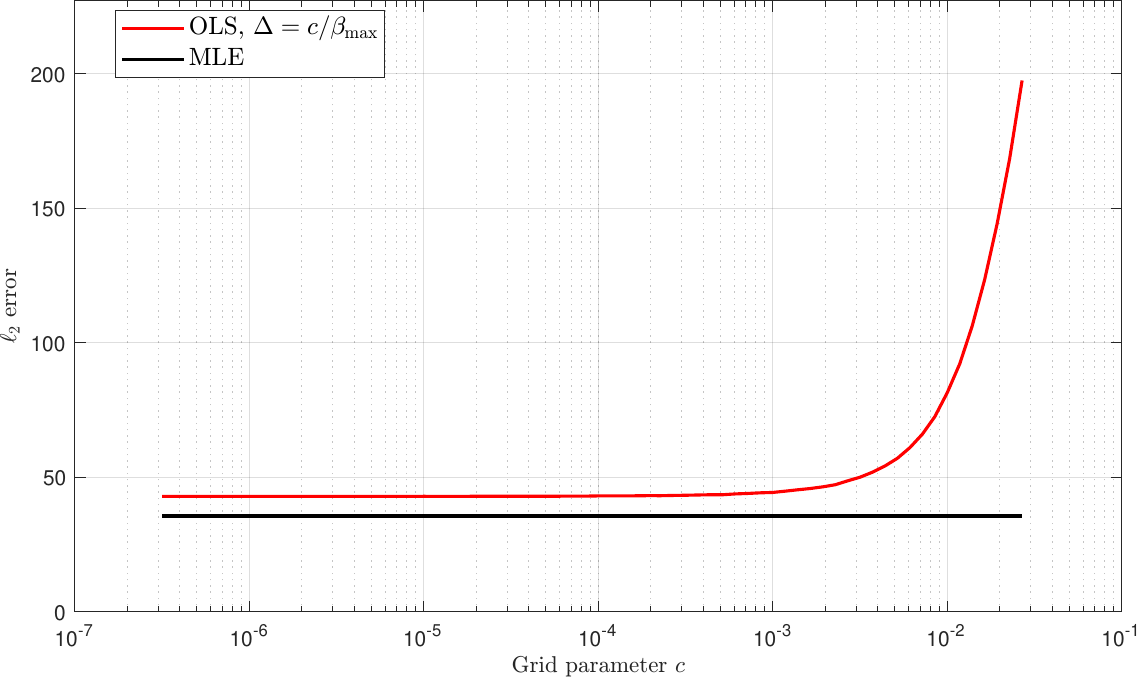}
{3}
{13,230,9800}
{70,750,20000}
{0.19,0.31,0.49}
\hfill
\simpanel
{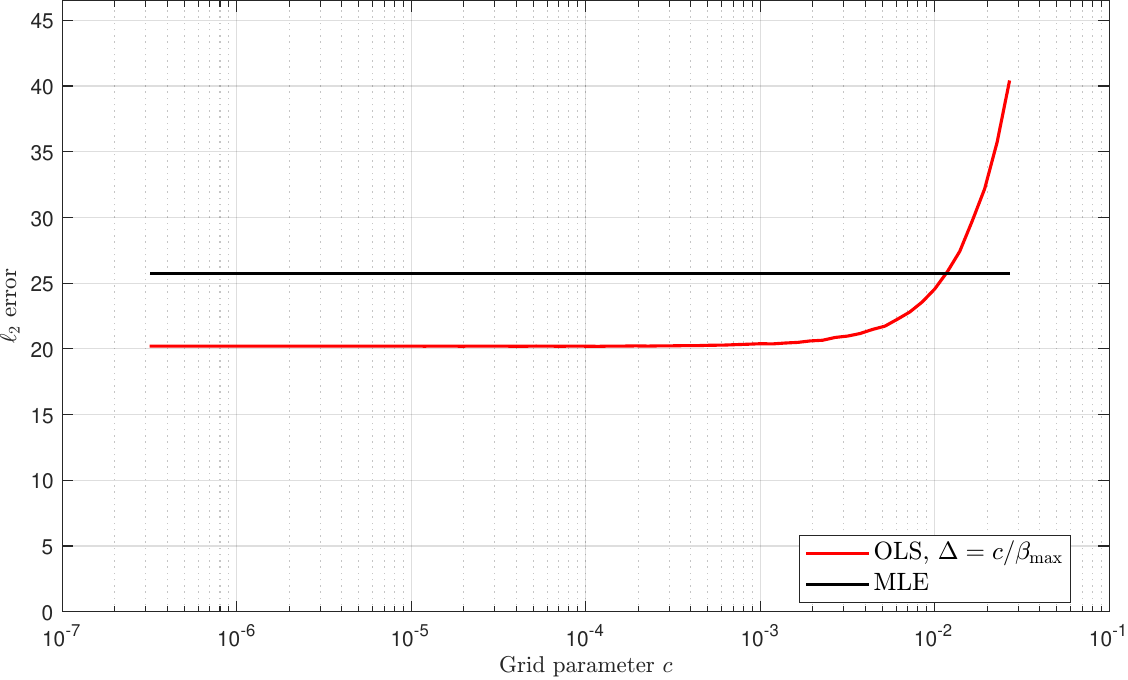}
{3}
{23,245,1633}
{70,750,5000}
{0.33,0.33,0.33}
\hfill
\simpanel
{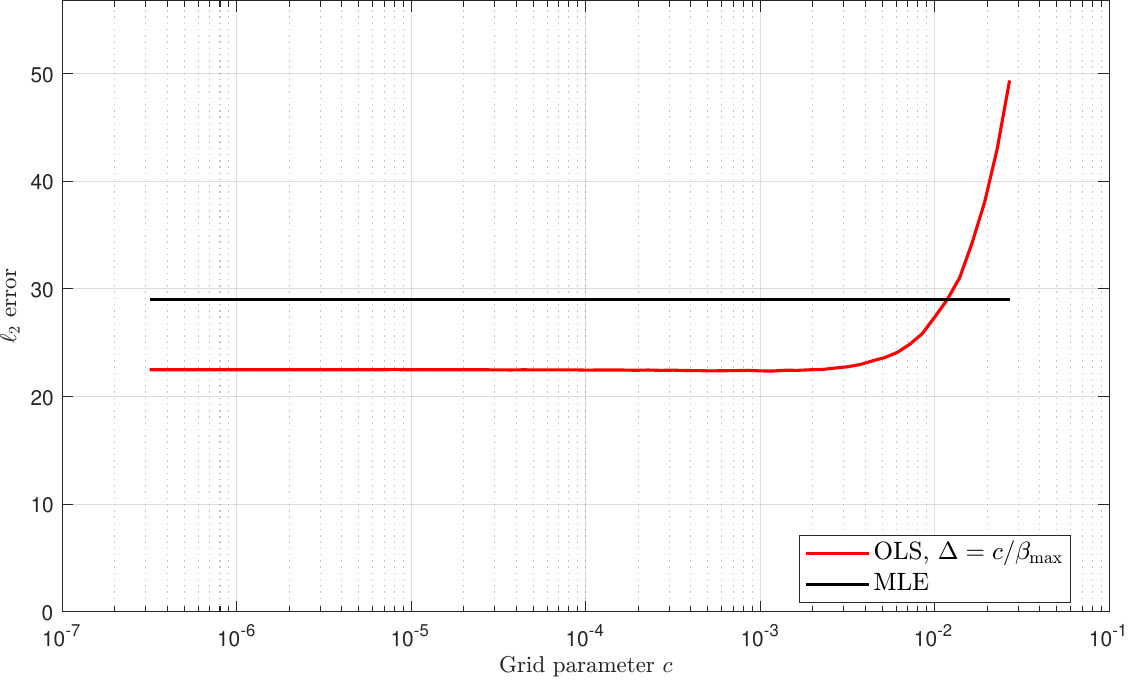}
{3}
{13,230,2450}
{70,750,5000}
{0.19,0.31,0.49}

\par\vspace{0.35em}

\simpanel
{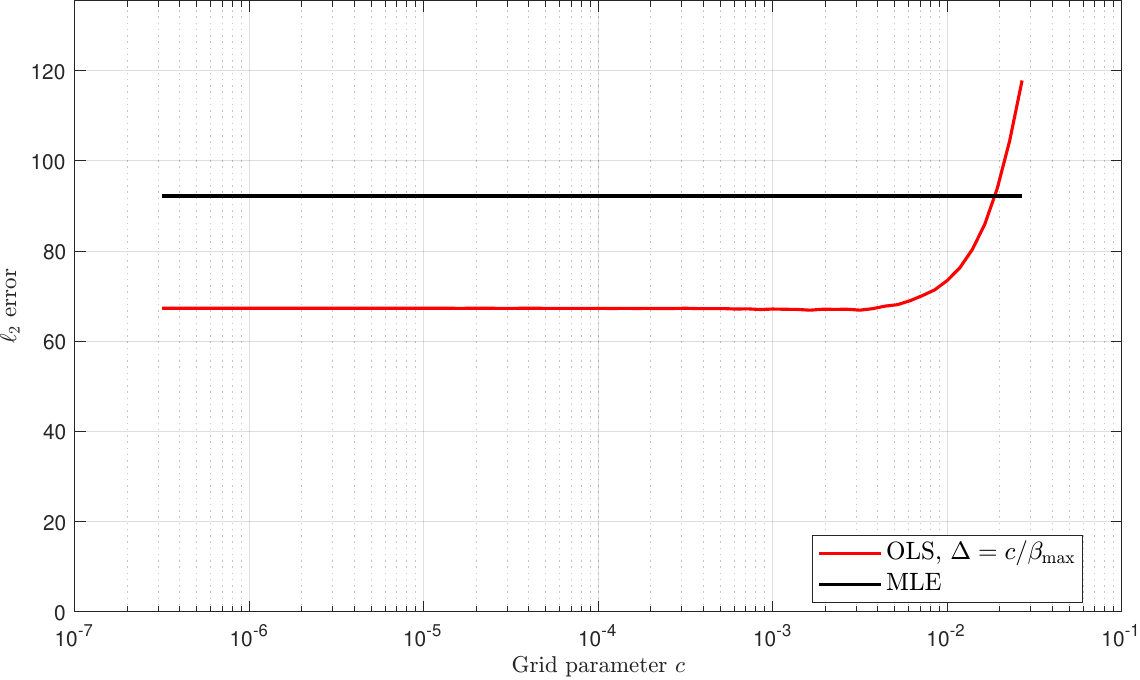}
{4}
{61,123,490,4900}
{250,500,2000,20000}
{0.24,0.25,0.25,0.25}
\hfill
\simpanel
{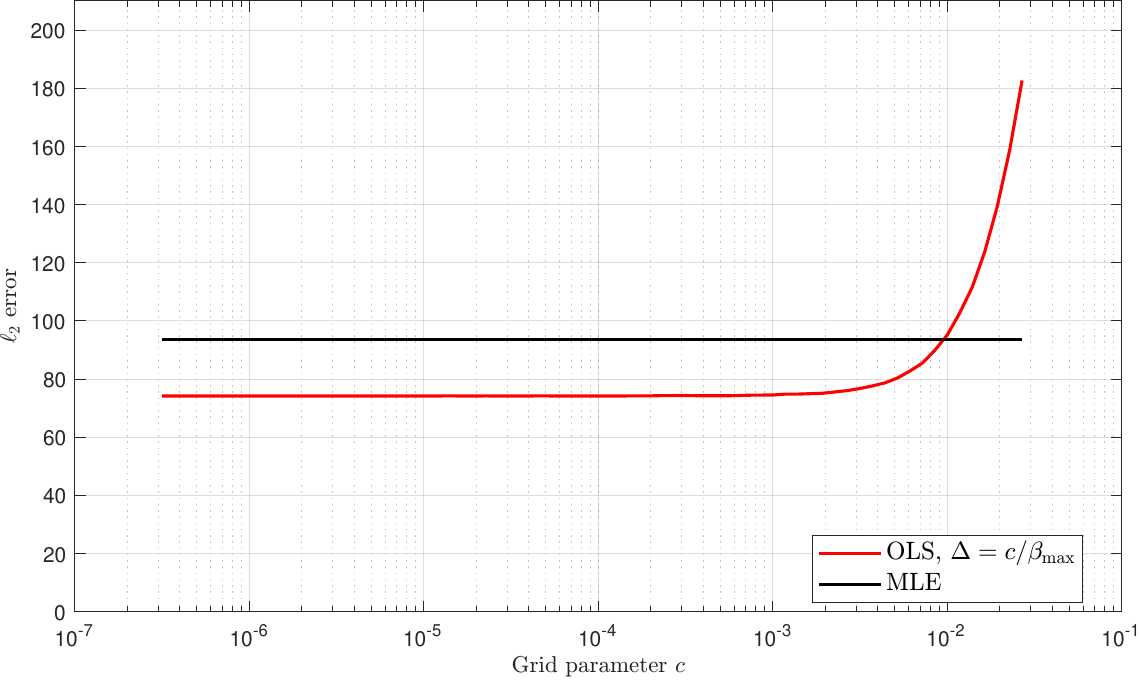}
{4}
{24,86,564,8600}
{250,500,2000,20000}
{0.10,0.17,0.28,0.43}
\hfill
\simpanel
{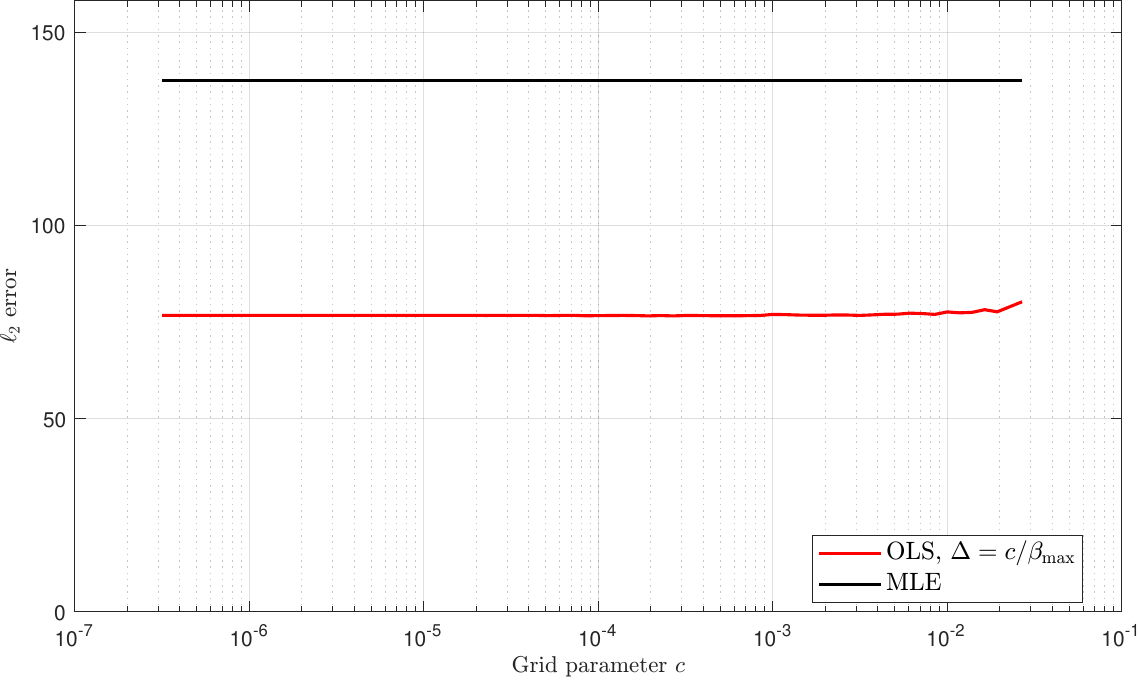}
{4}
{61,123,490,1225}
{250,500,2000,5000}
{0.24,0.25,0.25,0.25}
\hfill
\simpanel
{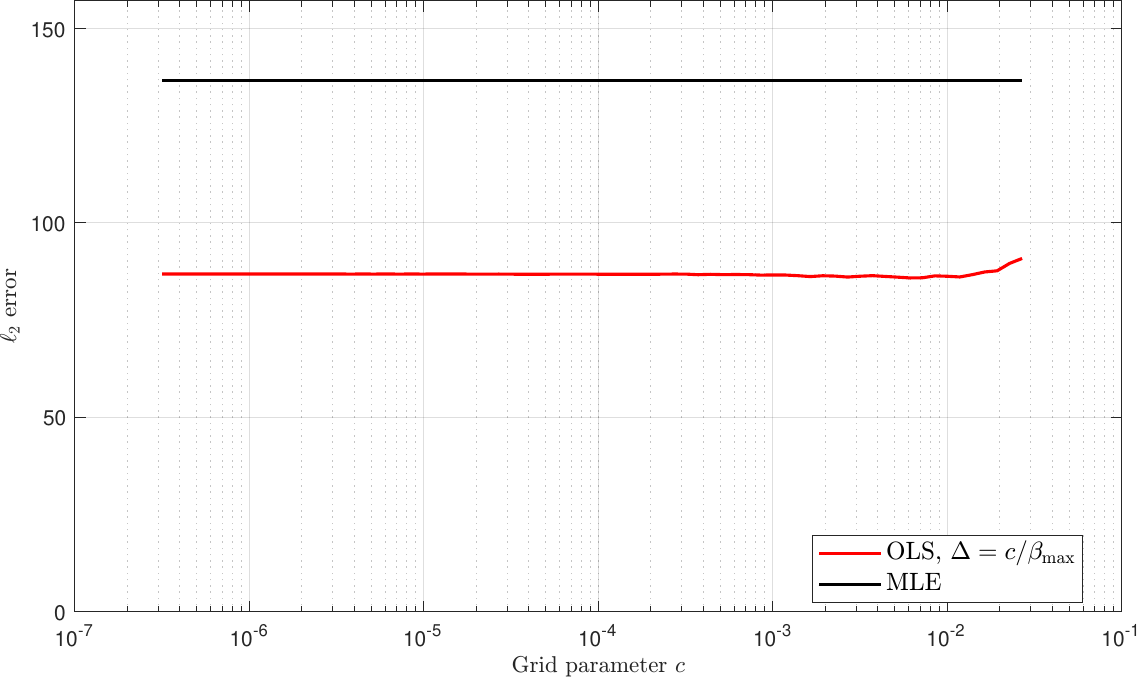}
{4}
{24,86,564,2150}
{250,500,2000,5000}
{0.1,0.17,0.28,0.43}

\par\vspace{0.35em}

\simpanel
{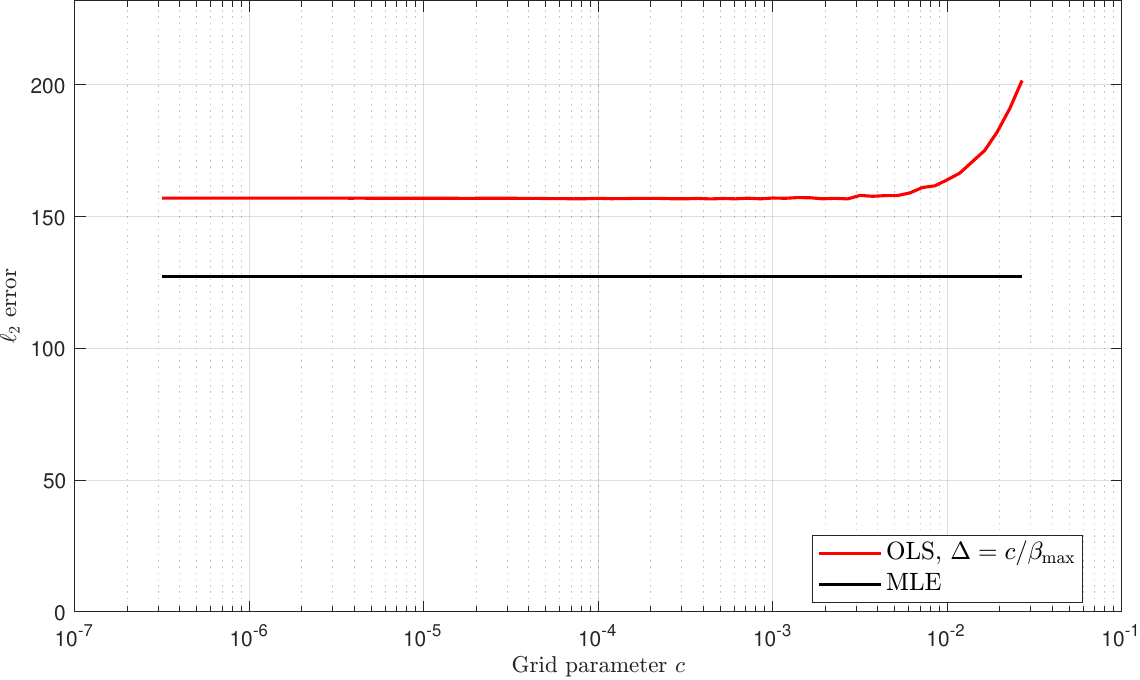}
{5}
{20,98,392,1960,5880}
{100,500,2000,10000,30000}
{0.20,0.20,0.20,0.20,0.20}
\simpanel
{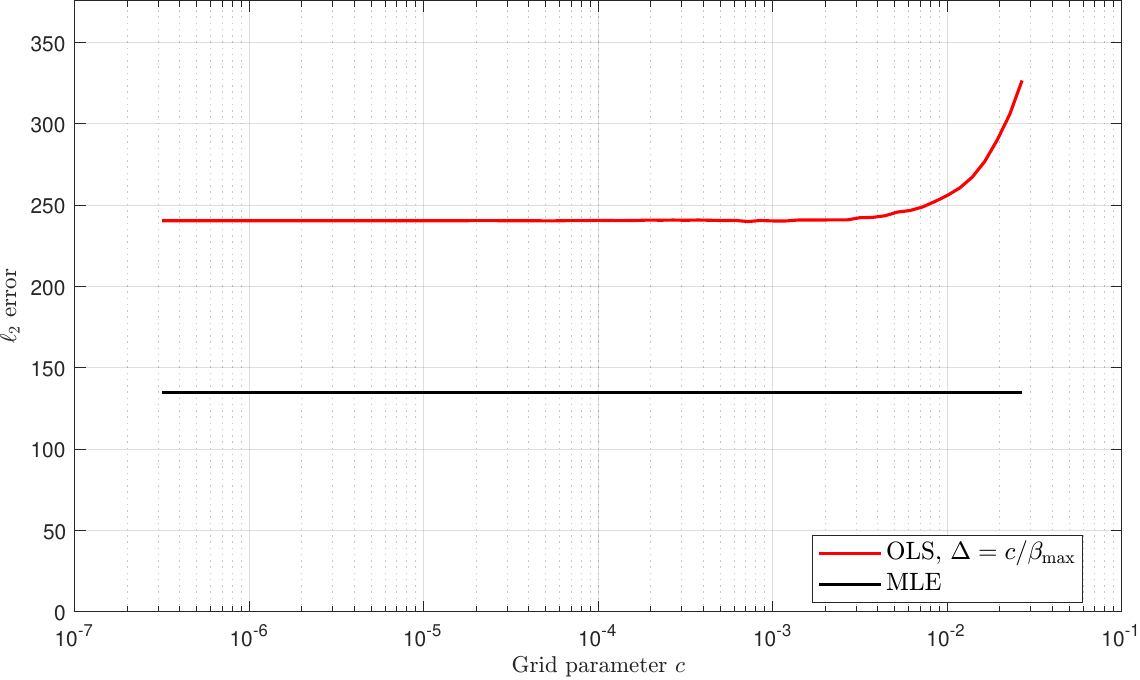}
{5}
{4,43,294,2818,12862}
{100,500,2000,10000,30000}
{0.04,0.09,0.15,0.28,0.43}
\simpanel
{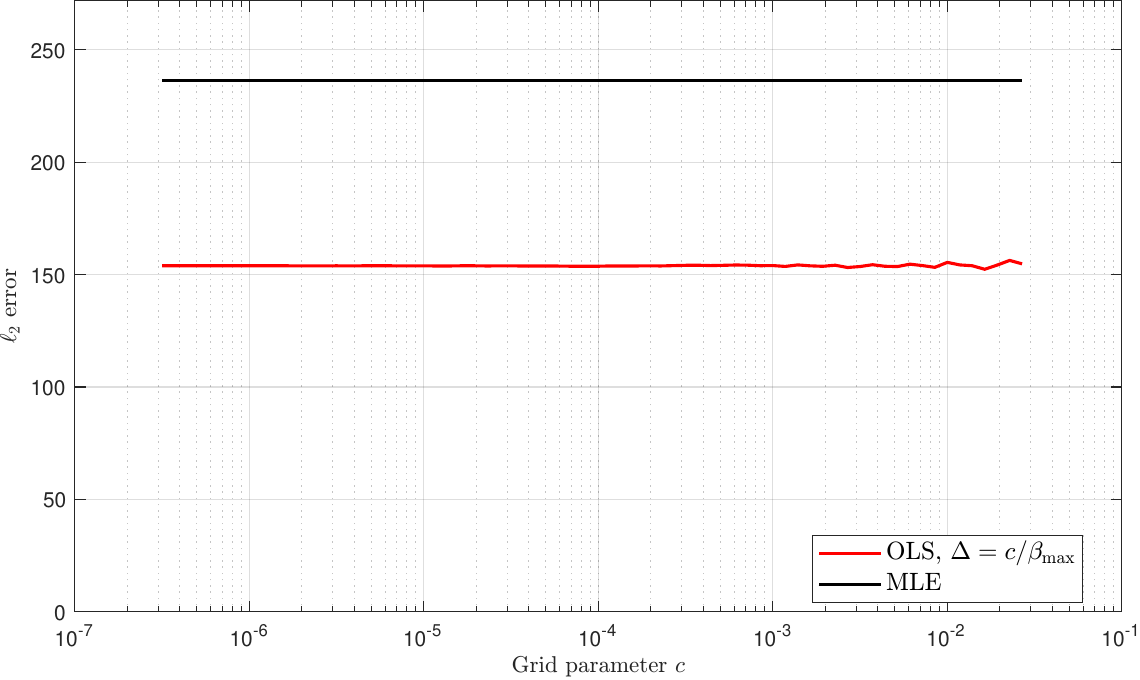}
{5}
{20,59,196,588,980}
{100,300,1000,3000,5000}
{0.20,0.20,0.20,0.20,0.20}
\simpanel
{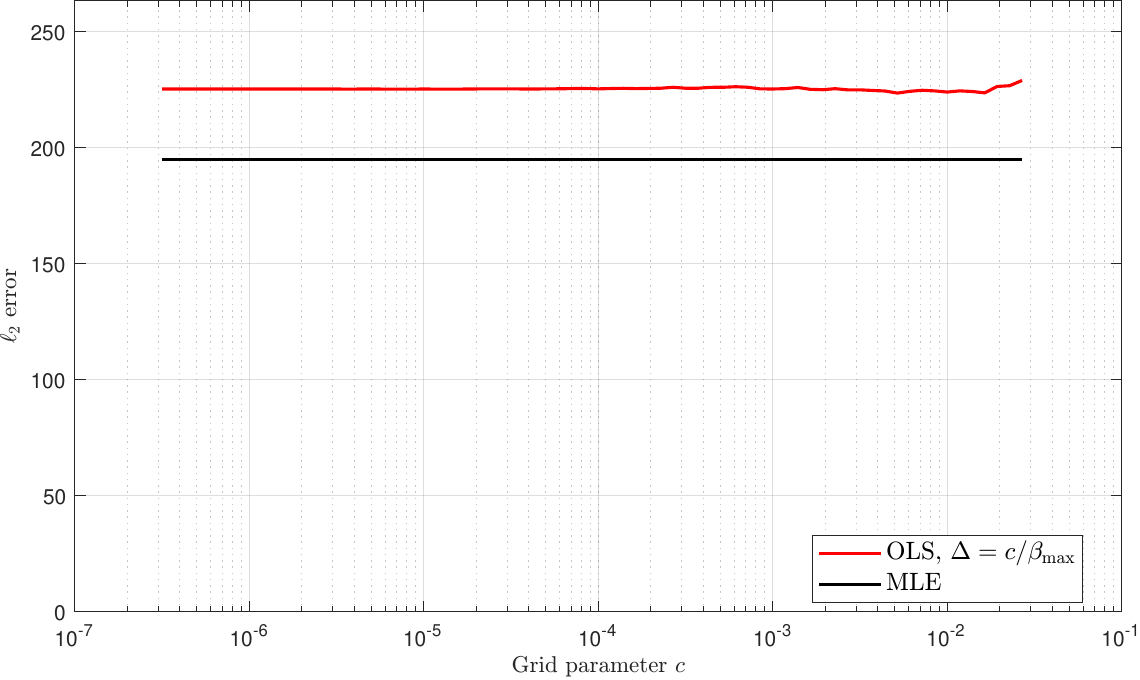}
{5}
{4,26,147,845,2144}
{100,300,1000,3000,5000}
{0.04,0.09,0.15,0.28,0.43}

\vspace{0.2em}

\caption{Average $\ell_2$-errors based on 200 
replications, $\nu^*=5$ and $\text{BR}_{n_h}\approx0.98$.
Each row corresponds to a number of kernel functions $n_h$. The first two columns correspond to large $b^*_{\max}$ design; the last two columns correspond to capped $b^*_{\max}$ design. For each design, homogeneous and heterogeneous branching-ratios are reported separately. The horizontal black line denotes the average estimation error based on maximum likelihood estimation.}
\label{fig:mu5-br098}
\end{figure}


\begin{figure}[p]
\centering

\simheading{Large $b^*_{\max}$}{Homogeneous}
\hfill
\simheading{Large $b^*_{\max}$}{Heterogeneous}
\hfill
\simheading{Capped $b^*_{\max}$}{Homogeneous}
\hfill
\simheading{Capped $b^*_{\max}$}{Heterogeneous}

\vspace{0.25em}

\simpanel
{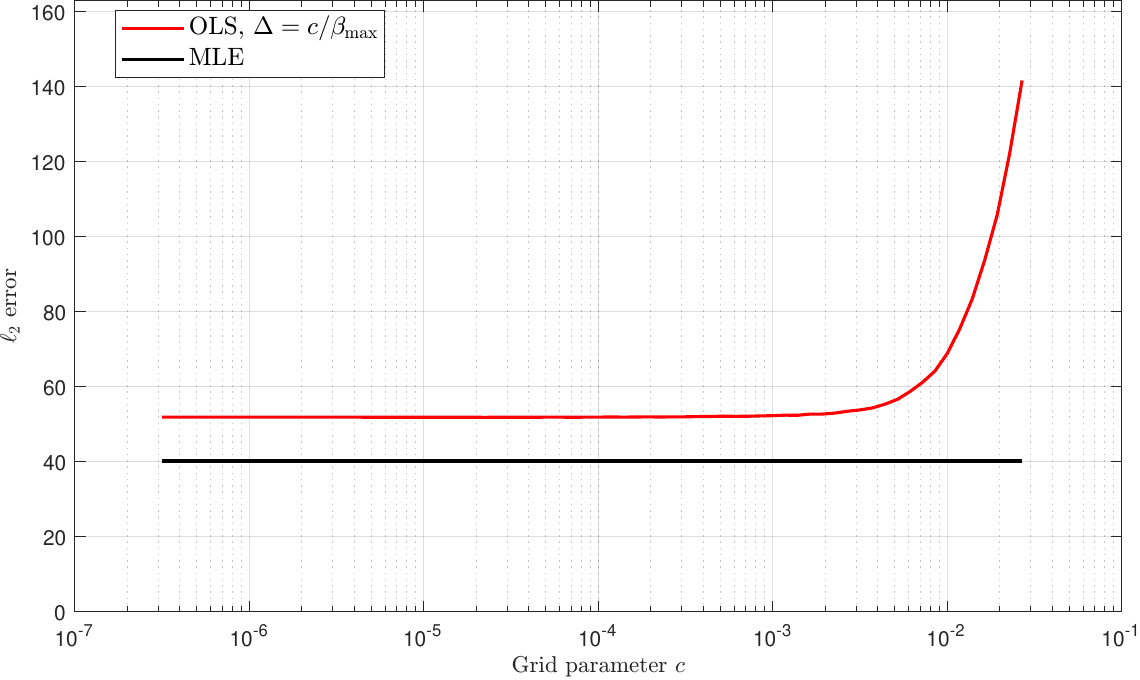}
{2}
{72,6400}
{180,16000}
{0.40,0.40}
\hfill
\simpanel
{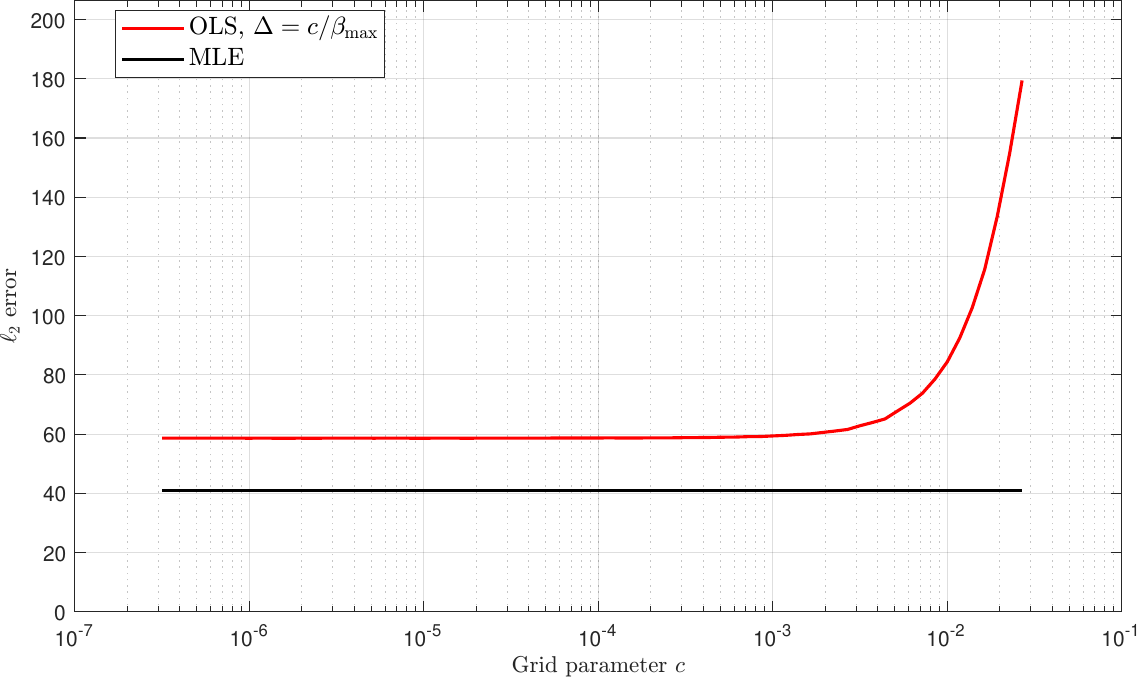}
{2}
{45,8800}
{180,16000}
{0.25,0.55}
\hfill
\simpanel
{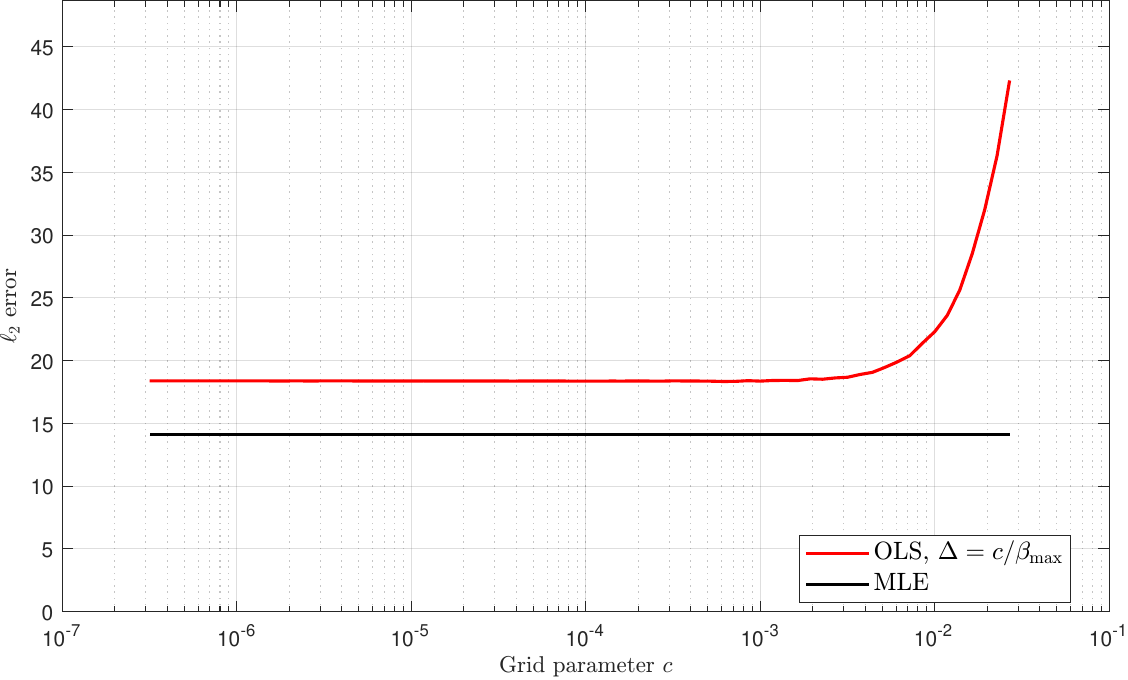}
{2}
{80,2000}
{200,5000}
{0.40,0.40}
\hfill
\simpanel
{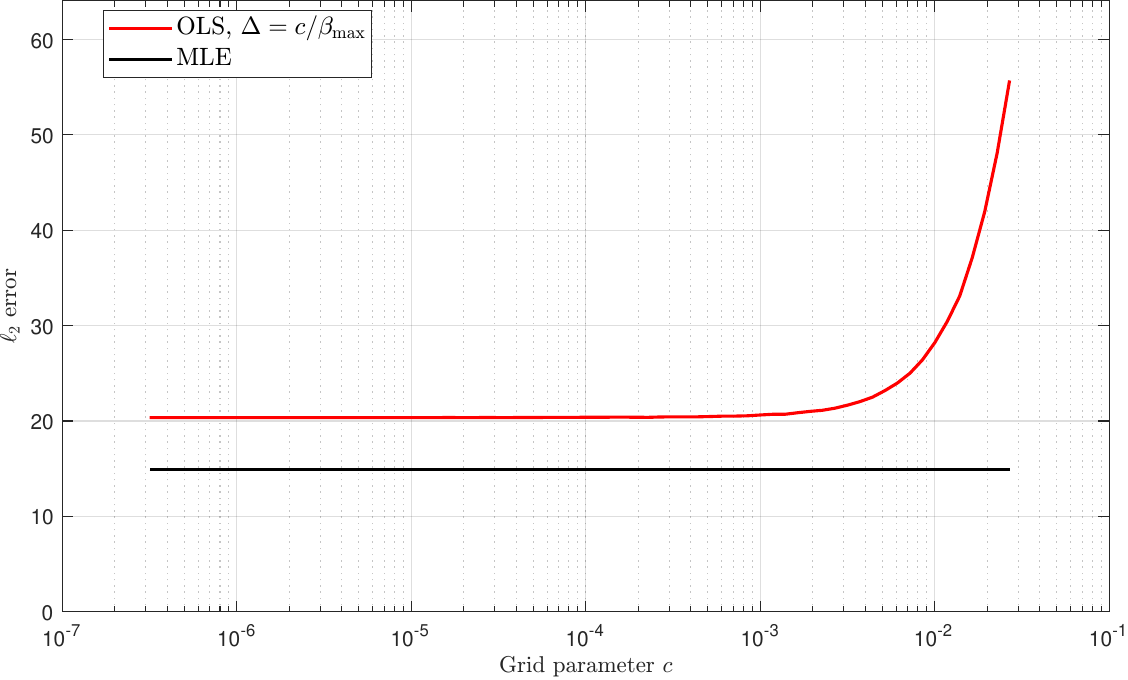}
{2}
{50,2750}
{200,5000}
{0.25,0.55}

\par\vspace{0.35em}

\simpanel
{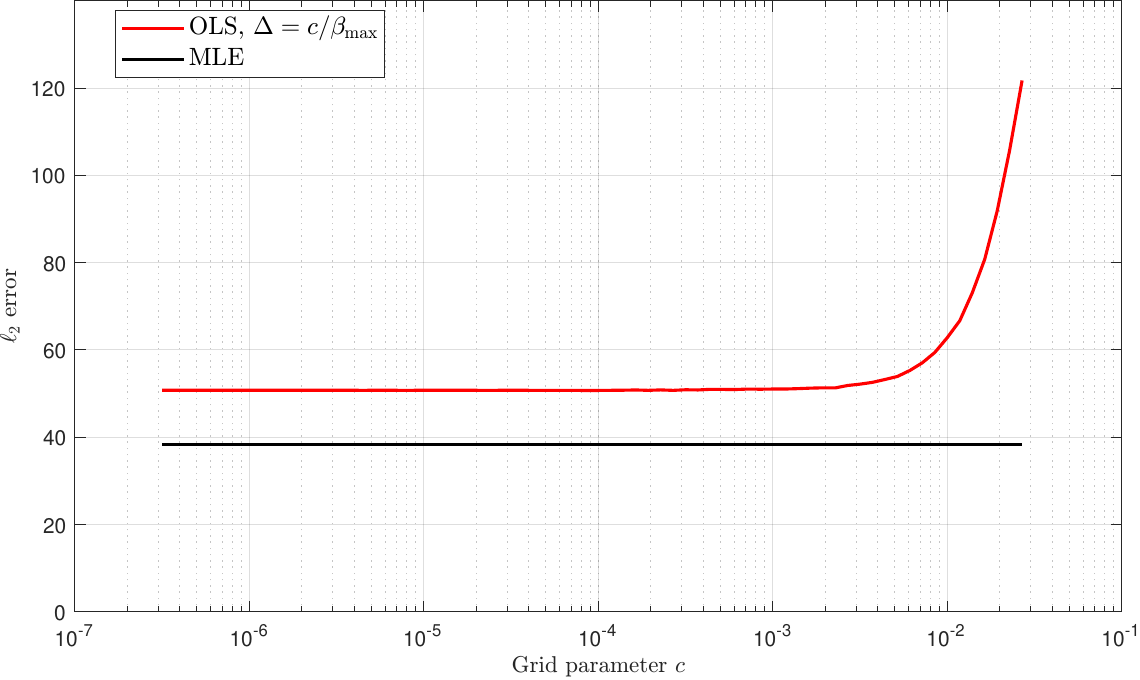}
{3}
{19,200,5332}
{70,750,20000}
{0.27,0.27,0.27}
\hfill
\simpanel
{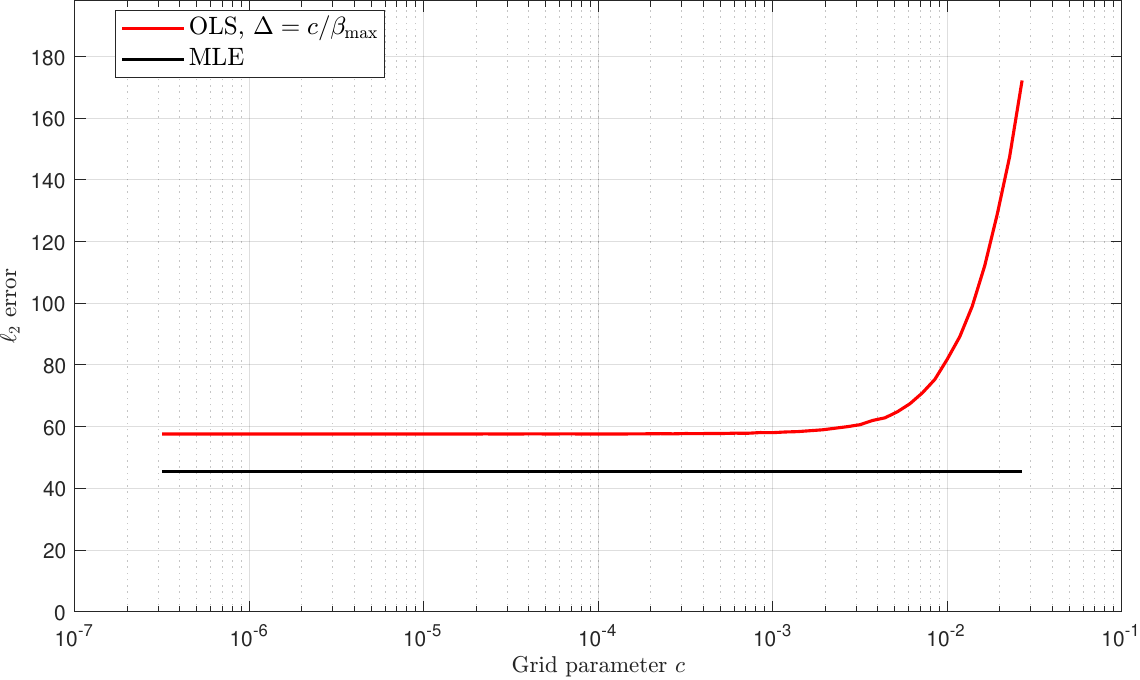}
{3}
{11,188,8000}
{70,750,20000}
{0.16,0.25,0.40}
\hfill
\simpanel
{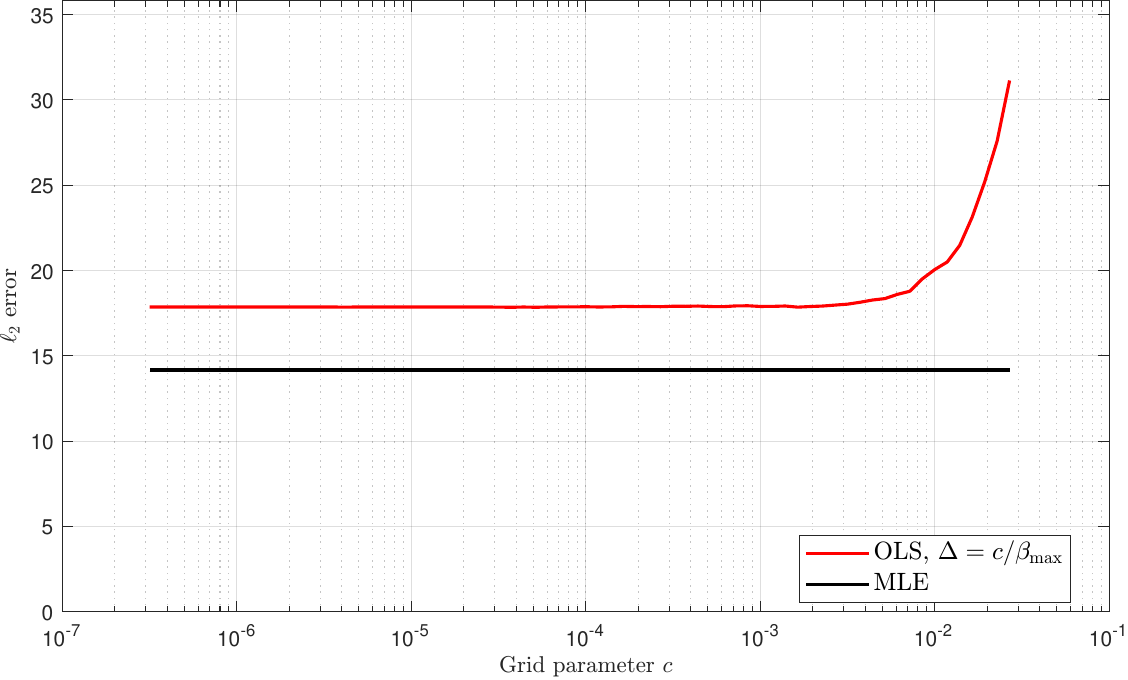}
{3}
{19,200,1333}
{70,750,5000}
{0.27,0.27,0.27}
\hfill
\simpanel
{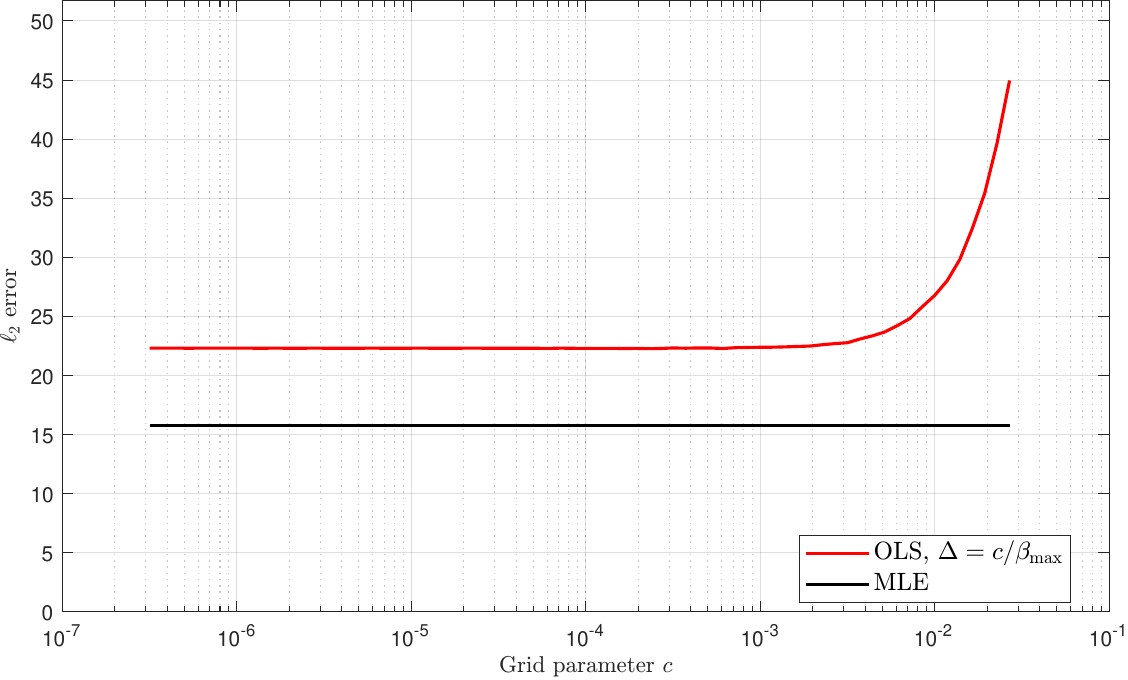}
{3}
{11,188,2000}
{70,750,5000}
{0.16,0.25,0.40}

\par\vspace{0.35em}

\simpanel
{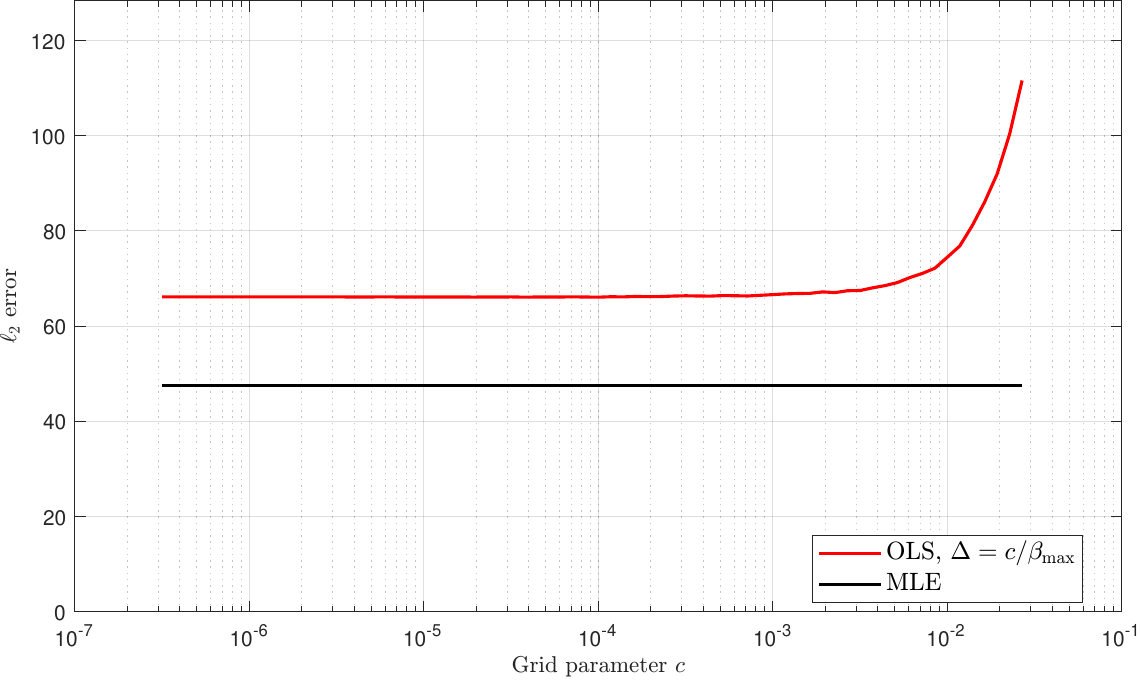}
{4}
{50,100,400,4000}
{250,500,2000,20000}
{0.20,0.20,0.20,0.20}
\hfill
\simpanel
{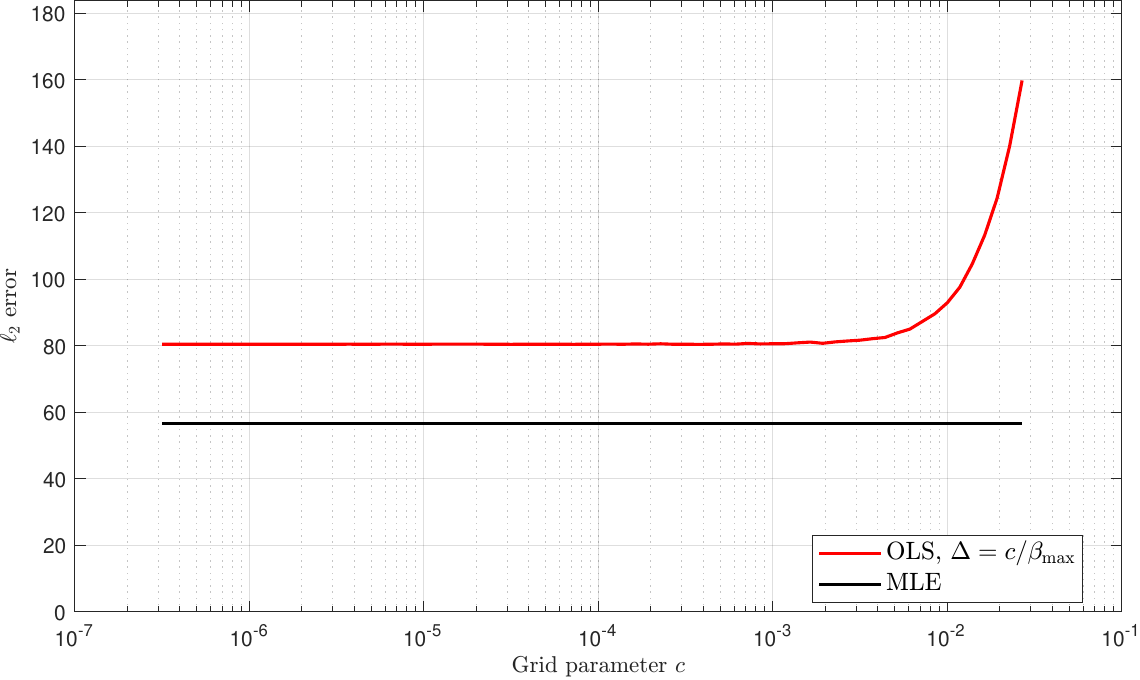}
{4}
{20,70,460,7000}
{250,500,2000,20000}
{0.08,0.14,0.23,0.35}
\hfill
\simpanel
{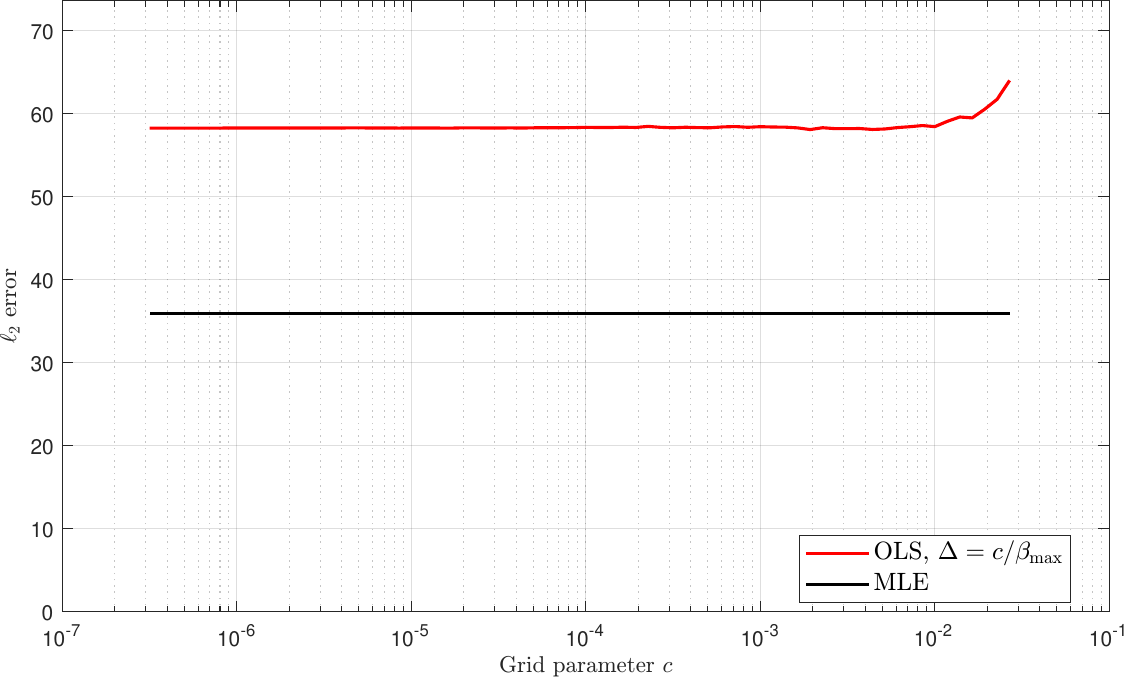}
{4}
{50,100,400,1000}
{250,500,2000,5000}
{0.20,0.20,0.20,0.20}
\hfill
\simpanel
{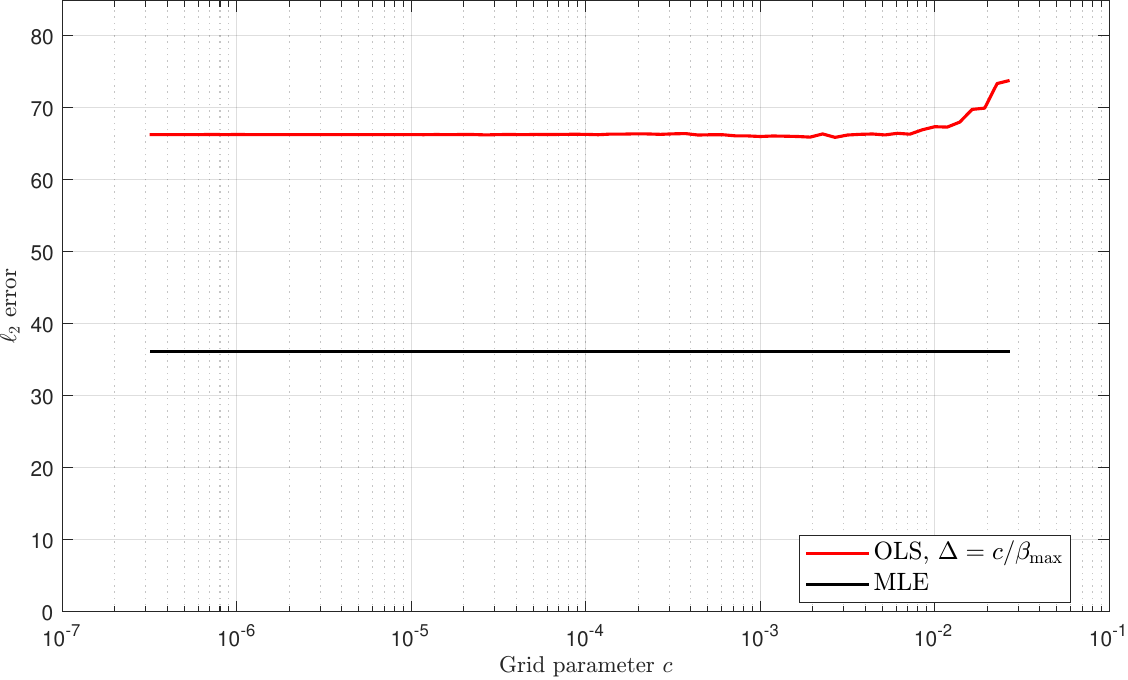}
{4}
{20,70,460,1750}
{250,500,2000,5000}
{0.08,0.14,0.23,0.35}

\par\vspace{0.35em}

\simpanel
{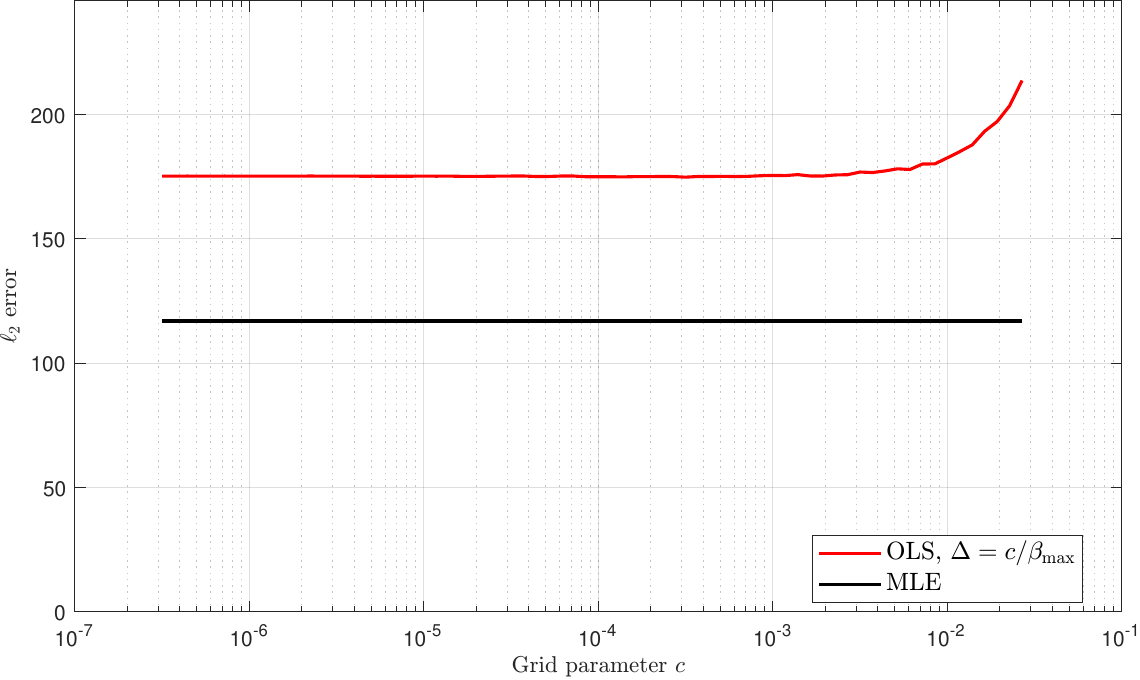}
{5}
{16,80,320,1600,4800}
{100,500,2000,10000,30000}
{0.16,0.16,0.16,0.16,0.16}
\hfill
\simpanel
{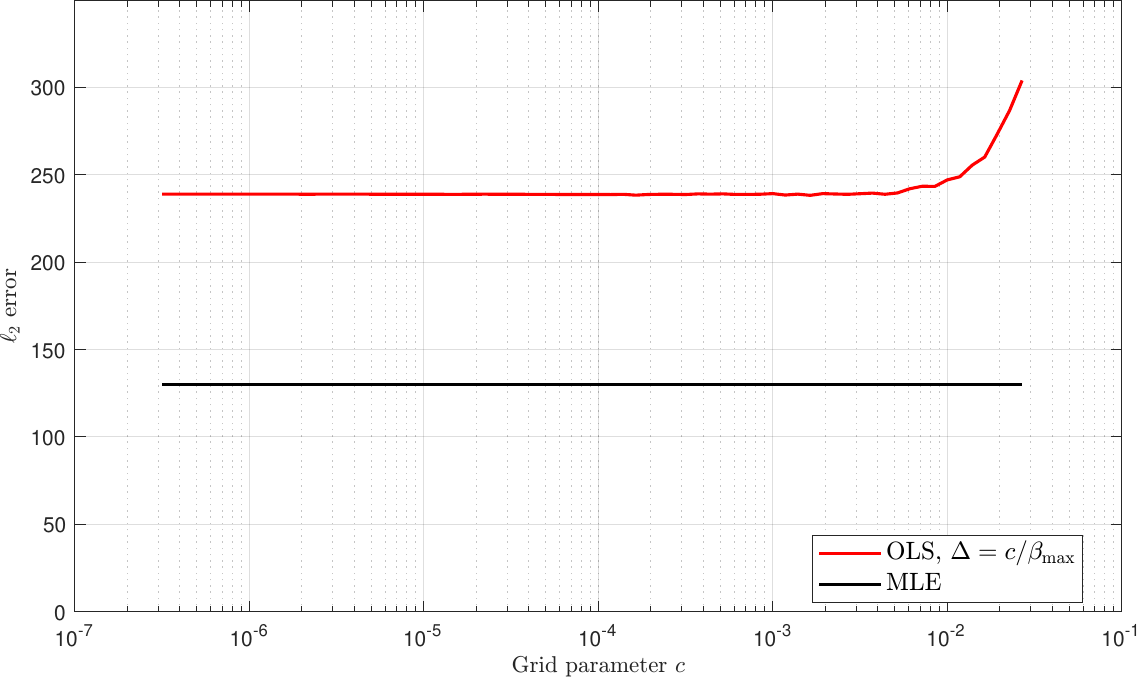}
{5}
{3,35,240,2300,10500}
{100,500,2000,10000,30000}
{0.03,0.07,0.12,0.23,0.35}
\hfill
\simpanel
{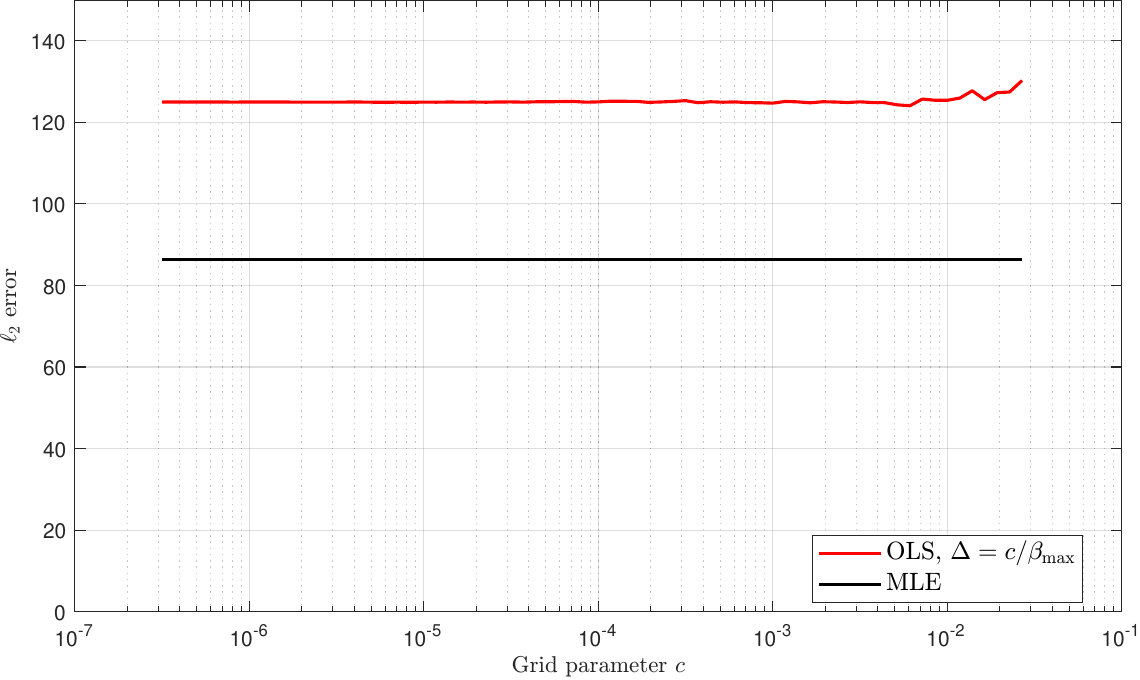}
{5}
{16,48,160,480,800}
{100,300,1000,3000,5000}
{0.16,0.16,0.16,0.16,0.16}
\hfill
\simpanel
{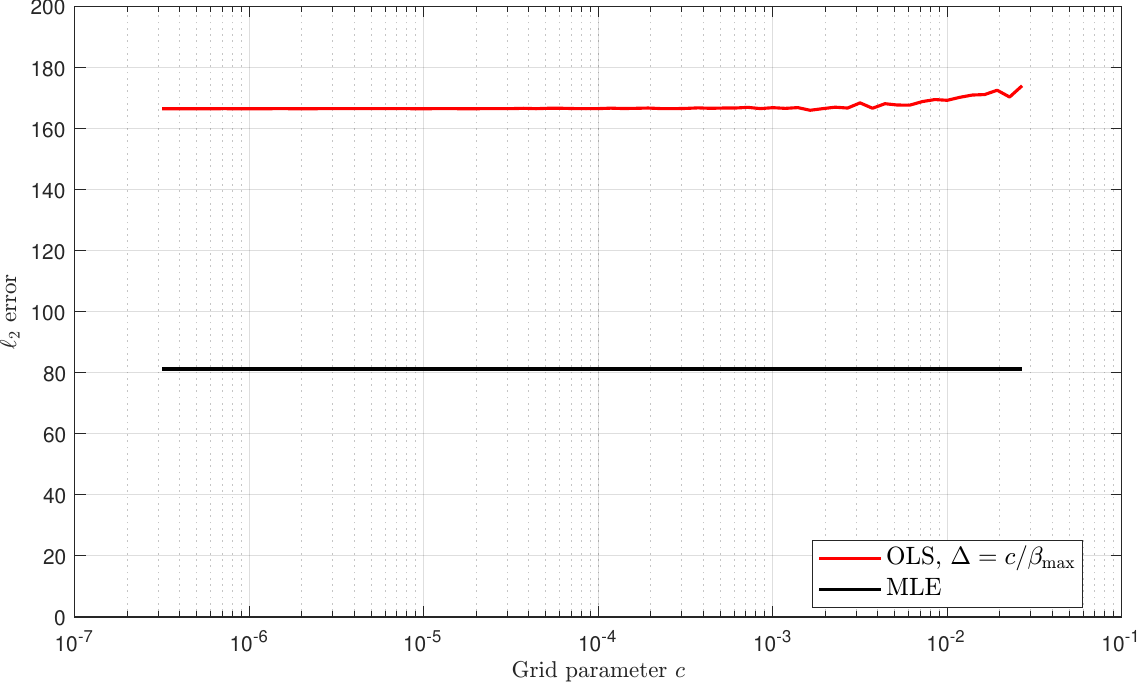}
{5}
{3,21,120,690,1750}
{100,300,1000,3000,5000}
{0.03,0.07,0.12,0.23,0.35}

\vspace{0.2em}

\caption{Average $\ell_2$-errors based on 200 
replications, $\nu^*=1$ and $\text{BR}_{n_h}\approx0.8$.
Each row corresponds to a number of kernel functions $n_h$. The first two columns correspond to large $b^*_{\max}$ design; the last two columns correspond to capped $b^*_{\max}$ design. For each design, homogeneous and heterogeneous branching-ratios are reported separately. The horizontal black line denotes the average estimation error based on maximum likelihood estimation.}
\label{fig:mu1-br08}
\end{figure}

\begin{figure}[p]
\centering

\simheading{Large $b^*_{\max}$}{Homogeneous}
\hfill
\simheading{Large $b^*_{\max}$}{Heterogeneous}
\hfill
\simheading{Capped $b^*_{\max}$}{Homogeneous}
\hfill
\simheading{Capped $b^*_{\max}$}{Heterogeneous}

\vspace{0.25em}

\simpanel
{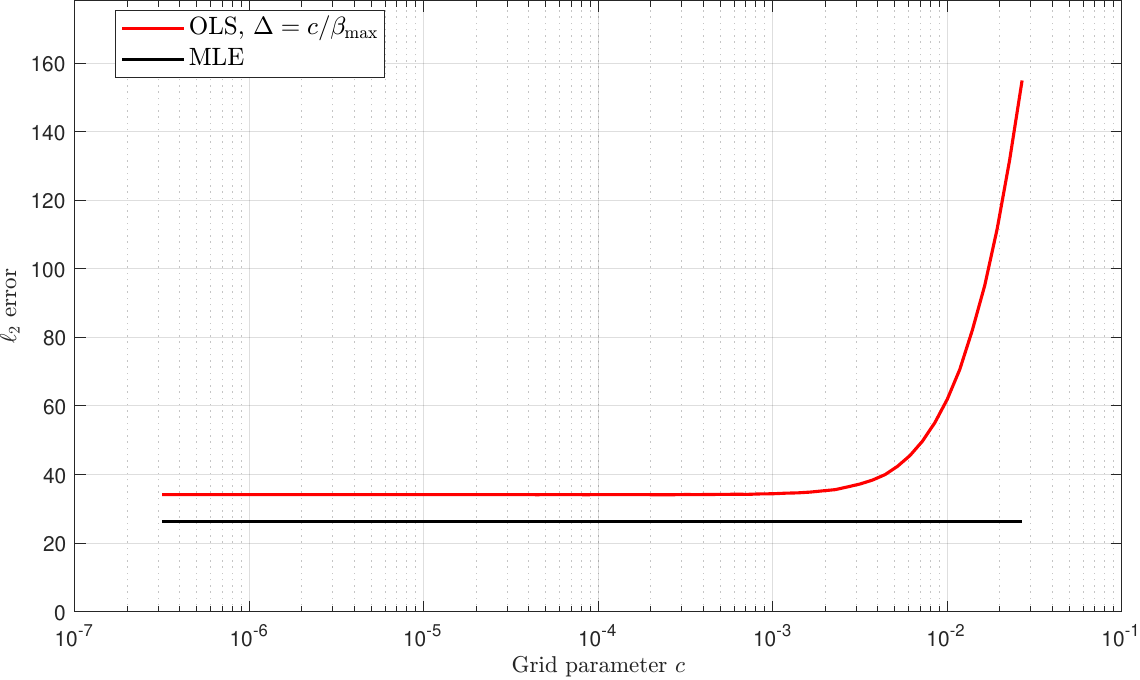}
{2}
{88,7858}
{180,16000}
{0.49,0.49}
\hfill
\simpanel
{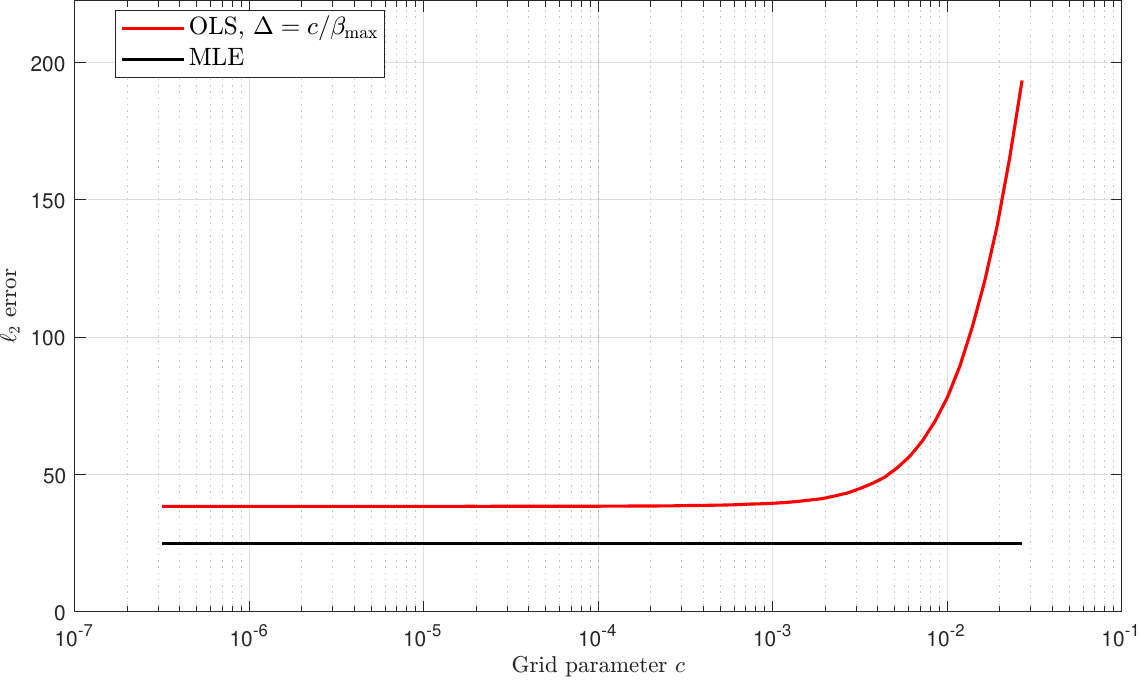}
{2}
{55,10792}
{180,16000}
{0.31,0.68}
\hfill
\simpanel
{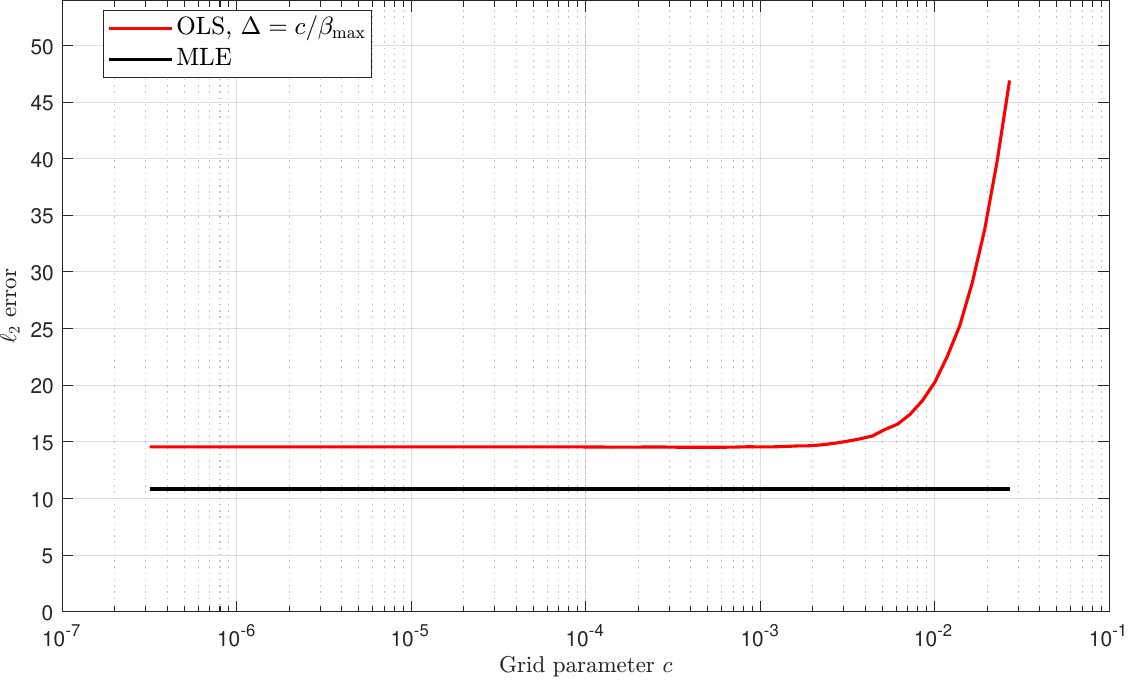}
{2}
{98,2450}
{200,5000}
{0.49,0.49}
\hfill
\simpanel
{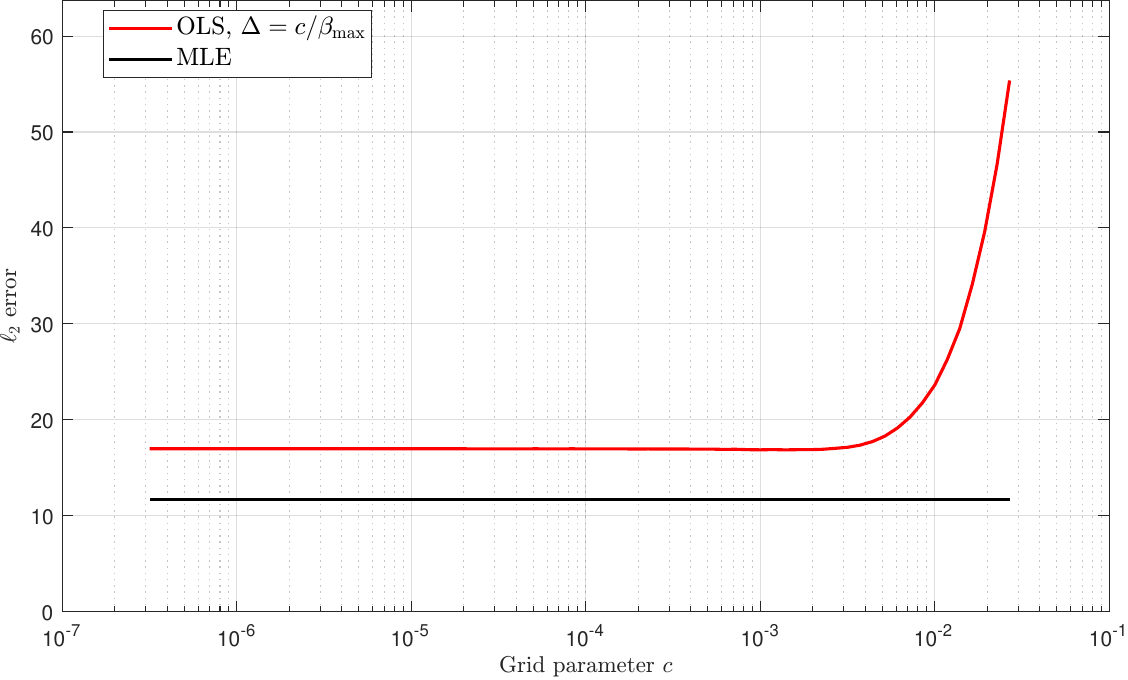}
{2}
{61,3375}
{200,5000}
{0.31,0.68}

\par\vspace{0.35em}

\simpanel
{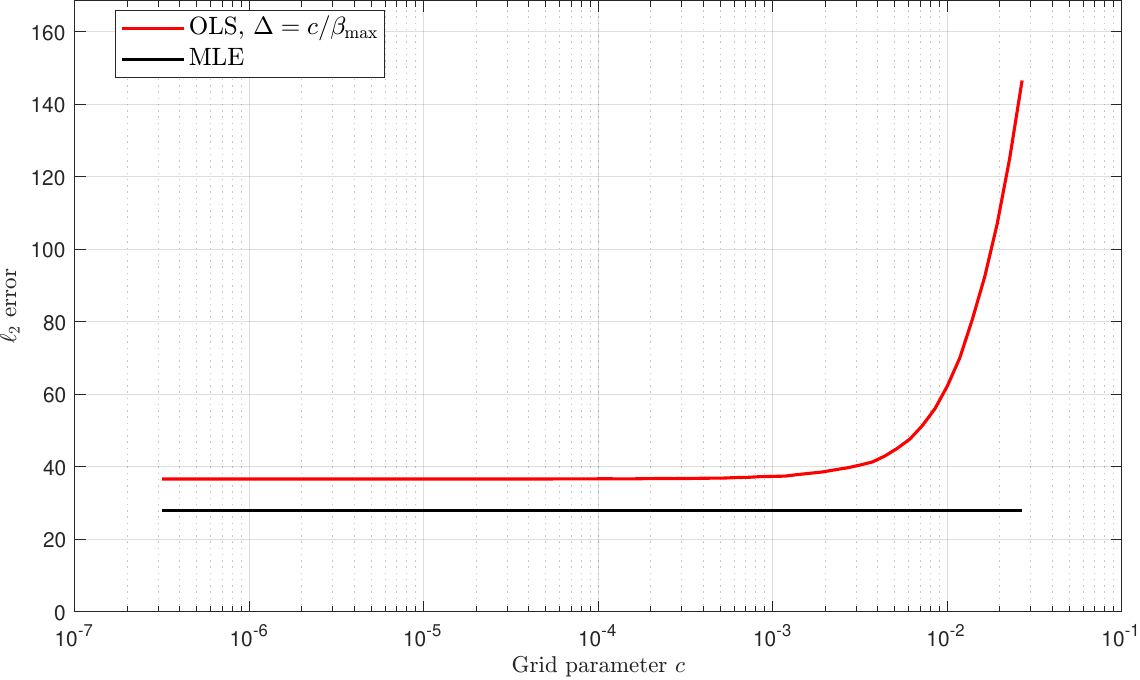}
{3}
{23,245,6532}
{70,750,20000}
{0.33,0.33,0.33}
\hfill
\simpanel
{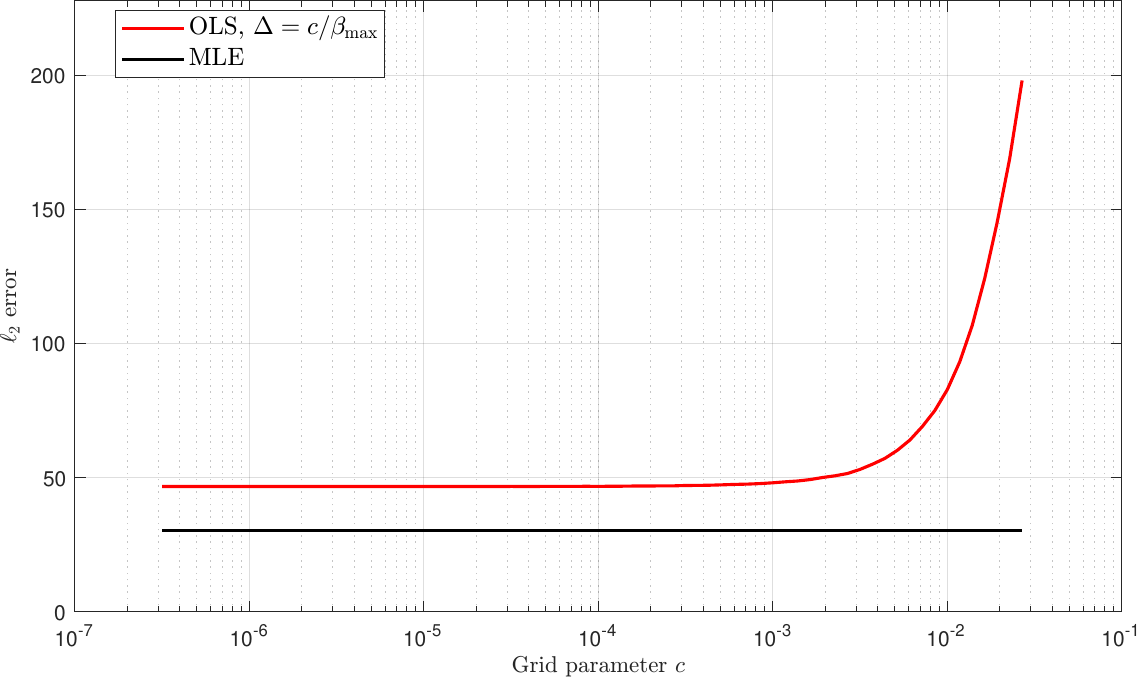}
{3}
{13,230,9800}
{70,750,20000}
{0.19,0.31,0.49}
\hfill
\simpanel
{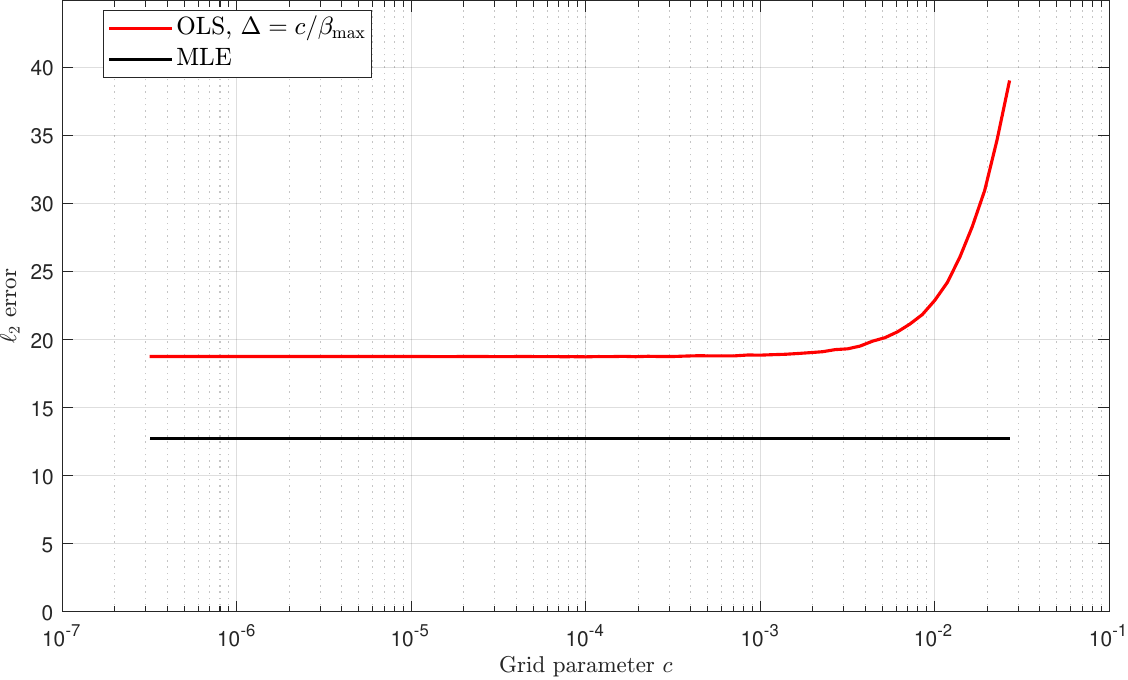}
{3}
{23,245,1633}
{70,750,5000}
{0.33,0.33,0.33}
\hfill
\simpanel
{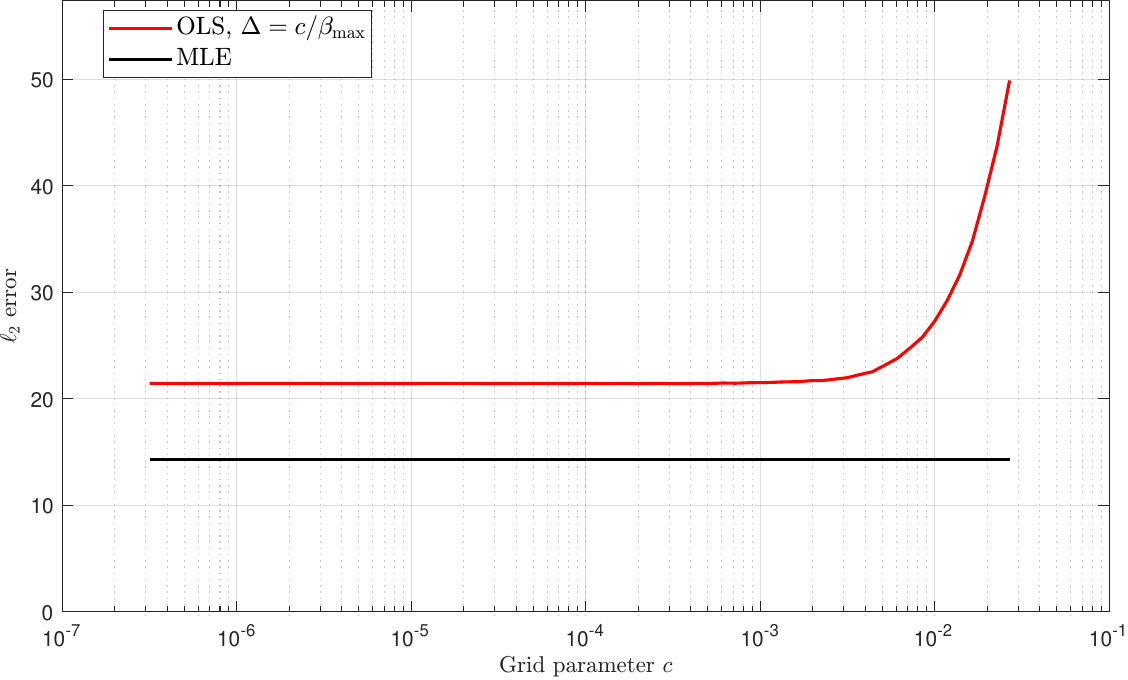}
{3}
{13,230,2450}
{70,750,5000}
{0.19,0.31,0.49}

\par\vspace{0.35em}

\simpanel
{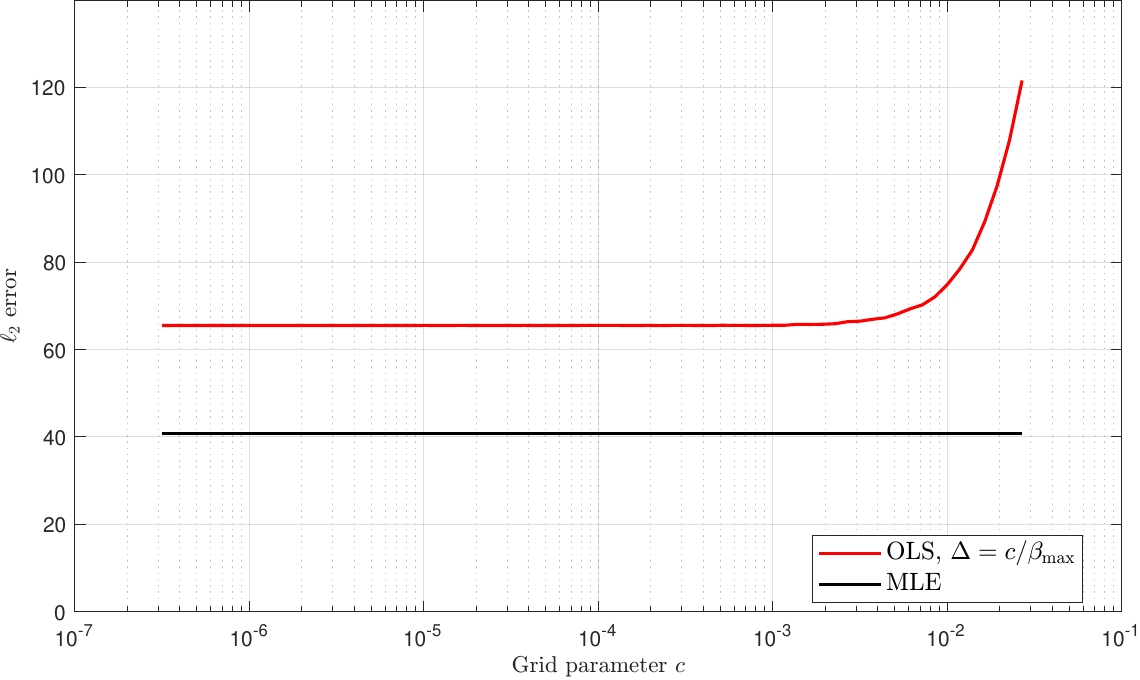}
{4}
{61,123,490,4900}
{250,500,2000,20000}
{0.24,0.25,0.25,0.25}
\hfill
\simpanel
{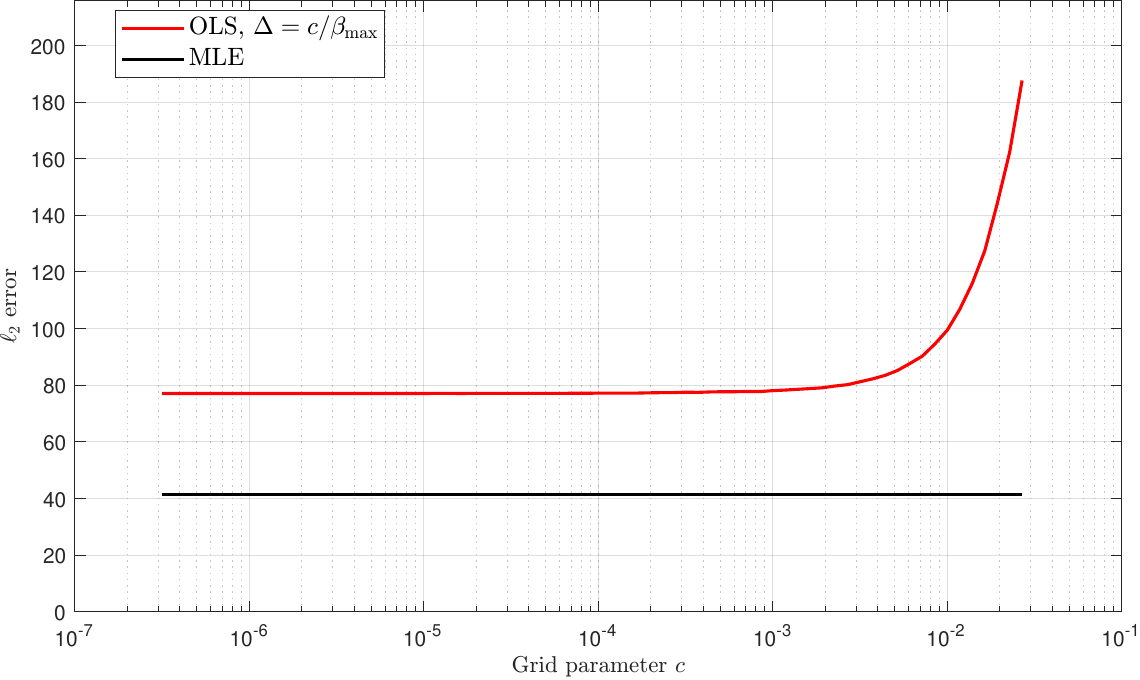}
{4}
{24,86,564,8600}
{250,500,2000,20000}
{0.10,0.17,0.28,0.43}
\hfill
\simpanel
{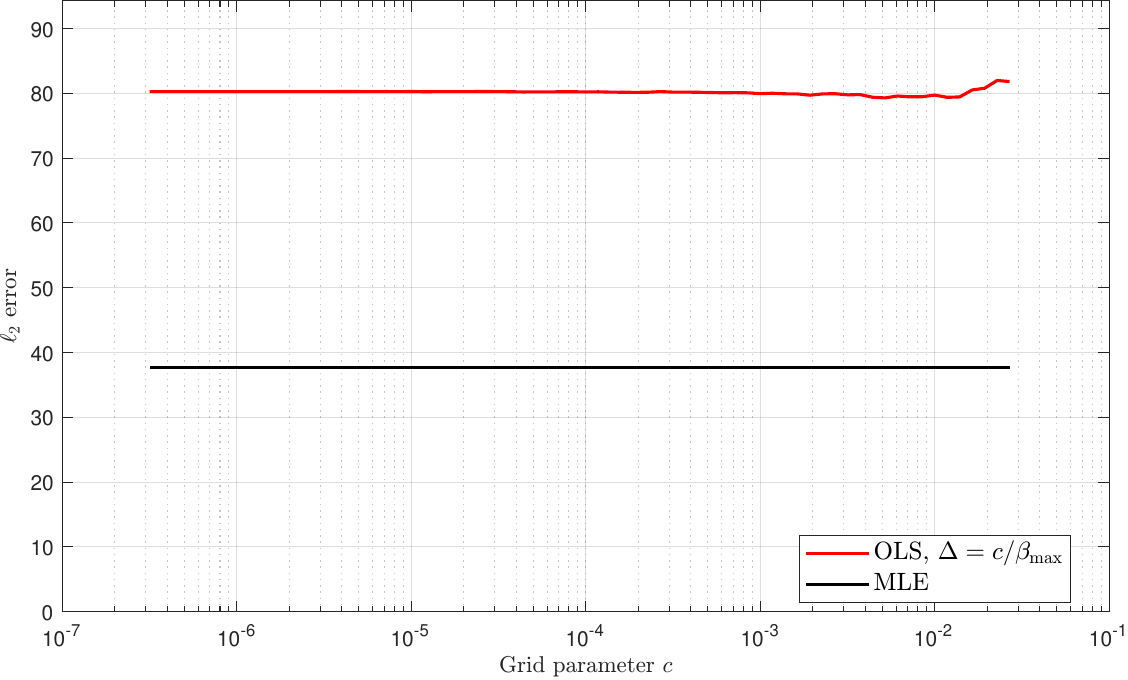}
{4}
{61,123,490,1225}
{250,500,2000,5000}
{0.24,0.25,0.25,0.25}
\hfill
\simpanel
{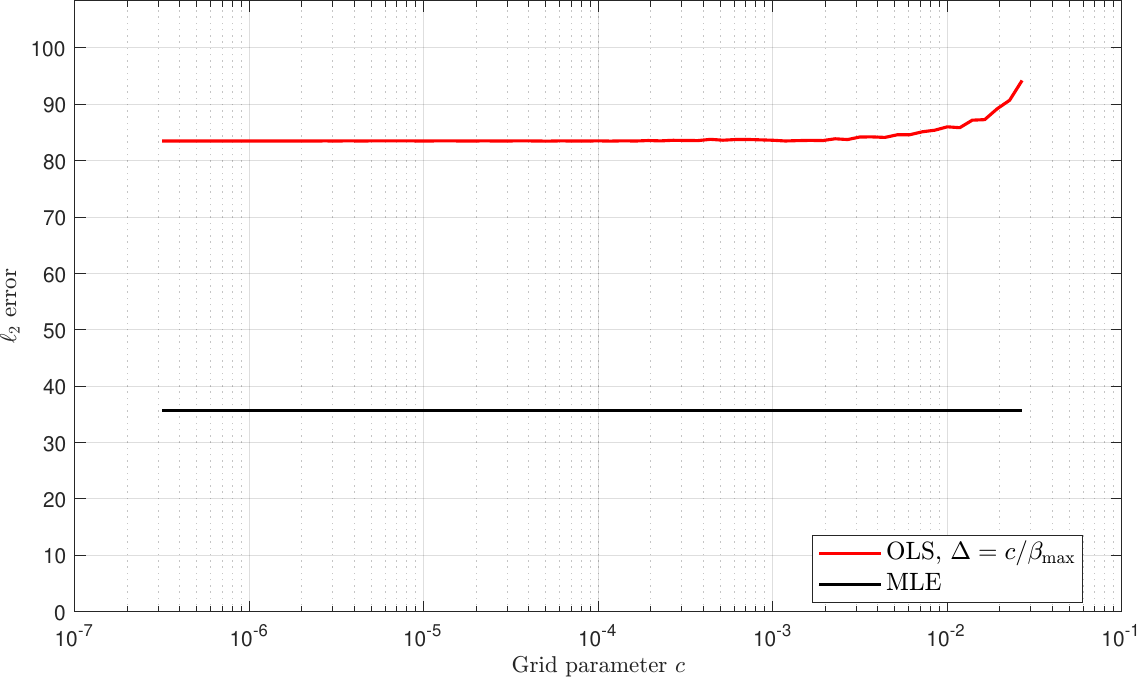}
{4}
{24,86,564,2150}
{250,500,2000,5000}
{0.1,0.17,0.28,0.43}

\par\vspace{0.35em}

\simpanel
{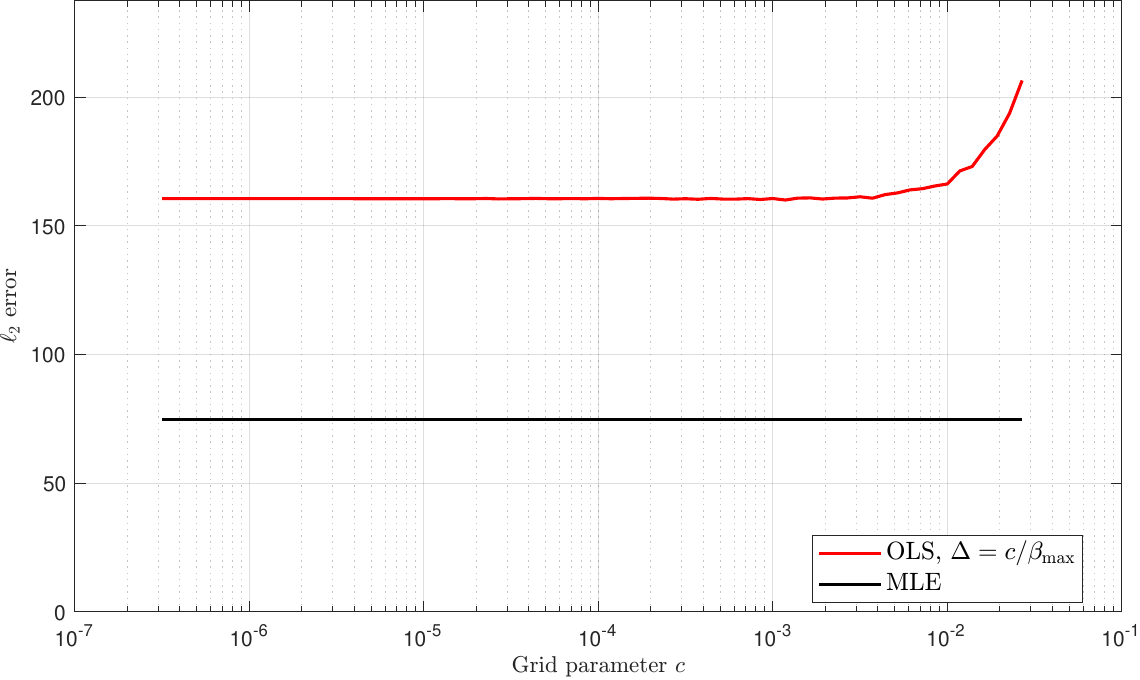}
{5}
{20,98,392,1960,5880}
{100,500,2000,10000,30000}
{0.20,0.20,0.20,0.20,0.20}
\simpanel
{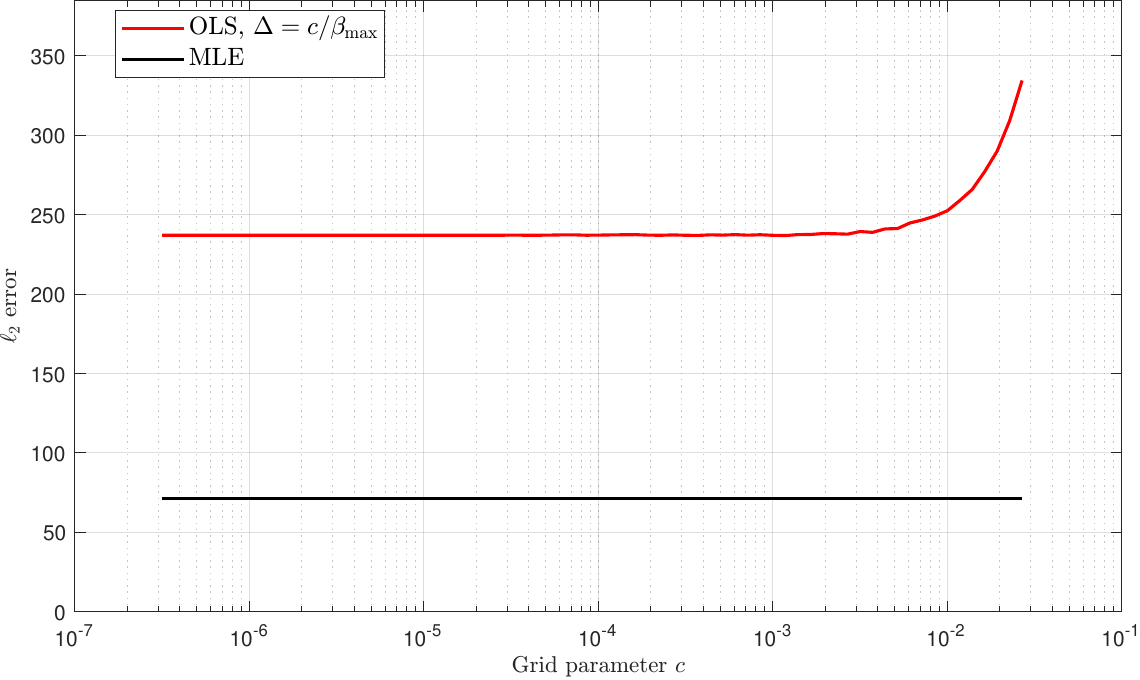}
{5}
{4,43,294,2818,12862}
{100,500,2000,10000,30000}
{0.04,0.09,0.15,0.28,0.43}
\simpanel
{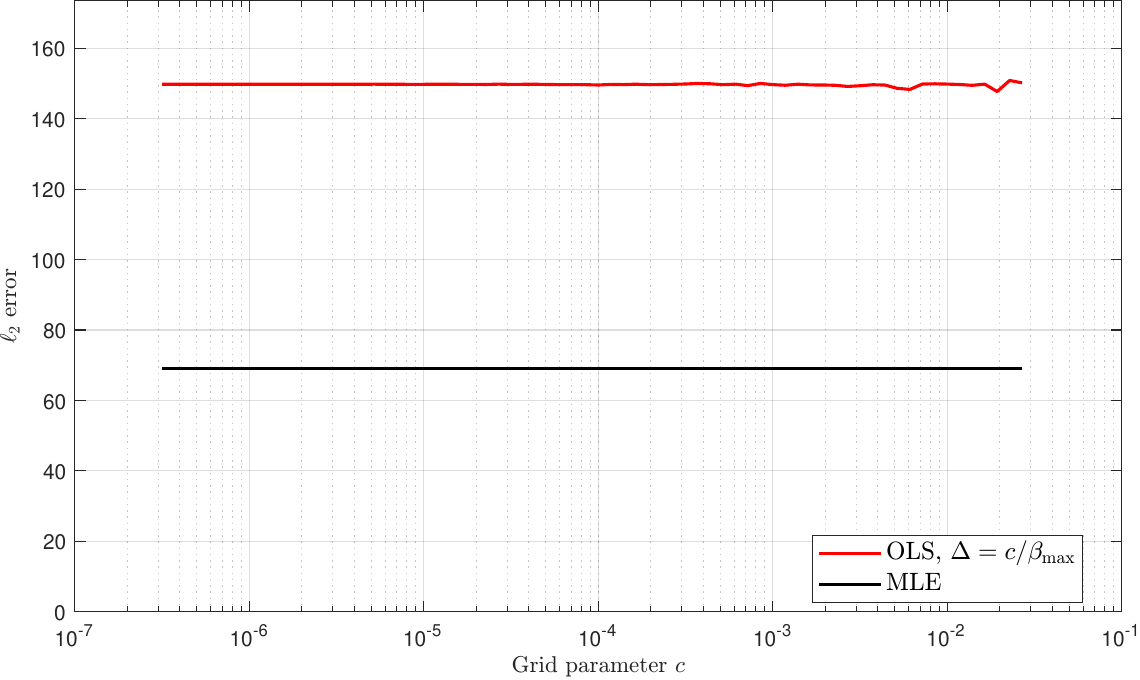}
{5}
{20,59,196,588,980}
{100,300,1000,3000,5000}
{0.20,0.20,0.20,0.20,0.20}
\simpanel
{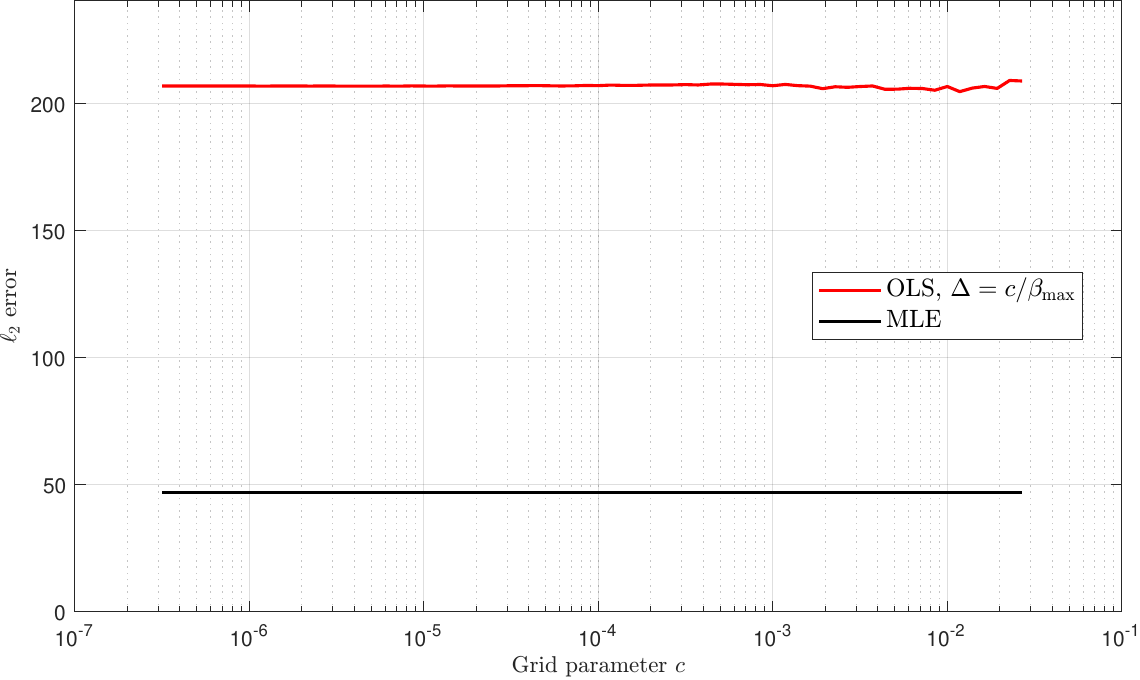}
{5}
{4,26,147,845,2144}
{100,300,1000,3000,5000}
{0.04,0.09,0.15,0.28,0.43}

\vspace{0.2em}

\caption{Average $\ell_2$-errors based on 200 
replications, $\nu^*=1$ and $\text{BR}_{n_h}\approx0.98$.
Each row corresponds to a number of kernel functions $n_h$. The first two columns correspond to large $b^*_{\max}$ design; the last two columns correspond to capped $b^*_{\max}$ design. For each design, homogeneous and heterogeneous branching-ratios are reported separately. The horizontal black line denotes the average estimation error based on maximum likelihood estimation.}
\label{fig:mu1-br098}
\end{figure}
\newpage

\subsection{Statistical test}\label{appendix_subsec:sim_wald}

We now investigate the finite-sample performance of statistical test based on the Wald test statistic for homogeneity of the excitation parameters $(\alpha_1,\ldots,\alpha_{n_h})'$. The true DGP for the intensity is given by Equation (\ref{defhawkesintensity}). We generate 2000 independent batches of $\{N_t\}_{0\leq t \leq T}$ in which $T=21600$. This corresponds to the number of trading seconds in one trading day excluding the first and last 30 minutes. Throughout the simulation study, we fix $n_h=3$, $\nu^*=5$, and $b^*=(300,4000,15000)'$. The latter vector of decay numbers is assumed known. The procedure based on ordinary least squares is conducted with $\Delta = c/b^*_{\max}$, $b^*_{\max}=\max_{k}b^*_{k}$, $c=0.001$, $c=0.005$ and $c=0.027$.
For the true excitation parameter $\alpha^*$, we consider the following two data generating processes: \textbf{DGP 1} with $\alpha^*=(100,1500,1500)'$ and \textbf{DGP 2} with $\alpha^*=(100,1500,2000)'$.
For each simulated realization, we test the three null hypotheses
\begin{equation*}
H_{0,1}:\alpha_1^*=\alpha_2^*,\qquad
H_{0,2}:\alpha_1^*=\alpha_3^*,\qquad
H_{0,3}:\alpha_2^*=\alpha_3^*.
\end{equation*}
Therefore, only $H_{0,3}$ is true under \textbf{DGP 1} whereas all three null hypotheses are false under \textbf{DGP 2}.
The Wald statistic $S_{T\Delta}$ is computed according to Equation (\ref{defW}). Writing each null hypothesis in the form $R\theta^*=r$ in which
$\theta^*=(\nu^*,\alpha_1^*,\alpha_2^*,\alpha_3^*)'$,
we set $r=0$ and use the restriction matrices $R_1=(0,1,-1,0)$, $R_2=(0,1,0,-1)$, and $R_3=(0,0,1,-1)$. These three matrices correspond respectively to $H_{0,1}$, $H_{0,2}$ and $H_{0,3}$. For each value of the parameter $c$, Table \ref{tab:wald_sim_rejection} reports the empirical rejection frequency at the nominal 5\% significance level.

\begin{table}[H]
\centering
\caption{Empirical rejection frequencies of the pairwise Wald tests at the
5\% significance level based on 2,000 replications. Entries are percentages.
Under DGP 1, the rejection frequency for
$H_{0,3}:\alpha_2^*=\alpha_3^*$ measures empirical size, whereas the other entries
measure empirical power. Under DGP 2, all entries measure empirical power.}
\label{tab:wald_sim_rejection}
\begin{tabular}{l|ccc|ccc}
\hline\hline
&
\multicolumn{3}{c|}{\textbf{DGP 1}}
&
\multicolumn{3}{c}{\textbf{DGP 2}}
\\
&
\multicolumn{3}{c|}{$\alpha^*=(100,1500,1500)'$}
&
\multicolumn{3}{c}{$\alpha^*=(100,1500,2000)'$}
\\
\cline{2-7}
Null hypothesis
& $c=0.001$ & $c=0.005$ & $c=0.027$
& $c=0.001$ & $c=0.005$ & $c=0.027$
\\
\hline

$H_{0,1}:\alpha_1^*=\alpha_2^*$
& 100.00 & 100.00 & 100.00
& 100.00 & 100.00 & 100.00
\\

$H_{0,2}:\alpha_1^*=\alpha_3^*$
& 100.00 & 100.00 & 100.00
& 100.00 & 100.00 & 100.00
\\

$H_{0,3}:\alpha_2^*=\alpha_3^*$
& 4.85 & 5.40 & 13.35
& 100.00 & 100.00 & 100.00
\\

\hline\hline
\end{tabular}
\end{table}

Table \ref{tab:wald_sim_rejection} suggests that the proposed Wald test results in good finite-sample power in the considered settings. Under \textbf{DGP 2} in which all three null hypotheses are false, the rejection frequency is $100\%$ for all hypothesis and every time increment $\Delta$. Likewise, the false null hypotheses $H_{0,1}$ and $H_{0,2}$ are always rejected under \textbf{DGP 1} which indicates adequate empirical power.

The only true null hypothesis is $H_{0,3}$ under \textbf{DGP 1}. Its empirical rejection frequency is $4.85\%$ and $5.40\%$ for $c=0.001$ and $c=0.005$, respectively, which are close to the $5\%$ significance level. This suggests that the asymptotic approximation underlying the Wald statistic provides an accurate description of its finite-sample distribution for sufficiently small values of the discretization parameter.

The rejection frequency is altered when $c=0.027$ in which the empirical rejection frequency becomes $13.35\%$. This suggests a size distortion when the time increment $\Delta$ is large. Larger values of the parameter $c$ correspond to coarser discretizations of the Hawkes process, reducing the accuracy of the discrete-time approximation. Overall, these results suggest that the proposed Wald test exhibits good finite-sample performances provided that the parameter $c$ is chosen sufficiently small.

\subsection{Asymptotic distribution}\label{appendix_subsec:asymptotic_dist}

In this subsection, we conduct a finite-sample simulation experiment to illustrate the result derived in Theorem \ref{thcltest} of the main text. More precisely, we generate $2000$ independent batches of the Hawkes process defined by Equation (\ref{defhawkesintensityfamily}) of the main text and set $T=10000$ as in Subsection \ref{appendix_subsec:sim_l2error} of the Appendices. 
Throughout the simulation study, we set $c=0.005$, $\nu^*=1$ and $b^*=(500,2000,10000)'$. The number of kernel functions is $n_h=3$. Let $\text{BR}_{n_h}=\| h \|_1=\sum^{n_h}_{k=1}\alpha^*_k/b^*_k$. We consider four settings for the excitation parameter $\alpha^*$:
\begin{enumerate}
\item[(i)] $\alpha^*=(130,540,2700)'$, yielding $\text{BR}_3=0.80$ with homogeneous ratios $\alpha^*_k/b^*_k$, $1 \leq k \leq n_h$
\item[(ii)] $\alpha^*=(70,520,4000)'$, yielding $\text{BR}_3=0.80$ with heterogeneous ratios $\alpha^*_k/b^*_k$, $1 \leq k \leq n_h$
\item[(iii)] $\alpha^*=(160,660,3300)'$, yielding $\text{BR}_3=0.98$ with homogeneous ratios $\alpha^*_k/b^*_k$, $1 \leq k \leq n_h$
\item[(iv)] $\alpha^*=(90,620,4900)'$, yielding $\text{BR}_3=0.98$ with heterogeneous ratios $\alpha^*_k/b^*_k$, $1 \leq k \leq n_h$.
\end{enumerate}
The vector $b^*$ is fixed and known. For each setting and each batch, we compute the vector $\widehat{z}_{T\Delta}=\widehat{D}^{-1/2}_{T\Delta}\sqrt{T\Delta}(\widehat{\theta}_{T\Delta}-\theta^*) \in \mathbb{R}^4$. We repeat the procedure over the 2000 batches and compare the empirical distribution of each component with the standard normal distribution through Q-Q plots.

Figures~\ref{qq_plot_0.8} and \ref{qq_plot_98} suggest that the empirical distributions of the standardized statistics are very close to the standard normal distribution across all four components. In both branching-ratio settings, the points lie almost perfectly on the reference line over the central part of the distribution, indicating that the Gaussian approximation derived in Theorem~\ref{thcltest} of the main text is already accurate for $T=10000$.

Small departures from the reference line are visible only in the extreme upper and lower tails, particularly for the heterogeneous specifications and for the third and fourth components. Such deviations are relatively minor even when the branching ratio is $\text{BR}_3=0.98$. Overall, these simulation results provide strong empirical support for the asymptotic normal approximation underlying the proposed inference procedure.

\begin{figure}[htbp]
\centering
\includegraphics[width=0.8\linewidth]{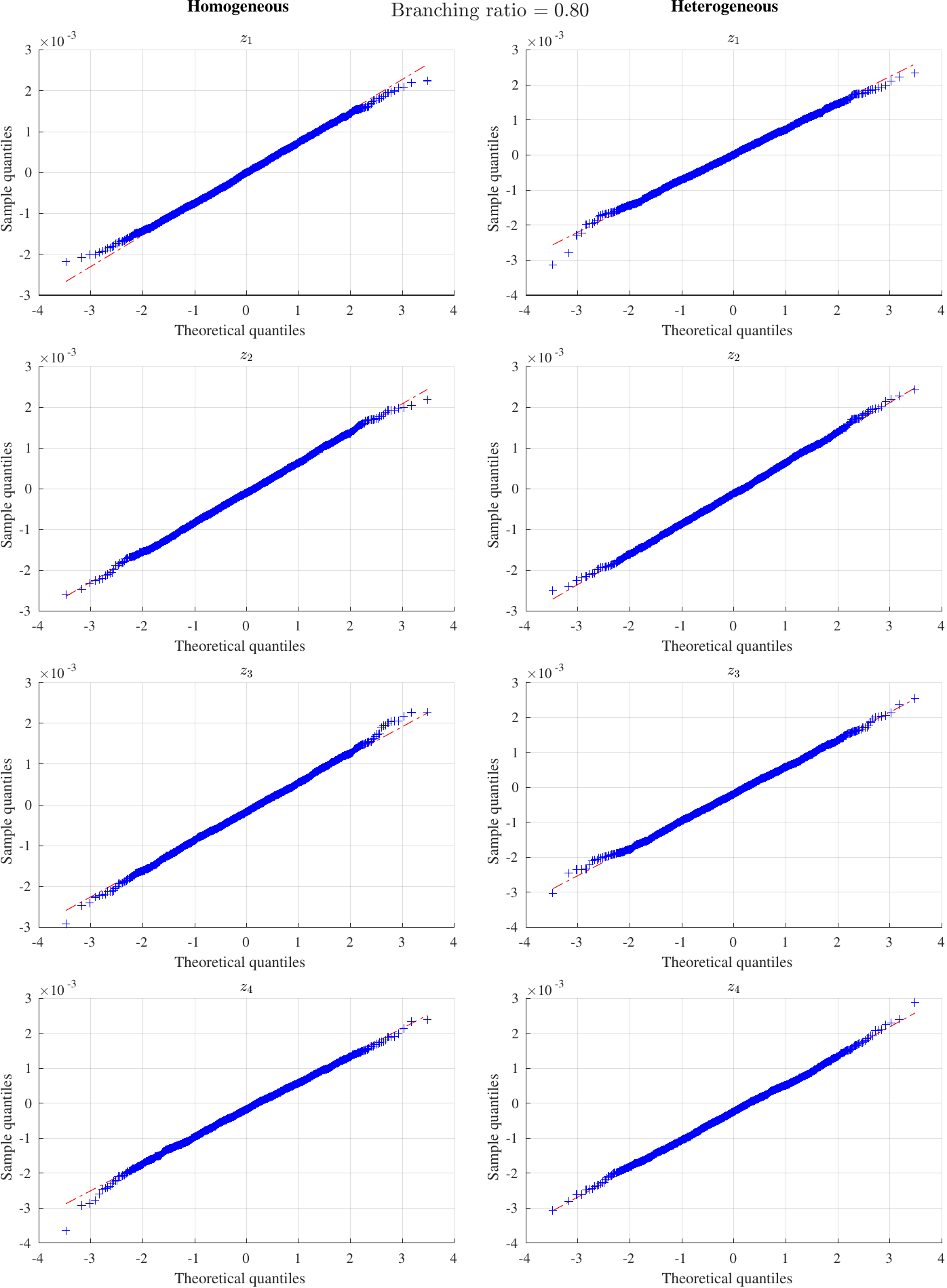}
\caption{Q-Q plots of the standardized statistics $(z_1,z_2,z_3,z_4)$ based on $2000$ replications under Case 1  with branching ratio $\text{BR}_3 = 0.8$, $c=0.005$. The left panels correspond to the homogeneous case with $\nu^*=1$, $\alpha^*=(130,540,2700)'$ and $b^*=(500,2000,10000)'$. The right panels correspond to the heterogeneous case with $\nu^*=1$, $\alpha^*=(70,520,4000)'$ and $b^*=(500,2000,10000)'$.}
\label{qq_plot_0.8}
\end{figure}

\begin{figure}[htbp]
\centering
\includegraphics[width=0.8\linewidth]{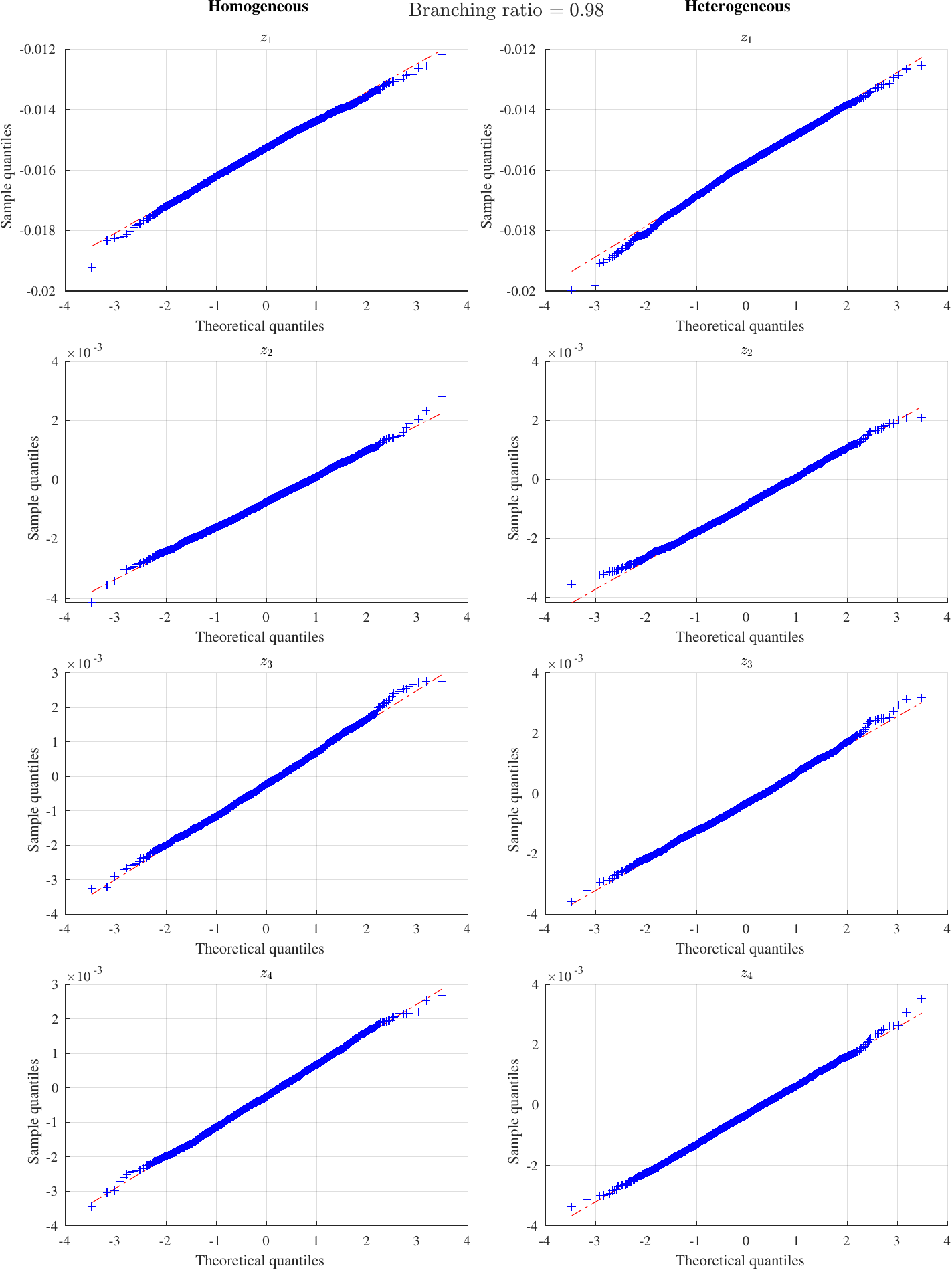}
\caption{Q-Q plots of the standardized statistics $(z_1,z_2,z_3,z_4)$ based on $2000$ replications under Case 1  with branching ratio $\text{BR}_3=0.98$, $c=0.005$. The left panels correspond to the homogeneous case with $\nu^*=1$, $\alpha^*=(160,660,3300)'$ and $b^*=(500,2000,10000)'$. The right panels correspond to the heterogeneous case with $\nu^*=1$, $\alpha^*=(90,620,4900)'$ and $b^*=(500,2000,10000)'$.}
\label{qq_plot_98}
\end{figure}

\newpage

\section{Real data analysis: in sample evaluation}
\label{secempiricalstudyinsample}

In this section, we evaluate the quality of the in-sample goodness of fit through the determination coefficient $R^2$ and the Akaike Information Criterion (AIC) metrics using the data described in Section \ref{secempirical} of the main text. Overall, the results indicate that multiple functions improve the fit, but the marginal benefits decrease beyond three or four functions (i.e., $n_h \geq 3$) and that the choice of the time increment $\Delta$ remains asset-specific. This can be interpreted as evidence for the existence of three different types of traders in each asset, a heterogeneity property investigated in Section \ref{secempiricalstudytest} of the main text.

For each trading day and for a fixed time increment $\Delta$, we estimate the Hawkes process by ordinary least squares for each function number $n_h=2,3,4,5$. The same time increment $\Delta$ is used for every value of the number $n_h$. This allows us to compute the determination coefficient $R^2$ based on the same time increment for any number $n_h$. The unknown numbers $b_k^*$ are estimated with maximum likelihood as there are unknown. The daily in-sample Akaike Information Criterion metric is computed by $\text{AIC}\big(\widehat{\theta}_{T\Delta},\widehat{b}^{\text{mle}}_{T\Delta}\big)$ from Definition (\ref{defaic}). Here, the starting time is the beginning of the trading session (9:30 am) and time is measured in seconds after 9:30 am. 
For every value of the number $n_h$, the time increment is estimated as $\Delta=c/\widehat{b}^{\text{mle}}_{\text{av},T\Delta}$. Here, $\widehat{b}^{\text{mle}}_{\text{av},T\Delta}=\frac{1}{249}\sum_{j=1}^{249}\max_{1\leq k\leq 3}\widehat{b}^{\text{mle}}_{j,k,T\Delta}$,
and $\widehat{b}^{\text{mle}}_{j,k,T\Delta}$ denotes the $k$-th component of $b^*$ estimated by maximum likelihood on day $j$ under the Hawkes model with $n_h=3$ functions in the kernel. The choice $n_h=3$ provides an intermediate level of model complexity between  parsimonious ($n_h=2$) and more flexible ($n_h=4,5$) specifications, and is therefore used as the reference model to define a common time increment across all values of $n_h$. To assess the sensitivity with respect to the choice of the time increment $\Delta$ and to clarify the presentation of the results, we report the determination coefficient $R^2$ and the Akaike Information Criterion when $c=0.316$, $c=0.720$, $c=1.638$ and $c=3.728$. These parameter values of $c$ are included in the grid $\mathcal{C}=\{10^{-6.5},\ldots,10^2\}$.

Table \ref{tab:insample_R2_AIC} reports the average daily in-sample values of the determination coefficient $R^2$ and the Akaike Information Criterion for different
time increment parameters $c$ and numbers of kernels $n_h$. This suggests that increasing the number of kernels $n_h$ generally improves the in-sample fit. The determination coefficient $R^2$ is particularly improved when moving from $n_h=2$ to $n_h=3$, whereas the gains resulting from $n_h\geq 4$ are comparatively small. The results are nevertheless asset-dependent. For META and AAPL, the determination coefficient $R^2$ increases with the time increment parameter $c$ while it reaches a maximum at intermediate values of $c$ for EOG. The Akaike Information Criterion results favor intermediate time increment parameters such as $c=0.316$ or $c=0.720$. In general, the Akaike Information Criterion results select the most flexible specification with $n_h=5$. However, the Akaike Information Criterion values obtained with $n_h=4,5$ are nearly identical for META, suggesting that the four functions may already capture the underlying information.  

\begin{table}[H]
\centering
\caption{Average daily in-sample values of the determination coefficient $R^2$ and the Akaike Information Criterion for different
time increment parameters $c$ and numbers of kernels $n_h$. Larger values of the determination coefficient $R^2$
indicate a better fit. Since the Akaike Information Criterion is computed from the Hawkes
log-likelihood, its value should be compared across
different values of $n_h$ and $c$ for the same asset.}
\label{tab:insample_R2_AIC}
\resizebox{\textwidth}{!}{
\begin{tabular}{l|cccccc|cccccc}
\hline\hline
&
\multicolumn{6}{c|}{Determination coefficient $R^2$}
&
\multicolumn{6}{c}{Akaike Information Criterion}
\\
&
$c=0.005$
&
$c=0.027$
&
$c=0.316$
&
$c=0.720$
&
$c=1.638$
&
$c=3.728$
&
$c=0.005$
&
$c=0.027$
&
$c=0.316$
&
$c=0.720$
&
$c=1.638$
&
$c=3.728$
\\
\hline

$n_h=2$ &&&&&&&&&&&&\\

AAPL
&0.000&0.002&0.028&0.047&0.053&0.053
&-1021338&-1021560&-1024405&-1025964&-1015842&-988459\\

META
&0.000&0.002&0.021&0.034&0.036&0.039
&-620290&-620392&-621888&-622460&-614700&-598450\\

EOG
&0.002&0.008&0.078&0.100&0.102&0.076
&-129569&-129309&-128580&-129344&-125498&-114816\\
\hline

$n_h=3$ &&&&&&&&&&&&\\

AAPL
&0.000&0.002&0.029&0.048&0.055&0.057
&-1032474&-1033686&-1040583&-1041940&-1031255&-1009110\\

META
&0.000&0.002&0.023&0.036&0.038&0.043
&-628033&-629184&-634161&-634423&-625127&-612882\\

EOG
&0.002&0.008&0.081&0.102&0.103&0.077
&-132128&-132055&-131446&-131239&-126736&-116305\\
\hline

$n_h=4$ &&&&&&&&&&&&\\

AAPL
&0.000&0.002&0.029&0.049&0.055&0.057
&-1036485&-1038367&-1043655&-1046289&-1036476&-1014590\\

META
&0.000&0.002&0.022&0.036&0.038&0.044
&-632951&-633776&-636429&-637788&-629378&-617592\\

EOG
&0.002&0.008&0.083&0.103&0.103&0.077
&-132794&-132801&-131922&-131608&-126907&-116839\\
\hline

$n_h=5$ &&&&&&&&&&&&\\

AAPL
&0.000&0.002&0.029&0.048&0.054&0.058
&-1039979&-1041067&-1044497&-1046915&-1037306&-1018961\\

META
&0.000&0.002&0.022&0.035&0.038&0.044
&-633238&-634072&-637815&-637254&-631380&-620007\\

EOG
&0.002&0.008&0.083&0.103&0.103&0.077
&-132833&-132969&-132083&-131724&-126982&-117065\\

\hline\hline
\end{tabular}
}
\end{table}

\newpage

\section{Proofs}
\label{secproofs}

The main idea of the proofs is to apply the technology of \cite{white2001asymptotic} (Theorem 5.17, p. 126). More specifically, we consider central limit theorem for time series with observations recorded at successive equally spaced points in time. We introduce the time series $z=\{z_{t\Delta} \}_{t \in \naturels_*}$ of dimension $n+1$ in which $z_{t\Delta}=(x_{t\Delta},u_{t\Delta})$ for any time index $t \in \naturels_*$. Since Condition (ii) from Theorem 5.17 in \cite{white2001asymptotic} requires that the time series $z$ is stationary, we construct a stationary time series $\overline{z}$ in this first step from the proofs.

We first introduce the following definition of stationarity for stochastic processes $P=\{P_t\}_{t \in \reels^+}$ defined on the continuous time interval $\reels^+$.

\begin{definition}[stationary continuous time stochastic process]
\label{defstationarycontinuous}

We introduce the stochastic process $P=\{P_t\}_{t \in \reels^+}$ defined on the continuous time interval $\reels^+$ and the function ${\displaystyle F_{P}(p_{t_{1}+\tau},\ldots,p_{t_{m}+\tau})}$ which represents the cumulative distribution function of the unconditional joint distribution of the stochastic process $P$ at times $t_{1}+\tau,\ldots,t_{m}+\tau$. Then, the stochastic process $P$ is said to be stationary if
$${\displaystyle
F_{P}(p_{t_{1}+\tau},\ldots,p_{t_{m}+\tau})
=
F_{P}(p_{t_{1}},\ldots,p_{t_{m}})
\quad
{\text{for any time }} \tau,t_{1},\ldots,t_{m}\in\reels^+
{\text{ and any index }} m\in\mathbb{N}_{*}.
}$$
\end{definition}


Since the time $\tau$ does not affect the cumulative distribution function $F_P$, we have that the cumulative distribution function $F_P$ is independent of time in Definition \ref{defstationarycontinuous}. If $Y$ is a random variable, we define its $L^p$ norm as $||Y||_p = \esp [|Y|^p]^{1/p}$. The next lemma proves the existence of a simple point process $\overline{N}=\{\overline{N}_t\}_{t \in \reels^+}$ which is a stationary stochastic process in the sense of Definition \ref{defstationarycontinuous} with its intensity denoted $\overline{\lambda}$ satisfying Equation (\ref{defhawkesintensityfamily}). The proof slightly extends the arguments from Proposition 4.4 (pp. 1819-1820) in \cite{clinet2017statistical} along with the stationary condition $\| h \|_1 < 1$ and the condition that the functions $h_k$ are exponential.

\begin{lemma}[existence]
\label{lemstatexistence}
We assume that Conditions \ref{condh}  \ref{condhparameterspace}, \ref{condhhexp}, \ref{condhidentifiability} and \ref{condhstationary} hold. Then: 

(i) There exists an extension of the stochastic basis $\mathbf{B} = \big(\Omega, \sigalg, \filt=\{\filt_t\}_{t \in \reels^+}, \proba\big)$ denoted by $\overline{\mathbf{B}} = \big(\overline{\Omega}, \overline{\sigalg}, \overline{\filt}=\{\overline{\filt}_t\}_{t \in \reels^+}, \overline{\proba}\big)$ and a simple point process $\overline{N}=\{\overline{N}_t\}_{t \in \reels^+}$ defined on the extended stochastic basis $\overline{\mathbf{B}}$ which is a stationary stochastic process in the sense of Definition \ref{defstationarycontinuous} with its intensity denoted $\overline{\lambda}$ satisfying Equation (\ref{defhawkesintensityfamily}).

(ii) There exists a positive real number $q \in \reels_*^+$ and a positive real number $K \in \reels_*^+$ such that for any time $t \in \reels^+$ we have
$$|| \lambda_t - \overline{\lambda}_t ||_1 \leq K \exp{(-qt)}. $$

(iii) We have that the compensated simple point process $\overline{A}=\{\overline{A}_t\}_{t \in \reels^+}$ defined for any time $t \in \reels^+$ as
$\overline{A}_{t} = \overline{N}_{t} - \int_0^t \overline{\lambda}_s ds$
is a martingale with respect to the filtration $\overline{\filt}$ almost surely.
\end{lemma}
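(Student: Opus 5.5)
The plan is to build the stationary version $\overline{N}$ and couple it with $N$ through a Poisson embedding, following the scheme of Proposition 4.4 in \cite{clinet2017statistical}. Because every $h_k$ is exponential by Condition \ref{condh} \ref{condhhexp}, the vector $X_t=(X_{1,t},\ldots,X_{n_h,t})'$ with $X_{k,t}=\int_0^t e^{-b_k^*(t-s)}dN_s$ is a Markov process. It satisfies $dX_{k,t}=-b_k^*X_{k,t}dt+dN_t$, and the intensity is $\lambda_t=\nu^*+\sum_k\alpha_k^*X_{k,t^-}$. The stationary Hawkes process is then the stationary regime of this piecewise-deterministic Markov process, equivalently the Hawkes process started in the infinite past.

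To prove (i), I would enlarge the basis $\mathbf{B}$ by taking its product with an independent Poisson random measure $\mathcal{M}$ on $\reels\times\reels^+$ of unit intensity. On the enlarged basis I would define $\overline{N}$ as the solution on the whole real line of
\[
\overline{N}(dt)=\mathcal{M}\big(dt\times[0,\overline{\lambda}_t]\big),\qquad \overline{\lambda}_t=\nu^*+\int_{-\infty}^{t^-}h(t-s,\alpha^*)\,\overline{N}(ds).
\]
I would construct it by Picard iteration or by the cluster representation of \cite{hawkes1974cluster}. Condition \ref{condh} \ref{condhstationary} gives $\|h\|_1<\rho_+<1$, so the iteration is a contraction in $L^1$: the expected increments $\esp[\overline{\lambda}^{(m+1)}_t-\overline{\lambda}^{(m)}_t]$ are bounded by $\nu^*\|h\|_1^m$, which gives existence, uniqueness and the bound $\esp\overline{\lambda}_t=\nu^*/(1-\|h\|_1)$, exactly as in \cite{bremaud1996stability}. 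Since $\mathcal{M}$ is invariant in law under time shifts and the solution map commutes with shifts, $\overline{N}$ is stationary in the sense of Definition \ref{defstationarycontinuous}. I would restrict $\overline{N}$ to $\reels^+$. Its intensity is of the form (\ref{defhawkesintensityfamily}) once the contribution of the past, $\sum_k\alpha_k^*e^{-b_k^*t}\overline{X}_{k,0}$, is placed in $\overline{\filt}_0$. I would make this precise by choosing $\overline{\filt}_t$ to contain the past of $\mathcal{M}$ on $(-\infty,0]$.

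For (ii), I would realize $N$ on the same basis by thinning the same $\mathcal{M}$ restricted to $t\ge0$. This is allowed because the law of $N$ is determined by its intensity, so we may work with this version. Both $N$ and $\overline{N}$ are then driven by $\mathcal{M}$, and
\[
|\lambda_t-\overline{\lambda}_t|\le \sum_k\alpha_k^*e^{-b_k^*t}\overline{X}_{k,0}+\int_0^{t^-}h(t-s,\alpha^*)\,|N-\overline{N}|(ds).
\]
Taking expectations, and using that the compensator of $|N-\overline{N}|$ is $|\lambda-\overline{\lambda}|$, the function $g(t)=\|\lambda_t-\overline{\lambda}_t\|_1$ satisfies the renewal inequality
\[
g(t)\le C e^{-b_{\min}t}+\int_0^t h(t-s,\alpha^*)g(s)\,ds,\qquad b_{\min}=\min_k b_k^*.
\]
Set $g^*(t)=e^{qt}g(t)$ and pick $q<b_{\min}$ small enough that $\int_0^\infty e^{qs}h(s,\alpha^*)ds=\sum_k\alpha_k^*/(b_k^*-q)<1$. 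Such a $q$ exists by continuity, since the value at $q=0$ is below $\rho_+<1$. A Gronwall/renewal argument then gives $\sup_t g^*(t)\le C/(1-\sum_k\alpha_k^*/(b_k^*-q))=:K$, which proves (ii).

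Part (iii) follows from the construction. $\overline{\lambda}$ is $\overline{\filt}$-predictable and locally integrable, since its expectation is finite. By the standard property of thinning a Poisson random measure, $\overline{N}_t-\int_0^t\overline{\lambda}_sds$ is an $\overline{\filt}$-martingale; this is the same Doob–Meyer argument used for (\ref{defA}).

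The main obstacle is the bookkeeping of filtrations in (i) and (iii). The stationary intensity contains the history before time $0$, so $\overline{\lambda}$ is not literally of the form (\ref{defhawkesintensityfamily}) unless $\overline{\filt}_0$ absorbs that initial condition. I would deal with this carefully through the Markov state $\overline{X}_0$, which has finite mean, while keeping $\mathcal{M}$ on $(0,\infty)$ independent of $\overline{\filt}_0$ so that the martingale property holds.
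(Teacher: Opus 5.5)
Your route is the one the paper takes. The paper's proof only cites Proposition 4.4 of \cite{clinet2017statistical} for (i) and (ii), and Doob--Meyer for (iii). Your Picard construction, the renewal inequality for $g(t)=\|\lambda_t-\overline{\lambda}_t\|_1$, and the choice of $q$ with $\sum_k\alpha_k^*/(b_k^*-q)<1$ correctly unpack that citation.

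There is a genuine gap in the coupling for (ii). You first take $\mathcal{M}$ \emph{independent} of $\mathbf{B}$, and then you ``realize $N$'' as the thinning of $\mathcal{M}$; these two steps are incompatible. The $\lambda$ in (ii) is the intensity of the given $N$ on $\mathbf{B}$. If $\mathcal{M}$ is independent of $\mathbf{B}$, then $\overline{N}$, a functional of $\mathcal{M}$, is independent of $N$. Conditional Jensen then gives $\|\lambda_t-\overline{\lambda}_t\|_1\ge\inf_{c\in\reels}\esp|\overline{\lambda}_0-c|>0$ for every $t$, so (ii) fails. Switching to an equal-in-law thinned copy of $N$ proves (ii) for the copy, not for the process in the statement. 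That would suffice for the distributional conclusion of Lemma \ref{lemstat}, but it is not what the lemma claims.

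The missing ingredient is the Poisson embedding of the \emph{given} $N$ (see \cite{bremaud1996stability}). On an extension of $\mathbf{B}$, mark each event $T_i$ with $U_i\lambda_{T_i}$, where the $U_i$ are i.i.d.\ uniform on $[0,1]$. Add an independent unit-rate Poisson measure restricted to $\{(t,x):x>\lambda_t\}$, and another independent one on $(-\infty,0)\times\reels^+$. The resulting $\mathcal{M}$ has compensator $dt\,dx$ with respect to the enlarged filtration. By Watanabe's characterization it is a Poisson random measure, and $N(dt)=\mathcal{M}(dt\times[0,\lambda_t])$ holds exactly. If you build $\overline{N}$ from this $\mathcal{M}$, the rest of your argument goes through unchanged.

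A smaller point: placing $\overline{X}_0$ in $\overline{\filt}_0$ does not make $\overline{\lambda}$ of the form (\ref{defhawkesintensityfamily}), because the term $\sum_k\alpha_k^*e^{-b_k^*t}\overline{X}_{k,0}$ remains. In fact no stationary process can have intensity exactly $\nu^*+\int_0^{t}h(t-s,\alpha^*)\,d\overline{N}_s$, since its mean would rise from $\nu^*$ to $\nu^*/(1-\|h\|_1)$. Likewise, the counting process $\overline{N}_t$ itself cannot satisfy Definition \ref{defstationarycontinuous}; only its increments, equivalently $\overline{\lambda}$, can. State explicitly that this is the reading of (i) you prove, with the kernel integral taken over $(-\infty,t)$.
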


\begin{proof}[Proof of Lemma \ref{lemstatexistence}]
First, Condition \ref{condh} yields the stationary condition $\| h \|_1 < 1$. Then, we can conclude by an application of Proposition 4.4 (i) (pp. 1819-1820) in \cite{clinet2017statistical} along with the stationary condition $\| h \|_1 < 1$ to prove (i). To prove (ii), we can extend the arguments from the proof of Proposition 4.4 (iii) (pp. 1819-1820) in \cite{clinet2017statistical} as the kernels $h_k$ are exponential functions by Condition \ref{condh} \ref{condhhexp}. Finally, (iii) is obtained by the Doob–Meyer decomposition theorem (see Theorem I.3.17 (p. 32) in \cite{jacod2003limit}) of the simple point process $\overline{N}$ established in (i).
\end{proof}

We now define the observed time series $\overline{y}=\{\overline{y}_{t\Delta}\}_{ t \in \naturels_* }$ of the stationary Hawkes process $\overline{N}$ for any time index $t \in \naturels_*$ as 
$$\overline{y}_{t\Delta}= \overline{N}_{(t\Delta)^-} -\overline{N}_{(t-1)\Delta}.$$ 
We also denote the explaining vector of the stationary Hawkes process $\overline{N}$ for any time index $t \in \naturels_*$ by
\be
\label{defxoverline}
\overline{x}_{t\Delta}=\Big(1, \int_{(t-1)\Delta}^{t\Delta} h_1(t-s) d\overline{N}_s, \ldots, \int_{(t-1)\Delta}^{t\Delta} h_{n-1}(t-s) d\overline{N}_s \Big)'.
\ee
Then, we define the error term of the stationary Hawkes process $\overline{N}$ for any time index $t \in \naturels_*$ as
\be
\label{defuoverline}
\overline{u}_{t\Delta} = \overline{y}_{t\Delta} - \overline{x}_{t\Delta}'\theta^* .
\ee
In matrix notation this relationship
is written as 
\be
\label{deflrmoverline}
\overline{Y}_{T\Delta} = \overline{X}'_{T\Delta} \theta^{*} + \overline{U}_{T\Delta}.
\ee
Here, we introduce the vector $\overline{Y}_{T\Delta}=(\overline{y}_{\Delta}, \ldots, \overline{y}_{T\Delta})'$ of dimension $T$. We also introduce the matrix $\overline{X}_{T\Delta}=(\overline{x}_{\Delta}, \ldots, \overline{x}_{T\Delta})$ of dimension $n \times T$ and the vector $\overline{U}_{T\Delta}=(\overline{u}_{\Delta}, \ldots, \overline{u}_{T\Delta})'$ of dimension $T$.

We estimate the unknown parameter $\theta^*$ by minimizing the ordinary least squares objective function
\be
\label{defSSRoverline}
\overline{SSR}_{T\Delta}(\theta)= \sum_{t=1}^T \big(\overline{y}_{t\Delta} - \overline{x}_{t\Delta} ' \theta \big)^2 = \big( \overline{Y}_{T\Delta} - \overline{X}'_{T\Delta} \theta \big) ' \big( \overline{Y}_{T\Delta} - \overline{X}'_{T\Delta} \theta \big).
\ee
 Then, the least squares estimator is defined as the solution to the minimization problem (\ref{defSSRoverline}) between the starting time $0$ and the final time $T\Delta$, namely
$$\widehat{\overline{\theta}}_{T\Delta} = \operatorname{argmin}_{\theta \in \Theta} \overline{SSR}_{T\Delta} (\theta).$$
The first-order conditions for a minimum are
$$\frac{\partial \overline{SSR}_{T\Delta}(\theta)}{
\partial \theta} = -2 \sum_{t=1}^T \overline{x}_{t\Delta} \big(\overline{y}_{t\Delta} - \overline{x}_{t\Delta} ' \theta \big) = -2 \overline{X}_{T\Delta} \big( \overline{Y}_{T\Delta} - \overline{X}_{T\Delta}' \theta \big).$$
If the matrix $\overline{X}_{T\Delta} \overline{X}_{T\Delta}' = \sum_{t=1}^T \overline{x}_{t\Delta} \overline{x}_{t\Delta} '$ is non singular, this system of $n$ equations in $n$
unknowns can be uniquely solved for the ordinary least squares 
estimator
$$\widehat{\overline{\theta}}_{T\Delta} = \Big( \sum_{t=1}^T \overline{x}_{t\Delta} \overline{x}_{t\Delta}' \Big)^{-1} \sum_{t=1}^T \overline{x}_{t\Delta} \overline{y}_{t\Delta} = \big( \overline{X}_{T\Delta} \overline{X}_{T\Delta}' \big)^{-1} \overline{X}_{T\Delta} \overline{Y}_{T\Delta}.$$
This is exactly the objective function and the estimator discussed in \cite{white2001asymptotic} (p. 2).

We introduce the quantities required for the asymptotic variance-covariance matrix. First, we define the variance matrix $\overline{V}_{T\Delta}$ at the time $T\Delta$ of  dimension $n \times n$ for any time index $T\in \naturels_*$ as
\be
\label{defVToverline}
\overline{V}_{T\Delta} = \var \big(T^{-1/2} \overline{X}_{T\Delta}\overline{U}_{T\Delta}\big).
\ee
Then, we define its limiting variance matrix when the final time $T \rightarrow \infty$ as
\be
\label{defVoverline}
\overline{V} = \lim_{T \rightarrow \infty} \overline{V}_{T\Delta}.
\ee
We also define the matrix $\overline{M}$ as
\be
\label{defMoverline}
\overline{M} = \esp [\overline{x}_{T\Delta} \overline{x}_{T\Delta}'].
\ee
By definition, the matrix $\overline{M}$ does not depend on the final time $T$ as the simple point process $\overline{N}$ is stationary in the sense of Definition \ref{defstationarycontinuous}. Finally, we define the asymptotic variance matrix $\overline{D}$ of dimension $n \times n$ as
\be
\label{defDoverline}
\overline{D}=\overline{M}^{-1} \overline{V} \,\overline{M}^{-1}.
\ee

We now give the following lemma which states that if the central limit theorem based on the stationary point process $\overline{N}$ holds, then the central limit theorem based on the nonstationary point process $N$ holds. The proof is based on an application of Lemma \ref{lemstatexistence}.

\begin{lemma}
\label{lemstat}
We assume that Condition \ref{condh} holds. We also assume the central limit theorem of the parametric estimation procedure based on ordinary least squares with the stationary point process $\overline{N}$ as the final time $T \rightarrow + \infty$, namely
\begin{eqnarray}
\label{clteqstat}
\overline{D}^{-1/2} \sqrt{T\Delta}(\widehat{\overline{\theta}}_{T\Delta} - \theta^*) \cvdistrib  \xi.
\end{eqnarray}
Then, we have the central limit theorem of the parametric estimation procedure based on ordinary least squares with the nonstationary point process $N$ as the final time $T \rightarrow + \infty$, namely
\begin{eqnarray}
\label{lemstat1}
D^{-1/2} \sqrt{T\Delta}(\widehat{\theta}_{T\Delta} - \theta^*) \cvdistrib  \xi.
\end{eqnarray}
\end{lemma}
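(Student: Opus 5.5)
The plan is to show that the nonstationary and stationary least squares estimators are asymptotically equivalent at rate $\sqrt{T\Delta}$, and that $D=\overline{D}$. The result then follows from (\ref{clteqstat}) by Slutsky's theorem. We work on the extended basis $\overline{\mathbf{B}}$ of Lemma \ref{lemstatexistence}, where $N$ and $\overline{N}$ are coupled. The natural coupling is the thinning, or cluster, construction of \cite{clinet2017statistical}. In that construction $N$ and $\overline{N}$ are driven by the same Poisson random measure, and Lemma \ref{lemstatexistence} (ii) gives $\|\lambda_t-\overline{\lambda}_t\|_1\le K e^{-qt}$.

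Write
\[
\sqrt{T\Delta}(\widehat{\theta}_{T\Delta}-\theta^*)=\sqrt{\Delta}\big(X_{T\Delta}X_{T\Delta}'/T\big)^{-1}T^{-1/2}X_{T\Delta}U_{T\Delta},
\]
and similarly for the barred quantities. It then suffices to prove two things:
\[
T^{-1}\big(X_{T\Delta}X_{T\Delta}'-\overline{X}_{T\Delta}\overline{X}_{T\Delta}'\big)\cvproba 0,
\qquad
T^{-1/2}\big(X_{T\Delta}U_{T\Delta}-\overline{X}_{T\Delta}\overline{U}_{T\Delta}\big)\cvproba 0.
\]
We also need $\overline{M}$ to be nonsingular. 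Nonsingularity follows from Condition \ref{condh} \ref{condhidentifiability}, because distinct exponentials $e^{-b_k^* s}$ are linearly independent on any interval and the stationary process puts events with a nondegenerate distribution inside $[(t-1)\Delta,t\Delta)$. Given these two limits, the ergodic theorem for $\overline{N}$ gives $T^{-1}\overline{X}_{T\Delta}\overline{X}_{T\Delta}'\to\overline{M}$, and the claim follows from Slutsky's theorem.

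Both differences are controlled by the quantity $\mathbb{E}\,|N|-|\overline{N}|$ on the interval $[(t-1)\Delta,t\Delta)$. Under the coupling, the counting measure of the symmetric difference of the two point sets has compensator $\int|\lambda_s-\overline{\lambda}_s|\,ds$. Its expectation on the $t$-th interval is therefore at most $K\Delta e^{-q(t-1)\Delta}$, which is summable in $t$. Every entry of $x_{t\Delta}-\overline{x}_{t\Delta}$ and $y_{t\Delta}-\overline{y}_{t\Delta}$ is bounded by this count, since $h_k\le 1$. Products such as $x_{t\Delta}x_{t\Delta}'-\overline{x}_{t\Delta}\overline{x}_{t\Delta}'$ are then bounded by the count times $(y_{t\Delta}+\overline{y}_{t\Delta})$, using the fact that $|x|\le 1+y$.

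To handle these products we will use Hölder's inequality together with uniform $L^p$ bounds on the counts $N([(t-1)\Delta,t\Delta))$. Such bounds hold for exponential Hawkes processes with $\|h\|_1<1$, since these processes have exponential moments. The symmetric-difference count is integer valued and bounded by the total count, so its $L^2$ norm is at most (its $L^1$ norm)$^{1/2}$ times a uniform bound, and this is still geometrically small in $t$. The sum over $t$ of the expected absolute differences is then finite. Dividing by $T^{1/2}$ or by $T$ sends both differences to $0$ in $L^1$, hence in probability. This is the main technical step. The delicate point is making the coupling bound pathwise (symmetric difference), whereas Lemma \ref{lemstatexistence} (ii) is stated only for intensities. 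I will first try to extract the pathwise bound from the proof of Proposition 4.4 of \cite{clinet2017statistical}.

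Finally, we need $D=\overline{D}$. The argument above gives $M_{T\Delta}\to\overline{M}$, and a covariance version of the same argument gives $V_{T\Delta}-\overline{V}_{T\Delta}\to 0$. The covariance version uses the same $L^2$ bound to show that $T^{-1/2}(XU-\overline{X}\,\overline{U})\to 0$ in $L^2$, and then applies Cauchy--Schwarz to the covariances. Here the $\Delta$ normalisation in $V_{T\Delta}$ is absorbed into the rate consistently with (\ref{defVToverline}). Thus $V=\overline{V}$, $M=\overline{M}$ and $D=\overline{D}$, which concludes (\ref{lemstat1}).
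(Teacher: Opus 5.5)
Your route is the paper's: the same three-term decomposition (stationary CLT, $D$ versus $\overline{D}$, $\widehat{\theta}_{T\Delta}$ versus $\widehat{\overline{\theta}}_{T\Delta}$). The paper simply cites Lemma~\ref{lemstatexistence}~(ii) for the last two terms. You correctly see that an $L^1$ bound on intensities does not by itself control $y_{t\Delta}$, $x_{t\Delta}$ or the products $x_{t\Delta}x_{t\Delta}'$ and $x_{t\Delta}u_{t\Delta}$. A thinning coupling plus uniform moment bounds is the right way to fill this in.

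The point you call delicate is easy. Embed $N$ and $\overline{N}$ in one Poisson measure. Induction over event times, using $h\geq 0$, gives $N\subseteq\overline{N}$ on $[0,\infty)$. So the symmetric-difference count is $c_t=\overline{y}_{t\Delta}-y_{t\Delta}$, with $\esp[c_t]=\int_{(t-1)\Delta}^{t\Delta}\esp[\overline{\lambda}_s-\lambda_s]\,ds$, and moments of the counts of $N$ are dominated by those of $\overline{N}$.

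A small correction: the integer-valued argument gives $\esp[c_t^2]\leq\proba(c_t\geq 1)^{1/2}\esp[c_t^4]^{1/2}\leq\esp[c_t]^{1/2}\esp[\overline{y}_{t\Delta}^4]^{1/2}$. That is $\|c_t\|_2\lesssim\|c_t\|_1^{1/4}$ rather than $\|c_t\|_1^{1/2}$. This is still geometrically small, so nothing changes.

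The step that does not hold as written is $V=\overline{V}$. The paper defines $V_{T\Delta}=\var\big((T\Delta)^{-1/2}X_{T\Delta}U_{T\Delta}\big)$ but $\overline{V}_{T\Delta}=\var\big(T^{-1/2}\overline{X}_{T\Delta}\overline{U}_{T\Delta}\big)$. Neither $M_{T\Delta}$ nor $\overline{M}$ carries a $\Delta$. Your own $L^2$ comparison therefore yields $V_{T\Delta}-\overline{V}_{T\Delta}/\Delta\to 0$, hence $V=\overline{V}/\Delta$ and $D=\overline{D}/\Delta$. Combining the hypothesis with your asymptotic equivalence then gives $D^{-1/2}\sqrt{T\Delta}(\widehat{\theta}_{T\Delta}-\theta^*)\cvdistrib\sqrt{\Delta}\,\xi$, not $\xi$.

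Saying that the factor is absorbed into the rate is not an argument, because the rate $\sqrt{T\Delta}$ is the same in the hypothesis and in the conclusion. You need to state explicitly that the lemma must be read with one common normalization, for instance with $\overline{V}_{T\Delta}$ also defined using $(T\Delta)^{-1/2}$. With that reading your comparison gives $D=\overline{D}$ and the proof closes.

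The paper's proof makes the same identification without comment. Since $\sqrt{T\Delta}(\widehat{\overline{\theta}}_{T\Delta}-\theta^*)$ has a nondegenerate Gaussian limit, its claim $II\cvproba 0$ is equivalent to $D=\overline{D}$.
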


\begin{proof}[Proof of Lemma \ref{lemstat}] 
We first have the decomposition 
\begin{eqnarray}
\label{lemstatp1}
D^{-1/2} \sqrt{T\Delta}(\widehat{\theta}_{T\Delta} - \theta^*) & = & \overline{D}^{-1/2} \sqrt{T\Delta}(\widehat{\overline{\theta}}_{T\Delta} - \theta^*) + (D^{-1/2} - \overline{D}^{-1/2}) \sqrt{T\Delta}(\widehat{\overline{\theta}}_{T\Delta} - \theta^*) 
\\ \nonumber && + D^{-1/2} \sqrt{T\Delta}(\widehat{\theta}_{T\Delta} - \widehat{\overline{\theta}}_{T\Delta}).
\end{eqnarray}
We introduce the sequence of random variables $I=\overline{D}^{-1/2} \sqrt{T\Delta}(\widehat{\overline{\theta}}_{T\Delta} - \theta^*).$ We can deduce by Expression (\ref{clteqstat}) as the final time $T \rightarrow + \infty$ that
\begin{eqnarray}
\label{lemstatp2}
I \cvdistrib  \xi.
\end{eqnarray}
We also introduce the sequence of random variables $II=(D^{-1/2} - \overline{D}^{-1/2}) \sqrt{T\Delta}(\widehat{\overline{\theta}}_{T\Delta} - \theta^*).$ We obtain by Expression (\ref{clteqstat}) and Lemma \ref{lemstatexistence} (ii) as the final time $T \rightarrow + \infty$ that
\begin{eqnarray}
\label{lemstatp3}
II \cvproba  0.
\end{eqnarray}
Moreover, we introduce the sequence of random variables $III=D^{-1/2} \sqrt{T\Delta}(\widehat{\theta}_{T\Delta} - \widehat{\overline{\theta}}_{T\Delta}).$ We get by Expression (\ref{clteqstat}) and Lemma \ref{lemstatexistence} (ii) as the final time $T \rightarrow + \infty$ that
\begin{eqnarray}
\label{lemstatp4}
III \cvproba  0.
\end{eqnarray}
Finally, we can prove Expression (\ref{lemstat1}) with the use of Expressions (\ref{lemstatp1}), (\ref{lemstatp2}), (\ref{lemstatp3}) and (\ref{lemstatp4}).  
\end{proof}

By an application of Lemma \ref{lemstat}, it remains to show the central limit theorem based on the stationary point process $\overline{N}$, namely Expression (\ref{clteqstat}), holds. This will rely on the technology of \cite{white2001asymptotic} (Theorem 5.17, p. 126). Since Condition (ii) from Theorem 5.17 in \cite{white2001asymptotic} requires that the time series $z$ is stationary, we construct a stationary time series $\overline{z}$. More specifically, we introduce the time series $\overline{z}=\{\overline{z}_{t\Delta} \}_{t \in \naturels_*}$ of dimension $n+1$ in which $\overline{z}_{t\Delta}'=(\overline{x}_{t\Delta}',\overline{u}_{t\Delta}')$ for any time index $t \in \naturels_*$.

In what follows, we introduce the notion of stationarity for a time series. This requires to define first a measure-preserving transformation. We set $g_1 (\omega) = B (\omega)$ in which $B$ is a measurable function from the sample space $\Omega$ to the space $\reels$. Then,
we can define the random variables $g_2(\omega) = B(D \omega)$, $g_3(\omega) = B(D^2\omega)$,
and so on, provided that $D$ is a measurable transformation. The random
variables constructed in this way are said to be random variables induced
by a measurable transformation. The following definition corresponds exactly to Definition 3.27 (p. 42) in \cite{white2001asymptotic}. 

\begin{definition}[measure-preserving transformation]
\label{defmeasprestrans}
The transformation $D : \Omega \rightarrow \Omega$ is measure preserving if it is measurable with respect to the $\sigma$-algebra $\sigalg$ and if $\proba(D^{-1} F) =
\proba(F)$ for any measured set $F \in \sigalg$.
\end{definition} 

The random variables induced by measure-preserving transformations
then have the property that $\proba (g_1 < b) = \proba (\omega : B(\omega) < b) = \proba ( \omega :
B(D\omega) \leq b) = \proba ( g_2 \leq b )$, namely they are identically distributed. In fact, such random variables have an even stronger property. We use the following
definition of stationary time series. The following definition corresponds exactly to Definition 3.28 (p. 43) in \cite{white2001asymptotic}. 

\begin{definition}[stationary time series]
\label{defstationary}
We define $G_1$ as the joint distribution function of the time series $g=\{g_{t} \}_{t \in \naturels_*}$ in which $g_t$ is a vector of dimension $q \times 1$ for any time $t \in \naturels_*$. We also define $G_{\tau + 1}$ as the joint distribution function of the time series $\{g_{\tau+t} \}_{t \in \naturels_*}$. The time series $g$
is said to be stationary if $G_1 = G_{\tau+1}$ for any index $\tau \in \naturels_*$.
\end{definition}

In other words, a time series is stationary in the sense of Definition \ref{defstationary} if the joint distribution of the
variables in the time series is identical, regardless of the date of the first
observation. We  now give some well-known properties about stationary time series. The following lemma corresponds exactly to Proposition 3.29 (p. 43) in \cite{white2001asymptotic}. The proof can be obtained by \cite{stout1974almost} (p. 169).

\begin{lemma}
\label{lemwhite2001prop329}
We introduce a random variable $B$ and a measure-preserving transformation $D$. We also introduce $g_1(\omega) = B(\omega), g_2(\omega) = B(D\omega), \ldots , g_t(\omega) = B(D^{t-1} \omega)$ for any $\omega \in \Omega$ and any time $t \in \naturels_*$. Then,
$g=\{g_{t} \}_{t \in \naturels_*}$ is a stationary time series in the sense of Definition \ref{defstationary}. 
 \end{lemma}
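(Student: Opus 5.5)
The claim is that $g_t = B\circ D^{t-1}$ forms a stationary time series in the sense of Definition \ref{defstationary}. I would prove it by showing that every finite-dimensional distribution of $\{g_{\tau+t}\}_{t\in\naturels_*}$ coincides with the corresponding one of $\{g_t\}_{t\in\naturels_*}$. Two joint distribution functions of sequences that agree on all finite-dimensional cylinder sets are equal, by Kolmogorov's extension theorem, or equivalently by the $\pi$-$\lambda$ uniqueness theorem on the product $\sigma$-algebra. So it suffices to fix $\tau\in\naturels_*$, an index $m\in\naturels_*$ and Borel sets $C_1,\ldots,C_m$, and to compare the probabilities of the events $\{g_{\tau+1}\in C_1,\ldots,g_{\tau+m}\in C_m\}$ and $\{g_1\in C_1,\ldots,g_m\in C_m\}$.

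The key step is an identity of events. Define
$$F=\{\omega: B(\omega)\in C_1, B(D\omega)\in C_2,\ldots,B(D^{m-1}\omega)\in C_m\}.$$
This set is measurable because $B$ and $D$ are measurable, so $F\in\sigalg$. Since $g_{\tau+t}(\omega)=B(D^{t-1}(D^\tau\omega))$, the shifted event equals $(D^\tau)^{-1}F=D^{-\tau}F$.

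By Definition \ref{defmeasprestrans} we have $\proba(D^{-1}F')=\proba(F')$ for every $F'\in\sigalg$. Iterating this $\tau$ times, which is an induction using $D^{-\tau}F=D^{-1}(D^{-(\tau-1)}F)$ with $D^{-(\tau-1)}F\in\sigalg$, gives $\proba(D^{-\tau}F)=\proba(F)$. Hence all finite-dimensional distributions agree, and the uniqueness argument gives $G_{\tau+1}=G_1$.

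I expect no real obstacle. The only point needing care is the passage from the finite-dimensional distributions to the joint law $G$ of the whole sequence, which I will handle by citing the uniqueness of measures on the product $\sigma$-algebra generated by cylinder sets. Vector-valued $B$ is covered by taking the $C_i$ to be Borel sets in $\reels^q$.
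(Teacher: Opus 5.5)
Your argument is correct and is the standard one: the shifted cylinder event equals $D^{-\tau}F$, iterating measure preservation gives $\proba(D^{-\tau}F)=\proba(F)$, and agreement of all finite-dimensional distributions determines the law of the whole sequence on the product $\sigma$-algebra. The paper gives no argument of its own here and simply defers to Stout (1974, p.~169), which proves the result by this same cylinder-set computation, so your proposal writes out the cited argument explicitly.
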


 \begin{proof}
[Proof of Lemma \ref{lemwhite2001prop329}]
The proof can be obtained by \cite{stout1974almost} (p. 169).
\end{proof}

A converse to this result is also available. The following lemma corresponds exactly to Proposition 3.30 (p. 43) in \cite{white2001asymptotic}. The proof can be obtained by \cite{stout1974almost} (p. 170).

\begin{lemma}
\label{lemwhite2001prop330}
We introduce a time series $g=\{g_{t} \}_{t \in \naturels_*}$ stationary in the sense of Definition \ref{defstationary}. Then, there exists a
measure-preserving transformation $D$ in the sense of Definition \ref{defmeasprestrans} defined on the stochastic basis $\mathbf{B}$ such that $g_1 (\omega)
= g_1 (\omega)$, $g_2 (\omega)
= g_1 (D \omega)$, $g_3 (\omega)
= g_1 (D^2 \omega), \ldots , g_t (\omega)
= g_1 (D^{t-1} \omega)$
for any $\omega \in \Omega$ and any time $t \in \naturels_*$. 
\end{lemma}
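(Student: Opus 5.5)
The plan is to prove the statement on the canonical sequence space rather than on an arbitrary $\Omega$, following \cite{stout1974almost} (p. 170). On a general sample space a single transformation $D$ with $g_{t}=g_1\circ D^{t-1}$ need not exist; for example, $\Omega$ may be too small to carry such a map. Since every property used later (Theorem 5.17 in \cite{white2001asymptotic}) depends only on the joint law of $g$, I would read the lemma as holding after replacing $\mathbf{B}$ by an equivalent basis carrying a version of $g$ with the same finite-dimensional distributions. I will say this explicitly at the start of the proof.

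\textbf{Step 1: canonical space.} Set $\widetilde{\Omega}=(\reels^{q})^{\naturels_*}$ with the product Borel $\sigma$-algebra $\widetilde{\sigalg}$. Let $\widetilde{\proba}$ be the law of the whole sequence $\omega\mapsto (g_1(\omega),g_2(\omega),\ldots)$. This map is measurable because each coordinate is, and the product $\sigma$-algebra is generated by coordinate cylinders. Let $\pi_t(\widetilde\omega)=\widetilde\omega_t$ denote the coordinate maps. Then $\{\pi_t\}_{t\in\naturels_*}$ has joint distribution $G_1$, so it is a version of $g$.

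\textbf{Step 2: the shift.} Define $D(\widetilde\omega_1,\widetilde\omega_2,\ldots)=(\widetilde\omega_2,\widetilde\omega_3,\ldots)$. For each $t$, $\pi_t\circ D=\pi_{t+1}$ is measurable, so $D$ is measurable with respect to $\widetilde\sigalg$. By induction, $\pi_t=\pi_1\circ D^{t-1}$, which is the required representation.

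\textbf{Step 3: measure preservation.} It remains to show $\widetilde{\proba}(D^{-1}F)=\widetilde{\proba}(F)$ for all $F\in\widetilde\sigalg$. Take first a cylinder $F=\{\widetilde\omega:(\widetilde\omega_{t_1},\ldots,\widetilde\omega_{t_m})\in C\}$. Then $D^{-1}F=\{(\widetilde\omega_{t_1+1},\ldots,\widetilde\omega_{t_m+1})\in C\}$, and its probability is the value of $G_2$ on $C$. By Definition \ref{defstationary} with $\tau=1$, this equals the value of $G_1$ on $C$, which is $\widetilde{\proba}(F)$.

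Cylinders form a $\pi$-system generating $\widetilde\sigalg$. The class $\{F:\widetilde{\proba}(D^{-1}F)=\widetilde{\proba}(F)\}$ is a $\lambda$-system, because $D^{-1}$ commutes with complements and countable disjoint unions. Dynkin's $\pi$-$\lambda$ theorem therefore extends the equality to all of $\widetilde\sigalg$, so $D$ is measure preserving in the sense of Definition \ref{defmeasprestrans}.

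The only delicate point is interpretive rather than technical: the phrase ``defined on the stochastic basis $\mathbf{B}$'' must be understood up to this change of basis. Alternatively, one can take $\overline{\mathbf{B}}$ to be the product of $\mathbf{B}$ with the canonical space, mirroring the extension used in Lemma \ref{lemstatexistence}. The measure-theoretic steps themselves are routine.
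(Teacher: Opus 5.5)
Your proof is correct and takes essentially the paper's route: the paper gives no argument of its own beyond citing \cite{stout1974almost} (p.~170), and the standard proof behind that citation is this construction, realizing $g$ as the coordinate process on $(\reels^{q})^{\naturels_*}$ with $D$ the left shift, checking measure preservation on cylinders via stationarity with $\tau=1$, and extending by a $\pi$--$\lambda$ argument. Your reinterpretation is justified, since on a fixed basis such a $D$ can fail to exist (take $\Omega=\{a,b,c\}$, $\sigalg=2^{\Omega}$, probabilities $1/2,1/4,1/4$, and $g_t=\ind_{\{b,c\}}$ for odd $t$, $g_t=\ind_{\{a\}}$ for even $t$: any $D$ with $g_2=g_1\circ D$ gives $\proba(D^{-1}\{b\})\in\{0,1/2\}\neq 1/4$), and it costs nothing downstream because ergodicity and the limit theorems used afterwards depend only on the law of $g$; however, drop the side remark about a product extension, because with the product measure and the shift acting on the second factor it only carries an independent copy of $g$, not $g$ itself.
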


\begin{proof}
[Proof of Lemma \ref{lemwhite2001prop330}]
The proof can be obtained by \cite{stout1974almost} (p. 170).
\end{proof}

We now show in the following lemma that the time series $\overline{z}$ is stationary in the sense of Definition \ref{defstationary}. This statement can be proven by an application of Lemma \ref{lemstatexistence} (i). Indeed, the stationarity in the sense of continuous time stochastic processes implies the stationarity in the sense of time series.

\begin{lemma}
\label{lemstationary}
We assume that Condition \ref{condh} holds. Then, the time series $\overline{z}$ is stationary in the sense of Definition \ref{defstationary}.
\end{lemma}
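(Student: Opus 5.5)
The plan is to realise $\overline{z}$ as a fixed measurable functional of the stationary point process $\overline{N}$ from Lemma \ref{lemstatexistence} (i), composed with time shifts. Stationarity of $\overline{N}$ in the sense of Definition \ref{defstationarycontinuous} should then transfer to the sequence $\overline{z}$. I would first fix notation. For $\tau \in \naturels$, let $\overline{N}^{(\tau)}$ denote the shifted counting measure $\overline{N}^{(\tau)}(C) = \overline{N}(C + \tau\Delta)$. By the definitions of $\overline{y}_{t\Delta}$, (\ref{defxoverline}) and (\ref{defuoverline}), each $\overline{z}_{t\Delta}$ depends only on the restriction of $\overline{N}$ to the window $[(t-1)\Delta, t\Delta)$. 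Explicitly, $\overline{y}_{t\Delta}$ is the number of points in that window, and the $k$-th regressor is $\sum_{(t-1)\Delta \le \overline{T}_i < t\Delta} h_k(t - \overline{T}_i)$. The error is then $\overline{u}_{t\Delta} = \overline{y}_{t\Delta} - \overline{x}_{t\Delta}'\theta^*$ with $\theta^*$ deterministic. So I would write $\overline{z}_{t\Delta} = \Phi_t(\overline{N}|_{[(t-1)\Delta,t\Delta)})$ for a measurable map $\Phi_t$ on finite counting measures.

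The key point is that $\Phi_t$ must not actually depend on $t$ once we shift to the window $[0,\Delta)$. Then $\overline{z}_{(\tau+t)\Delta} = \Phi(\overline{N}^{(\tau)}|_{[(t-1)\Delta,t\Delta)})$ for all $t$, with a single map $\Phi$. This is the step I expect to be the main obstacle. As literally written, the kernel argument in (\ref{defx}) is $h_k(t-s)$ with $t$ the integer index, not $t\Delta - s$. With the exponential form of Condition \ref{condh} \ref{condhhexp}, writing $s = (t-1)\Delta + r$ gives $h_k(t - s) = \exp(-b_k^*(t - (t-1)\Delta - r))$. This contains a factor $\exp(-b_k^* t(1-\Delta))$ that depends on $t$, so the regressors are not shift-invariant unless $\Delta = 1$. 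I would resolve this by reading the kernel argument as $t\Delta - s$, which is consistent with the implementation in Section \ref{secimplementation}, where the recursion uses $\exp(-b_k^*\Delta(t-s))$. Under that reading, $h_k(t\Delta - s) = \exp(-b_k^*(\Delta - r))$ depends only on the relative position $r$, and $\Phi$ is time-invariant. I would state this convention explicitly at the start of the proof.

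With $\Phi$ fixed, the remaining argument is routine. For any $m$ and indices $t_1,\ldots,t_m$, the joint law of $(\overline{z}_{(\tau+t_1)\Delta},\ldots,\overline{z}_{(\tau+t_m)\Delta})$ is the image under a fixed measurable map of the joint law of the increments of $\overline{N}$ on the shifted windows. Finite-dimensional distributions of a simple point process determine the law of the random measure, because counts on finite unions of intervals generate the relevant $\sigma$-algebra. So Definition \ref{defstationarycontinuous} applied to $\overline{N}_t$ (via its increments $\overline{N}_{b}-\overline{N}_{a}$) gives equality of these joint laws for every $\tau$. One technical detail is that $\overline{x}_{t\Delta}$ uses $\overline{N}_{(t\Delta)^-}$ and left-closed windows. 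I would handle this by approximating half-open intervals by limits of counts over $[a, b-\epsilon]$, using monotone limits, which preserve equality in law.

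Equality of all finite-dimensional distributions gives $G_1 = G_{\tau+1}$ in Definition \ref{defstationary}, by Kolmogorov's extension (the law of the sequence is determined by its finite-dimensional marginals). This proves that $\overline{z}$ is stationary. Alternatively, one could note that the $\Delta$-shift on the canonical space of $\overline{N}$ is measure preserving and invoke Lemma \ref{lemwhite2001prop329} with $B = \Phi \circ$ (restriction to $[0,\Delta)$).
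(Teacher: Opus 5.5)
Your proposal is correct and follows the paper's route: the paper's proof is a one-line appeal to the stationarity of $\overline{N}$ from Lemma \ref{lemstatexistence} (i) combined with the definition of $\overline{z}$, and you carry out exactly that transfer, via a shift-invariant functional $\Phi$ of the windowed counting measure. You also correctly make explicit two points the paper passes over. First, the kernel argument in (\ref{defxoverline}) must be read as $t\Delta - s$; otherwise the regressor means scale like $\exp(-b_k^* t(1-\Delta))$ and the lemma fails for $\Delta \neq 1$. Second, the stationarity of $\overline{N}$ can only sensibly mean stationarity of its increments, i.e.\ shift-invariance of the law of the random measure, since the counting process $\overline{N}_t$ itself has mean growing in $t$.
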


\begin{proof}[Proof of Lemma \ref{lemstationary}]
This statement can be proven by an application of Lemma \ref{lemstatexistence} (i) with the definition of the time series $\overline{z}$.
\end{proof}

To obtain the law of large numbers and further the central limit theorem, we have to impose a restriction on
the dependence or memory of the time series. One such restriction is the
concept of ergodicity. The following definition corresponds exactly to Definition 3.33 (p. 44) in \cite{white2001asymptotic}.

\begin{definition}[ergodic time series]
\label{defergodicity}
We introduce a time series $g=\{g_{t} \}_{t \in \naturels_*}$ which is stationary in the sense of Definition \ref{defstationary} and the measure-preserving transformation $D$ obtained from
Lemma \ref{lemwhite2001prop330}. Then, the time series $g$ is said to be ergodic if for any measured set $F \in \sigalg$ and any measured set $G \in \sigalg$ we have
$$\lim_{T \rightarrow \infty} T^{-1} \sum_{t=1}^T \proba (F \cap D^t G) = \proba (F) \proba  (G).$$
\end{definition}

If the measured sets $F$ and $G$ were independent, then we would have $\proba (F \cap G) = \proba (F) \proba (G)$.
Thus, we can think of the event $D^t G$ as being the event $G$ shifted $t$ periods into the future. Since $\proba (D^t G) = \proba ( G)$ when the transformation $D$ is measure preserving, Definition \ref{defergodicity}
says that an ergodic time series satisfies for any measured sets $F$
and $G$ that the measured sets $F$ and $D^t G$ are independent on average in the limit. Thus, ergodicity
can be thought of as a form of average asymptotic independence. For
more on measure-preserving transformations, stationarity and ergodicity
the reader may consult \cite{doob1953stochastic} (pp. 167-185) and \cite{rosenblatt1978dependence}.

To show that the time series $\overline{z}$ is ergodic in the sense of Definition \ref{defergodicity}, we have to introduce dependence measures and mixing conditions. First, we define the measure of dependence between two $\sigma$-algebras $\mathcal{A} \subset \sigalg$ and $\mathcal{B} \subset \sigalg$ as
\be
\label{defmeasdep}
\alpha(\mathcal{A}, \mathcal{B}) = \sup_{A \in \mathcal{A}, B \in \mathcal{B}} \big| \mathbb{P}(A \cap B) - \mathbb{P}(A) \mathbb{P}(B)\big|.
\ee
Then, we define the dependence coefficient for any nonnegative integer $m \in \naturels$ as
\be
\alpha (m) = \sup_{j \in \naturels_*} \alpha(\mathcal{H}_j^g, \mathcal{H}_{j+m,\infty}^{g}).
\ee
Here, $\mathcal{H}_{j}^{g} = \mathcal{B} \{ g_t \text{ such that } t \in \naturels_* \text{ and } j \geq t\}$ is the forward looking $\sigma$-algebra generated by the time series $g$ until the time $j \in \naturels_*$. Moreover, $\mathcal{H}_{j,\infty}^{g} = \mathcal{B} \{ g_t \text{ such that } t \in \naturels_* \text{ and } j \leq t \}$ is the backward looking $\sigma$-algebra generated by the time series $g$ from the time $j \in \naturels$. The following definition introduces the notion of strongly mixing for time series which is also called $\alpha$-mixing. This definition was introduced in \cite{rosenblatt1956central}.

\begin{definition}[strongly mixing time series]
\label{defstronglymixing}
The time series $g=\{g_{t} \}_{t \in \naturels_*}$
 is said to be strongly mixing or $\alpha$-mixing if the dependence coefficient $\alpha(m) \rightarrow 0$ as the index $m \rightarrow \infty$.
\end{definition}

We define another measure of dependence between the two $\sigma$-algebras $\mathcal{A} \subset \sigalg$ and $\mathcal{B} \subset \sigalg$ as
\begin{equation}
\label{defmeasdeprho}
\rho(\mathcal{A}, \mathcal{B}) = \sup_{f_1 \in L^2(\mathcal{A}), f_2 \in L^2(\mathcal{B})} \big| \cor (f_1,f_2) \big|.
\end{equation}
Here, $L^2(\mathcal{A})$ denotes the space of square integrable random variables of the $\sigma$-algebra $\mathcal{A}$. In addition, $\cor (f_1,f_2)$ denotes the correlation between the two random variables $f_1$ and $f_2$. Then, we define the dependence coefficient for any nonnegative integer $m \in \naturels$ as
\begin{equation*}
\rho (m) = \sup_{j \in \naturels_*} \rho(\mathcal{H}_j^g, \mathcal{H}_{j+m,\infty}^{g}).
\end{equation*}
The following definition introduces the notion of $\rho$-mixing for time series.

\begin{definition}[$\rho$-mixing time series] 
\label{defrhomixing}
The time series $g=\{g_{t} \}_{t \in \naturels_*}$
 is said to be $\rho$-mixing if the dependence coefficient $\rho(m) \rightarrow 0$ as the index $m \rightarrow \infty$.
\end{definition}

The following lemma is a  well-known result which gives a bound between the two measures of dependence $\alpha$ and $\rho$ for any time series.
For a history and proof of this inequality, see Theorem 3.11 in \cite{bradley2007introduction} or Proposition 1 (p. 4) in \cite{doukhan1995mixing}.
\begin{lemma}
\label{lembound}
We have $4 \alpha(\mathcal{A}, \mathcal{B}) \leq \rho(\mathcal{A}, \mathcal{B})$ for any time series $g=\{g_{t} \}_{t \in \naturels_*}$, any $\sigma$-algebra $\mathcal{A} \subset \filt$ and any $\sigma$-algebra $\mathcal{B} \subset \filt$.
\end{lemma}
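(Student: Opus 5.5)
We prove the bound $4\alpha(\mathcal{A},\mathcal{B}) \le \rho(\mathcal{A},\mathcal{B})$ by testing the supremum defining $\rho$ on indicator functions. The time series $g$ plays no role: the inequality holds for arbitrary sub-$\sigma$-algebras $\mathcal{A},\mathcal{B}$, and the sequential coefficients $\alpha(m)\le \rho(m)/4$ follow by taking suprema over $j$.

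Fix $A \in \mathcal{A}$ and $B \in \mathcal{B}$, and set $f_1=\ind_A$ and $f_2=\ind_B$. Both are bounded, hence in $L^2(\mathcal{A})$ and $L^2(\mathcal{B})$. If $\proba(A)\in\{0,1\}$ or $\proba(B)\in\{0,1\}$, then $\proba(A\cap B)-\proba(A)\proba(B)=0$ and there is nothing to prove. Otherwise both variances are positive and we compute
$$\operatorname{Cov}(\ind_A,\ind_B)=\proba(A\cap B)-\proba(A)\proba(B),\qquad \var(\ind_A)=\proba(A)(1-\proba(A))\le \tfrac14,$$
and similarly $\var(\ind_B)\le\tfrac14$. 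Hence
$$|\proba(A\cap B)-\proba(A)\proba(B)| = |\cor(\ind_A,\ind_B)|\,\sqrt{\var(\ind_A)\var(\ind_B)} \le \tfrac14\,\rho(\mathcal{A},\mathcal{B}).$$
Taking the supremum over $A$ and $B$ in Definition (\ref{defmeasdep}) gives $4\alpha(\mathcal{A},\mathcal{B})\le\rho(\mathcal{A},\mathcal{B})$.

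The argument has no real obstacle. The only care needed is the degenerate case of zero variance, where the correlation is undefined but the covariance vanishes anyway, so those pairs $(A,B)$ contribute $0$ to the supremum. The sharp factor $1/4$ comes from the elementary bound $p(1-p)\le 1/4$. Applying this to $\mathcal{A}=\mathcal{H}_j^g$ and $\mathcal{B}=\mathcal{H}_{j+m,\infty}^g$ and taking $\sup_j$ yields $\alpha(m)\le\rho(m)/4$. Consequently, $\rho$-mixing implies strong mixing, which is the use intended in the later proof of ergodicity of $\overline{z}$.
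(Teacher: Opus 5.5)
Your argument is correct. It is the standard proof behind the references the paper cites in place of giving its own (Theorem 3.11 in Bradley and Proposition 1 in Doukhan): test the supremum defining $\rho$ on the indicators $\ind_A,\ind_B$, then use $\var(\ind_A)=\proba(A)(1-\proba(A))\le 1/4$. Your explicit handling of the zero-variance case is worth keeping, since the paper defines $\rho$ through correlations, which are undefined there.
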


By the use of Lemma \ref{lembound}, we can deduce that $\rho$-mixing time series in the sense of Definition \ref{defrhomixing} implies strongly mixing time series in the sense of Definition \ref{defstronglymixing}. 

We now show that the time series $\overline{z}$ is $\rho$-mixing in the sense of Definition \ref{defrhomixing} in the following lemma. The proof extends the arguments from the proof of Lemma A.6 (p. 1834) in \cite{clinet2017statistical}  and the proof of Proposition C1 (i) in Supplement C (p. 56) of \cite{potiron2026mutually}.

\begin{lemma}
\label{lemrhomixing}
We assume that Condition \ref{condh} holds. Then, the time series $\overline{z}$ is $\rho$-mixing in the sense of Definition \ref{defrhomixing}.
\end{lemma}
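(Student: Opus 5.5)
We will prove that $\overline{z}$ is $\rho$-mixing by exploiting the Markov structure of the stationary intensity. Under Condition \ref{condh} \ref{condhhexp}, the vector $\overline{R}_t=(\overline{R}_{1,t},\ldots,\overline{R}_{n_h,t})'$ with $\overline{R}_{k,t}=\int_{-\infty}^{t} \exp(-b_k^*(t-s))\,d\overline{N}_s$ is a strong Markov process. It is piecewise deterministic, with exponential decay between events and jumps of size $(1,\ldots,1)'$ at events. The intensity is $\overline{\lambda}_t=\nu^*+\sum_k \alpha_k^* \overline{R}_{k,t^-}$. For every $t$, $\overline{z}_{t\Delta}$ is a measurable functional of the path of $\overline{N}$ on $[(t-1)\Delta,t\Delta)$. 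Consequently, conditional on $\mathcal{H}^{\overline{z}}_j$, the law of the future block $\{\overline{z}_{t\Delta}\}_{t\geq j+m}$ depends on the past only through the state $\overline{R}_{j\Delta}$. The first step is therefore to reduce the problem to the Markov chain $\{\overline{R}_{t\Delta}\}_{t}$. For $f_1\in L^2(\mathcal{H}^{\overline{z}}_j)$ and $f_2\in L^2(\mathcal{H}^{\overline{z}}_{j+m,\infty})$, conditioning on $\overline{\filt}_{(j+m-1)\Delta}$ gives $\esp[f_2\mid \overline{\filt}_{(j+m-1)\Delta}]=g(\overline{R}_{(j+m-1)\Delta})$ for some $g\in L^2(\pi)$, where $\pi$ is the stationary law. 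Hence
\[
\big|\cor(f_1,f_2)\big|\leq \sup_{g\in L^2_0(\pi),\,\|g\|_{2}=1}\big\|P^{(m-1)\Delta}g\big\|_{2}.
\]
Here $P^s$ denotes the transition semigroup of $\overline{R}$ and $L^2_0(\pi)$ the centered functions. Proving that $\rho(m)\to0$ thus reduces to showing that $\|P^{s}\|_{L^2_0(\pi)\to L^2_0(\pi)}\to0$ as $s\to\infty$.

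The second step is to establish geometric ergodicity of $\overline{R}$. We follow Lemma A.6 in \cite{clinet2017statistical} and Proposition C1 (i) of \cite{potiron2026mutually}. We choose a Lyapunov function $V(r)=\exp(\eta\sum_k c_k r_k)$ with suitably small $\eta>0$ and weights $c_k$ built from $\alpha_k^*/b_k^*$. The drift condition $\mathcal{L}V\leq -\kappa V+C\ind_{K}$ then follows from Condition \ref{condh} \ref{condhstationary}, since $\sum_k\alpha_k^*/b_k^*<\rho_+<1$ makes the linear part of the generator contracting. Next we show that compact sets $K$ are small. From any state in $K$, with probability bounded below there is no event during a fixed time window, so the process decays deterministically towards $0$. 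A Poisson-type lower bound on the intensity, namely $\overline{\lambda}\geq\nu^*>\nu_->0$ by Condition \ref{condh} \ref{condhparameterspace}, then yields a minorization by a common measure. Together these give $V$-uniform geometric ergodicity, $\|P^s(r,\cdot)-\pi\|_V\leq C V(r)e^{-\kappa s}$.

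The third step, which we expect to be the main obstacle, is upgrading $V$-geometric ergodicity to $L^2(\pi)$ decay. $\rho$-mixing requires operator-norm contraction in $L^2(\pi)$, which does not follow automatically from $V$-norm ergodicity. Our first attempt is to use the fact that, for a $V$-geometrically ergodic chain with $\pi(V^2)<\infty$, the semigroup has a spectral gap on $L^2(\pi)$ for reversible chains. Since $\overline{R}$ is not reversible, we will instead prove the gap directly. We use the Cauchy–Schwarz bound $|P^sg(r)|\leq (P^s g^2(r))^{1/2}$ and split the state space into $\{V\leq L\}$ and $\{V>L\}$. On $\{V\leq L\}$, uniform ergodicity on the sublevel set gives decay. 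On $\{V>L\}$, the tail mass $\pi(V>L)$ is small because $\pi$ has exponential moments, which we obtain from the drift condition. Letting $s\to\infty$ and then $L\to\infty$ gives $\sup_g\|P^sg\|_2\to0$, and hence $\rho(m)\to0$. If the splitting argument only yields convergence and not a rate, that is still sufficient for Definition \ref{defrhomixing}. Finally, the moments of $\overline{z}$ entering the covariance bound are controlled by the exponential moments of $\overline{N}$ on bounded intervals. These exponential moments are standard under $\|h\|_1<1$.
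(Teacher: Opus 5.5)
\textbf{There is a genuine gap, and it is in Step 3.} Your Step 1 reduction $\rho(m)\le\|P^{(m-1)\Delta}\|_{L^2_0(\pi)\to L^2_0(\pi)}$ is correct. The drift in Step 2 also goes through if you take $c_k$ slightly above $\alpha_k^*/b_k^*$ with $\sum_k c_k<1$.

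Step 3 is exactly where $\rho$-mixing parts ways with $V$-geometric ergodicity. $\rho(m)$ is an operator norm over the unit ball of $L^2_0(\pi)$, and that ball is not uniformly integrable. $V$-geometric ergodicity, by contrast, controls $P^sg-\pi g$ only for $|g|\le V$. Truncation therefore gives $\|P^sg\|_2\to0$ for each fixed $g$, but the truncation level depends on $g$, so $\sup_g\|P^sg\|_2\to0$ does not follow. Neither half of your splitting is uniform in $g$:
\begin{itemize}
\item \emph{On $\{V\le L\}$.} With probability $p_0(r,s)=\exp\{-\nu^*s-\sum_k\alpha_k^*r_k(1-e^{-b_k^*s})/b_k^*\}>0$ there is no event in $(0,s]$. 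So $P^s(r,\cdot)$ has an atom at $\phi_s(r)=(e^{-b_1^*s}r_1,\ldots,e^{-b_{n_h}^*s}r_{n_h})$, while $\pi$ is atomless. Hence $g\mapsto P^sg(r)$ is unbounded on $L^2(\pi)$ (take normalized spikes near $\phi_s(r)$), and ``uniform ergodicity on the sublevel set'' says nothing about $L^2$ test functions.
\item \emph{On $\{V>L\}$.} Your bound gives $\int_{\{V>L\}}(P^sg)^2\,d\pi\le\int g^2\,d\mu_L$ with $\mu_L=\int_{\{V>L\}}\pi(dr)P^s(r,\cdot)\le\pi$. This is small uniformly in $g$ only if $d\mu_L/d\pi$ is uniformly small, i.e.\ $\sup_y\proba_\pi(V(\overline R_0)>L\mid \overline R_s=y)$ is small. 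The fact that $\pi(V>L)\to0$ does not give this.
\item \emph{Centering.} The inequality $|P^sg|^2\le P^s(g^2)$ throws away the centering, since $P^s(g^2)\to\pi(g^2)=1$.
\end{itemize}
This is not a technicality. Your argument uses only $V$-geometric ergodicity and the tails of $\pi$, so if it were valid every geometrically ergodic chain would be $\rho$-mixing. That is false without reversibility. A stationary $\rho$-mixing Markov chain satisfies the CLT for every $f\in L^2(\pi)$, yet H\"aggstr\"om (2005) constructs a geometrically ergodic chain and an $f\in L^2(\pi)$ for which the CLT fails.

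What is missing is a genuine $L^2(\pi)$ estimate. By submultiplicativity it suffices to find one $s_0$ with $\|P^{s_0}\|_{L^2_0(\pi)}<1$. Your Cauchy--Schwarz idea does handle trajectories with few jumps once combined with stationarity. Set $A_jg(r)=\esp_r[g(\overline R_s)\ind_{E_j}]$, where $E_j$ is the event of exactly $j$ events in $(0,s]$. Since $\int\pi(dr)\,\proba_r(\overline R_s\in\cdot\,,E_j)\le\pi$, you get $\|A_j\|_{L^2(\pi)}\le\sup_r\proba_r(E_j)^{1/2}\le\proba(\mathrm{Poisson}(\nu^*s)\le j)^{1/2}$. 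For example, the no-jump part has norm at most $e^{-\nu^*s/2}$.

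The real work is the part with at least $n_h$ jumps. There you need a compactness or density-ratio bound relative to $\pi$, uniform over starting points including large ones, followed for instance by a quasi-compactness argument plus aperiodicity to exclude peripheral spectrum. What produces a nondegenerate density is randomizing at least $n_h$ jump times with distinct $b_k^*$ (Condition \ref{condh} \ref{condhidentifiability}). The same ingredient is needed for the minorization in your Step 2. Deterministic decay sends distinct starting points to distinct points $\phi_s(r)\neq\phi_s(r')$, so it cannot by itself supply a common minorizing measure.

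Finally, Steps 1--2 carried out in total variation instead of $L^2$ do give exponential $\beta$-mixing of $\overline z$. That yields strong mixing and ergodicity, which is all the paper extracts from this lemma (Lemmas \ref{lemstrongmixing} and \ref{lemergodic}). It is still weaker than the $\rho$-mixing asserted in the statement.
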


\begin{proof}[Proof of Lemma \ref{lemrhomixing}]
This statement can be proven by an extension of the arguments in the proofs of Lemma A.6 (p. 1834) in \cite{clinet2017statistical} and Proposition C1 (i) in Supplement C (p. 56) of \cite{potiron2026mutually}.
\end{proof}

We now prove that the time series $\overline{z}$ is strongly mixing in the sense of Definition \ref{defrhomixing} in the following lemma. The proof is a direct consequence to Lemma \ref{lembound} and Lemma \ref{lemrhomixing}.

\begin{lemma}
\label{lemstrongmixing}
We assume that Condition \ref{condh} holds. Then, the time series $\overline{z}$ is strongly mixing in the sense of Definition \ref{defstronglymixing}.
\end{lemma}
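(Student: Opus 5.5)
The plan is to combine Lemma \ref{lemrhomixing} with the dependence-measure inequality of Lemma \ref{lembound}, applied $\sigma$-algebra by $\sigma$-algebra, and then pass to the suprema defining the mixing coefficients.

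Take the time series $g=\overline{z}$ and fix a lag $m \in \naturels$ and an index $j \in \naturels_*$. Set $\mathcal{A}=\mathcal{H}_j^{\overline{z}}$ and $\mathcal{B}=\mathcal{H}_{j+m,\infty}^{\overline{z}}$. Both are sub-$\sigma$-algebras of $\overline{\sigalg}$, since $\overline{z}$ is built from the stationary process $\overline{N}$ on the extended basis of Lemma \ref{lemstatexistence} (i). Lemma \ref{lembound} then gives
$$4\,\alpha\big(\mathcal{H}_j^{\overline{z}},\mathcal{H}_{j+m,\infty}^{\overline{z}}\big) \leq \rho\big(\mathcal{H}_j^{\overline{z}},\mathcal{H}_{j+m,\infty}^{\overline{z}}\big).$$
The right-hand side is at most $\rho(m)$ for every $j$. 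Taking the supremum over $j \in \naturels_*$ on the left therefore yields $4\alpha(m) \leq \rho(m)$ for every $m \in \naturels$.

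By Lemma \ref{lemrhomixing}, which holds under Condition \ref{condh}, $\overline{z}$ is $\rho$-mixing, so $\rho(m)\rightarrow 0$ as $m\rightarrow\infty$. Since $\alpha(m)\geq 0$ by its definition (\ref{defmeasdep}), the squeeze $0\leq \alpha(m)\leq \rho(m)/4$ gives $\alpha(m)\rightarrow 0$. This is exactly strong mixing in the sense of Definition \ref{defstronglymixing}.

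All of the substance lies in Lemma \ref{lemrhomixing}, which may be assumed, so no step here is a real obstacle. The only care needed is bookkeeping. First, the coefficients $\alpha(m)$ and $\rho(m)$ must be defined with the same $\sigma$-algebras $\mathcal{H}_j^{\overline{z}}$ and $\mathcal{H}_{j+m,\infty}^{\overline{z}}$, so that the pointwise inequality passes to the suprema. Second, Lemma \ref{lembound} must be applied on the extended basis $\overline{\mathbf{B}}$, with $\overline{\proba}$ playing the role of $\proba$. The inequality is purely measure-theoretic, so it holds on any probability space.
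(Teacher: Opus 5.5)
Your proposal is correct and takes the same route as the paper, which deduces the lemma directly from Lemma \ref{lembound} and Lemma \ref{lemrhomixing}. You additionally spell out the step the paper leaves implicit: passing the pointwise bound to the suprema to get $4\alpha(m) \leq \rho(m)$, then squeezing $\alpha(m) \to 0$.
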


\begin{proof}[Proof of Lemma \ref{lemstrongmixing}]
The proof is a direct consequence to Lemma \ref{lembound} and Lemma \ref{lemrhomixing}.
\end{proof}

The notion of mixing is a stronger memory requirement than the notion of
ergodicity for stationary time series. Given stationarity, mixing implies
ergodicity as the next result makes precise. The following lemma corresponds exactly to Proposition 3.44 (p. 48) in \cite{white2001asymptotic}. The proof can be obtained by \cite{rosenblatt1978dependence}. 

\begin{lemma}
\label{lemwhite2001prop344}
We introduce the time series $g=\{g_{t} \}_{t \in \naturels_*}$ which is stationary in the sense of Definition \ref{defstationary}. We assume that the time series $g$
 is strongly mixing in the sense of Definition \ref{defstronglymixing}, namely the correlation coefficient $\alpha (m) \rightarrow 0$ as the index $m \rightarrow \infty$. Then, the time series $g$ is ergodic in the sense of Definition \ref{defergodicity}. 
\end{lemma}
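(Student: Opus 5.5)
The plan is to verify Definition \ref{defergodicity} directly from the mixing property. Let $D$ be the measure-preserving transformation given by Lemma \ref{lemwhite2001prop330}, so that $g_t = g_1 \circ D^{t-1}$. We may restrict to the $\sigma$-algebra $\mathcal{H}_{1,\infty}^g$ generated by the time series, since it is on this $\sigma$-algebra that $D$ acts as the shift. The aim is then to show, for any two measured sets $F, G$, that $\proba(F \cap D^t G) - \proba(F)\proba(G) \rightarrow 0$ as $t \rightarrow \infty$. This is the stronger mixing property of $D$, and Cesàro averaging then gives the ergodicity limit immediately.

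\textbf{Step 1 (cylinder sets).} Suppose first that $F \in \mathcal{H}_j^g$ and $G \in \mathcal{H}_{1,k}$ for finite indices $j,k$, where $\mathcal{H}_{1,k}$ is generated by $g_1,\ldots,g_k$. Then $D^{-t}G$ (the sense in which $D^t G$ is to be read for a measure-preserving map) lies in the $\sigma$-algebra generated by $g_{1+t},\ldots,g_{k+t}$. This $\sigma$-algebra is contained in $\mathcal{H}_{j+m,\infty}^g$ with $m = t+1-j$. By (\ref{defmeasdep}) we get $|\proba(F\cap D^{-t}G) - \proba(F)\proba(D^{-t}G)| \le \alpha(t+1-j)$. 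Since $D$ is measure preserving, $\proba(D^{-t}G)=\proba(G)$, and strong mixing (Definition \ref{defstronglymixing}) gives convergence to $0$.

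\textbf{Step 2 (approximation).} General $F, G$ in $\mathcal{H}_{1,\infty}^g$ are handled by approximation. The union of the finite-dimensional $\sigma$-algebras $\bigcup_j \mathcal{H}_j^g$ is an algebra generating $\mathcal{H}_{1,\infty}^g$. So for any $\varepsilon>0$ there exist cylinder sets $F_\varepsilon, G_\varepsilon$ with $\proba(F \triangle F_\varepsilon)<\varepsilon$ and $\proba(G\triangle G_\varepsilon)<\varepsilon$. Measure preservation gives $\proba(D^{-t}G \triangle D^{-t}G_\varepsilon) = \proba(G\triangle G_\varepsilon) < \varepsilon$ uniformly in $t$. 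Hence $|\proba(F\cap D^{-t}G) - \proba(F_\varepsilon\cap D^{-t}G_\varepsilon)| < 2\varepsilon$, and similarly the product of probabilities changes by less than $2\varepsilon$. Taking $\limsup_t$ and then $\varepsilon \to 0$ yields the mixing limit. Averaging over $t=1,\ldots,T$ gives the limit in Definition \ref{defergodicity}.

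The main obstacle is the bookkeeping between Definition \ref{defergodicity}, which is stated for arbitrary sets in $\sigalg$, and the fact that mixing only controls the $\sigma$-algebra generated by $g$. I would resolve this as White and Rosenblatt do, by working on the canonical sequence space, where $D$ is the shift and $\sigalg$ is generated by the coordinates; Lemma \ref{lemwhite2001prop330} allows this representation. The second subtlety is the direction convention for $D^t G$ versus $D^{-t}G$, which does not affect the argument because $D$ is measure preserving. Alternatively, the whole statement may simply be cited from \cite{rosenblatt1978dependence}, as in Proposition 3.44 of \cite{white2001asymptotic}.
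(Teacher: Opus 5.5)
Your argument is correct, but it does not follow the paper. The paper gives no argument for this lemma and simply defers to \cite{rosenblatt1978dependence}, as White does for his Proposition 3.44.

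You instead prove a stronger statement: the shift $D$ is mixing in the ergodic-theoretic sense, $\proba(F\cap D^{-t}G)\rightarrow\proba(F)\proba(G)$. For cylinder sets you bound the discrepancy by $\alpha(t+1-j)$ directly from (\ref{defmeasdep}). You extend this to all of $\mathcal{H}^g_{1,\infty}$ by approximating with sets from the generating algebra, and measure preservation makes that approximation uniform in $t$. The Cesàro limit in Definition \ref{defergodicity} then follows at once.

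Your route is more transparent about hypotheses. It uses only $\alpha(m)\rightarrow 0$, with no rate. It also makes explicit two conventions the statement leaves implicit.

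First, $D^tG$ must be read as the preimage $D^{-t}G$. For an invertible two-sided shift the two readings agree in effect, since $\proba(F\cap D^tG)=\proba(D^{-t}F\cap G)$ and $\alpha(\cdot,\cdot)$ is symmetric. For a one-sided shift the image need not even be measurable.

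Second, $F$ and $G$ must be restricted to the $\sigma$-algebra generated by $g$, i.e.\ one works on the canonical sequence space. This restriction is genuinely needed. If $\sigalg$ carried randomness independent of $g$ on which $D$ acts as the identity, $D$ would still satisfy Lemma \ref{lemwhite2001prop330}. Yet Definition \ref{defergodicity} would fail for sets depending only on that extra randomness, even for i.i.d.\ $g$. So the lemma, taken literally for arbitrary $F,G\in\sigalg$, requires exactly the restriction you impose.

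The citation is shorter but inherits these conventions silently from the cited source.
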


\begin{proof}
[Proof of Lemma \ref{lemwhite2001prop344}]
This can be obtained by \cite{rosenblatt1978dependence}. 
\end{proof} 

As we have that $\rho$-mixing implies strong mixing by Lemma \ref{lembound}, $\rho$-mixing time series are also ergodic. However, ergodic time series are not necessarily
mixing. Mixing is defined for time series that
are not necessarily strictly stationary, so mixing is more general in this
sense. For more on mixing and ergodicity, see \cite{rosenblatt1972uniform} and \cite{rosenblatt1978dependence}.

The next lemma shows that the time series $\overline{z}$ is stationary and ergodic. This corresponds to Condition (ii) from Theorem 5.17 in \cite{white2001asymptotic}.

\begin{lemma}
\label{lemergodic}
We assume that Condition \ref{condh} holds. Then, the time series $\overline{z}$ is stationary and ergodic in the sense of Definition \ref{defergodicity}.
\end{lemma}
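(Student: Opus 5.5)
The proof chains the preceding lemmas; no new probabilistic estimate should be needed. The statement has two parts, stationarity and ergodicity of $\overline{z}$, and I will establish them in that order.

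\emph{Stationarity.} Lemma \ref{lemstationary} already gives, under Condition \ref{condh}, that $\overline{z}$ is stationary in the sense of Definition \ref{defstationary}. The reason is that each $\overline{z}_{t\Delta}=(\overline{x}_{t\Delta}',\overline{u}_{t\Delta})'$ is a fixed measurable functional of the restriction of the stationary process $\overline{N}$ from Lemma \ref{lemstatexistence} (i) to the window $[(t-1)\Delta,t\Delta)$. This functional is the same for every $t$ once time is shifted by $(t-1)\Delta$, since $\overline{u}_{t\Delta}$ is built from $\overline{y}_{t\Delta}$ and $\overline{x}_{t\Delta}$ with the deterministic vector $\theta^*$. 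Consequently, a shift of the index by $\tau$ corresponds to a time shift of $\overline{N}$ by $\tau\Delta$, and this shift leaves all finite-dimensional distributions unchanged by Definition \ref{defstationarycontinuous}. One point needs care: the kernel $h_k(t-s)$ in the definition of $\overline{x}_{t\Delta}$ must be read with the in-window lag, i.e.\ as $h_k(t\Delta-s)$, or as the recursive state variable of Section \ref{secimplementation}. Otherwise the functional depends on $t$ explicitly and the shift argument fails.

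\emph{Ergodicity.} Lemma \ref{lemwhite2001prop330} supplies the measure-preserving transformation $D$ associated with $\overline{z}$. Lemma \ref{lemrhomixing} gives that $\overline{z}$ is $\rho$-mixing, and by Lemma \ref{lembound} we have $\alpha(m)\le\rho(m)/4\to 0$, so $\overline{z}$ is strongly mixing; this is Lemma \ref{lemstrongmixing}. Since $\overline{z}$ is also stationary, Lemma \ref{lemwhite2001prop344} applies and yields ergodicity in the sense of Definition \ref{defergodicity}. This establishes the lemma.

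\emph{Main obstacle.} The weight of the argument is carried by the $\rho$-mixing input, Lemma \ref{lemrhomixing}, and I take it as given here. Within this proof, the only real issue is the measurability and shift-invariance check in the stationarity step. I would handle it by writing $\overline{z}_{t\Delta}=\Phi(\vartheta_{(t-1)\Delta}\overline{N})$, where $\vartheta_s$ is the time-shift operator on point configurations and $\Phi$ is a single Borel map. Stationarity of $\{\vartheta_{(t-1)\Delta}\overline{N}\}_t$ then transfers directly to $\overline{z}$.
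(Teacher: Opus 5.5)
Your proof is correct and follows the paper's route: stationarity comes from Lemma \ref{lemstationary}, strong mixing from Lemma \ref{lemstrongmixing} (that is, Lemmas \ref{lemrhomixing} and \ref{lembound}), and ergodicity from Lemma \ref{lemwhite2001prop344}. Your remark that the lag in $\overline{x}_{t\Delta}$ must be read as $t\Delta-s$ for the shift argument behind Lemma \ref{lemstationary} correctly identifies a notational slip in the paper; note, however, that the alternative recursive-state reading, which starts from $h_{1,k}=0$, is not shift-invariant unless the sum runs over the infinite past of a two-sided stationary version.
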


\begin{proof}[Proof of Lemma \ref{lemergodic}]
First, we have that the time series $\overline{z}$ is strongly mixing in the sense of Definition \ref{defstronglymixing} by an application of Lemma \ref{lemstrongmixing}. We also have that the time series $\overline{z}$ is stationary in the sense of Definition \ref{defstationary} by Lemma \ref{lemstationary}. Thus, we can deduce that the time series $\overline{z}$ is ergodic in the sense of Definition \ref{defergodicity} by an application of Lemma \ref{lemwhite2001prop344}.
\end{proof}

We introduce $\esp[X | \sigalg]$ which denotes the conditional expectation of the random variable $X$ with respect to the $\sigma$-algebra $\sigalg$. Then, the conditional expectation $\esp[X | \sigalg]$ is a random variable which is measurable with respect to the $\sigma$-algebra $\filt$. Moreover, we introduce the big O which is defined through
$$f(t) = O\big(g(t)\big) \iff  |f(t)| \leq C |g(t)|$$ 
for any time $t \in \naturels_*$ and some constant $C \in \reels^+$ which does not depend on the time $t$. 

In order to give a convenient statement of an ergodic central limit theorem, we introduce the notion of an adapted mixingale time series. This corresponds exactly to Definition 5.15 (pp. 124-125) in \cite{white2001asymptotic}.

\begin{definition}[mixingale]
\label{defmixingale}
We introduce the time series $g=\{g_{t} \}_{t \in \naturels_*}$ which is adapted to the filtration $\mathcal{H}=\{\mathcal{H}_{t} \}_{t \in \naturels_*}$ and satisfies $\esp [g_t^2]
< \infty$ for any time $t \in \naturels_*$. Then, the time series $g$ is a mixingale adapted to the filtration $\mathcal{H}$ if there exist finite nonnegative sequences $\{ c_t \}_{t \in \naturels_*}$ and $\gamma = \{ \gamma_m \}_{m \in \naturels_*}$ such that $\gamma_m \rightarrow 0$ as $m \rightarrow \infty$ and
$$\big(\esp \big[ \esp [ g_t | \mathcal{H}_{t-m} ]^2 \big] \big)^{1/2} \leq c_t \gamma_m.$$
We say that $\gamma$ is of size $-a$ if $\gamma_m = O(m^{- a - \epsilon})$ for some positive real number $\epsilon \in \reels^+_*$.
\end{definition}

The notion of a mixingale is due to \cite{mcleish1974dependent}. In this
definition, the time series $g$ need not be stationary or ergodic, but may be heterogeneous. In the mixingale definition of \cite{mcleish1974dependent}, the time series
$g$ need not be adapted to the filtration $\mathcal{H}$. Nevertheless, we impose this because it simplifies matters nicely and
is sufficient for all our applications. As the name is intended to suggest,
mixingale time series have attributes of both mixing time series and martingale difference time series. They can be thought of as time series that behave
asymptotically like martingale difference time series. 

We now show in the following lemma that the time series $\{ \overline{x}^{(i)}_{t\Delta}\overline{u}_{t\Delta} \}_{t \in \naturels_*}$ is a mixingale adapted to the filtration $\mathcal{H}=\{\filt_{t\Delta} \}_{t \in \naturels_*}$ of size $-1$ for any index $i = 1, \ldots ,n$. Here, $v^{(i)}$ denotes the $i$th component of the vector $v$. This corresponds to Condition (iii) (a) from Theorem 5.17 in \cite{white2001asymptotic} in the particular case $p=1$. The proof extends the arguments from the proof of Theorem 4.6 (p. 1821) in \cite{clinet2017statistical}  and the proof of Theorem 1 (p. 331) in \cite{potiron2026mutually}.

\begin{lemma}
\label{lemmixingale}
We assume that Condition \ref{condh} holds. Then, the time series $\{ \overline{x}^{(i)}_{t\Delta}\overline{u}_{t\Delta} \}_{t \in \naturels_*}$ is a mixingale adapted to the filtration $\mathcal{H}=\{\filt_{t\Delta} \}_{t \in \naturels_*}$ of size $-1$ for any index $i = 1, \ldots ,n$ in the sense of Definition \ref{defmixingale}.
\end{lemma}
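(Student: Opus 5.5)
The plan is to verify Definition \ref{defmixingale} directly for $g_t=\overline{x}^{(i)}_{t\Delta}\overline{u}_{t\Delta}$, with $c_t$ constant and $\gamma_m$ decaying geometrically in $m$. A geometric rate is trivially $O(m^{-1-\epsilon})$, so size $-1$ follows. The argument exploits the Markov structure of the exponential kernel from Condition \ref{condh} \ref{condhhexp}, as in Theorem 4.6 of \cite{clinet2017statistical} and Theorem 1 of \cite{potiron2026mutually}.

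\textbf{Step 1: adaptedness and moments.} Since $\overline{x}_{t\Delta}$ and $\overline{u}_{t\Delta}$ are built from $\overline{N}$ on $[(t-1)\Delta,t\Delta)$, $g_t$ is adapted to $\{\filt_{t\Delta}\}$. We have $|g_t|\le C(1+\overline{y}_{t\Delta})^2$, because each regressor $\int h_k\,d\overline{N}$ is bounded by $\overline{y}_{t\Delta}$ when $h_k\le 1$. By stationarity, it therefore suffices that $\esp[\overline{y}_{\Delta}^4]<\infty$. This follows from finiteness of all moments of the stationary intensity $\overline{\lambda}$ under $\|h\|_1<1$ (Condition \ref{condh} \ref{condhstationary}), via the cluster representation or the moment bounds used in \cite{clinet2017statistical}.

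\textbf{Step 2: Markov reduction.} Write $\overline{\lambda}_s=\nu^*+\sum_k\alpha^*_k\overline{\lambda}^{(k)}_s$ with $\overline{\lambda}^{(k)}_s=\int_0^{s}e^{-b^*_k(s-r)}d\overline{N}_r$. The state vector $Z_s=(\overline{\lambda}^{(1)}_s,\ldots,\overline{\lambda}^{(n_h)}_s)$ is a piecewise-deterministic Markov process. Hence for $m\ge1$,
$$\esp[g_t\mid\filt_{(t-m)\Delta}]=\esp\big[\esp[g_t\mid\filt_{(t-1)\Delta}]\mid\filt_{(t-m)\Delta}\big].$$
The inner conditional expectation $\phi(Z_{(t-1)\Delta})$ is a polynomial of degree at most two in $Z_{(t-1)\Delta}$. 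This holds because $g_t$ is quadratic in the counting measure on one interval, and the conditional first and second factorial moments of a Hawkes process given a linear intensity state are affine and quadratic in that state. These are obtained by solving linear ODEs from the generator (Dynkin's formula).

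\textbf{Step 3: geometric decay.} I apply the generator to monomials of degree at most $2$ in $Z$. The resulting system is linear and closed, $\frac{d}{du}\esp[P(Z_{s+u})\mid\filt_s]=\mathcal{L}P$. On the degree-$1$ block its matrix has entries $-b^*_k\delta_{kl}+\alpha^*_l$. Condition \ref{condh} \ref{condhstationary}, $\sum_k\alpha_k^*/b_k^*<1$, makes this matrix Hurwitz. The degree-$2$ block is likewise Hurwitz: its eigenvalues are sums of pairs of first-order ones plus a lower-triangular coupling. It follows that
$$\big|\esp[\phi(Z_{(t-1)\Delta})\mid\filt_{(t-m)\Delta}]-\esp[g_t]\big|\le K(1+|Z_{(t-m)\Delta}|^2)e^{-q(m-1)\Delta}.$$
Taking $L^2$ norms and using the fourth moments from Step 1 gives $c_t\gamma_m$ with $\gamma_m=e^{-qm\Delta}$.

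\textbf{Main obstacle.} The mixingale inequality requires $\esp[g_t]=0$, that is, population orthogonality $\esp[\overline{x}_{t\Delta}\overline{u}_{t\Delta}]=0$. This is not automatic, because the regressors and the response share the same interval, so $\overline{u}$ is not a martingale difference. I would first attempt it through the compensator: write $\overline{y}_{t\Delta}-\int\overline{\lambda}\,ds$ as a martingale increment, and check that $\int_{(t-1)\Delta}^{t\Delta}\overline{\lambda}_s\,ds$ is matched by $\overline{x}'_{t\Delta}\theta^*$ up to terms orthogonal to $\overline{x}_{t\Delta}$. If exact orthogonality fails, I would redefine the target as the pseudo-true projection coefficient, which makes $\esp[g_t]=0$ by construction, and carry this through Lemma \ref{lemstat}. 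The second difficulty is the bookkeeping of the second-order conditional moments in Step 2, which I would handle by explicitly solving the closed ODE system for the quadratic monomials.
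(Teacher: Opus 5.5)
\textbf{The gap is the step you flag as the main obstacle, and neither of your two fallbacks closes it.} The orthogonality $\esp[\overline{x}_{t\Delta}\overline{u}_{t\Delta}]=0$ is false in general, so no argument can prove the lemma as stated.

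Take $i=1$. Then $\overline{x}^{(1)}_{t\Delta}=1$ and $g_t=\overline{u}_{t\Delta}$. Let $\bar\lambda=\nu^*/(1-\sum_k\alpha^*_k/b^*_k)$ be the stationary mean intensity, and read $h_k(t-s)$ in (\ref{defxoverline}) as $h_k(t\Delta-s)$, which stationarity requires. Then
\[
\esp[\overline{u}_{t\Delta}]=\bar\lambda\Delta-\nu^*-\bar\lambda\sum_{k=1}^{n_h}\frac{\alpha^*_k}{b^*_k}\big(1-e^{-b^*_k\Delta}\big)=\bar\lambda\Big(\Delta-1+\sum_{k=1}^{n_h}\frac{\alpha^*_k}{b^*_k}e^{-b^*_k\Delta}\Big).
\]
This is nonzero for generic $\Delta$. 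For instance, it tends to $\bar\lambda(\|h\|_1-1)<0$ as $\Delta\to0$, which is the regime used in Section \ref{secimplementation}. By Jensen, $\esp[\esp[g_t\mid\mathcal{H}_{t-m}]^2]\ge(\esp[g_t])^2>0$ for every $m$. Hence no bound $c_t\gamma_m$ with $\gamma_m\to0$ can exist.

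Your compensator-matching attempt therefore fails, for three concrete reasons:
\begin{enumerate}
\item The intercept enters $\overline{x}'_{t\Delta}\theta^*$ as $\nu^*$ rather than $\nu^*\Delta$.
\item $\int_{(t-1)\Delta}^{t\Delta}\overline{\lambda}_s\,ds$ carries excitation from all events before $(t-1)\Delta$, which $\overline{x}_{t\Delta}$ ignores.
\item An event at $s$ inside the interval contributes $\alpha^*_k(1-e^{-b^*_k(t\Delta-s)})/b^*_k$ to the compensator, but $\alpha^*_k e^{-b^*_k(t\Delta-s)}$ to $\overline{x}'_{t\Delta}\theta^*$.
\end{enumerate}

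\textbf{Your pseudo-true fallback proves a different lemma.} Steps 1--3 are sound: they correctly show that $g_t-\esp[g_t]$ is a geometrically decaying mixingale, and your Hurwitz argument for $-\mathrm{diag}(b^*)+\mathbf{1}\alpha^{*\prime}$ is right. But the fallback redefines $\overline{u}$ with $\theta^\dagger=\overline{M}^{-1}\esp[\overline{x}_{t\Delta}\overline{y}_{t\Delta}]\neq\theta^*$. Carrying this through Lemma \ref{lemstat} yields a central limit theorem centred at $\theta^\dagger$, not Theorem \ref{thclt}.

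The paper's own proof does not address this point either. It only points to the maximum likelihood arguments of \cite{clinet2017statistical} and \cite{potiron2026mutually}. There, the score is a stochastic integral against $d\overline{N}_s-\overline{\lambda}_s\,ds$ and so is centred automatically, whereas $\overline{x}_{t\Delta}\overline{u}_{t\Delta}$ has no such structure.

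If you want a regression whose error is genuinely centred, your Step 2 already points the way. Since $\esp[\overline{y}_{t\Delta}\mid\filt_{(t-1)\Delta}]$ is affine in the state $Z_{(t-1)\Delta}$, regressing on $(1,Z_{(t-1)\Delta}')'$ gives a martingale-difference error. The catch is that its coefficients are nonlinear functions of $(\nu^*,\alpha^*,b^*)$, not $\theta^*$ itself.
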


\begin{proof}[Proof of Lemma \ref{lemmixingale}]
This lemma can be proven with an extension of the arguments in the proof of Theorem 4.6 (p. 1821) in \cite{clinet2017statistical}  and the proof of Theorem 1 (p. 331) in \cite{potiron2026mutually}.
\end{proof}

If $x$ is a real number, a vector, a matrix or a tensor, we define the sum of the absolute values of its components as $|x| = \sum_i |x_i|$. We now show in the following lemma that $\esp \big[ | \overline{x}_{t\Delta}^{(i)} \overline{u}_{t\Delta} |^2 \big] < \infty$ for any index $i = 1, \ldots ,n$ and any time index $t \in \naturels_*$. This corresponds to Condition (iii) (b) from Theorem 5.17 in \cite{white2001asymptotic} in the particular case $p=1$.

\begin{lemma}
\label{lemmoment}
We assume that Condition \ref{condh} holds. Then, we have for any index $i = 1, \ldots ,n$ and any time index $t \in \naturels_*$ that 
$$\esp \big[ | \overline{x}_{t\Delta}^{(i)} \overline{u}_{t\Delta} |^2 \big] < \infty.$$
\end{lemma}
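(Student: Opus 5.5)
We will reduce the claim to moment bounds on the stationary Hawkes process $\overline{N}$. By Cauchy--Schwarz,
$$\esp\big[|\overline{x}_{t\Delta}^{(i)}\overline{u}_{t\Delta}|^2\big]\le \big(\esp[|\overline{x}_{t\Delta}^{(i)}|^4]\big)^{1/2}\big(\esp[|\overline{u}_{t\Delta}|^4]\big)^{1/2}.$$
It therefore suffices to show that the fourth moments of every component of $\overline{x}_{t\Delta}$ and of $\overline{u}_{t\Delta}$ are finite. By stationarity (Lemma \ref{lemstationary}) these moments do not depend on $t$, so a single bound for one generic interval $[(t-1)\Delta,t\Delta)$ will do.

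Next we dominate every term by the number of events in one window. By Condition \ref{condh} \ref{condhhexp}, each $h_k$ takes values in $(0,1]$. Hence $0\le \overline{x}_{t\Delta}^{(i)}\le 1+\overline{N}([(t-1)\Delta,t\Delta))$ for every $i$, and $0\le\overline{y}_{t\Delta}\le \overline{N}([(t-1)\Delta,t\Delta))$. By Condition \ref{condh} \ref{condhparameterspace}, the components of $\theta^*$ are bounded by $\max(\nu_+,\alpha_+)$. Writing $\overline{u}_{t\Delta}=\overline{y}_{t\Delta}-\overline{x}_{t\Delta}'\theta^*$, we get
$$|\overline{u}_{t\Delta}|\le C\big(1+\overline{N}([(t-1)\Delta,t\Delta))\big)$$
for a constant $C$ depending only on $n$, $\nu_+$ and $\alpha_+$. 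Everything therefore reduces to showing $\esp[\overline{N}([0,\Delta])^4]<\infty$ for the stationary process.

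The main obstacle is this fourth moment bound, and we would prove it in two steps.

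\textbf{Step 1: moments of the intensity.} We first show $\sup_t\esp[\overline{\lambda}_t^4]<\infty$. Since the kernel is a finite sum of exponentials, $\overline{\lambda}_t=\nu^*+\sum_k\alpha_k^*\overline{R}_{k,t}$, where each $\overline{R}_{k,t}$ solves $d\overline{R}_{k,t}=-b_k^*\overline{R}_{k,t}dt+d\overline{N}_t$. Applying Itô's formula for jump processes to a polynomial such as $(\sum_k c_k\overline{R}_{k,t})^p$, for $p=1,\dots,4$ and suitable weights $c_k$, gives a linear system of ODEs for the moments. The stationarity condition $\sum_k\alpha_k^*/b_k^*<\rho_+<1$ of Condition \ref{condh} \ref{condhstationary} makes the leading drift negative, so stationary moments of every order are finite. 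This is the approach of Proposition 4.4 in \cite{clinet2017statistical}, where all moments of the stationary exponential Hawkes intensity are shown to be finite, and we would extend it to the multi-exponential Markov state $(\overline{R}_{1,t},\dots,\overline{R}_{n_h,t})$. An alternative route is the cluster representation of \cite{hawkes1974cluster}: the count in a bounded window is dominated by the sum, over immigrants arriving in a finite-mean Poisson number, of the total sizes of Galton--Watson clusters with Poisson offspring of mean $\|h\|_1<1$. Such subcritical cluster sizes have exponential moments, which again gives finite fourth moments.

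\textbf{Step 2: moments of the count.} We then bound $\esp[\overline{N}([0,\Delta])^4]$ in terms of $\esp[(\int_0^\Delta\overline{\lambda}_sds)^4]$. The compensated process $\overline{A}$ of Lemma \ref{lemstatexistence} (iii) is a martingale, so the Burkholder--Davis--Gundy inequality applies. Its quadratic variation equals $\overline{N}$ itself, and iterating the argument across moment orders (or using the standard inequality $\esp[\overline{N}_\Delta^p]\le C_p\,\esp[(\int_0^\Delta\overline{\lambda}_s ds)^p+\int_0^\Delta\overline{\lambda}_sds]$) yields the bound. Jensen's inequality then controls $\esp[(\int_0^\Delta\overline{\lambda}_sds)^4]\le\Delta^3\int_0^\Delta\esp[\overline{\lambda}_s^4]ds<\infty$ using Step 1.

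Combining the two steps with the domination of $\overline{x}_{t\Delta}^{(i)}$ and $\overline{u}_{t\Delta}$ by $1+\overline{N}([(t-1)\Delta,t\Delta))$ and the Cauchy--Schwarz bound completes the proof.
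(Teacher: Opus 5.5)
Your proposal is correct and follows the paper's route. Both arguments use Cauchy--Schwarz to reduce to fourth moments of $\overline{x}_{t\Delta}^{(i)}$ and $\overline{u}_{t\Delta}$, dominate the exponential-kernel integrals and $\theta^*$ via Condition \ref{condh} \ref{condhparameterspace}--\ref{condhhexp}, and finish with a martingale argument built on Lemma \ref{lemstatexistence} (iii). Your Steps 1--2 also make explicit what the paper's one-line appeal to ``the martingale definition'' leaves unstated: the Burkholder--Davis--Gundy step on $\overline{A}$ only controls $\esp[\overline{N}([0,\Delta])^4]$ once one knows $\sup_t\esp[\overline{\lambda}_t^4]<\infty$, which follows from the Markov/Lyapunov argument of \cite{clinet2017statistical} under $\|h\|_1<1$.
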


\begin{proof}[Proof of Lemma \ref{lemmoment}]

First, we obtain by Cauchy-Schwarz inequality for any index $i = 1, \ldots ,n$ and any time index $t \in \naturels_*$ that
\be
\esp \big[ | \overline{x}_{t\Delta}^{(i)} \overline{u}_{t\Delta} |^2 \big] & \leq & \sqrt{\esp \big[ | \overline{x}_{t\Delta}^{(i)}  |^4 \big] \esp \big[ | \overline{u}_{t\Delta} |^4 \big]}.
\ee
To complete the proof of Lemma \ref{lemmoment}, it remains to show for any index $i = 1, \ldots ,n$ and any time index $t \in \naturels_*$ that
\be
\label{lemmomentp1}
\esp \big[ | \overline{x}_{t\Delta}^{(i)}  |^4 \big] < \infty.
\ee
and for any time index $t \in \naturels_*$ that
\be
\label{lemmomentp2}
\esp \big[ | \overline{u}_{t\Delta} |^4 \big] < \infty.
\ee

We first prove Expression (\ref{lemmomentp1}). First, we have by Definition (\ref{defx}) for any index $i = 1, \ldots ,n$ and any time index $t \in \naturels_*$ that
\be
\label{lQ2016A1iip1} \esp[|\overline{x}_{t\Delta}^{(i)} |^{4}] & = & \esp \Big[ \Big| \Big(1, \int_{(t-1)\Delta}^{t\Delta} h_1(t-s) d\overline{N}_s, \ldots, \int_{(t-1)\Delta}^{t\Delta} h_{n-1}(t-s) d\overline{N}_s \Big)^{(i)} \Big|^{4} \Big].
\ee
Then, we get by Condition \ref{condh} \ref{condhhexp} for any index $i = 1, \ldots ,n$ and any time index $t \in \naturels_*$ that
\be
\label{lQ2016A1iip2}
&& \esp \Big[ \Big| \Big(1, \int_{(t-1)\Delta}^{t\Delta} h_1(t-s) d\overline{N}_s, \ldots, \int_{(t-1)\Delta}^{t\Delta} h_{n-1}(t-s) d\overline{N}_s \Big)^{(i)}  \Big|^{4} \Big]\\ \nonumber & = & \esp \Big[ \Big|  \Big(1, \int_{(t-1)\Delta}^{t\Delta} \exp \big(-b_1(t-s) \big) d\overline{N}_s, \ldots, \int_{(t-1)\Delta}^{t\Delta} \exp \big(-b_{n-1}(t-s) \big) d\overline{N}_s \Big)^{(i)}  \Big|^{4} \Big].
\ee
In addition, we introduce $b_- = \min (b_1, \ldots, b_{n-1})$ and we obtain for any index $i = 1, \ldots ,n$ and any time index $t \in \naturels_*$ that
\be
\label{lQ2016A1iip4}
&& \esp \Big[ \Big| \Big(1, \int_{(t-1)\Delta}^{t\Delta} \exp \big(-b_1(t-s) \big) d\overline{N}_s, \ldots, \int_{(t-1)\Delta}^{t\Delta} \exp \big(-b_{n-1}(t-s) \big) d\overline{N}_s \Big)^{(i)} \Big|^{4} \Big]\\ \nonumber & \leq & \esp \Big[ \Big|  \Big(1, \int_{(t-1)\Delta}^{t\Delta} \exp \big(-b_-(t-s) \big) d\overline{N}_s, \ldots, \int_{(t-1)\Delta}^{t\Delta} \exp \big(-b_{-}(t-s) \big) d\overline{N}_s \Big)^{(i)} \Big|^{4} \Big].
\ee
Finally, Expression (\ref{lemmomentp1}) can be shown with the martingale definition from Lemma \ref{lemstatexistence} (iii). 

We prove now Expression (\ref{lemmomentp2}). First, we have by Definition (\ref{defuoverline}) for any time index $t \in \naturels_*$ that
\be
\label{lQ2016A1iip21}
\esp \big[ | \overline{u}_{t\Delta} |^4 \big] & = & \esp[|\overline{N}_{(t\Delta)^-} - \overline{N}_{(t-1)\Delta} - \theta^* \overline{x}_{t\Delta} |^{4}].
\ee
Then, we obtain by the triangle inequality for any time index $t \in \naturels_*$ that
\be
\label{lQ2016A1iip22}
\esp[|\overline{N}_{(t\Delta)^-} - \overline{N}_{(t-1)\Delta} - \theta^* \overline{x}_{t\Delta} |^{4}] & \leq &  \esp[(|\overline{N}_{(t\Delta)^-} -\overline{N}_{(t-1)\Delta}| + | \overline{x}'_{t\Delta}\theta^* |)^{4}].
\ee
Moreover, we get by Condition \ref{condh} \ref{condhparameterspace} that there exists a real positive number $\theta^+ \in \reels_*^+$ for any time index $t \in \naturels_*$ satisfying
\be
\label{lQ2016A1iip23}
\esp[(|\overline{N}_{(t\Delta)^-} -\overline{N}_{(t-1)\Delta}| + | \theta^* \overline{x}_{t\Delta} |)^{4}] & \leq & \esp[(|\overline{N}_{(t\Delta)^-} -\overline{N}_{(t-1)\Delta}| + \theta^+ |  \overline{x}_{t\Delta} |)^{4}].
\ee
Finally, Expression (\ref{lemmomentp2}) can be shown with the martingale definition from Lemma \ref{lemstatexistence} (iii) and Expression (\ref{lemmomentp1}).
\end{proof}

We now show in the following lemma that the variance matrix $\overline{V}_{T\Delta}$ is positive definite uniformly over the time index $T \in \naturels_*$. This corresponds to Condition (iii) (c) from Theorem 5.17 in \cite{white2001asymptotic}.

\begin{lemma}
\label{lemmatrix}
We assume that Condition \ref{condh} holds. Then, we have that the variance matrix $\overline{V}_{T\Delta}$ is positive definite uniformly over the time index $T \in \naturels_*$.
\end{lemma}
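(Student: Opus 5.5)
The plan is to work with the quadratic form $v'\overline{V}_{T\Delta}v$ for a fixed unit vector $v\in\reels^n$ and to show that it is bounded below by a constant $c>0$ that depends on neither $v$ nor $T$. Since $v'\overline{V}_{T\Delta}v=\var\big(T^{-1/2}\sum_{t=1}^T w_t\big)$ with $w_t=v'\overline{x}_{t\Delta}\overline{u}_{t\Delta}$, it suffices to prove
\[
\inf_{T\in\naturels_*}\ \inf_{\|v\|=1} \var\Big(T^{-1/2}\sum_{t=1}^T w_t\Big)>0 .
\]
Compactness of the unit sphere, together with continuity of the quadratic form in $v$ (the moments are finite by Lemma \ref{lemmoment}), reduces this to a lower bound that is uniform in $T$ for each fixed $v$.

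\textbf{Step 1: remove the dependence across intervals.} Lemma \ref{lemstatexistence} gives the decomposition $\overline{u}_{t\Delta}=(\overline{A}_{(t\Delta)^-}-\overline{A}_{(t-1)\Delta})+r_t$, where $r_t=\int_{(t-1)\Delta}^{t\Delta}\overline{\lambda}_s\,ds-\overline{x}_{t\Delta}'\theta^*$. I will use the cluster (immigration--birth) representation of the stationary process $\overline{N}$: the immigrants form a homogeneous Poisson process of rate $\nu^*>\nu_-$, and they are independent of the rest of the construction. Let $\mathcal{G}_t$ be the $\sigma$-algebra generated by everything except the immigrant arrivals in the interval $[(t-1)\Delta,t\Delta)$. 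The law of total variance then gives
\[
\var\Big(\sum_t w_t\Big)\ \ge\ \esp\Big[\var\Big(\sum_s w_s\,\Big|\,\mathcal{G}_t\Big)\Big]
\]
for every $t$. I would actually like a block version of this: split $\{1,\dots,T\}$ into blocks separated by gaps of length $m$, and use the $\rho$-mixing from Lemma \ref{lemrhomixing} to make the covariances between blocks small. This should give
\[
\var\Big(\sum_t w_t\Big)\ \ge\ (1-\epsilon)\sum_{\text{blocks}}\var\Big(\sum_{t\in\text{block}}w_t\Big)-o(T).
\]
With this bound, the problem reduces to a lower bound on a single block, which is uniform by stationarity (Lemma \ref{lemstationary}).

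\textbf{Step 2: nondegeneracy on a single block.} Within one interval, I condition on $\mathcal{G}_t$ and view $w_t$ as a function of the number $K$ of immigrants in the interval and of their positions. The variable $K$ is Poisson$(\nu^*\Delta)$, so each of $K=0,1,2$ has probability bounded below. The regressor components are $1$ and the sums $\sum e^{-b_k^*(\cdot)}$ over the points in the interval. Because the $b_k^*$ are distinct (Condition \ref{condh} \ref{condhidentifiability}), the functions $1,e^{-b_1^*s},\ldots,e^{-b_{n_h}^*s}$ are linearly independent on any interval. Consequently $v'\overline{x}_{t\Delta}$ cannot be almost surely constant in the immigrant positions unless $v=0$. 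The same applies to its product with $\overline{u}_{t\Delta}$, which moves by $\pm1$ plus a kernel term each time a point is added. This gives $\esp[\var(w_t\mid\mathcal{G}_t)]\ge c(v)>0$, and continuity in $v$ makes $c(v)$ uniform on the sphere. The key point is that the conditional variance of the block sum is at least as large as the variance contributed by one immigrant perturbation. This holds because the offspring of that immigrant in later intervals only add further variability, which I would handle by conditioning on the offspring's marks as well.

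\textbf{Main obstacle.} The hard part is Step 2's interaction with the cross terms in Step 1. The regressor $\overline{x}_{t\Delta}$ is not $\filt_{(t-1)\Delta}$-measurable, so $w_t$ is not a martingale difference, and the long-run variance could in principle cancel across lags. My first approach would be to bound the cross-block covariances using $\rho(m)\to0$ together with the fourth moments of Lemma \ref{lemmoment}. If that turns out to be too weak, I would instead argue directly that the long-run variance $\overline{V}=\sum_{k\in\mathbb{Z}}\mathrm{Cov}(w_0,w_k)$ is strictly positive. To do so I would show that the partial-sum process cannot be a coboundary, again using the Poisson immigrant randomness. Since $\overline{V}_{T\Delta}\to\overline{V}$ and each $\overline{V}_{T\Delta}$ is positive definite, strict positivity of the limit would then give uniformity for all large $T$, and positive definiteness covers the finitely many remaining $T$.
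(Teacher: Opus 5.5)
The paper's own proof only observes that $\overline{V}_{T\Delta}$ is a covariance matrix and appeals to the martingale property of $\overline{A}$ (Lemma \ref{lemstatexistence} (iii)). Your remark that $w_t$ is not a martingale difference is exactly why that route is insufficient, so a real argument is needed. Your argument, however, has a genuine gap at the decisive point.

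Uniform positive definiteness means $\var\big(\sum_{t=1}^T w_t\big)\ge cT$, with $c>0$ independent of $T$ and of the unit vector $v$. The law of total variance with a single $\mathcal{G}_t$ gives only a bound of order one. Step 2 likewise gives an order-one bound per block (``one immigrant perturbation''). Your blocking inequality, by contrast, needs a bound of order $L$ for a block of length $L$, which is just the original problem at scale $L$.

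With only an order-one bound per block, the blocks contribute about $cT/(L+m)$, while the discarded gap sums can carry variance of order $mT/(L+m)$. So your remainder is $o(T)$ only if $m/L\to0$, and then the block term is $o(T)$ too. In addition, $\rho(m)\to0$ without a summable rate controls neither the cross-block covariances nor $\sup_T\|\overline{V}_{T\Delta}\|$. Your compactness reduction and your fallback $\overline{V}_{T\Delta}\to\overline{V}$ both need that bound.

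The claim that offspring in later intervals ``only add further variability'' is also false. Given the marks, the later summands form a deterministic function $R(U)$ of the same immigrant position $U$, and $\var(w_t(U)+R(U))<\var(w_t(U))$ as soon as $\mathrm{Cov}(w_t(U),R(U))<-\tfrac12\var(R(U))$. The non-coboundary fallback is only announced, so the linear-in-$T$ lower bound is never proved. A smaller point: $v'\overline{x}_{t\Delta}$ is constant for $v\propto(1,0,\ldots,0)'$, so the product $w_t$ needs its own argument, not ``the same applies''.

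The missing idea is to condition, simultaneously for all $t$, on everything except perturbations that each move a single summand.
\begin{itemize}
\item Mark every immigrant by its cluster and thin by the mark ``no offspring'', which has probability $e^{-\|h\|_1}>0$. The childless immigrants form a Poisson process independent of the offspring-bearing immigrants and their clusters, and they influence no other point.
\item Let $\mathcal{G}$ be generated by the offspring-bearing immigrants with their clusters, and by the counts $K_t$ of childless immigrants in $[(t-1)\Delta,t\Delta)$ for $t=1,\ldots,T$.
\item Given $\mathcal{G}$, the childless positions are independent uniforms. Since $\overline{x}_{t\Delta}$ and $\overline{y}_{t\Delta}$ involve only points of the $t$-th interval, $w_1,\ldots,w_T$ are conditionally independent.
\end{itemize}
By stationarity this gives
\[
\var\Big(\sum_{t=1}^T w_t\Big)\ \ge\ \sum_{t=1}^T\esp\big[\var(w_t\mid\mathcal{G})\big]=T\,v'Cv,\qquad C=\esp\big[\mathrm{Cov}(\overline{x}_{\Delta}\overline{u}_{\Delta}\mid\mathcal{G})\big].
\]
Hence $\overline{V}_{T\Delta}\succeq C$ for every $T$, with no mixing, compactness or limit needed.

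Your Step 2 then shows that $C$ is positive definite. Work on $\{K_1=1\}$, write $U$ for the childless point and $v=(v_0,v_1,\ldots,v_{n_h})'$. Then
\[
w_1=(A+g(U))(B-h(U)),\qquad g(U)=\sum_k v_kc_ke^{b_k^*U},\qquad h(U)=\sum_k\alpha_k^*c_ke^{b_k^*U},
\]
where $A,B$ are $\mathcal{G}$-measurable and the constants $c_k$ are positive.
\begin{itemize}
\item If some $v_k\neq0$ with $k\ge1$, let $j$ be such an index with the largest $b_j^*$, and let $b_{k_0}^*=\max_k b_k^*$. The top exponent $b_j^*+b_{k_0}^*$ then appears only once, with coefficient $-v_j\alpha^*_{k_0}c_jc_{k_0}\neq0$; this uses Condition \ref{condh} \ref{condhidentifiability} and $b_k^*>0$.
\item Otherwise $w_1=v_0(B-h(U))$ with $v_0\neq0$, which is nonconstant because $\alpha_k^*>0$.
\end{itemize}
In either case $w_1$ is a nonconstant function of the uniform $U$. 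So $\var(w_1\mid\mathcal{G})>0$ on $\{K_1=1\}$, an event of positive probability, and therefore $v'Cv>0$.
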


\begin{proof}[Proof of Lemma \ref{lemmatrix}]
From Definition  (\ref{defVToverline}), we have that $\overline{V}_{T\Delta}$ is a variance matrix. Thus, the matrix $\overline{V}_{T\Delta}$ is positive definite by variance matrix properties. From the martingale definition in Lemma \ref{lemstatexistence} (iii), we can further deduce that the matrix $\overline{V}_{T\Delta}$ is positive definite uniformly over the time index $T \in \naturels_*$. 
\end{proof}

We now show in the following lemma that $\esp \big[ | \overline{x}_{t\Delta}^{(i)}  |^2 \big] < \infty$ for any index $i = 1, \ldots ,n$ and any time index $t \in \naturels_*$. This corresponds to Condition (iv) (a) from Theorem 5.17 in \cite{white2001asymptotic} in the particular case $p=1$. The proof follows the arguments from the proof of Lemma \ref{lemmoment}.

\begin{lemma}
\label{lemmoment2}
We assume that Condition \ref{condh} holds. Then, we have that $\esp \big[ | \overline{x}_{t\Delta}^{(i)}  |^2 \big] < \infty$ for any index $i = 1, \ldots ,n$ and any time index $t \in \naturels_*$.
\end{lemma}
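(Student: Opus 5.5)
The plan is to derive this from the fourth-moment bound (\ref{lemmomentp1}) already obtained in the proof of Lemma \ref{lemmoment}, with a direct argument as backup. For any index $i = 1, \ldots ,n$ and any time index $t \in \naturels_*$, Jensen's (or Cauchy--Schwarz) inequality gives
$$\esp \big[ | \overline{x}_{t\Delta}^{(i)} |^2 \big] \leq \big(\esp \big[ | \overline{x}_{t\Delta}^{(i)} |^4 \big]\big)^{1/2},$$
so the claim follows at once from (\ref{lemmomentp1}). The case $i=1$ is trivial, since the first component of $\overline{x}_{t\Delta}$ is the constant $1$.

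To make the argument self-contained, I would also prove the second-moment bound directly. Fix $i \geq 2$ and let $b_-$ be the smallest inverse scale number. Since $h_{i-1}(t-s) = \exp(-b^*_{i-1}(t-s)) \leq 1$ for $s \leq t$ by Condition \ref{condh} \ref{condhhexp}, we get the deterministic bound
$$0 \leq \overline{x}_{t\Delta}^{(i)} \leq \overline{N}_{(t\Delta)^-} - \overline{N}_{(t-1)\Delta} \leq \overline{N}((t-1)\Delta, t\Delta].$$
It therefore suffices to show $\esp[\overline{N}((t-1)\Delta,t\Delta]^2] < \infty$, and by the stationarity in Lemma \ref{lemstatexistence} (i) this quantity does not depend on $t$.

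For this I would use the martingale $\overline{A}$ from Lemma \ref{lemstatexistence} (iii). Write $\overline{N}(I) = \overline{A}(I) + \int_I \overline{\lambda}_s ds$ with $I=((t-1)\Delta,t\Delta]$. Then
$$\esp[\overline{N}(I)^2] \leq 2\esp[\overline{A}(I)^2] + 2\Delta \int_I \esp[\overline{\lambda}_s^2] ds,$$
and since the quadratic variation of $\overline{A}$ is $\overline{N}$, we have $\esp[\overline{A}(I)^2] = \esp[\int_I \overline{\lambda}_s ds] = \Delta \esp[\overline{\lambda}_0]$. Stationarity with Condition \ref{condh} \ref{condhstationary} gives $\esp[\overline{\lambda}_0] = \nu^*/(1-\|h\|_1) < \infty$.

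The main obstacle is the stationary second moment $\esp[\overline{\lambda}_0^2] < \infty$. I would handle it by taking expectations of the square of $\overline{\lambda}_t = \nu^* + \int_{-\infty}^t h(t-s) d\overline{N}_s$ and decomposing $d\overline{N} = d\overline{A} + \overline{\lambda} ds$. This gives a linear renewal-type inequality in the second moment, of the form $m_2 \leq C + \|h\|_1^2 m_2$ with $C$ depending on $\|h\|_2^2\, \esp[\overline{\lambda}_0]$. The inequality closes because $\|h\|_1 < \rho_+ < 1$. Alternatively, I would appeal to the known finiteness of all moments of the stationary exponential Hawkes intensity, following Proposition 4.4 of \cite{clinet2017statistical}.
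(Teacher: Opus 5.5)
Your main argument is correct and essentially matches the paper's. The paper only says to rerun the argument of Lemma \ref{lemmoment}, while you deduce the claim directly from the fourth-moment bound (\ref{lemmomentp1}) via $\esp[|\overline{x}_{t\Delta}^{(i)}|^2] \leq (\esp[|\overline{x}_{t\Delta}^{(i)}|^4])^{1/2}$, which is if anything cleaner.

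Your optional self-contained route has a circular step. A bound of the form $m_2 \leq C + \|h\|_1^2 m_2$ gives nothing unless $\esp[\overline{\lambda}_0^2]<\infty$ is already known. It is also more safely written as $\|\overline{\lambda}_0\|_2 \leq c + \|h\|_1\|\overline{\lambda}_0\|_2$ via Minkowski, because the cross term between $\int h\,d\overline{A}$ and $\int h\,\overline{\lambda}\,ds$ does not vanish. To close that route you would need a truncation or Picard-iteration argument, the cluster representation, or the stationary moment results of \cite{clinet2017statistical} that you mention.
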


\begin{proof}[Proof of Lemma \ref{lemmoment2}]
The proof follows the arguments from the proof of Lemma \ref{lemmoment}.
\end{proof}

We denote by $E_d = E \times E \times \ldots \times E$ the $d$ times product space of any space $E$. We also define the null vector of the space $\reels_d$ as $0_d=(0,\ldots,0)$. We now show in the following lemma that the  asymptotic variance matrix $\overline{M}$ is positive definite. This corresponds to Condition (iv) (b) from Theorem 5.17 in \cite{white2001asymptotic}.

\begin{lemma}
\label{lemmatrix2}
We assume that Condition \ref{condh} holds. Then, we have that the asymptotic variance matrix $\overline{M}$ is positive definite.
\end{lemma}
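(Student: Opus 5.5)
To show that $\overline{M}=\esp[\overline{x}_{T\Delta}\overline{x}_{T\Delta}']$ is positive definite, we argue by contradiction through the quadratic form. For any $c=(c_0,c_1,\ldots,c_{n_h})'\in\reels^n$ we have $c'\overline{M}c=\esp[(c'\overline{x}_{T\Delta})^2]\geq 0$, and $\overline{M}$ is finite by Lemma \ref{lemmoment2}. So it suffices to show that $c'\overline{M}c=0$ forces $c=0_n$. If $c'\overline{M}c=0$, then almost surely
$$c_0+\sum_{k=1}^{n_h}c_k\int_{(T-1)\Delta}^{T\Delta}\exp\big(-b_k^*(T-s)\big)\,d\overline{N}_s=0 .$$
By stationarity (Lemma \ref{lemstationary}), we may work on the window $[(T-1)\Delta,T\Delta)$ for a fixed $T$.

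\textbf{Step 1: the constant.} The stationary Hawkes process has intensity bounded below by $\nu^*>\nu_->0$ (Condition \ref{condh} \ref{condhparameterspace}). Hence, conditionally on $\filt_{(T-1)\Delta}$, the event that $\overline{N}$ has no jump in $[(T-1)\Delta,T\Delta)$ has positive probability. Its probability is $\esp[\exp(-\int\overline{\lambda}_s ds)]>0$, since $\int\overline{\lambda}_s ds$ over the window is finite almost surely. On that event every stochastic integral vanishes, so $c_0=0$.

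\textbf{Step 2: the kernel coefficients.} We next show that the event of exactly one jump in the window, with the jump time having a law that charges every subinterval, has positive probability. Again the lower bound $\nu_-$ on the intensity guarantees this: the conditional density of the first jump time is at least $\nu_-\exp(-\int\overline{\lambda})$, and it is positive on the whole window. On this event, the constraint reads
$$\sum_{k=1}^{n_h}c_k\exp\big(-b_k^*(T-T_1)\big)=0$$
for $T_1$ ranging over a set of positive Lebesgue measure in $[(T-1)\Delta,T\Delta)$. The function $u\mapsto\sum_k c_k e^{-b_k^* u}$ is real analytic, so vanishing on a set of positive measure makes it identically zero. Because the $b_k^*$ are pairwise distinct (Condition \ref{condh} \ref{condhidentifiability}), exponentials with distinct rates are linearly independent. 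This can be seen either from the Vandermonde matrix of the derivatives at a point, or by letting $u\to\infty$ after ordering the $b_k^*$. Hence $c_k=0$ for all $k$.

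\textbf{Main obstacle.} The delicate point is Step 2: making rigorous that the law of the jump configuration in one window is rich enough, that is, that the single-jump event has positive probability with an absolutely continuous location. I would use the Doob–Meyer/martingale representation of Lemma \ref{lemstatexistence} (iii) and a Girsanov-type (Jacod) likelihood comparison with a Poisson process of rate $\nu_-$ on the window. This gives mutual absolute continuity of the conditional law with respect to the Poisson law. Once that holds, any identity valid almost surely under the Hawkes law also holds almost surely under the Poisson law, where the configurations with zero or one uniformly placed jump are explicit. Steps 1 and 2 then follow immediately.
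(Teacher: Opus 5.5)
Your proof is correct and takes the same quadratic-form route as the paper, which writes $v'\overline{M}v=\esp[(v'\overline{x}_{T\Delta})^2]$ and then only asserts strict positivity ``by Definition (\ref{defxoverline})''; your Steps 1--2 supply exactly the justification the paper omits, namely that the zero-jump and single-jump configurations in the window have positive probability, followed by linear independence of exponentials with pairwise distinct rates from Condition \ref{condh} \ref{condhidentifiability}. One small precision in Step 1: given $\filt_{(T-1)\Delta}$, the no-jump probability is $\exp(-\int_{(T-1)\Delta}^{T\Delta}\lambda^{(0)}_s\,ds)$, where $\lambda^{(0)}$ is the intensity continued without new jumps, and not $\esp[\exp(-\int\overline{\lambda}_s\,ds)]$, which differs for a self-exciting process; its positivity comes from a.s.\ finiteness of that integral, and the lower bound $\nu_-$ is needed only in Step 2.
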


\begin{proof}[Proof of Lemma \ref{lemmatrix2}]
To verify that the matrix $\overline{M}$ is positive definite, we can check the definition of positive definite matrix. For any vector $v \in \reels_d$ such that $v \neq 0_d$, we have by Definition  (\ref{defMoverline}) that
\be
\label{lemmatrix2p1}
v'\overline{M} v= v'\esp [\overline{x}_{T\Delta} \overline{x}_{T\Delta}']v.
\ee
Then, we obtain by linearity of the expectation value that
\be
\label{lemmatrix2p2}
v'\esp [\overline{x}_{T\Delta} \overline{x}_{T\Delta}']v= \esp [v'\overline{x}_{T\Delta} \overline{x}_{T\Delta}'v].
\ee
This can be reexpressed as
\be
\label{lemmatrix2p3}
\esp [v'\overline{x}_{T\Delta} \overline{x}_{T\Delta}'v]= \esp [(v'\overline{x}_{T\Delta})^2].
\ee
Moreover, we get by Definition (\ref{defxoverline}) that
\be
\label{lemmatrix2p4}
\esp [(v'\overline{x}_{T\Delta})^2] > 0.
\ee
Finally, we have shown the the matrix $\overline{M}$ is positive definite by Expressions (\ref{lemmatrix2p1}), (\ref{lemmatrix2p2}), (\ref{lemmatrix2p3}) and (\ref{lemmatrix2p4}).
\end{proof}

We show the central limit theorem based on the stationary point process $\overline{N}$  (\ref{clteqstat}) holds in the following lemma. This relies on the technology of \cite{white2001asymptotic} (Theorem 5.17, p. 126). 

\begin{lemma}
\label{lemcltstat}
We assume that Condition \ref{condh} holds. Then, we have that the central limit theorem of the parametric estimation procedure based on ordinary least squares with the stationary point process $\overline{N}$ as the final time $T \rightarrow + \infty$, namely
\begin{eqnarray}
\nonumber
\overline{D}^{-1/2} \sqrt{T\Delta}(\widehat{\overline{\theta}}_{T\Delta} - \theta^*) \cvdistrib  \xi.
\end{eqnarray}
\end{lemma}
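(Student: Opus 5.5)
The plan is to verify, one by one, the hypotheses of Theorem 5.17 (p. 126) in \cite{white2001asymptotic} for the stationary regression model (\ref{deflrmoverline}), and then read off the conclusion. Condition (i) of that theorem is the linear model $\overline{Y}_{T\Delta} = \overline{X}'_{T\Delta}\theta^* + \overline{U}_{T\Delta}$, which holds by construction through Definition (\ref{defuoverline}).

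\textbf{Step 1: the remaining hypotheses.}
\begin{itemize}
\item Condition (ii), that $\overline{z}$ is stationary and ergodic, is Lemma \ref{lemergodic}.
\item Condition (iii)(a), that each $\overline{x}^{(i)}_{t\Delta}\overline{u}_{t\Delta}$ is an adapted mixingale of size $-1$, is Lemma \ref{lemmixingale}.
\item Condition (iii)(b), the second moment bound on $\overline{x}^{(i)}_{t\Delta}\overline{u}_{t\Delta}$, is Lemma \ref{lemmoment}.
\item Condition (iii)(c), that $\overline{V}_{T\Delta}$ is uniformly positive definite, is Lemma \ref{lemmatrix}.
\item Condition (iv)(a), $\esp[|\overline{x}^{(i)}_{t\Delta}|^2]<\infty$, is Lemma \ref{lemmoment2}.
\item Condition (iv)(b), that $\overline{M}$ is positive definite, is Lemma \ref{lemmatrix2}.
\end{itemize}

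\textbf{Step 2: the stationary CLT and what it yields.} With these in place, White's theorem gives three things:
\begin{itemize}
\item the limit $\overline{V}$ in (\ref{defVoverline}) exists and is finite;
\item $\overline{X}_{T\Delta}\overline{X}'_{T\Delta}/T \cvproba \overline{M}$, by the ergodic theorem applied to the stationary ergodic sequence $\overline{x}_{t\Delta}\overline{x}_{t\Delta}'$;
\item $\overline{D}^{-1/2}\sqrt{T}(\widehat{\overline{\theta}}_{T\Delta}-\theta^*) \cvdistrib \xi$.
\end{itemize}
In the derivation we would write $\sqrt{T}(\widehat{\overline{\theta}}_{T\Delta}-\theta^*) = (\overline{X}_{T\Delta}\overline{X}'_{T\Delta}/T)^{-1}\, T^{-1/2}\overline{X}_{T\Delta}\overline{U}_{T\Delta}$. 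The score term $T^{-1/2}\overline{X}_{T\Delta}\overline{U}_{T\Delta}$ satisfies $\overline{V}^{-1/2}\, T^{-1/2}\overline{X}_{T\Delta}\overline{U}_{T\Delta} \cvdistrib \xi$ by the mixingale central limit theorem (Theorem 5.16 in \cite{white2001asymptotic}). Slutsky's lemma then combines this with the consistency of $\overline{X}_{T\Delta}\overline{X}'_{T\Delta}/T$.

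Invertibility of $\overline{X}_{T\Delta}\overline{X}'_{T\Delta}$ holds with probability tending to one, because $\overline{M}$ is positive definite. In the same event the unconstrained solution lies in the interior of $\Theta$: $\theta^*$ is interior by Condition \ref{condh} and the estimator is consistent. So the argmin over $\Theta$ coincides with the closed-form estimator with probability tending to one.

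\textbf{Step 3: the scaling.} Since $\Delta$ is a fixed constant, the $\sqrt{T\Delta}$ versus $\sqrt{T}$ normalization is absorbed into the matrix. The normalization in (\ref{defVToverline}) uses $T^{-1/2}$, whereas the main text uses $(T\Delta)^{-1/2}$, so the matrices $\overline{V}$ and hence $\overline{D}$ must be rescaled by $\Delta$ consistently. Then $\overline{D}^{-1/2}\sqrt{T\Delta}(\widehat{\overline{\theta}}_{T\Delta}-\theta^*)\cvdistrib\xi$ follows, since the factors of $\Delta$ cancel in $D^{-1/2}\sqrt{T\Delta}$. I would state this convention explicitly.

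\textbf{Main obstacle.} Most of the technical weight sits in the lemmas already established. Within this proof, the delicate point is the bookkeeping in Step 2: the estimator is defined as a constrained argmin, and the $\Delta$-normalization must be matched correctly.
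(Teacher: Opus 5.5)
Your Step 1 is exactly the paper's proof. The paper only checks the hypotheses of Theorem 5.17 in \cite{white2001asymptotic} against Lemmas \ref{lemstationary}, \ref{lemergodic}, \ref{lemmixingale}, \ref{lemmoment}, \ref{lemmatrix}, \ref{lemmoment2} and \ref{lemmatrix2}. It says nothing about the constraint set or about $\Delta$, so that part of your proposal is your own, and Step 3 does not work as written.

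With $\overline{V}$, $\overline{M}$, $\overline{D}$ as in (\ref{defVToverline})--(\ref{defDoverline}), White's theorem yields $\overline{D}^{-1/2}\sqrt{T}(\widehat{\overline{\theta}}_{T\Delta}-\theta^*)\cvdistrib\xi$. Hence $\overline{D}^{-1/2}\sqrt{T\Delta}(\widehat{\overline{\theta}}_{T\Delta}-\theta^*)\cvdistrib\sqrt{\Delta}\,\xi$, so the displayed statement, read literally, is off by $\sqrt{\Delta}$. Your fix, rescaling $\overline{V}$ to the main-text normalization $(T\Delta)^{-1/2}$ ``and hence $\overline{D}$'', makes it worse. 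With the Gram limit $\overline{M}$ unchanged (as for the main text's $M$), $V$ becomes $\overline{V}/\Delta$, so $D=\overline{M}^{-1}(\overline{V}/\Delta)\overline{M}^{-1}=\overline{D}/\Delta$. Then $D^{-1/2}\sqrt{T\Delta}(\widehat{\overline{\theta}}_{T\Delta}-\theta^*)=\Delta\,\overline{D}^{-1/2}\sqrt{T}(\widehat{\overline{\theta}}_{T\Delta}-\theta^*)\cvdistrib\Delta\,\xi$: the powers of $\Delta$ compound rather than cancel.

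They cancel only if the sandwich matrix is $\Delta\overline{D}$. You get this by normalizing the Gram matrix by $T\Delta$ as well, writing $\sqrt{T\Delta}(\widehat{\overline{\theta}}_{T\Delta}-\theta^*)=\big(\overline{X}_{T\Delta}\overline{X}_{T\Delta}'/(T\Delta)\big)^{-1}(T\Delta)^{-1/2}\overline{X}_{T\Delta}\overline{U}_{T\Delta}$. The limits are then $\overline{M}/\Delta$ and $\overline{V}/\Delta$, and the sandwich is $(\overline{M}/\Delta)^{-1}(\overline{V}/\Delta)(\overline{M}/\Delta)^{-1}=\Delta\overline{D}$. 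So the correct conclusion is the displayed convergence with $\overline{D}$ replaced by $\Delta\overline{D}$, or equivalently with $\sqrt{T}$ in place of $\sqrt{T\Delta}$. State that convention explicitly rather than asserting a cancellation.

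There is a smaller point in Step 2. Condition \ref{condh} does not make $\theta^*$ an interior point of $\Theta$. It only places $\Theta$ strictly inside a box and inside the stationarity region. For example, a closed $\Theta_\nu$ with $\nu^*$ on its edge is allowed, and then the constrained argmin has a non-Gaussian limit. You need $\theta^*\in\operatorname{int}\Theta$ as an extra assumption, or $\Theta$ open as in (\ref{stat_crit}). With that added, your reduction of the constrained argmin to the closed-form estimator on an event of probability tending to one is sound. It is also more careful than the paper, which ignores the constraint entirely.
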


\begin{proof}[Proof of Lemma \ref{lemcltstat}]

We verify that the conditions from Theorem 5.17 in \cite{white2001asymptotic} are satisfied. First, Condition (i) holds since we reexpress the Hawkes process as a standard linear regression model of the form (\ref{deflr}). Then, Condition (ii) is satisfied by an application of Lemma \ref{lemstationary} and Lemma \ref{lemergodic}. Condition (iii) (a) holds by Lemma \ref{lemmixingale}. In addition, Condition (iii) (c) holds with the use of Lemma \ref{lemmoment}. Moreover, Condition (iii) (d) is obtained by Lemma \ref{lemmatrix}. Condition (iv) (a) holds with the use of Lemma \ref{lemmoment2}. Finally, Condition (iv) (b) is obtained by Lemma \ref{lemmatrix2}.

\end{proof}

We now give the proof of Theorem \ref{thclt}. This is a direct consequence to Lemma \ref{lemstat} and Lemma \ref{lemcltstat}.

\begin{proof}[Proof of Theorem \ref{thclt}]

This is a direct consequence to Lemma \ref{lemstat} and Lemma \ref{lemcltstat}.

\end{proof}

We now give the proof of Theorem \ref{thconsistency}. This is an application of \cite{white2001asymptotic} (Theorem 6.21, p. 159). The proof follows by an extension of the proof of Theorem \ref{thclt}. 

\begin{proof}[Proof of Theorem \ref{thconsistency}]

The proof follows by an application of \cite{white2001asymptotic} (Theorem 6.21, p. 159) with an extension of the proof of Theorem \ref{thclt}.

\end{proof}

Moreover, we establish the proof of Theorem \ref{thcltest}. The main idea of the proof is to apply the technology of \cite{white2001asymptotic} (Theorem 5.17, p. 126) with Theorem \ref{thclt} and Theorem \ref{thconsistency}.  

\begin{proof}[Proof of Theorem \ref{thcltest}]

First, we can deduce by Theorem \ref{thclt} that the conditions (i), (ii), (iii) and (iv) from Theorem 5.17 (p. 126) in \cite{white2001asymptotic} are satisfied. In addition, we can deduce by Theorem \ref{thconsistency} that the condition (v) from Theorem 5.17 (p. 126) in \cite{white2001asymptotic} is satisfied. Thus, we can use Theorem 5.17 (p. 126) in \cite{white2001asymptotic}.

\end{proof}

Finally, we detail the proof of Corollary \ref{corwald}. The proof of the convergence under the null hypothesis $H_0$ is obtained by applying the delta method (see Theorem 20.8 (p. 297) in \cite{van1998asymptotic}) to Theorem \ref{thclt} and applying Slutsky's theorem to Theorem \ref{thconsistency}.

\begin{proof}[Proof of Corollary \ref{corwald}]
First, we have by Definition (\ref{defW}) that the Wald test statistic is equal to
$$S_{T\Delta} =
T\big(R \widehat{\theta}_{T\Delta} -r\big)' \big(R\widehat{D}_{T\Delta} R' \big)^{-1}\big(R\widehat{\theta}_{T\Delta}-r\big).$$
We start under the null hypothesis $H_0$. We focus on the function $g_S(\theta) =
T\big(R \theta -r\big)' \big(RD R' \big)^{-1}\big(R\theta-r\big)$. Then, we can show that the conditions from Theorem 20.8 (p. 297) in  \cite{van1998asymptotic} applied on the function $g_S(\theta)$ hold. In particular, we can show the Hadamard differentiability. Moreover, an application of Theorem 20.8 (p. 297) in  \cite{van1998asymptotic} on the function $g_S(\theta)$ to Theorem \ref{thclt} yields that the random variable $T\big(R \widehat{\theta}_{T\Delta} -r\big)' \big(R D R' \big)^{-1}\big(R\widehat{\theta}_{T\Delta}-r\big)$ converges in distribution to a chi-squared random variable with $q$ degrees of freedom under the null hypothesis $H_0$ as the final time $T \rightarrow + \infty$.  In addition, the test statistic $S_{T\Delta}$ converges in distribution to a chi-squared random variable with $q$ degrees of freedom under the null hypothesis $H_0$ as the final time $T \rightarrow + \infty$ by an application of Slutsky's theorem to Theorem \ref{thconsistency}. Finally, we can show that the test statistic $S_{T\Delta}$ is also consistent under the alternative hypothesis  $H_1$, namely we have $\proba ( S_{T\Delta} > Q(p) \mid H_1 ) \rightarrow 1$ for any probability $p \in (0,1)$ as the final time $T \rightarrow + \infty$, by extending the arguments used under the null hypothesis $H_0$.
\end{proof}

\end{document}